\definecolor{myurlcolor}{rgb}{0,0,0.9}
\newcommand{\proj}[1]{\left| #1\right\rangle\!\left\langle #1 \right|}
\DeclareMathOperator{\trace}{Tr}
\newcommand{\Ptr}[2]{\trace_{#1}\Pa{#2}}
\newcommand{\Tr}[1]{\Ptr{}{#1}}
\newcommand{\Pa}[1]{\left[#1\right]}
\newcommand{\norm}[1]{\left\lVert #1 \right\rVert}
\theoremstyle{plain}
\newtheorem{thm}{Theorem}
\newtheorem{prop}[thm]{Proposition}
\newtheorem{Def}[thm]{Definition}
\newcommand*{\myproofname}{Proof}
\def\ot{\otimes}
\def\complex{\mathbb{C}}
\def\real{\mathbb{R}}
\DeclareMathAlphabet{\mathcal}{OMS}{cmsy}{m}{n}
\let\oldaddcontentsline\addcontentsline
\newcommand{\stoptocentries}{\renewcommand{\addcontentsline}[3]{}}
\newcommand{\starttocentries}{\let\addcontentsline\oldaddcontentsline}
\newtheorem{theorem}{Theorem}
\newtheorem{definition}{Definition}
\newtheorem{lemma}[theorem]{Lemma}
\newtheorem{proposition}[theorem]{Proposition}
\newtheorem{corollary}{Corollary}
\numberwithin{equation}{section}
\numberwithin{corollary}{section}
\numberwithin{definition}{section}
\numberwithin{theorem}{section}
\numberwithin{remark}{section}
\numberwithin{example}{section}
\newcommand{\ra}{\rangle}
\newcommand{\la}{\langle}
\newcommand{\df}{\dfrac}
\newcommand{\R}{\mathbb R}
\newcommand{\mbb}{\mathbb}
\newcommand{\mbf}{\mathbf}
\newcommand{\mrm}{\mathrm}
\newcommand{\tr}{\mathrm{Tr}}
\newcommand{\mca}{\mathcal}
\newcommand{\Hi}{\mca H}
\newcommand{\G}{\mca G}
\newcommand{\Pf}{\mathrm{Pf}}
\newcommand{\Cl}{\mca C}
\newcommand{\leqalign}[2]{
\begin{equation}\begin{aligned}\label{#1}
#2
\end{aligned}\end{equation}}
\newcommand{\malign}[1]{
\[\begin{aligned}
#1
\end{aligned}\] 
}
\begin{document}

  \author{Xingjian Lyu}
  \email{nicholaslyu@college.harvard.edu}
\affiliation{ Harvard University, Cambridge, Massachusetts 02138, USA}

  \author{Kaifeng Bu}
  \email{bu.115@osu.edu}
\affiliation{Ohio State University, Columbus, Ohio 43210, USA}
 \affiliation{ Harvard University, Cambridge, Massachusetts 02138, USA}

\title{Fermionic Gaussian Testing and Non-Gaussian Measures via Convolution}

\begin{abstract} 
    We define fermionic convolution and demonstrate its utility in characterizing 
    fermionic non-Gaussian components, which are essential to the 
    computational advantage of fermionic systems. 
    Using fermionic convolution, 
    we propose an efficient protocol that tests the fermionic Gaussianity 
    of pure states using three copies of the input state. 
    We also introduce ``Non-Gaussian Entropy,'' an experimentally 
    accessible resource measure that quantifies fermionic 
    non-Gaussianity. These results provide new insights into the 
    study of fermionic quantum computation.
\end{abstract}

\maketitle
\section{Introduction}
\stoptocentries

A central challenge in quantum science is understanding 
the boundary between quantum and classical computation. 
One strategy is to explore the classical simulability 
of quantum computation models and identifying the resources that 
enable computational advantages. Characterizing these resources is essential 
for evaluating the potential of quantum information-processing systems.

For instance, the Gottesman-Knill theorem~\cite{gottesman1998heisenberg} 
reveals that stabilizer circuits—composed of stabilizer states, 
Clifford gates, and Pauli measurements—can be efficiently simulated classically. 
This underscores the critical role of nonstabilizer components, or ``magic,''
in achieving universal quantum computation~\cite{Bravyi06}. 
The amount of magic can be shown to bound the classical simulation 
time of quantum circuits~\cite{BravyiPRL16,BravyiPRX16,bravyi2019simulation,BeverlandQST20,SeddonPRXQ21, bu2022classical,Bu19,koh2015further}.

In continuous-variable bosonic systems, 
Gaussian states and unitaries play a similar foundational role in quantum 
information protocols due to their mathematical tractability~\cite{SethRMP12}. 
It is well established that bosonic Gaussian circuits 
consisting of Gaussian states, unitaries, and measurements 
can be efficiently classically simulated~\cite{Mari12}, 
implying that non-Gaussianity is a key quantum resource for 
computational advantage. This is also evidenced by the utility 
of non-Gaussian components in numerous 
quantum information tasks~\cite{EisertPRL02,Giedke02,GarcPRL04,CerfPRL04,BaragiolaPRL19,JiangPRL24}, 
such as error correction in the GKP code~\cite{GKP01}. 
Consequently, the resource theory of continuous-variable 
bosonic non-Gaussianity has been developed to characterize 
its role in bosonic quantum 
computation~\cite{GenoniPRA08,GenoniPRA10,Marian13,Albarelli18,ZhuangPRA18,TakagiPRA18,LamiPRA18,Chabaud20,WalschaersPRXQ21,Dias24,Han24,Oliver24b}.

Another important framework for the classical simulation of quantum circuits is matchgate simulation, first introduced by Valiant~\cite{valiant2002quantum}. It has been shown that quantum circuits composed of matchgates acting on nearest-neighbor qubits can be  simulated efficiently on a classical computer~\cite{valiant2002quantum,terhal2002classical,jozsa2008matchgates,brod2016efficient}.  
The connection between matchgate circuits and the theory of fermionic linear optics (equivalently, fermionic Gaussian circuits) has been later explored in~\cite{terhal2002classical,divincenzo2004fermionic,jozsa2008matchgates,bravyi2004lagrangian}. The theory of fermionic linear optics 
has a number of applications, including Majorana
fermions in quantum wires~\cite{Kitaev_2001} and Kitaev’s honeycomb
lattice model~\cite{Kitaev_2006}. 
Supported by advancements in experimental platforms, 
this theory continues to drive advances in quantum information 
processing~\cite{GiorginiRMP08,jordens2008mott,LossPRA98,HansonRMP07}. 

Recent work has extended the notion of magic to fermionic computation; 
it has been shown that all pure fermionic non-Gaussian states exhibit magic and, 
with adaptive measurement, 
enable universal quantum computation~\cite{hebenstreit2019all}; 
this result parallels that for discrete bosonic Gaussian, or Clifford, circuits. 
General resource measures have been applied to magic-enhanced 
fermionic Gaussian circuits~\cite{dias2024classical,Oliver24,Joshuar24}, 
contributing to an understanding of their classical simulation.

To gain more insights into fermionic non-Gaussianity, which is central 
to the computational advantage of fermionic quantum computation, we propose a systematic approach using fermionic convolution. This is motivated by the idea that
Gaussian properties are uniquely invariant under convolution.
We define and study the notion of fermionic convolution, which is defined using a fermionic Gaussian unitary. The iterative application of fermionic convolution to a fermionic input state converges to the Gaussian state with the same
covariance; we call the process of this iterative-limit fermionic
Gaussification. We prove that fermionic Gaussification
maps a fermionic state to  the closest Gaussian state in terms of relative entropy, and that this relative entropy decreases monotonically with each iterative application of convolution. We also establish fermionic Gaussification as a resource-destroying map in the resource theory
of fermionic quantum computation.


We prove that fermionic self-convolution preserves entropy only for Gaussian states. Based on this property, we propose a new measure, “Non-Gaussian
Entropy.''to quantify non-Gaussianity. The problem of Gaussian property testing is also considered. Property testing is a crucial area in quantum computation that entails assessing whether a given 
state or circuit has a certain property. For instance, separability testing is used to determine whether a state is
separable ~\cite{Harrow13,v011a003,Beckey21,gs007, Buhrman01}. Here, we propose a novel protocol based on fermionic convolution to distinguish fermionic Gaussian states which only uses three copies of the input state. This protocol readily extends to Gaussian unitary testing and provide valuable tools for 
identifying quantum computational resources.


\subsection{Preliminaries}

Let us consider a system of $n$ fermionic modes with creation $a^\dag_j$
and annihilation $a_j$ operators for $j=1,...,n$. They satisfy the canonical anti-commutation relations 
\begin{eqnarray}
    \{ a_j, a_k^\dag \} = \delta_{jk}, \quad \{a_j, a_k\} = 0. 
\end{eqnarray}
It is convenient to define $2n$ Hermitian Majorana operators $\set{\gamma_j}^{2n}_{j=1}$ defined as 
\begin{eqnarray}
    \gamma_{2j-1}=a_j+a^\dag_j, \quad\gamma_{2j}=i(a_j-a^\dag_j).
\end{eqnarray}
The operators $\set{\gamma_j}^{2n}_{i=j}$ are Hermitian, traceless and generate a Clifford algebra $\Cl_{2n}$ which satisfies the relation 
\leqalign{eq:cliffordRelation}{
    \{\gamma_j,\gamma_k\}=2\delta_{jk}I.
}
The Majorana operators $\set{\gamma_j}^{2n}_{j=1}$  can be represented by the product of Pauli operators 
by the Jordan-Wigner transformation as follows
\begin{align*}
    \gamma_{2j-1}&= Z^{\otimes (j-1)} \otimes X \otimes I^{\otimes (n-j)}, \\ 
    \gamma_{2j} &= Z^{\otimes(j-1)}\otimes Y \otimes I^{\otimes (n-j)}, ~~\forall j.
\end{align*}

Given an ordered subset (multi-index) $J\subset [2n]$, 
we define $\gamma_J$ as the ordered product of the Majorana operators indexed by $J$. Specifically, 
\begin{align*}
    \gamma_J=\prod_{j\in J}\gamma_j.
\end{align*}
We use $|J|$ to denote the size of the subset $J$.
Any linear operator $A\in \Cl_{2n}$ 
can be expanded in the Majorana basis: 
\begin{eqnarray}
    \label{eq:majoranaExpansion}
    A = \df 1 {2^n} \sum_J A_J \gamma_J, \quad A_J=\tr(\gamma_J^\dag A)\in \mbb C. 
\end{eqnarray}
An element $A$ is even if all coefficients $A_J$=0 for $|J|$ odd. 
For example, under the Jordan-Wigner transformation, 
$|1\ra \la 1|=(1-Z)/2$ is even since $Z=-i\gamma_{1}\gamma_{2}$, but 
$|0\ra \la 1|=(X+iY)/2=(\gamma_1 + i\gamma_2)/2$ is not. 
Since an even element $A$ commutes with the fermionic number-parity operator $\Pi^{2n}_{j=1}\gamma_j$, the parity of the number of fermions is conserved by the action of an even operator. 

The covariance matrix $\Sigma_\rho$, or $\Sigma(\rho)$, of a state $\rho\in C_{2n}$ is the $2n\times 2n$ real, anti-symmetric matrix with entries 
\leqalign{eq:covariance}{
    (\Sigma_\rho)_{jk}
    = \frac{i}{2}\Tr{\rho[\gamma_j, \gamma_k]}.
}
It can be brought into the block-diagonal form by a real
rotation $R\in SO(2n)$, i.e, 
\begin{equation}\label{eq:block_D}
    \Sigma_{\rho}=R\bigoplus^n_{j=1}
    \left[
    \begin{array}{cc}
        0 & \lambda_j \\
       -\lambda_j  & 0
    \end{array}
    \right] R^T.
\end{equation}
Fermionic Gaussian states $\rho$ are defined as even states of 
the following form 
\leqalign{def:Gaussian}{
    \rho=C\exp\left(\frac{i}{2} \gamma^T h \gamma\right),
}
where $\gamma$ abbreviates the vector $(\gamma_1, \gamma_2,...,\gamma_{2n})$, $(h_{jk})$ is a real antisymmetric matrix, and $C\in \R$ is a normalization factor; a pure Gaussian state is defined as the limit of a converging sequence of the form above. 
Fermionic Gaussian states represent the ground and thermal states of non-interacting fermionic systems.
By block-diagonalizing $h$ using a rotation $R$, we can rewrite 
\begin{align*}
    \rho=\frac{1}{2^n}\prod^n_{j=1}(1+i\lambda_j \tilde{\gamma}_{2j-1}\tilde{\gamma}_{2j}),
\end{align*}
where $\tilde{\gamma}=R\gamma$, and $\lambda_j$ are the eigenvalues in the block diagonalization~\eqref{eq:block_D}.
Each $\lambda_j$ lies in the interval $[-1,1]$. For Gaussian pure states, each $\lambda_j\in \{-1,1\}$.

In the theory 
of fermionic quantum computation,  a special role is played by those unitary transformations which are generated by Hamiltonians of non-interacting fermionic systems, i.e. Hamiltonians which are quadratic in Majorana fermion operators. Specifically, 
 \begin{eqnarray}
        \label{def:GaussianUnitary}
        U = \exp \left(\df 1 2 \gamma^T h \gamma\right),
    \end{eqnarray}
where $h$ is an anti-symmetric matrix. 
These unitaries are known 
as fermionic Gaussian unitaries (or fermionic linear optics transformations). Conjugation by a Gaussian unitary effect an automorphism of the algebra $\Cl_{2n}$ by rotating the Majorana operators (\cite{jozsa2008matchgates}, Theorem 3), 
\leqalign{eq:unitaryEffect}{
 U \gamma_j U^\dag = \sum_k R_{jk} \gamma_k, \quad R=\exp(2h) \in SO(2n). 
}
See Appendix A for details on the Clifford algebra, and Appendix B for Gaussian states and unitaries.

\section{Fermionic convolution}
Let us begin by introducing the concept of a fermionic beam splitter, which mixes  the Majorana operators of fermionic modes. This concept is inspired by the analogous device in bosonic systems, the bosonic beam splitter.

A fermionic beam splitter, or convolution unitary, acting on $2n$ modes is represented by the Gaussian unitary 
\leqalign{eq:beamSplitterUnitary}{
W_{\theta}= \exp \left(\df \theta 2 \sum_{j=1}^{2n} \gamma_j \gamma_{2n+j}\right). 
}
The unitary $W_\theta$ rotates the Majorana operators by 
\leqalign{eq:convolutionUnitaryAction}{
    W_\theta\gamma_j W_\theta^\dag=\cos\theta \gamma_j-\sin\theta \gamma_{2n+j},\\
    W_\theta\gamma_{2n+j} W_\theta^\dag=\sin\theta \gamma_j+\cos\theta \gamma_{2n+j},
}
for every $1\leq j\leq 2n$ and $\theta\in [-\pi, \pi]$. 
The $\theta=\pi/4$ case corresponds to the balanced fermionic beam splitter. 

\begin{Def}[\bf Convolution]
The fermionic convolution of two even states $\rho, \sigma$ with angle $\theta\in (-\pi, \pi]$ is
\leqalign{def:evenConvolution}{
    \rho \boxtimes_\theta \sigma = \Ptr{2}{W_\theta(\rho \otimes \sigma) W_\theta^\dag},
}
where $\Ptr{2}{\cdot}$ denotes partial trace over the second register corresponding to the $n$ qubits of $\sigma$. 
\end{Def}

Fermionic convolution satisfies several desirable properties: 
if both $\rho$ and $\sigma$ are Gaussian, then $\rho\boxtimes\sigma$ is Gaussian; 
convolution also commutes with any Gaussian unitary $U_G$, i.e., 
$(U_G\rho U^\dag_G)\boxtimes_{\theta} (U_G\sigma U^\dag_G)=U_G(\rho\boxtimes_{\theta}\sigma)U^\dag_G$. 
Additional properties and their detailed proofs can be found in Appendix C. 
For simplicity, we focus $\theta=\pi/4$ throughout the main discussion, abbreviating $\boxtimes_{\pi/4}$ as $\boxtimes$.

Let us consider iterative self-convolution $\boxtimes^k\rho$ for any integer $k$, 
defined according to $\boxtimes^{k}\rho = (\boxtimes^{k-1} \rho)\boxtimes (\boxtimes^{k-1}\rho)$ and $ \boxtimes^0\rho = \rho$.
Based on a known central limit theorem~\cite{Hudson_1973,hudson1980translation}, the operation $\boxtimes^k\rho$ will converge to the Gaussian state with the same covariance. 
We define this iterative limit the \textit{fermionic Gaussification} $\mca G(\rho)$ of $\rho$; it maps a fermionic state $\rho$ to a Gaussian state with the same covariance. We provide two proofs of the central limit theorem with a quantitative bound on the rate of convergence in Appendix D.

We prove that the fermionic Gaussification $\mca G(\rho)$ is in fact the Gaussian projection of $\rho$: 
it is the closest fermionic Gaussian state to the given even state $\rho$ in terms of the quantum relative entropy 
$D(\rho||\sigma)=\tr(\rho\log\rho)-\tr(\rho\log \sigma)$.

\begin{prop}\label{prop:ent_cha}
For any even state $\rho$, we have
\begin{align}
\min_{\sigma_G\in \textbf{Gauss} }D(\rho||\sigma_G)
=D(\rho||\mca G(\rho))
=S(\mca G(\rho))-S(\rho),
\end{align}
$\textbf{Gauss}$ denotes the set of all fermionic Gaussian states. 
\end{prop}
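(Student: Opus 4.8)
The plan is to reduce everything to one observation: for a Gaussian state $\sigma_G = C\exp\!\big(\tfrac i2\gamma^T h\gamma\big)$ one has $\log\sigma_G = (\log C)\,I + \tfrac i2\gamma^T h\gamma$, a quadratic form in the Majorana operators plus a multiple of the identity. Since $h$ is antisymmetric this rewrites as $\log\sigma_G = (\log C)\,I + \tfrac i4\sum_{jk}h_{jk}[\gamma_j,\gamma_k]$, so that, using the definition~\eqref{eq:covariance} of the covariance matrix,
\[
\tr(\rho\log\sigma_G) = \log C + \tfrac12\sum_{jk}h_{jk}(\Sigma_\rho)_{jk}.
\]
This expression depends on $\rho$ only through $\Sigma_\rho$. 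Because $\mca G(\rho)$ is by construction the Gaussian state with the same covariance matrix as $\rho$, I immediately obtain the key identity $\tr(\rho\log\sigma_G) = \tr(\mca G(\rho)\log\sigma_G)$ for every Gaussian state $\sigma_G$.

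From here the two equalities follow by short manipulations. Taking $\sigma_G = \mca G(\rho)$ in the key identity gives $\tr(\rho\log\mca G(\rho)) = \tr(\mca G(\rho)\log\mca G(\rho)) = -S(\mca G(\rho))$, hence
\[
D(\rho\,\|\,\mca G(\rho)) = -S(\rho) - \tr(\rho\log\mca G(\rho)) = S(\mca G(\rho)) - S(\rho),
\]
which is the second equality. For the first, I would establish the Pythagorean-type identity $D(\rho\,\|\,\sigma_G) = D(\rho\,\|\,\mca G(\rho)) + D(\mca G(\rho)\,\|\,\sigma_G)$ for every Gaussian $\sigma_G$: expanding the right-hand side and using the key identity to replace $\tr(\mca G(\rho)\log\sigma_G)$ by $\tr(\rho\log\sigma_G)$, everything collapses to $-S(\rho) - \tr(\rho\log\sigma_G) = D(\rho\,\|\,\sigma_G)$. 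Since $D(\mca G(\rho)\,\|\,\sigma_G)\ge 0$ with equality iff $\sigma_G = \mca G(\rho)$ (Klein's inequality), the minimum of $D(\rho\,\|\,\sigma_G)$ over $\textbf{Gauss}$ is attained exactly at $\mca G(\rho)$ and equals $D(\rho\,\|\,\mca G(\rho))$.

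The step that needs genuine care is the logarithm representation $\log\sigma_G = (\log C)I + \tfrac i2\gamma^T h\gamma$ when $\sigma_G$ fails to be full rank, in particular the pure-state limit where $\log\sigma_G$ is not defined on the whole space; the rest is essentially moment-matching bookkeeping. My approach is to work on $\supp\sigma_G$. If the block-diagonal form~\eqref{eq:block_D} of $\Sigma_\rho$ has some $\lambda_j = \pm 1$, then $\tr(\rho\,i\tilde\gamma_{2j-1}\tilde\gamma_{2j}) = \pm1$ with $i\tilde\gamma_{2j-1}\tilde\gamma_{2j}$ a $\pm1$-valued observable forces $\supp\rho$ into the corresponding eigenspace, so $\supp\rho\subseteq\supp\mca G(\rho)$, the relative entropy $D(\rho\,\|\,\mca G(\rho))$ stays finite, $\mca G(\rho)$ restricts to a full-rank Gaussian state on its support, and the computation above goes through there; when all $\lambda_j = \pm1$ these constraints pin $\rho = \mca G(\rho)$ and the claim is trivial. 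For any Gaussian $\sigma_G$ with $\supp\rho\not\subseteq\supp\sigma_G$ one simply has $D(\rho\,\|\,\sigma_G) = +\infty$, so such $\sigma_G$ cannot beat $\mca G(\rho)$ and the minimization is unaffected. I would also record the two elementary facts used implicitly: $C>0$ by positivity of $\sigma_G$, and $\tfrac i2\gamma^T h\gamma$ is Hermitian by antisymmetry of $h$, so that $\log\sigma_G$ is well defined.
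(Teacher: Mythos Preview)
Your proof is correct and follows essentially the same route as the paper: both arguments hinge on the observation that $\log\sigma_G$ is (a constant plus) a quadratic form in the Majoranas, so $\tr(\rho\log\sigma_G)=\tr(\mca G(\rho)\log\sigma_G)$ whenever $\rho$ and $\mca G(\rho)$ share the same covariance, and both then deduce the Pythagorean identity $D(\rho\,\|\,\sigma_G)=D(\rho\,\|\,\mca G(\rho))+D(\mca G(\rho)\,\|\,\sigma_G)$. Your write-up is in fact more careful than the paper's, which does not explicitly address the rank-deficient case; your treatment of $\supp\rho\subseteq\supp\mca G(\rho)$ via the saturation of $\tr(\rho\,i\tilde\gamma_{2j-1}\tilde\gamma_{2j})=\pm1$ is a welcome addition.
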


\begin{figure}[t]
  \center{\includegraphics[width=6cm]  {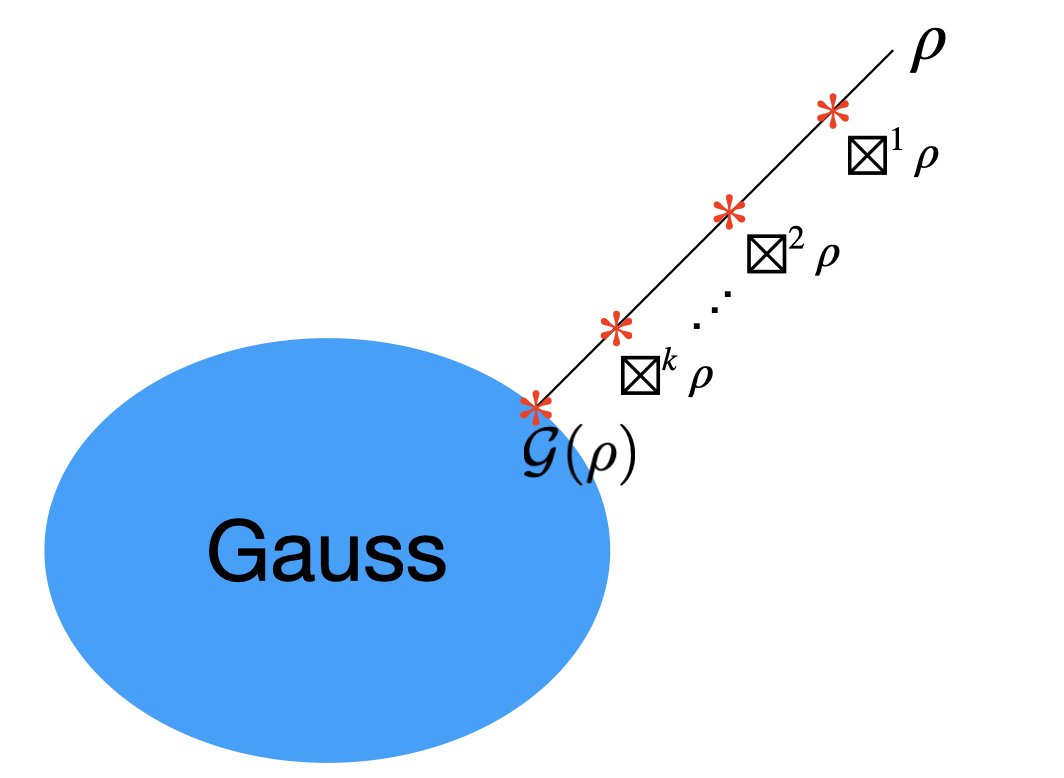}}     
  \caption{The Gaussification $\mca G(\rho)$ is the closest state to the states $\rho$ and $\boxtimes^k\rho$.}
  \label{fig:Clos}
 \end{figure}

If we take the set $\textbf{Gauss}$ to be the set of 
free states $\mathcal{F}$ in the resource theory of fermionic quantum computation, 
we can define the relative entropy of fermionic non-Gaussianity as 
\begin{eqnarray}
    NG_R(\rho):=\min_{\sigma_G\in \textbf{Gauss} }D(\rho||\sigma_G),
\end{eqnarray}
which also equals $S(\mca G(\rho))-S(\rho)$ by Proposition~\ref{prop:ent_cha}.
It is easy to show that $NG_R$ satisfies the following desirable properties which 
establishes it as a magic measure in the resource theory of fermionic quantum computation. 
\begin{prop}\label{prop:NGR}
The following properties hold for the relative entropy of fermionic non-Gaussianity $NG_R$ , 
    \begin{enumerate}[(1)]
    \item Faithfulness: $NG_R(\rho)\geq 0$ with equality iff $\rho$ is a fermionic Gaussian. 
    \item Gaussian-invariance: $NG_R(U_G\rho U^\dag_G)=NG_R(\rho)$ for any Gaussian unitary $U_G$. 
    \item Additivity under tensor product: $NG_R(\rho_1\ot\rho_2)=NG_R(\rho_1)+NG_R(\rho_2)$. 
\end{enumerate}
\end{prop}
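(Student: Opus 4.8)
The plan is to derive all three items from Proposition~\ref{prop:ent_cha} together with standard properties of the quantum relative entropy $D(\cdot||\cdot)$, namely nonnegativity, unitary invariance, and additivity over tensor products.

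\emph{Faithfulness.} Since $D(\rho||\sigma)\geq 0$ for any two states, with equality precisely when $\rho=\sigma$, the minimum defining $NG_R(\rho)$ is nonnegative. If $\rho\in\textbf{Gauss}$, taking $\sigma_G=\rho$ already gives $NG_R(\rho)=0$. Conversely, if $NG_R(\rho)=0$ then Proposition~\ref{prop:ent_cha} gives $D(\rho||\mathcal G(\rho))=0$, hence $\rho=\mathcal G(\rho)$, which is Gaussian by construction. This proves item (1).

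\emph{Gaussian-invariance.} I would use two facts: (i) relative entropy is invariant under a common unitary conjugation, $D(V\rho V^\dag||V\sigma V^\dag)=D(\rho||\sigma)$; and (ii) for a Gaussian unitary $U_G$, the map $\sigma_G\mapsto U_G\sigma_G U_G^\dag$ is a bijection of $\textbf{Gauss}$ onto itself, which follows from the defining form~\eqref{def:Gaussian} and the composition law for Gaussian unitaries (\cite{jozsa2008matchgates}). Combining these,
\begin{align*}
NG_R(U_G\rho U_G^\dag)&=\min_{\sigma_G\in\textbf{Gauss}}D(U_G\rho U_G^\dag || U_G\sigma_G U_G^\dag)\\&=\min_{\sigma_G\in\textbf{Gauss}}D(\rho||\sigma_G)=NG_R(\rho),
\end{align*}
where the first equality re-indexes the feasible set through the bijection in (ii) and the second uses (i). This proves item (2).

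\emph{Additivity.} The crucial step is the identity $\mathcal G(\rho_1\otimes\rho_2)=\mathcal G(\rho_1)\otimes\mathcal G(\rho_2)$. For even states the fermionic tensor product agrees with the ordinary one, so $\rho_1\otimes\rho_2$ is again even and, since the relevant Majorana operators act on disjoint mode sets while the $\rho_i$ are even, its covariance matrix is the block sum $\Sigma_{\rho_1}\oplus\Sigma_{\rho_2}$. On the other hand $\mathcal G(\rho_1)\otimes\mathcal G(\rho_2)$ is itself Gaussian, because the quadratic exponents on the two mode sets commute, $\exp(\tfrac{i}{2}\gamma^T h_1\gamma)\otimes\exp(\tfrac{i}{2}\gamma^T h_2\gamma)=\exp(\tfrac{i}{2}\gamma^T(h_1\oplus h_2)\gamma)$, and it has covariance $\Sigma_{\rho_1}\oplus\Sigma_{\rho_2}$. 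As a Gaussian state is uniquely determined by its covariance matrix, the two states coincide. Then, using Proposition~\ref{prop:ent_cha}, the additivity of relative entropy $D(\rho_1\otimes\rho_2||\sigma_1\otimes\sigma_2)=D(\rho_1||\sigma_1)+D(\rho_2||\sigma_2)$, and finally Proposition~\ref{prop:ent_cha} again,
\begin{align*}
NG_R(\rho_1\otimes\rho_2)&=D\big(\rho_1\otimes\rho_2\,||\,\mathcal G(\rho_1)\otimes\mathcal G(\rho_2)\big)\\&=D(\rho_1||\mathcal G(\rho_1))+D(\rho_2||\mathcal G(\rho_2))=NG_R(\rho_1)+NG_R(\rho_2).
\end{align*}
Equivalently, one can run the same argument through the entropic form $NG_R(\rho)=S(\mathcal G(\rho))-S(\rho)$ using additivity of the von Neumann entropy.

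The main obstacle I expect is precisely the Gaussification identity above: one must check carefully that, for the fermionic systems at hand, $\rho_1\otimes\rho_2$ really is the even state with covariance $\Sigma_{\rho_1}\oplus\Sigma_{\rho_2}$, so that uniqueness of the Gaussian state with prescribed covariance can be invoked; one should also note that reordering or interleaving the mode labels is harmless, since a mode permutation is a Gaussian unitary and item (2) already handles Gaussian-unitary covariance. Once this identity is established, faithfulness and Gaussian-invariance follow routinely from the standard relative-entropy properties quoted above.
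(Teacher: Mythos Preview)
Your proposal is correct and is precisely the kind of argument the paper has in mind: the paper does not spell out a proof of Proposition~\ref{prop:NGR} beyond ``It is easy to show,'' relying implicitly on Proposition~\ref{prop:ent_cha} together with the standard nonnegativity, unitary invariance, and tensor additivity of relative entropy, exactly as you do. Your verification of $\mca G(\rho_1\otimes\rho_2)=\mca G(\rho_1)\otimes\mca G(\rho_2)$ via the block-diagonal covariance matrix and uniqueness of the Gaussian with prescribed covariance is the natural route and matches the paper's characterization of $\mca G$ (cf.\ Lemma~\ref{thm:gaussianStateCharacterization} and Proposition~\ref{prp:cumulantTpAdditivity} in the appendix).
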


Recall the foundational concept of a resource destroying map in resource theory~\cite{LiuZiwenPRL17}: 
A map $\lambda$ from states to states is resource-destroying if 
it satisfies the following two conditions: i) it maps all states to free states, i.e., $\lambda(\rho)\in \mca F$ for any quantum state $\rho$, and 
ii) it fixes the free states, i.e., $\lambda(\sigma)=\sigma$ for any $\sigma\in \mca F$.
Natural resource-destroying maps are known in resource theories of coherence~\cite{BaumgratzPRL14,Plenio17}, asymmetry~\cite{Gour09}, bosonic non-Gaussianity~\cite{Marian13}, and magic in the stabilizer quantum computation~\cite{BGJ23a}. 
It is easy to check that 
the Gaussification $\mca G(\cdot)$ satisfies the conditions i) and ii) with $\mathcal{F}=\textbf{Gauss}$, which motivates the following statement: 
\begin{prop} Gaussification $\mca G(\cdot)$  is a (nonlinear) resource destroying map
in the resource theory of fermionic quantum computation with free states $\textbf{Gauss}$. 
\end{prop}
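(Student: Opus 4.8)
The plan is to check directly the two defining conditions of a resource-destroying map, with the free set taken to be $\mathcal{F}=\textbf{Gauss}$: namely (i) $\mca G(\rho)\in\textbf{Gauss}$ for every even state $\rho$, and (ii) $\mca G(\sigma)=\sigma$ for every $\sigma\in\textbf{Gauss}$.

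Condition (i) is immediate from the construction of $\mca G$. By definition $\mca G(\rho)=\lim_{k\to\infty}\boxtimes^{k}\rho$, and the fermionic central limit theorem (Appendix D) guarantees that this limit exists and coincides with the Gaussian state whose covariance matrix equals $\Sigma_\rho$. Hence $\mca G(\rho)$ is Gaussian by construction, and no further argument is needed.

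For condition (ii), the key input is that a fermionic Gaussian state is uniquely determined by its covariance matrix: writing any Gaussian $\sigma$ in the block-diagonal form $\sigma=\tfrac1{2^n}\prod_{j=1}^{n}(1+i\lambda_j\tilde\gamma_{2j-1}\tilde\gamma_{2j})$, the parameters $\lambda_j$ are exactly the eigenvalues appearing in the block-diagonalization of $\Sigma_\sigma$, so $\Sigma_\sigma$ fixes $\sigma$. On the one hand $\mca G(\sigma)$ is Gaussian with $\Sigma(\mca G(\sigma))=\Sigma_\sigma$ by part (i); on the other hand $\sigma$ is itself Gaussian with covariance $\Sigma_\sigma$. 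By uniqueness, $\mca G(\sigma)=\sigma$. Equivalently, one may argue via covariance directly: a short computation from \eqref{eq:convolutionUnitaryAction} and \eqref{eq:covariance} gives $\Sigma(\rho\boxtimes\sigma)=\tfrac12(\Sigma_\rho+\Sigma_\sigma)$, hence $\Sigma(\sigma\boxtimes\sigma)=\Sigma_\sigma$; since $\sigma\boxtimes\sigma$ is Gaussian (a stated property of convolution) with the same covariance as $\sigma$, again $\sigma\boxtimes\sigma=\sigma$, and iterating yields $\boxtimes^{k}\sigma=\sigma$ for all $k$, so $\mca G(\sigma)=\sigma$.

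Together, (i) and (ii) establish that $\mca G$ is a resource-destroying map for $\textbf{Gauss}$; the adjective ``nonlinear'' merely records that the self-convolution involves the quadratic map $\rho\mapsto\rho\ot\rho$, so $\mca G$ is not affine on density operators. I do not anticipate a genuine obstacle: the only nonroutine ingredients are the well-definedness of the Gaussification limit and the fact that a Gaussian state is pinned down by its covariance matrix, both of which are supplied in the appendices, and the rest is bookkeeping.
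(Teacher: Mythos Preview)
Your proposal is correct and follows exactly the approach the paper has in mind: the paper does not give a standalone proof but merely says ``it is easy to check that the Gaussification $\mca G(\cdot)$ satisfies the conditions i) and ii),'' and you have spelled out those checks using the same ingredients (the central limit theorem for (i), and the fixed-point property of Gaussian states under self-convolution --- equivalently, uniqueness of a Gaussian with given covariance --- for (ii)). Your alternative route to (ii) via Corollary~\ref{cor:gaussianFixedPoint} is in fact the most direct version of what the paper would invoke.
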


Moreover, $\mca G(\rho)$ is also the Gaussification of self-convolution state $\boxtimes^k\rho$, i.e.,$\mca G(\boxtimes^k\rho)=\mca G(\rho)$, the divergence can be shown to be monotonically decreasing: $D(\boxtimes^{k}\rho||\mca G(\rho))\geq D(\boxtimes^{k+1}\rho||\mca G(\rho))$ for any integer 
$k$ (See Figure.~\ref{fig:Clos}). This is the result of the following theorem, which we prove  in Appendix C. 
\begin{thm}
For any even state $\rho$, we have 
   \begin{align}
    S(\boxtimes^{k+1}\rho)\geq S(\boxtimes^k\rho), \quad \forall k.
\end{align} 
\end{thm}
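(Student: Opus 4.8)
The plan is to deduce the theorem from the single-step inequality $S(\rho\boxtimes\rho)\ge S(\rho)$, valid for every even state $\rho$, by iteration. Since $\boxtimes^{k+1}\rho=(\boxtimes^{k}\rho)\boxtimes(\boxtimes^{k}\rho)$, the theorem follows once the single-step bound is proved and we check that $\boxtimes^{k}\rho$ is even for every $k$. Evenness is preserved at each step because the beam splitter $W_{\theta}$ of~\eqref{eq:beamSplitterUnitary} is generated by a Hamiltonian quadratic in the Majorana operators, hence commutes with the total fermion parity and preserves the even subalgebra; $\rho\otimes\rho$ is even; and the partial trace over a set of modes sends even operators to even operators.

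For the single-step bound I would write $W:=W_{\pi/4}$ and $\tau:=W(\rho\otimes\rho)W^{\dag}$ on the bipartite $2n$-mode system (two registers of $n$ modes). By the definition of convolution $\Ptr{2}{\tau}=\rho\boxtimes\rho$; write $\Ptr{1}{\tau}$ for the complementary marginal. Since $W$ is unitary, $S(\tau)=S(\rho\otimes\rho)=2S(\rho)$, and subadditivity of the von Neumann entropy gives $2S(\rho)=S(\tau)\le S(\Ptr{2}{\tau})+S(\Ptr{1}{\tau})$. It therefore suffices to show that the two marginals of $\tau$ have equal entropy, whence $2S(\rho)\le 2S(\rho\boxtimes\rho)$.

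To get the equality of the two marginals — a symmetry of the balanced self-convolution between its output registers, and the step where evenness is genuinely used — I would exhibit a fermionic Gaussian unitary $T$ whose Majorana action is $\gamma_{j}\mapsto\gamma_{2n+j}$ and $\gamma_{2n+j}\mapsto-\gamma_{j}$ for $1\le j\le 2n$; its rotation matrix lies in $SO(4n)$, so $T$ is a genuine Gaussian unitary. From~\eqref{eq:beamSplitterUnitary} one checks that $T$ fixes the generator $\sum_{j}\gamma_{j}\gamma_{2n+j}$ (because $\gamma_{2n+j}(-\gamma_{j})=\gamma_{j}\gamma_{2n+j}$), so $TWT^{\dag}=W$; and, writing $T$ as a swap of the two registers composed with a sign flip of one register's Majorana operators, $T(\rho\otimes\rho)T^{\dag}=\rho\otimes\rho$, since the swap exchanges the two identical factors and the sign flip acts trivially on the even operator $\rho$. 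Hence $T\tau T^{\dag}=\tau$ while $T$ interchanges the registers, so conjugation by $T$ carries $\Ptr{1}{\tau}$ and $\Ptr{2}{\tau}$ into one another; they thus have equal entropy, and in fact both equal $\rho\boxtimes\rho$.

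The routine parts are the automorphism computation $TWT^{\dag}=W$ and the identity $\Ptr{1}{TXT^{\dag}}=\Ptr{2}{X}$, which reduces to the fact that a unitary supported on the traced-out register drops out of the partial trace, together with the relabeling induced by the swap. I expect the main obstacle to be the marginal-symmetry identity $\Ptr{1}{\tau}=\Ptr{2}{\tau}$ — equivalently $\rho\boxtimes_{-\theta}\rho=\rho\boxtimes_{\theta}\rho$ for even $\rho$ — since it is the only place one must argue rather than compute; once it is in hand the theorem is immediate from subadditivity, with no need for strong subadditivity or any fermion-specific entropy inequality, because after the Jordan-Wigner transformation $\tau$ is an ordinary density operator on $2n$ qubits.
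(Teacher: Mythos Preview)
Your proposal is correct and follows essentially the same route as the paper: subadditivity applied to $W(\rho\otimes\rho)W^\dag$, combined with the observation that the two marginals coincide for even inputs (the paper's Lemma~\ref{lem:comple}, $\rho\,\tilde\boxtimes_\theta\,\sigma=\rho\boxtimes_\theta\sigma$). The only cosmetic difference is that you package the marginal symmetry via a single Gaussian unitary $T$ that commutes with $W$, whereas the paper uses the exchange isomorphism $\mathcal S$ together with angle-reflectivity $\boxtimes_\theta=\boxtimes_{-\theta}$; both encode the same beam-splitter symmetry and rely on evenness in the same place.
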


\section{Fermionic Gaussian testing}
We have studied the basic properties of fermionic convolution. We now explore the applications of this tool to 
fermionic Gaussian testing. The question is: given an pure fermionic state $\psi$, how to decide whether $\psi$ is a Gaussian state? This is an important question because all non-Gaussian fermionic states have been shown to be magic states, which can be used to generate universal quantum computation~\cite{hebenstreit2019all}.

We introduce a protocol to test fermionic Gaussianity utilizing three copies of the state. This protocol consists of fermionic convolution and a swap test to measure the overlap between the self-convolved state $\psi \boxtimes \psi$ and $\psi$. 

\begin{center}
  \begin{tcolorbox}[width=8cm,height=3.6cm,title=Fermionic Gaussian Test]
\begin{enumerate}
    \item Prepare 2 copies of the pure states $\psi$ and compute the self-convolution $\psi\boxtimes\psi$.

    \item Perform the swap test between $\psi\boxtimes\psi$ and another copy $\psi$.  If the output is $0$, it passes the test; otherwise, it fails.
\end{enumerate}
\end{tcolorbox}
\end{center}
The probability of acceptance in the above fermionic Gaussian test is equal to 
\begin{eqnarray}
\text{Pr}_{\text{accept}}(\psi)
=\frac{1}{2}\left[1+\bra{\psi}\psi\boxtimes\psi\ket{\psi}
\right].
\end{eqnarray}

\begin{thm}
For any pure even state $\ket{\psi}$, the pure state is a Gaussian state if and only if the 
the probability of acceptance  $\text{Pr}_{\text{accep}}(\psi)=1$.
\end{thm}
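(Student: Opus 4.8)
The plan is to start from the displayed acceptance probability $\mathrm{Pr}_{\mathrm{accept}}(\psi)=\tfrac12\bigl[1+\langle\psi|\,\psi\boxtimes\psi\,|\psi\rangle\bigr]$, so that $\mathrm{Pr}_{\mathrm{accept}}(\psi)=1$ is equivalent to $\langle\psi|\,\psi\boxtimes\psi\,|\psi\rangle=1$. Since $\psi\boxtimes\psi$ is a density operator and $|\psi\rangle$ is a unit vector, $\langle\psi|\,\psi\boxtimes\psi\,|\psi\rangle\le 1$ always holds, with equality precisely when $\psi\boxtimes\psi=\proj{\psi}$. Thus the theorem reduces to the statement that \emph{the pure state $|\psi\rangle$ is Gaussian if and only if it is a fixed point of self-convolution}, i.e.\ $\psi\boxtimes\psi=\proj{\psi}$.

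For the forward implication I would use the properties of $\boxtimes$ recorded before the statement together with the covariance bookkeeping from Appendix C. If $|\psi\rangle$ is pure Gaussian, then $\psi\boxtimes\psi$ is Gaussian (convolution preserves Gaussianity), and under the balanced beam splitter the covariance transforms as $\Sigma_{\psi\boxtimes\psi}=\tfrac12\Sigma_\psi+\tfrac12\Sigma_\psi=\Sigma_\psi$. A Gaussian state is uniquely determined by its covariance matrix — explicitly through the normal form $\rho=\tfrac1{2^n}\prod_j(1+i\lambda_j\tilde\gamma_{2j-1}\tilde\gamma_{2j})$ — and $\psi$ is itself the Gaussian state with covariance $\Sigma_\psi$, so $\psi\boxtimes\psi=\proj{\psi}$ and $\mathrm{Pr}_{\mathrm{accept}}(\psi)=1$. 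Purity of $\psi$ enters here to guarantee $\lambda_j\in\{\pm1\}$, so that the covariance really pins down a pure state.

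For the reverse implication, suppose $\psi\boxtimes\psi=\proj{\psi}$. Then by induction $\boxtimes^k\psi=\proj{\psi}$ for every $k$, so the Gaussification limit is $\mca G(\psi)=\lim_{k\to\infty}\boxtimes^k\psi=\proj{\psi}$; since $\mca G$ always outputs a fermionic Gaussian state, $|\psi\rangle$ is Gaussian. An alternative entropic route: $S(\psi\boxtimes\psi)=S(\psi)=0$, and since self-convolution preserves von Neumann entropy only on Gaussian states — the equality case of the monotonicity theorem $S(\boxtimes^{k+1}\rho)\ge S(\boxtimes^k\rho)$ — $\psi$ must be Gaussian.

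The main obstacle is precisely this reverse direction: everything hinges on knowing that the \emph{only} pure fixed points of $\boxtimes$ are Gaussian. The Gaussification route makes this painless \emph{provided} the central-limit convergence $\boxtimes^k\rho\to\mca G(\rho)$ and the Gaussianity of $\mca G(\rho)$ are already available (Appendix D and the discussion around Proposition~\ref{prop:ent_cha}); otherwise one must supply the equality analysis in the entropy monotonicity theorem, which is the genuinely nontrivial input. A secondary point to verify carefully is the covariance-addition rule $\Sigma_{\rho\boxtimes_{\pi/4}\sigma}=\tfrac12(\Sigma_\rho+\Sigma_\sigma)$ and that, combined with purity, it forces $\psi\boxtimes\psi$ to equal $\proj{\psi}$ rather than merely a state sharing its covariance.
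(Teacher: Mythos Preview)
Your proposal is correct and follows essentially the same route as the paper. The paper's stated argument is your ``entropic route'': it invokes Corollary~\ref{cor:purityInvariance} (purity of $\psi\boxtimes\psi$ iff $\psi$ is pure Gaussian), which is the pure-state specialization of the equality case in Proposition~\ref{prop:entrop_ineq}; that equality case is in turn proved by iterating self-convolution and invoking the central limit theorem---exactly your ``Gaussification route.'' So your two routes are really the same argument viewed at different levels, and both match the paper. One small remark: in the forward direction purity is not needed---the fixed-point property $\rho_G\boxtimes\rho_G=\rho_G$ holds for every Gaussian state, not just pure ones, since a Gaussian state is determined by its covariance regardless of whether $\lambda_j\in\{\pm1\}$.
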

This theorem results from the fact that self-convolution is entropy-preserving if and only if the input is Gaussian. In particular, for a pure even state $\psi$, $\psi \boxtimes \psi$ remains pure if and only if $\psi$ is a fermionic Gaussian state; 
the detailed proof is provided in Appendix C. We also extend the Gaussian test from states to unitaries using the Choi–Jamiołkowski isomorphism in Appendix G. 

It is worth recalling that separability testing relies on the principle that a pure bipartite state $\ket{\phi}_{AB}$ is separable if and only if the reduced state $\rho_A$ is pure. This principle underpins the well-known entanglement measure: entanglement entropy $S(\rho_A)$. The fermionic Gaussian test motivates a new quantitative measure for fermionic non-Gaussianity, which we call ``non-Gaussian entropy.''

\textbf{\textit{Non-Gaussian entropy:}} 
The non-Gaussian entropy of any pure even state $\psi$ is defined as
\leqalign{eq:ngEntropyBase}{
    NG(\psi)=S(\psi\boxtimes\psi).
}

\begin{thm}
\label{thm:NGProperties}
 The non-Gaussian entropy for pure even states $\psi$ also satisfy properties (1)-(3) in~ Proposition~\ref{prop:NGR}. 
\end{thm}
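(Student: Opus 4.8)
The plan is to obtain each of the three properties from the corresponding structural feature of the self-convolution map $\psi\mapsto\psi\boxtimes\psi$ together with elementary facts about the von Neumann entropy $S$. Faithfulness is immediate: $NG(\psi)=S(\psi\boxtimes\psi)\ge 0$ always, and $S(\psi\boxtimes\psi)=0$ exactly when $\psi\boxtimes\psi$ is pure, which by the purity criterion recalled just before the theorem (and proved in Appendix C) holds for a pure even state iff $\psi$ is Gaussian. Gaussian invariance follows from the fact that convolution commutes with Gaussian unitaries: for a Gaussian unitary $U_G$ on the $n$ modes of $\psi$ we have $(U_G\psi U_G^\dag)\boxtimes(U_G\psi U_G^\dag)=U_G(\psi\boxtimes\psi)U_G^\dag$, so $NG(U_G\psi U_G^\dag)=S\big(U_G(\psi\boxtimes\psi)U_G^\dag\big)=S(\psi\boxtimes\psi)=NG(\psi)$ by unitary invariance of $S$.

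The substantive point is additivity, which reduces to the identity $(\psi_1\otimes\psi_2)\boxtimes(\psi_1\otimes\psi_2)=(\psi_1\boxtimes\psi_1)\otimes(\psi_2\boxtimes\psi_2)$. With $N=n_1+n_2$, the beam splitter $W_{\pi/4}$ on the two copies of $\psi_1\otimes\psi_2$ couples the $j$-th Majorana of the first copy to the $j$-th Majorana of the second; splitting the index range $j\in[2N]$ into the $\psi_1$-block $\{1,\dots,2n_1\}$ and the $\psi_2$-block $\{2n_1+1,\dots,2N\}$ factorizes $W_{\pi/4}=W^{(1)}W^{(2)}$ into two commuting Gaussian unitaries supported on disjoint Majorana sets, and $W^{(i)}$ is, after relabeling, exactly the balanced beam splitter of the two-copy $\psi_i$ system. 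Because all four tensor factors are even, the fermionic tensor product can be regrouped so that the input is a product across the $\psi_1$- and $\psi_2$-sectors; conjugation by $W_{\pi/4}$ preserves this product structure, and the partial trace over the second copy distributes over the two sectors. This yields the displayed identity, after which additivity of $S$ under tensor products gives $NG(\psi_1\otimes\psi_2)=S(\psi_1\boxtimes\psi_1)+S(\psi_2\boxtimes\psi_2)=NG(\psi_1)+NG(\psi_2)$.

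The one place that needs care is precisely this last manipulation: one must verify that the generator $\sum_{j=1}^{2N}\gamma_j\gamma_{2N+j}$ genuinely splits into two mutually commuting pieces, one built from the $\psi_1$-sector Majoranas of both copies and the other from the $\psi_2$-sector Majoranas, and that regrouping and partially tracing fermionic modes of even operators behaves exactly as for ordinary tensor products, so that no Jordan--Wigner sign factors intervene. Both are consequences of parity conservation for even operators --- the same facts that underlie the analogous additivity statement for $NG_R$ --- and once they are in place the theorem follows from nonnegativity, unitary invariance, and additivity of $S$ together with the purity criterion for $\psi\boxtimes\psi$.
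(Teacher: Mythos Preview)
Your proof is correct and follows essentially the same route as the paper: faithfulness via the purity criterion for $\psi\boxtimes\psi$ (which the paper phrases as the saturation condition of the entropy inequality), Gaussian invariance via commutation of convolution with Gaussian unitaries, and additivity via the factorization $(\psi_1\otimes\psi_2)\boxtimes(\psi_1\otimes\psi_2)=(\psi_1\boxtimes\psi_1)\otimes(\psi_2\boxtimes\psi_2)$. You actually give more justification for the factorization identity than the paper does---the paper simply asserts it---and your attention to the parity/evenness conditions needed to regroup fermionic tensor factors without sign issues is appropriate and not spelled out in the paper.
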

The proof is provided in Appendix C. This justifies the use of non-Gaussian entropy as a resource measure in the theory of fermionic quantum computation.

We can also generalize the non-Gaussian entropy
to higher orders. 
The $k$-th order non-Gaussian entropy
\leqalign{eq:ngEntropyInductive}{
    NG^{(k)}(\psi)=S(\boxtimes^k\psi)
}
also satisfies the above three properties in Theorem~\ref{thm:NGProperties}. 
Moreover, the asymptotic behavior of higher-order non-Gaussian entropy approaches the relative entropy of non-Gaussianity $NG_R$, i.e,
\begin{eqnarray}
    \lim_{k\to \infty}NG^{(k)}(\psi)
    =NG_R(\psi).
\end{eqnarray}
This demonstrates that non-Gaussian entropy is a finite-shot approximation 
of Gaussian projection divergence.

\textit{Example:} Let us consider a family of quantum states on a 4-qubit system,
\leqalign{def:4QubitMagic}{
    \ket{\psi_{\phi}}=\frac{1}{2}
    \left(\ket{0000}+\ket{0011}+\ket{1100}+e^{i\phi}\ket{1111} \right),
}
with $\phi\in (0,2\pi)$.  Theses state  are shown to be magic states for fermionic quantum computation~\cite{hebenstreit2019all}. We compute the non-Gaussian entropy and the relative entropy of non-Gaussianity for these states, plotting the results in Fig.~\ref{fig:NG}. All entropies are nonzero for $\phi \in (0,2\pi)$ , and $NG^{(3)}(\psi_{\phi})$ closely approximates $NG^{(\infty)}(\psi_{\phi})$; this suggests that $NG^{(k)}$ for small $k$ can serve as a good approximation of the relative entropy of non-Gaussianity.

\begin{figure}[t]
  \center{\includegraphics[width=7cm]{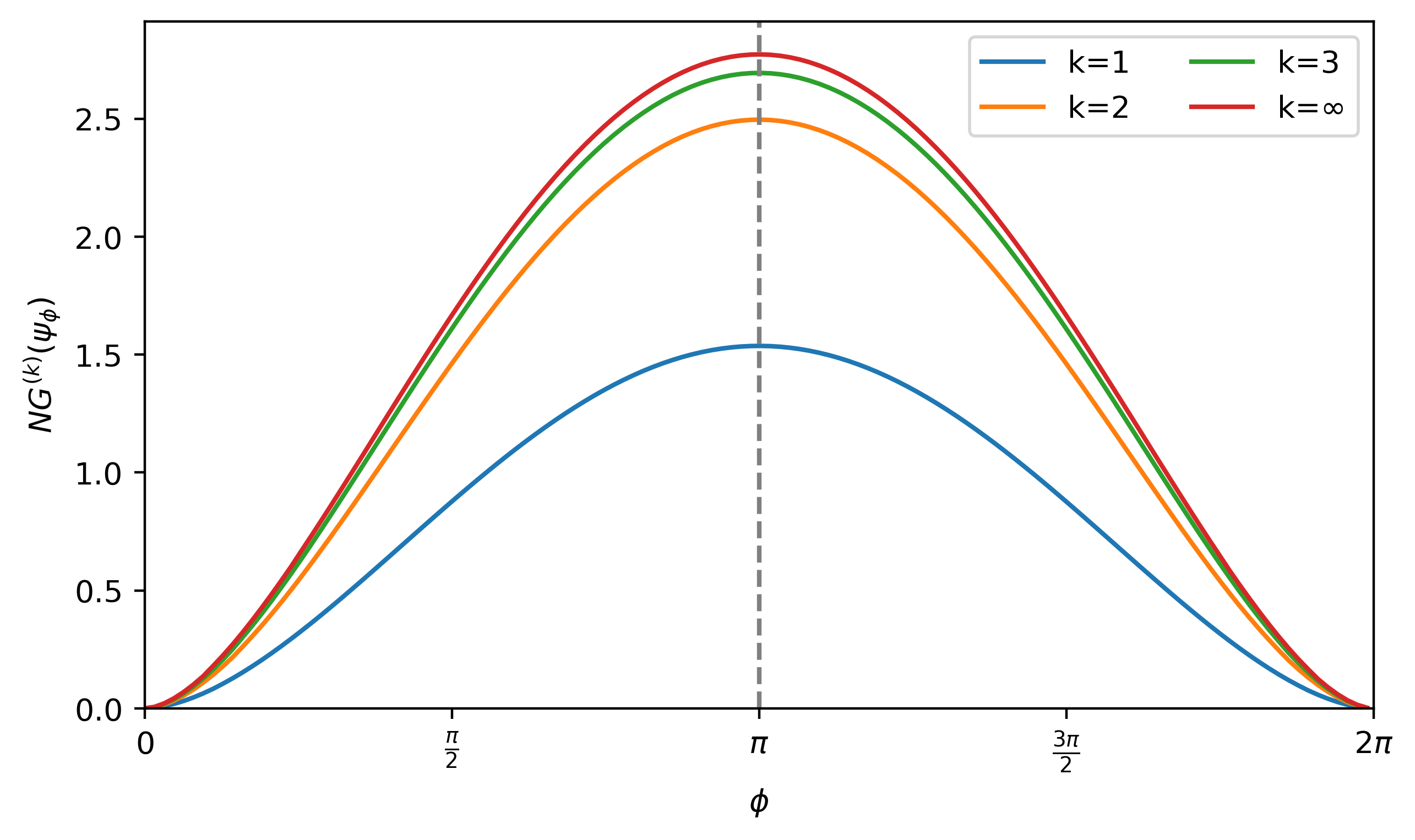}}     
  \caption{non-Gaussian entropy $NG^{(k)}(\psi_\phi)$ of the $4$-qubit state $\psi_\phi$ in~\eqref{def:4QubitMagic}  for $k=1, 2, 3, \infty$.}
  \label{fig:NG}
 \end{figure}

\textit{Remark:} We can also replace the quantum entropy $S(\rho)$ by the quantum R\'enyi entropy $S_{\alpha}(\rho)=\frac{1}{1-\alpha}\log\Tr{\rho^{\alpha}}$, resulting in the 
$k$-th order, $\alpha$ R\'enyi non-Gaussian entropy $NG^{(k)}_{\alpha}(\psi)=S_{\alpha}(\boxtimes^k_{\alpha})(\psi)$; this is also good measure to quantify the fermionic non-Gaussianity (Appendix C). 
To extend the non-Gaussian entropy to mixed states, we have two different forms depending on the choice of the set 
free states $\mca F$ for fermionic quantum computation. For $\mca F=\textbf{Gauss}$ as discussed in the main text, we can define the 
the $k$-th order non-Gaussian entropy as $NG^{(k)}(\rho)=S(\boxtimes^k\rho)-S(\rho)$, which also satisfies the properties in Proposition \ref{prop:NGR}.
If we choose $\mca F$ to be the convex hull of $\textbf{Gauss}$, then we can use the technique of convex roof to define the $k$-th order non-Gaussian entropy for mixed states
$\tilde{NG}^{(k)}(\rho)=\min_{\rho=\sum_i\lambda_i\proj{\psi_i}}\sum_i\lambda_iNG^{(k)}(\psi_i)$, where the minimization is taken over all 
pure state decomposition of $\rho$. It not only satisfies the properties in Proposition \ref{prop:NGR} but also the convexity, i.e., 
$\tilde{NG}^{(k)}(\sum_ip_i\rho_i)\leq \sum_ip_i\tilde{NG}^{(k)}(\rho_i)$.

\section{Conclusion and discussion}
In this work, we investigated 
fermionic convolution and its application to testing Gaussian components and quantifying non-Gaussian resources. 
These results offer a novel perspective on the role of non-Gaussianity 
in universal fermionic quantum computation.

This work also opens exciting avenues for future exploration. 
It will be insightful to relate the Non-Gaussian entropy, an important variational 
information quantity, with measures which quantify the classical simulability of 
fermionic non-Gaussian circuits. 
Additionally, although using only three copies already provides an efficient Gaussian test, 
it would be useful determine whether three copies are indeed optimal. 
Last but not least, 
a series of studies \cite{BGJ23a,BGJ23b,BGJ23c,BJ24a,BGJ24a,BJW24a,bu2024extremality} 
have characterized stabilizer states through discrete convolution, 
highlighting the Gaussian nature of stabilizer operations. 
This suggests potential for a unified framework for classically 
simulable states and circuits, leveraging their common Gaussian structures. 
Such a framework could pave the way for discovering new families of 
classically simulable quantum circuits by identifying new Gaussian operators. 

\section{Acknowledgments} 
We thank Arthur Jaffe, Weichen Gu, and Sasha Berger for helpful discussions. This work is supported in part
by the ARO Grant W911NF-19-1-0302 and the ARO
MURI Grant W911NF-20-1-0082.

\textit{Note:}  During the preparation of this manuscript, we became aware of another work studying fermionic convolution and the central limit theorem~\cite{Coffman24}.

%

\clearpage
\newpage
\onecolumngrid
\starttocentries

\appendix
\tableofcontents
\addtocontents{toc}{\protect\setcounter{tocdepth}{2}}
\addcontentsline{toc}{section}{Appendix}

\bigskip 

Here is an outline of the appendix:
In  Sec.~\ref{appen:algebra},
we introduce the mathematical foundations of 
Clifford and Grassman algebras, paying special attention to the the Fourier transform. 
In Sec.~\ref{Append:GauFourier}, we use the Fourier coefficients to defining moments, cumulants, 
and applythese concepts to characterize fermionic Gaussian states and quantify non-Gaussianity. 
In Sec.~\ref{appen:convolution}, 
we provide detailed proofs for the properties of fermionic convolution. 
Sec.~\ref{app:CLTBound} focuses on the central limit theorem for the fermionic 
convolution and provide a quantitative bound on the rate of convergence.
In Sec.~\ref{appen:Gassification}, we explore the properties of the Gaussification map.
In Sec.~\ref{app:gaussianUnitaryTest}, 
we extend Gaussian testing from states to unitaries.


\section {Background in Clifford and Grassmann algebras}\label{appen:algebra}
We review the mathematical background in the Clifford algebra $\Cl_{2n}$ over generators $\{\gamma_1, \cdots, \gamma_{2n}\}$ and the Grassmann algebra $\G_{2n}$ over generators $\{\eta_1, \cdots, \eta_{2n}\}$. 
For more details, see~\cite{cahill1999density,nielsen2005fermionic,bravyi2004lagrangian}.
\subsection{Clifford and Grassmann algebras}
\label{app:cliffordGrassmannAlgebras}

\begin{definition}[Clifford algebra]
    A finitely-generated Clifford algebra $\Cl_{2n}$ over $2n$ generators $\gamma_1, \cdots, \gamma_{2n}$ consists of complex polynomials over the generators $\gamma$ subject to the anti-commutation relation 
    \[ 
        \{\gamma_j, \gamma_k\} = 2\delta_{jk}, \quad j, k=1, \cdots, 2n. 
    \]
    In this work, we also require the generators to be self-adjoint $\gamma_j^* = \gamma_j$. The algebra $\Cl_{2n}$ is $2^{2n}$-dimensional, with basis elements $ \{\gamma_J:J\subset [2n]\}$ indexed by ordered subsets of $[2n]=\{1, \cdots, 2n\}$ according to 
    \[ 
        \gamma_J = \gamma_{J_1}\gamma_{J_2}\cdots \gamma_{J_{|J|}}. 
    \]
\end{definition}
Note that self-adjointness of the generators is not a standard requirement in mathematics literature, so strictly speaking $\Cl_{2n}$ 
should be denoted the Majorana algebra. However, we will continue referring to \(\Cl_{2n}\) as the Clifford algebra in this context in line with the terminology used in~\cite{hudson1980translation}. 

The Jordan-Wigner transform faithfully represents the Clifford algebra $\Cl_{2n}$ on the space $\Hi_n$ of $n$-qubit operators. The self-adjoint generators $\gamma_j^*=\gamma_j$ are represented as Hermitian Pauli operators, and the generated basis $\{\gamma_J\}$ is orthonormal under the Hilbert-Schmidt inner product 
\begin{eqnarray}
    \la A, B\ra = \df 1 {2^n} \tr(A^\dag B), \quad A, B\in \Hi_n. 
\end{eqnarray}
Using this representation, we identify 
\begin{eqnarray}
    \tr(1_{\Cl_{2n}}) = 2^n\quad \tr(\gamma_{J\neq \emptyset}) = 0.
\end{eqnarray}

\begin{definition}[Grassmann algebra]
    A finitely generated Grassmann algebra $\G_{2n}$ over $2n$ generators $\eta_1, \dots, \eta_{2n}$ 
    consists of complex polynomials over $\eta$ subject to the multiplication rules:
    \leqalign{eq:cliffordRelationApp}{
        \eta_a^2=0, \quad \{\eta_a, \eta_b\} = 0 \text{ when } a\neq b. 
    }
    The Grassmann basis $\{\eta_J\}$ is defined analogously to the Clifford case. 
\end{definition}
We use the orthonormal representation of $2n$ Grassmann generators on a $2n$-qubit Hilbert space $\Hi_{2n}$ as 
\begin{eqnarray}
    \eta_j = Z^{\ot j-1}\ot\sigma^+ \ot I^{\ot 2n-j}, \forall j\in [2n].
\end{eqnarray}
Here $\sigma^+ = \sqrt 2\, |0\ra \la 1|$, which satisfies $Z\sigma^+ = -\sigma^+Z=\sigma^+$. 
Note that the $2n$ generators only span a subspace of $\Hi_{2n}$. 
The generated basis $\{\eta_J\}_{J\subset [2n]}$ is orthonormal under the Hilbert-Schmidt product 
\begin{eqnarray}
    \la A, B\ra = \df 1 {2^{2n}} \tr(A^\dag B). 
\end{eqnarray}
Recall that a $*$-representation of an algebra is one on which the $*$-operator is represented by conjugate-transpose. 
Unlike the Clifford case, there is no faithful $*$-representation of the Grassmann algebra with Hermitian generators $\eta_j^\dag = \eta_j$. To see this, $\eta_j^*=\eta_j$ implies that the $*$-representation of generators must have real eigenvalues, but nilpotency $\eta_j^*\eta_j = 0$ imply that the representation must be trivial. 
In light of this, we cannot directly use the Hilbert-Schmidt formula to define the inner product, which are instead defined as 
\begin{eqnarray}
    \la \eta_J, \eta_K\ra = \df 1 {2^{2n}} (-1)^{|J|(|J|-1)/2} \tr(\eta_J\eta_K), 
    \quad \|A\|_2^2 = \la A, A\ra. 
\end{eqnarray}
This is the Hilbert-Schmidt product upon manually enforcing $\eta_j^\dag = \eta_j$. 
Using this representation, we identify 
\begin{eqnarray}
    \tr(1_{\G_{2n}}) = 2^{2n}, \quad \tr(\eta_{J\neq \emptyset})=0. 
\end{eqnarray}
We also need the anti-commuting tensor product when dealing with transformation between algebras. 

\begin{definition}[Anti-commuting tensor product]
Given two finitely generated algebras $\mca A =\langle\{a_1, \cdots, a_n\}\rangle$ and 
$\mca B=\langle\{b_1, \cdots, b_m\}\rangle$, the anti-commuting tensor product $\mca A\otimes_f \mca B$ is
is generated by
    \begin{eqnarray}
        \{a_j \otimes_f 1:1\leq j \leq n\}\cup \{1\otimes b_k: 1\leq k \leq m\}. 
    \end{eqnarray}
    Multiplication is defined by the following relation,
    \leqalign{eq:antiCommTP}{
        (a_j\otimes_f 1)(1\otimes_f b_k) = -(1\otimes_f b_k)(a_j\otimes_f 1). 
    }
\end{definition}
To simplify notation, we typically equate $a_j, b_k$ with their equivalents in $\mca A\otimes_f \mca B$, leading to the expression $a_jb_k = -b_ka_j$ as shown in equation~\eqref{eq:antiCommTP}. We resort to the full notation only when explicit clarification of spaces is necessary.


\begin{definition}[Even subspace]
    The even subspace (subalgebra) $\mca {A_E}$ of a finitely-generated algebra $\mca A$ is spanned by the products of an even number of generators i.e., all nontrivial terms in the polynomial expansion are of even degree. 
\end{definition}
In this work, we will focus the even space with following isomorphism
\begin{eqnarray}
    \mca A \ot_f \mca {B_E} \cong \mca A \ot \mca {B_E}, 
\end{eqnarray}
where $\mca A$ and $\mca B$ could be either the Clifford algebra $\Cl_{2n}$ or the Grassmann algebra $\G_{2n}$. This identification exploits the property that even elements $b_K$ commute with $a_J$ irrespective of using $\ot$ or $\ot_f$: 
\begin{eqnarray} 
(a_J \ot_f 1)(1\ot b_K) \in \mca A \ot_f \mca {B_E} \leftrightarrow a_J \ot b_K \in \mca A \ot \mca {B_E} 
\end{eqnarray}
The reason we primarily focus on the even subspace is that the desirable mathematical properties of fermion operations are typically defined in $\mca A\ot_f B$ but, operationally, we can only access the algebra $\mca A\ot \mca B$. 


\subsection{Grassmann-Clifford Fourier transform}
\label{app:GrassmannFourierTransform}
The Clifford algebra is intricately related to the Grassmann algebra by 
a Fourier transform which effectively corresponds to a formal 
relabeling of generators. Here, we recall the Grassmann algebra 
and consider this Fourier transform, 
which to the best of our knowledge appeared under the name of ``moment-generating function'' in~\cite{hudson1980translation}.

Recall that the classical Fourier transform is the integral with respect to kernel 
$e^{i\mbf t\cdot \mbf x} $ as follows
\begin{eqnarray}
    \mca F[f](\mbf t) = \int e^{i\mbf t\cdot \mbf x} f(\mbf x)\, d\mbf x. 
\end{eqnarray}
Hence, we consider the following kernel in the definition of the 
Grassman-Clifford Fourier transform,
\begin{align}
    e^F\quad \text{with}~F:=\gamma^T \eta = \sum_j \gamma_j \eta_k \in \Cl_{2n}\otimes_f \G_{2n}.
\end{align}
The kernel expands into the algebra basis as 
\begin{eqnarray}
        \label{eq:FourierKernel}
        e^F = \sum_J (-1)^{j(j-1)/2} \eta_J \gamma_J = \sum_J \eta_J \gamma_J^\dag. 
    \end{eqnarray}
where $j\equiv |J|$ is the size (degree) of the index (monomial). 
To see this, the polynomial expansion of $e^F$ is trivial after the $2n$-th degree. 
In the $k$-th degree (in $\gamma$ or $\eta$), 
the $1/k!$ Taylor coefficient cancels with the $k!$ ways of picking 
nontrivial terms across identical products, and reordering 
$\eta_{J_1}\gamma_{J_1}\cdots \eta_{J_j}\gamma_{J_j}$ into $\eta_J \gamma_J$ 
requires $j(j-1)/2$ swaps.

Denote by $\tr_{\Cl}$ and $\tr_{\G}$ the partial trace over the Clifford and Grassmann algebras, respectively, then 
\leqalign{}{
    \Xi_A(\eta) &= \mca F[A](\eta) = \tr_{\Cl}\left(e^F A\right) 
    = \df 1 {2^n} \sum_J A_J \, \tr_\Cl (\eta_J \gamma_J^\dag \gamma_J) = \sum_J A_J \eta_J,
}
is the Grassman-Clifford Fourier transform. 
The conjugate action  of Gaussian unitaries 
can be rewritten in the Grassmann formalism as 
\leqalign{eq:gaussianUnitaryGrassmannAction}{
      \Xi_{UAU^\dag}(\eta) = \Xi_A(R\eta),
}
where $\Xi_A(R\eta)$ means substituting $\eta_j\mapsto \sum_k R_{jk}\eta_k$ in the expression of $\Xi_A(\eta)$.

\subsection{Even states and unitaries}

Recall that an element $A\in \Cl_{2n}$ is even iff its expansion in the majorana basis only contains nontrivial even-degree terms. In this section, we derive two operational tests which distinguish when a state $\rho$ or unitary $U$ is even. We work in the Jordan-Wigner representation of $\Cl_{2n}$. 

\begin{proposition}
    \label{prp:evenOpCharacterization}
    An operator $A\in \Cl_{2n}$ is even if and only if 
    \begin{eqnarray}
        [A, Z^{\ot n}] = 0. 
    \end{eqnarray}
    Note that $Z^{\ot n} = (-i)^n \gamma_{[2n]}$ is proportional to the parity operator $\gamma_{[2n]}$ on $\Cl_{2n}$. 
\end{proposition}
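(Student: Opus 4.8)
The plan is to reduce the statement to a sign rule for Majorana monomials against the full product $\gamma_{[2n]} = \gamma_1\gamma_2\cdots\gamma_{2n}$ and then invoke linear independence of the Majorana basis. Since $Z^{\otimes n}=(-i)^n\gamma_{[2n]}$ differs from $\gamma_{[2n]}$ by a nonzero scalar, it suffices to prove that $A\in\Cl_{2n}$ is even if and only if $[A,\gamma_{[2n]}]=0$. For the reader's convenience I would first record the one-line identity $\gamma_{[2n]}=\prod_{j=1}^n\gamma_{2j-1}\gamma_{2j}=i^nZ^{\otimes n}$ in the Jordan--Wigner representation (each pair $\gamma_{2j-1}\gamma_{2j}$ acts as $iZ$ on qubit $j$, and the pairs commute), which justifies the parenthetical remark in the statement.

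Next I would establish the sign rule: for every ordered subset $J\subseteq[2n]$,
\[
\gamma_J\,\gamma_{[2n]} = (-1)^{|J|}\,\gamma_{[2n]}\,\gamma_J .
\]
By multiplicativity it is enough to prove this for a single generator $\gamma_j$, i.e.\ that $\gamma_j$ anticommutes with $\gamma_{[2n]}$. This follows from the anticommutation relations~\eqref{eq:cliffordRelation} together with $\gamma_j^2=I$ by counting transpositions: moving $\gamma_j$ into position $j$ in $\gamma_1\cdots\gamma_{2n}$ from the left costs $j-1$ sign flips, while moving it out to the right costs $2n-j$ flips, and the difference $2j-1-2n$ is odd. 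Applying this generator by generator to a product $\gamma_J=\gamma_{j_1}\cdots\gamma_{j_{|J|}}$ accumulates the factor $(-1)^{|J|}$.

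Then I would expand $A=2^{-n}\sum_J A_J\gamma_J$ via~\eqref{eq:majoranaExpansion} and compute
\[
[A,\gamma_{[2n]}] = \frac{1}{2^n}\sum_J A_J\big((-1)^{|J|}-1\big)\,\gamma_{[2n]}\gamma_J ,
\]
where the scalar $(-1)^{|J|}-1$ is $0$ for $|J|$ even and $-2$ for $|J|$ odd. The last ingredient is that the family $\{\gamma_{[2n]}\gamma_J\}_{J\subseteq[2n]}$ is linearly independent: $\gamma_{[2n]}$ is invertible (a product of invertible generators), so left multiplication by it is a linear bijection of $\Cl_{2n}$ and hence sends the basis $\{\gamma_J\}$ to a basis; concretely $\gamma_{[2n]}\gamma_J=\pm\gamma_{J^c}$ with $J^c$ the complement. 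Consequently $[A,\gamma_{[2n]}]=0$ forces $A_J=0$ for every $J$ of odd size, i.e.\ $A$ is even, and the converse is immediate from the same identity.

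There is no serious obstacle; the only point requiring care is the sign bookkeeping in the second step---equivalently, making precise that conjugation by $\gamma_{[2n]}$ implements the parity automorphism $\gamma_j\mapsto-\gamma_j$---and noting that the conclusion uses linear independence of $\{\gamma_{[2n]}\gamma_J\}$ rather than of $\{\gamma_J\}$ directly, which is why the invertibility remark is included.
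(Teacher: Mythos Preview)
Your proof is correct; the paper states this proposition without proof, treating it as a standard fact about the Clifford algebra. Your argument---establishing that conjugation by $\gamma_{[2n]}$ implements the parity automorphism $\gamma_j\mapsto-\gamma_j$, expanding $A$ in the Majorana basis, and invoking linear independence---is exactly the natural way to verify it, and the care you take with the invertibility of $\gamma_{[2n]}$ to ensure $\{\gamma_{[2n]}\gamma_J\}$ remains a basis is appropriate.
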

\begin{proposition}
    \label{prp:evenUnitaryCharacterization}
    A unitary $U\in \Cl_{2n}$ is even if and only if 
    \begin{eqnarray}
        Z^{\ot n}U|+\ra^{\ot n} = UZ^{\ot n}|+\ra^{\ot n}. 
    \end{eqnarray}
    \begin{proof}
        By Proposition~\ref{prp:evenOpCharacterization}, $U$ is even iff $UZ^{\ot n} = Z^{\ot n}U$. This holds if and only if they share the computational eigenbasis $\{|x\ra:x\in \{0, 1\}^n\}$. 
    \end{proof}
\end{proposition}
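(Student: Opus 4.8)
The plan is to reduce the claim to the operator-level characterization in Proposition~\ref{prp:evenOpCharacterization}. By that proposition, for $U\in\Cl_{2n}$ in the Jordan--Wigner picture, ``$U$ even'' is equivalent to $[U,Z^{\ot n}]=0$, i.e.\ to the operator identity $UZ^{\ot n}=Z^{\ot n}U$; so the content to establish is the equivalence between this operator identity and the single-vector identity $Z^{\ot n}U\ket{+}^{\ot n}=UZ^{\ot n}\ket{+}^{\ot n}$. The forward direction is immediate: if $UZ^{\ot n}=Z^{\ot n}U$, apply both sides to $\ket{+}^{\ot n}$.

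For the converse, the first step I would take is to rewrite the hypothesis using unitarity of $U$: $Z^{\ot n}U\ket{+}^{\ot n}=UZ^{\ot n}\ket{+}^{\ot n}$ is the same as $Q\ket{+}^{\ot n}=Z^{\ot n}\ket{+}^{\ot n}$, where $Q:=U^\dag Z^{\ot n}U$ is a Hermitian involution unitarily conjugate to $Z^{\ot n}$, hence with the same $(\pm 1)$-eigenspace dimensions $2^{n-1}$. Writing $P_\pm=\tfrac{1}{2}(I\pm Z^{\ot n})$ and $\Pi_\pm=\tfrac{1}{2}(I\pm Q)$ for the two pairs of spectral projectors and decomposing $\ket{+}^{\ot n}=v_++v_-$ with $v_\pm=P_\pm\ket{+}^{\ot n}$ the (normalized) uniform superpositions over even- and odd-weight strings, the hypothesis $Q\ket{+}^{\ot n}=v_+-v_-$ combined with $\ket{+}^{\ot n}=v_++v_-$ forces $\Pi_\pm\ket{+}^{\ot n}=v_\pm$. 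Since $Q=Z^{\ot n}$ is equivalent to $\Pi_+=P_+$, what remains is to upgrade ``the two projectors agree on the single vector $\ket{+}^{\ot n}$'' to ``the two projectors are equal,'' using the explicit form of $v_\pm$ together with the rank constraint.

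This last upgrade is the main obstacle, and the point I would want to scrutinize most carefully: a single unit vector does not in general determine a rank-$2^{n-1}$ projector, and for $n\ge 2$ one can exhibit an involution $Q$ conjugate to $Z^{\ot n}$ with $\Pi_+\ket{+}^{\ot n}=P_+\ket{+}^{\ot n}$ yet $\Pi_+\ne P_+$ (e.g.\ for $n=2$, take the $(+1)$-eigenspace of $Q$ to be $\mathrm{span}\{\tfrac{1}{\sqrt{2}}(\ket{00}+\ket{11}),\tfrac{1}{\sqrt{2}}(\ket{01}-\ket{10})\}$). So the converse as stated appears to need either extra structure I have not used, or---more plausibly---a strengthened test in which the identity $Z^{\ot n}U\ket{\psi}=UZ^{\ot n}\ket{\psi}$ is required on a family of inputs $\ket{\psi}$ that is cyclic for the algebra generated by $Z^{\ot n}$ and $Q$ (for instance on all computational basis vectors $\ket{x}$, or on $\ket{+}^{\ot n}$ together with suitable $X$-rotated copies). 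Making the converse rigorous therefore hinges on identifying precisely which input states certify evenness; the reduction in the first step and the spectral-projector reformulation above are routine, but this final point is where the real work lies.
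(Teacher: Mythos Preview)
Your analysis is correct, and in fact sharper than the paper's own argument. The paper's proof consists of the two sentences ``By Proposition~\ref{prp:evenOpCharacterization}, $U$ is even iff $UZ^{\ot n} = Z^{\ot n}U$. This holds if and only if they share the computational eigenbasis $\{|x\ra:x\in \{0, 1\}^n\}$.'' The first sentence is your reduction; the second is not further justified and, as you show, cannot be: the converse direction of the proposition is false for $n\ge 2$.

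Your counterexample is valid. For $n=2$, take $Q$ to be the Hermitian involution with $(+1)$-eigenspace $\mrm{span}\{\tfrac{1}{\sqrt 2}(|00\ra+|11\ra),\,\tfrac{1}{\sqrt 2}(|01\ra-|10\ra)\}$. Writing $|+\ra^{\ot 2}=\tfrac{1}{\sqrt 2}\bigl(\tfrac{1}{\sqrt 2}(|00\ra+|11\ra)+\tfrac{1}{\sqrt 2}(|01\ra+|10\ra)\bigr)$ one checks directly that $Q|+\ra^{\ot 2}=Z^{\ot 2}|+\ra^{\ot 2}$ while $Q\ne Z^{\ot 2}$. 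Since $Q$ and $Z^{\ot 2}$ have the same spectrum, there is a unitary $U$ with $U^\dag Z^{\ot 2}U=Q$; any such $U$ satisfies $Z^{\ot 2}U|+\ra^{\ot 2}=UQ|+\ra^{\ot 2}=UZ^{\ot 2}|+\ra^{\ot 2}$ yet fails to commute with $Z^{\ot 2}$, hence is not even. (Under Jordan--Wigner, $\Cl_{2n}$ is the full $n$-qubit operator algebra, so $U\in\Cl_{2n}$ automatically.)

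So the gap you flagged is genuine: a single input $|+\ra^{\ot n}$ cannot certify $[U,Z^{\ot n}]=0$ once the parity eigenspaces have dimension $\ge 2$. Your suggested repair---testing on a cyclic family such as all $|x\ra$---is the natural fix, and the paper's one-line proof does not supply the missing argument.
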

\begin{proposition}
    \label{prp:evenStateCharacterization}
    A state $\rho$ is even if and only if 
    \begin{eqnarray}
        \rho = Z^{\ot n} \rho Z^{\ot n}. 
    \end{eqnarray}
    \begin{proof}
       This follows from commutativity in Proposition~\ref{prp:evenOpCharacterization}. 
    \end{proof}
\end{proposition}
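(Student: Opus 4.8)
The plan is to reduce the claim directly to Proposition~\ref{prp:evenOpCharacterization}, which already characterizes evenness of an arbitrary element $A\in\Cl_{2n}$ by commutation with $Z^{\ot n}$. Since a state $\rho$ is in particular an element of $\Cl_{2n}$, that proposition tells us $\rho$ is even if and only if $[\rho, Z^{\ot n}]=0$, i.e. $\rho Z^{\ot n} = Z^{\ot n}\rho$. So it suffices to show that this commutation relation is equivalent to the conjugation identity $\rho = Z^{\ot n}\rho Z^{\ot n}$.

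First I would record the elementary fact that $Z^{\ot n}$ is self-inverse, $(Z^{\ot n})^2 = I$ (equivalently $Z^2 = I$ on each tensor factor), and unitary. Then, assuming $\rho Z^{\ot n} = Z^{\ot n}\rho$, I would multiply both sides on the right by $Z^{\ot n}$ and use $(Z^{\ot n})^2=I$ to obtain $\rho = Z^{\ot n}\rho Z^{\ot n}$. For the converse, starting from $\rho = Z^{\ot n}\rho Z^{\ot n}$ I would again multiply on the right by $Z^{\ot n}$ to recover $\rho Z^{\ot n} = Z^{\ot n}\rho$, i.e. $[\rho, Z^{\ot n}]=0$, and then invoke Proposition~\ref{prp:evenOpCharacterization} once more to conclude that $\rho$ is even.

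There is essentially no obstacle here: the content is entirely carried by Proposition~\ref{prp:evenOpCharacterization}, and the remaining step is just the standard observation that conjugation by an involution coincides with commutation with it. If one wanted to avoid citing the operator-level proposition, an alternative (but strictly longer) route would be to expand $\rho = \frac{1}{2^n}\sum_J \rho_J \gamma_J$ in the Majorana basis and use that $Z^{\ot n}\propto\gamma_{[2n]}$ anticommutes with $\gamma_J$ for $|J|$ odd and commutes with it for $|J|$ even, so that $Z^{\ot n}\gamma_J Z^{\ot n} = (-1)^{|J|}\gamma_J$; then $\rho = Z^{\ot n}\rho Z^{\ot n}$ forces $\rho_J=0$ for all odd $|J|$, which is exactly evenness. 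I would keep this only as a parenthetical remark and present the short argument through Proposition~\ref{prp:evenOpCharacterization} as the actual proof.
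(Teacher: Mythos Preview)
Your proposal is correct and follows exactly the paper's approach: reduce to Proposition~\ref{prp:evenOpCharacterization} and use that $(Z^{\ot n})^2=I$ to pass between $[\rho,Z^{\ot n}]=0$ and $\rho=Z^{\ot n}\rho Z^{\ot n}$. The alternative Majorana-basis argument you sketch is fine but unnecessary here.
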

Note that in this context, a state being ``even'' means that it has definite computational-basis parity, not specifically even computational-basis parity. For example, both $|0\ra$ and $|1\ra$ are even elements of $\Cl_{2n}$, but $|+\ra$ is not. 
\begin{lemma}[Even state test]
    \label{lem:evenStateTest}
    A pure state $\ket{\psi}$ is even iff $(\proj{\psi}, Z^{\ot n} \proj{\psi}Z^{\ot n)}$ passes the swap test. 
\end{lemma}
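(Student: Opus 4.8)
The strategy is to recognize that this lemma is simply the operational (``swap-test'') reformulation of Proposition~\ref{prp:evenStateCharacterization}, so its entire content is a one-line translation once the behaviour of the swap test is recalled.

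First I would recall the standard fact that the swap test applied to states $\rho$ and $\sigma$ accepts (outputs $0$) with probability $\tfrac12\bigl(1+\tr(\rho\sigma)\bigr)$. Hence the pair passes the test with certainty if and only if $\tr(\rho\sigma)=1$; specialising to two pure states $\rho=\proj{\psi}$ and $\sigma=\proj{\phi}$, this reads $|\iinner{\psi}{\phi}|^{2}=1$, which by the equality case of Cauchy--Schwarz (both vectors being normalised) is equivalent to $\proj{\psi}=\proj{\phi}$. Next I would apply this with $\ket{\phi}=Z^{\ot n}\ket{\psi}$, so that $\sigma = Z^{\ot n}\proj{\psi}Z^{\ot n}$: the pair $(\proj{\psi},\,Z^{\ot n}\proj{\psi}Z^{\ot n})$ passes the swap test deterministically if and only if
\[
    Z^{\ot n}\proj{\psi}Z^{\ot n}=\proj{\psi}.
\]
Finally, Proposition~\ref{prp:evenStateCharacterization} states precisely that this identity characterises $\proj{\psi}$ being an even element of $\Cl_{2n}$, i.e.\ $\ket{\psi}$ being even, which closes the equivalence.

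There is no genuine obstacle here; the only point worth a remark is the global-phase bookkeeping. The condition $Z^{\ot n}\proj{\psi}Z^{\ot n}=\proj{\psi}$ says that $\ket{\psi}$ is an eigenvector of the unitary $Z^{\ot n}$; since $Z^{\ot n}$ has spectrum $\{+1,-1\}$, the corresponding eigenvalue is automatically $\pm1$, so $\ket{\psi}$ has definite computational-basis parity in the sense used in Propositions~\ref{prp:evenOpCharacterization}--\ref{prp:evenStateCharacterization}. Thus the swap-test step contributes only the passage to an experimentally accessible criterion, while all the algebraic substance is inherited from Proposition~\ref{prp:evenStateCharacterization}.
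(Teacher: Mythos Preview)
Your proposal is correct and matches the paper's treatment: the paper states the lemma without proof, evidently regarding it as an immediate consequence of Proposition~\ref{prp:evenStateCharacterization} together with the standard swap-test acceptance probability, which is exactly what you spell out.
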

\begin{lemma}[Even unitary test]
    \label{lem:evenUnitaryTest}
    A unitary $U$ is even iff $(Z^{\ot n}U|+\ra^{\ot n}, UZ^{\ot n}|+\ra^{\ot n})$ passes the swap test. 
\end{lemma}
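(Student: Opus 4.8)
\emph{Proof sketch (plan).} The plan is to recognize the swap test in the statement as the operational form of the characterization in Proposition~\ref{prp:evenUnitaryCharacterization}, and then invoke that proposition.

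First I would recall the semantics of the swap test: applied to two pure states $\ket{\phi_1},\ket{\phi_2}$, it outputs $0$ with probability $\tfrac12\bigl(1+|\iinner{\phi_1}{\phi_2}|^2\bigr)$, so it ``passes'' (accepts with certainty) precisely when $|\iinner{\phi_1}{\phi_2}|=1$, i.e.\ when $\ket{\phi_1}$ and $\ket{\phi_2}$ agree. Here $\ket{\phi_1}=Z^{\ot n}U\ket{+}^{\ot n}$ and $\ket{\phi_2}=UZ^{\ot n}\ket{+}^{\ot n}$ are unit vectors, since $U$ and $Z^{\ot n}$ are unitary, so passing the test is equivalent to $Z^{\ot n}U\ket{+}^{\ot n}=UZ^{\ot n}\ket{+}^{\ot n}$.

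Granting that, the equivalence is immediate. If $U$ is even, Proposition~\ref{prp:evenUnitaryCharacterization} gives $Z^{\ot n}U\ket{+}^{\ot n}=UZ^{\ot n}\ket{+}^{\ot n}$, so the overlap has modulus one and the test passes. Conversely, passing the test yields exactly the hypothesis of Proposition~\ref{prp:evenUnitaryCharacterization}, hence $U$ is even. (Should a self-contained treatment be wanted, the reverse direction of the proposition can be redone by splitting $\ket{+}^{\ot n}$ into its even- and odd-Hamming-weight components, tracking the action of $Z^{\ot n}$, and using unitarity of $U$.)

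The step I expect to be the main obstacle is the global phase. A bare swap test only certifies $\ket{\phi_1}\propto\ket{\phi_2}$, i.e.\ $Z^{\ot n}U\ket{+}^{\ot n}=e^{i\alpha}UZ^{\ot n}\ket{+}^{\ot n}$, while Proposition~\ref{prp:evenUnitaryCharacterization} requires $\alpha\in 2\pi\Z$. In contrast to the even-\emph{state} test (Lemma~\ref{lem:evenStateTest}), where comparing density operators makes the phase irrelevant and ``$Z^{\ot n}$-eigenvector up to a phase'' is already the same as ``even,'' the relative phase between $UZ^{\ot n}$ and $Z^{\ot n}U$ is physically meaningful for a unitary, so the test must be run in a phase-sensitive form --- e.g.\ a Hadamard-type interference test that puts a control qubit in $\ket{+}$, applies the two factors $Z^{\ot n}$ conditioned on complementary control values with $U$ acting in between, and measures the control; its acceptance probability is $\tfrac12\bigl(1+\re\,\iinner{\phi_1}{\phi_2}\bigr)$, which equals one iff $\iinner{\phi_1}{\phi_2}=1$ iff the two vectors are exactly equal. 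Read this way, the claim reduces to Proposition~\ref{prp:evenUnitaryCharacterization} as above; this phase bookkeeping is the only nontrivial ingredient.
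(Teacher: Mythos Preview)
The paper states Lemma~\ref{lem:evenUnitaryTest} without proof, so there is nothing to compare against directly; the evident intent is that it be an immediate operational reformulation of Proposition~\ref{prp:evenUnitaryCharacterization}, exactly as you outline.

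Your identification of the global-phase obstacle is not mere bookkeeping but a genuine gap in the lemma as literally stated. The standard swap test on pure kets accepts with probability $\tfrac12(1+|\iinner{\phi_1}{\phi_2}|^2)$ and hence cannot distinguish $\ket{\phi_1}$ from $e^{i\alpha}\ket{\phi_2}$. A concrete counterexample is $n=1$, $U=\gamma_1=X$: this is an odd (not even) unitary, yet $ZU\ket{+}=Z\ket{+}=\ket{-}$ and $UZ\ket{+}=X\ket{-}=-\ket{-}$ differ only by a sign, so the phase-insensitive swap test passes. More generally, any purely odd unitary satisfies $UZ^{\ot n}=-Z^{\ot n}U$ and therefore $Z^{\ot n}U\ket{+}^{\ot n}=-UZ^{\ot n}\ket{+}^{\ot n}$, which again passes the ordinary swap test while $U$ fails to be even.

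Your proposed fix---replacing the swap test by a controlled-interference (Hadamard) test that measures $\re\,\iinner{\phi_1}{\phi_2}$ rather than $|\iinner{\phi_1}{\phi_2}|^2$---is the correct repair, and with that amendment the argument via Proposition~\ref{prp:evenUnitaryCharacterization} goes through cleanly. So your proof sketch is sound and in fact sharper than the paper here: you have caught an inaccuracy in the lemma's statement that the paper leaves unaddressed.
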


\section{Characterization of Gaussianity by Fourier coefficients}\label{Append:GauFourier}

In this section, we introduce Fourier coefficients, moments, and cumulants for the fermionic system. These quantities are used to characterize fermionic Gaussian operators and quantify non-Gaussianity.
\subsection{Moments and cumulants}
\label{app:momentCumulant}
The moments and cumulants of a state $\rho\in \Cl_{2n}$ are defined analogous to their classical statistics counterparts: moments are overlaps with basis projective operators, while cumulants are specific combinations of moments that are additive under convolution.
 
\begin{definition}[Moments]
\label{def:moment}
Given a multi-index $J\subset [2n]$, the $J$-th moment of a $n$-qubit state $\rho\in \Cl_{2n}$ is 
\begin{align}
    \rho_J=\tr(\gamma^\dag_J\rho).
\end{align}
The moment-generating operator (or function) $\Xi_\rho\in \G_{2n}$ is the Fourier transform of $\rho$: 
\begin{eqnarray}
    \Xi_\rho(\eta) = \sum_J \rho_J \eta_J.
\end{eqnarray}
Note that the orthonormality of $\{\gamma_J\}$ also implies 
\leqalign{eq:momentExpansion}{
    \rho = \df 1 {2^n} \sum_J \rho_J \gamma_J \in \Cl_{2n}.
}
\end{definition}
Any state is uniquely determined by its moments $\set{\rho_J}$. 
Since $i^{|J|}\gamma_J$ is a Hermitian with eigenvalues $\pm 1$, 
$ |\rho_J|=|\Tr{\gamma^\dag_J\rho}|\leq 1$.
The orthonormal moment expansion in equation~\eqref{eq:momentExpansion} also implies 
\begin{align*}
    \sum_{J\subset [2n]}|\rho_J|^2
    = 2^n \Tr{\rho^2}.
\end{align*}

We define the degree-dependent moment weight, which we show to be a Gaussian unitary-invariant. 
\begin{definition}[Moment weight]
Given a state $\rho$, the $k$-th moment weight is 
\begin{eqnarray}
    \label{app:weightNorms}
   W_k(\rho) = \sum_{|J|=k} |\rho_J|^2.
\end{eqnarray}
And the total moment  weight is
\begin{align}
    I_M(\rho)=\sum^{2n}_{k=0}j  W_k(\rho). 
\end{align}
\end{definition}
The total moment weight is also called sensitivity~\cite{Bu2024complexity}, which 
has been shown to be useful in the study of quantum circuit complexity.

\begin{proposition}
    \label{prp:degreeWeightInvariance}
    The $k$-th moment weight is Gaussian unitary-invariant: for every integer $0\leq k\leq 2n$: 
    \begin{align}
             W_k(U_G\rho U^\dag_G)= W_k(\rho). 
    \end{align}
    Consequently, the total moment weight is also preserved,
    \begin{align}
       I_M(U_G\rho U^\dag_G)=I_M(\rho). 
    \end{align}
\end{proposition}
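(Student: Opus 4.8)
The plan is to pass to the Grassmann--Fourier side, where conjugation by a Gaussian unitary acts by a simple linear substitution, and then reduce the claim to the elementary fact that exterior powers of an orthogonal matrix are orthogonal. First I would reformulate $W_k$ as a squared norm: by~\eqref{app:weightNorms}, $W_k(\rho)=\sum_{|J|=k}|\rho_J|^2$ is exactly the squared Euclidean norm of the vector of degree-$k$ moments $(\rho_J)_{|J|=k}$, equivalently the squared Grassmann norm of the homogeneous degree-$k$ part $\Xi^{(k)}_\rho(\eta):=\sum_{|J|=k}\rho_J\,\eta_J$ of the moment-generating operator $\Xi_\rho$. So it suffices to prove that $U_G(\cdot)U_G^\dag$ induces, on each homogeneous subspace $\G^{(k)}_{2n}:=\operatorname{span}\{\eta_J:|J|=k\}$, a norm-preserving linear map.

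Next I would invoke~\eqref{eq:gaussianUnitaryGrassmannAction}, that is $\Xi_{U_G\rho U_G^\dag}(\eta)=\Xi_\rho(R\eta)$ with $R=\exp(2h)\in SO(2n)$ from~\eqref{eq:unitaryEffect}, where $R\eta$ means the substitution $\eta_j\mapsto\sum_k R_{jk}\eta_k$. The structural point is that this substitution is degree preserving: since every Grassmann generator squares to zero, a product of $k$ linear forms $\prod_{j\in J}\bigl(\sum_k R_{jk}\eta_k\bigr)$ is homogeneous of degree exactly $k$, with no collapse to lower degree. (This is where the Grassmann picture is cleaner than the Clifford one, in which repeated Majoranas $\gamma_a^2=1$ would a priori produce lower-degree terms.) Hence the degree-$k$ component of $\Xi_{U_G\rho U_G^\dag}$ is obtained from $\Xi^{(k)}_\rho$ by the map $\eta_J\mapsto\prod_{j\in J}\bigl(\sum_k R_{jk}\eta_k\bigr)$ on $\G^{(k)}_{2n}$; expanding and antisymmetrizing, its matrix in the basis $\{\eta_K:|K|=k\}$ has entries equal to the $k\times k$ minors $\det R[K\,|\,J]$ --- i.e.\ the induced map is the $k$-th exterior power $\bigwedge^k R$.

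The rest is linear algebra. As $R$ is real orthogonal, so is $\bigwedge^k R$, because $\bigl(\bigwedge^k R\bigr)^{T}\bigwedge^k R=\bigwedge^k\!\bigl(R^T R\bigr)=\bigwedge^k I=I$ (functoriality of exterior powers, equivalently Cauchy--Binet). A real orthogonal matrix preserves the Euclidean norm of complex vectors too, so $(\rho_J)_{|J|=k}$ and its image $\bigl((U_G\rho U_G^\dag)_K\bigr)_{|K|=k}$ have the same $\ell^2$-norm; this is $W_k(U_G\rho U_G^\dag)=W_k(\rho)$. The claim for the total moment weight follows at once, since $I_M(\rho)=\sum_{k=0}^{2n}k\,W_k(\rho)$ is a fixed linear combination of the $W_k$.

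I expect the main obstacle to be the middle step: rigorously verifying that $\eta\mapsto R\eta$ carries $\G^{(k)}_{2n}$ into itself and acts there as $\bigwedge^k R$. This is routine bookkeeping with anticommuting products and determinants, but it is worth isolating as a short lemma. An alternative argument stays in $\Cl_{2n}$ and shows that $U_G(\cdot)U_G^\dag$ preserves the filtration by Majorana degree and acts by $\bigwedge^k R$ on the associated graded, but then one must separately check that the lower-degree terms generated by repeated Majoranas cancel --- which again invokes $R R^T=I$ --- so the Grassmann route is the more economical one.
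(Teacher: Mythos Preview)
Your argument is correct and shares the same core idea as the paper's proof: both observe that conjugation by $U_G$ acts on the degree-$k$ moments via a linear map built from the rotation $R\in SO(2n)$, and that this map is orthogonal because $R$ is. The packaging differs slightly. The paper works directly with a $k$-th order tensor $(\Sigma^{(k)}_\rho)_{l_1,\dots,l_k}$ (equal to $\rho_{l_1,\dots,l_k}$ when the indices are distinct, zero otherwise) and notes that it transforms by $R^{\otimes k}$, which is orthogonal; you instead pass through the Grassmann--Fourier transform and identify the induced map on $\G^{(k)}_{2n}$ with the exterior power $\bigwedge^k R$. Your route is a touch more economical: because $\eta_j^2=0$ in $\G_{2n}$, degree preservation is automatic and no ``diagonal'' entries need to be zeroed out by hand, whereas the tensor-power version implicitly embeds the antisymmetric piece into the full $(2n)^k$-dimensional space. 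Conversely, the paper's argument avoids the determinant/Cauchy--Binet bookkeeping you flag as a lemma, since orthogonality of $R^{\otimes k}$ is immediate. Either way the substance is the same.
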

\begin{proof}
      Fixing $0\leq k\leq 2n$, consider the $k$-th order tensor $(\Sigma^{(k)}_{\rho})_{l_1,...,l_k}$ with $l_i\in [2n]$, 
        \[
            (\Sigma^{(k)}_{\rho})_{l_1,...,l_k}= 
            \begin{cases}
                \rho_{l_1,..,l_k}, & \text{if } l_1,..,l_k \text{ are all different},\\
                0, & \text{otherwise.}
            \end{cases}
        \]
        It is easy to see that $\norm{\Sigma^{(k)}_{\rho}}^2_2:=\sum_{l_1,..,l_k}|(\Sigma^{(k)}_{\rho})_{l_1,...,l_k}|^2=W_k(\rho)$. We need 
        to show that $\norm{\Sigma^{(k)}_{\rho}}^2_2=\norm{\Sigma^{(k)}_{U_G\rho U^\dag_G}}^2_2$ for any Gaussian unitary $U_G$ which corresponds to a generator rotation $R$ such that 
        $U_G\gamma U_G^\dag= R\gamma$. Then 
        \begin{align}
           (\Sigma^{(k)}_{U_G\rho U^\dag_G})_{l_1,...,l_k}
            =\sum_{j_1,..,j_k}\rho_{j_1...,j_k}\prod^k_{i}R_{j_i,l_i}
            =\sum_{j_1,..,j_k}(\Sigma^{(k)}_{\rho})_{l_1,...,l_k}M_{(j_1,...,j_k),(l_1,...,l_k)}, 
        \end{align}
        where $M_{(j_1,...,j_k),(l_1,...,l_k)}=\prod^k_{i}R_{j_i,l_i}\cong R^{\ot k}$ is a $(2n)^k\times (2n)^k$ rotation matrix since $R$ is orthogonal. Thus, the weights of $\Xi_\rho(\eta)$ by each degree are equal to the weights of $\Xi_{U\rho U^\dag}(\eta) = \Xi_\rho(R\eta)$ by degree since the degree-$k$ weight vector (tensor) $\Sigma^{(k)}_\rho$ undergoes rotation by $R^{\ot k}$ under the unitary algebra automorphism $\eta\mapsto R\eta$. 
\end{proof}

\begin{definition}[Cumulants]
Given a state $\rho$,
     the cumulant-generating operator (or function) $\Psi_\rho \in \G_{2n}$ is 
\leqalign{def:cumulant}{
    \Psi_\rho(\eta) = \log\,  \Xi_\rho(\eta) = \sum_{k=1}^{2n} (-1)^{k+1} \df{(\Xi_\rho - 1_{\G_{2n}})^k}{k} = \sum_J \kappa^\rho_J \eta_J, 
}
where $\kappa^\rho_J$ is the $J$-th cumulant of $\rho$. 
\end{definition}
The cumulant-generating element of a quantum state always exists because $\Xi_\rho - 1_{\G}$ is nilpotent with degree at most $2n$. 
Additionally, note that by definition, the quadratic cumulants and moments concur for even $\rho$: 
\begin{eqnarray}\label{eq:equaiv_2}
    \rho_J=\kappa^\rho_J, \quad\forall J\subset [2n] ~~\text{with} ~~|J|=2.
\end{eqnarray}

Similar to the moment weight, we can also define the cumulant weight. 
\begin{definition}
\label{def:cumulantWeights}
Given a state $\rho$, the $j$-th cumulant  weight is 
    \begin{eqnarray}
    \label{app:CumulantweightNorms}
    K_j(\rho)= \sum_{|J|=j} |\kappa^\rho_J|^2 .
\end{eqnarray}
And the total cumulant weight is 
\begin{align}
    K(\rho)=\sum_{j}j K_j(\rho).
\end{align}
\end{definition}
Based on the equivalence of the second order moments and cumulants in \eqref{eq:equaiv_2}, we have the equivalence of the corresponding weight $W_2(\rho)=K_2(\rho)$ for even states $\rho$. 
\begin{proposition}
    \label{prp:cumulantTpAdditivity}
Given two even states $\rho$ and $\sigma$, we have 
    \begin{eqnarray}
        \Psi_{\rho \otimes \sigma} = \Psi_\rho \otimes 1 + 1 \otimes \Psi_\sigma. 
    \end{eqnarray}
Thus, for even states $\rho$ and $\sigma$, the cumulant weight is additive, i.e.,
    \begin{eqnarray}
     K_j(\rho\ot \sigma)=K_j(\rho)+K_j(\sigma).
    \end{eqnarray}
   
\end{proposition}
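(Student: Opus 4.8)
The plan is to prove the stronger multiplicativity statement $\Xi_{\rho\ot\sigma}(\eta)=\Xi_\rho(\eta)\,\Xi_\sigma(\eta)$ for the moment-generating functions and then pass to logarithms. Write $\rho\in\Cl_{2n_1}$ and $\sigma\in\Cl_{2n_2}$, so $\rho\ot\sigma\in\Cl_{2(n_1+n_2)}$ and $\Xi_{\rho\ot\sigma}\in\G_{2(n_1+n_2)}$. For $M\subset[2(n_1+n_2)]$ I would split $M=J\sqcup(K+2n_1)$ with $J:=M\cap[2n_1]$ and $K+2n_1:=M\cap\{2n_1+1,\dots,2(n_1+n_2)\}$; since $M$ carries the increasing order, this gives $\gamma_M=\gamma_J\,\gamma_{K+2n_1}$ and $\eta_M=\eta_J\,\eta_{K+2n_1}$.

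The first step is to factorize the moments of $\rho\ot\sigma$. From $\gamma_M^\dag=\gamma_{K+2n_1}^\dag\gamma_J^\dag$ and the anti-commuting tensor product, commuting the second-block operator $\gamma_{K+2n_1}^\dag$ to the right of $\gamma_J^\dag$ produces a sign $(-1)^{|J||K|}$, and commuting it past the even operator $\rho$ produces nothing (because $\rho$ is a sum of even-degree terms in the first block). Using that the normalized Clifford trace factorizes across the two blocks, $\tr(\gamma_J\gamma_{K+2n_1})=\tr(\gamma_J)\tr(\gamma_{K+2n_1})$, I obtain
\[
  (\rho\ot\sigma)_M=\tr\!\big(\gamma_M^\dag(\rho\ot\sigma)\big)=(-1)^{|J||K|}\,\tr(\gamma_J^\dag\rho)\,\tr(\gamma_{K+2n_1}^\dag\sigma)=(-1)^{|J||K|}\,\rho_J\,\sigma_K .
\]
Because $\rho$ and $\sigma$ are even, $\rho_J$ and $\sigma_K$ vanish unless $|J|$ and $|K|$ are both even, in which case the sign is $+1$; hence $(\rho\ot\sigma)_M=\rho_J\,\sigma_K$ for all $M$. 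Substituting into $\Xi_{\rho\ot\sigma}(\eta)=\sum_M(\rho\ot\sigma)_M\eta_M$ and using $\eta_M=\eta_J\eta_{K+2n_1}$ gives
\[
  \Xi_{\rho\ot\sigma}(\eta)=\Big(\sum_{J\subset[2n_1]}\rho_J\eta_J\Big)\Big(\sum_{K\subset[2n_2]}\sigma_K\eta_{K+2n_1}\Big)=\Xi_\rho(\eta)\,\Xi_\sigma(\eta),
\]
a product in $\G_{2(n_1+n_2)}$ of elements supported on disjoint generator sets.

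Next I take logarithms. Since $\rho,\sigma$ are even, $\Xi_\rho$ and $\Xi_\sigma$ contain only even-degree monomials, hence are central elements of the Grassmann algebra, and each is $1_\G$ plus a nilpotent; therefore $\log(\Xi_\rho\Xi_\sigma)=\log\Xi_\rho+\log\Xi_\sigma$ as formal series. By Definition~\ref{def:cumulant} this is exactly $\Psi_{\rho\ot\sigma}=\Psi_\rho\ot 1+1\ot\Psi_\sigma$ under the identification $\G_{2(n_1+n_2)}\cong\G_{2n_1}\ot\G_{2n_2}$ on even subspaces. Reading off coefficients, $\kappa^{\rho\ot\sigma}_M$ equals $\kappa^\rho_M$ for $M\subset[2n_1]$, equals $\kappa^\sigma_{M-2n_1}$ for $M$ inside the second block, and vanishes whenever $M$ meets both blocks; summing $|\kappa^{\rho\ot\sigma}_M|^2$ over $|M|=j$ then yields $K_j(\rho\ot\sigma)=K_j(\rho)+K_j(\sigma)$ by Definition~\ref{def:cumulantWeights}.

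The only delicate point is the sign bookkeeping in the moment-factorization step: fixing the ordering conventions for $\gamma_M$ and $\eta_M$ when $M$ straddles both factors, tracking the $(-1)^{|J||K|}$ from the anti-commuting tensor product, and verifying that evenness of $\rho$ and $\sigma$ forces every surviving term to have that sign equal to $+1$. Once multiplicativity of $\Xi$ is established, passing to cumulants and then to the cumulant weights is purely formal.
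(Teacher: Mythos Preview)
Your proof is correct and follows the same route as the paper: establish $\Xi_{\rho\ot\sigma}=\Xi_\rho\ot\Xi_\sigma$ and take Grassmann logarithms. The paper's own argument is a one-liner that simply asserts the multiplicativity of $\Xi$ and then writes $\log(\Xi_\rho\ot\Xi_\sigma)=\log\Xi_\rho\ot 1+1\ot\log\Xi_\sigma$; you supply the details the paper omits, in particular the sign bookkeeping for $(\rho\ot\sigma)_M$ in the anti-commuting tensor product and the observation that evenness forces $(-1)^{|J||K|}=+1$ on all surviving terms.
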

 \begin{proof}
 Because
        $\Xi_{\rho \otimes \sigma} = \Xi_\rho \otimes \Xi_\sigma$, we have 
        $\Psi_{\rho \otimes \sigma} = \log (\Xi_\rho \otimes \Xi_\sigma) = \log \Xi_\rho \otimes 1 + 1\otimes \log \Xi_\sigma = \Psi_\rho\otimes 1 + 1 \otimes \Psi_\sigma$. 
    \end{proof}
The proof above can be relaxed to only requiring $\sigma$ to be even; we require both inputs to be even for simplicity.

\begin{proposition}
    \label{prp:degreeWeightInvariance_cu}
    The $j$-th cumulant weight is Gaussian unitary-invariant, i.e., 
    \begin{align}
        K_j(U_G\rho U^\dag_G)=K_j(\rho).
    \end{align}
    Thus the total cumulant weight is also preserved, 
    \begin{align}
        K(U_G\rho U^\dag_G)=K(\rho).
    \end{align}
    
\end{proposition}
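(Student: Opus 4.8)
The plan is to follow the template of Proposition~\ref{prp:degreeWeightInvariance} for the moment weight; the only extra ingredient is that the cumulant-generating function transforms under a Gaussian unitary in exactly the same way as the moment-generating function. Fix a Gaussian unitary $U_G$ with associated generator rotation $R\in SO(2n)$, and let $\phi_R\colon\G_{2n}\to\G_{2n}$ be the unital algebra homomorphism determined by $\eta_j\mapsto\sum_k R_{jk}\eta_k$; this is well defined (and an automorphism, since $R$ is invertible) because \emph{any} linear change of generators preserves the Grassmann relations $\eta_a^2=0$ and $\{\eta_a,\eta_b\}=0$. By~\eqref{eq:gaussianUnitaryGrassmannAction} we have $\Xi_{U_G\rho U_G^\dag}(\eta)=\Xi_\rho(R\eta)=\phi_R\bigl(\Xi_\rho(\eta)\bigr)$. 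Since the logarithm in~\eqref{def:cumulant} is an honest polynomial in $\Xi_\rho-1_{\G_{2n}}$ (nilpotent of degree at most $2n$) and $\phi_R$ commutes with sums and products, it commutes with that polynomial, so
\[
    \Psi_{U_G\rho U_G^\dag}(\eta)=\log\Xi_\rho(R\eta)=\phi_R\bigl(\Psi_\rho(\eta)\bigr)=\Psi_\rho(R\eta).
\]

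Next I would extract the degree-$j$ component. Writing $\Psi_\rho(\eta)=\sum_J\kappa^\rho_J\eta_J$ and substituting $\eta_j\mapsto\sum_kR_{jk}\eta_k$, the coefficient of a fixed $\eta_K$ with $|K|=j$ in $\Psi_\rho(R\eta)$ equals $\sum_{|J|=j}\kappa^\rho_J\det\bigl(R[J,K]\bigr)$; that is, the vector of degree-$j$ cumulants is transformed by the $j$-th compound matrix $\wedge^j R$ (equivalently, by $R^{\otimes j}$ restricted to the totally antisymmetric subspace of $(\C^{2n})^{\otimes j}$), exactly as for the moments in the proof of Proposition~\ref{prp:degreeWeightInvariance}. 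Because $R$ is orthogonal, $\wedge^j R$ is orthogonal as well, hence preserves the Euclidean norm of a complex coefficient vector, giving
\[
    K_j(U_G\rho U_G^\dag)=\sum_{|J|=j}\bigl|\kappa^{U_G\rho U_G^\dag}_J\bigr|^2=\sum_{|J|=j}\bigl|\kappa^\rho_J\bigr|^2=K_j(\rho).
\]
Weighting by $j$ and summing then yields $K(U_G\rho U_G^\dag)=\sum_j j\,K_j(U_G\rho U_G^\dag)=\sum_j j\,K_j(\rho)=K(\rho)$, which is the second assertion.

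The only point requiring genuine care is the first step: that passing to the logarithm is compatible with the generator rotation, i.e.\ $\phi_R(\log\Xi_\rho)=\log\phi_R(\Xi_\rho)$. This is purely algebraic rather than analytic---$\log$ here is the truncated series of~\eqref{def:cumulant}, a polynomial, and $\phi_R$ is a unital algebra homomorphism---so the identity holds term by term once nilpotency is noted. Everything after that reduces to the linear-algebra fact already used for the moment weight: an orthogonal rotation of the Grassmann generators acts orthogonally on each fixed-degree slice of the coefficient array, so the slice's $2$-norm, and hence $K_j$, is invariant.
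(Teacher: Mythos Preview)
Your proof is correct and follows essentially the same approach as the paper: you first establish $\Psi_{U_G\rho U_G^\dag}(\eta)=\Psi_\rho(R\eta)$ by noting that the generator rotation $\phi_R$ is a unital algebra automorphism which therefore commutes with the finite polynomial defining $\log$, and then reduce to the orthogonality argument of Proposition~\ref{prp:degreeWeightInvariance}. Your phrasing via the compound matrix $\wedge^j R$ is a slight refinement of the paper's $R^{\otimes k}$ tensor argument (it works directly on the subset-indexed coefficient vector rather than on an antisymmetrized full tensor), but the content is identical.
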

\begin{proof}
       The proof is similar to that of Proposition~\ref{prp:degreeWeightInvariance}.
        For cumulants, we have 
        \[
            \Psi_{U\rho U^\dag}(\eta) = \log[\Xi_{U\rho U^\dag}(\eta)] 
            = \log[\Xi_\rho(R\eta)] = (\log \Xi_\rho)(R\eta) = \Psi_\rho(R\eta). 
        \]
        This is because $\eta \mapsto R\eta$ is an automorphism of the algebra, and algebra automorphisms commutes with all algebraic operations defined using addition, scalar multiplication, and algebra multiplication. 
        Here  $\varphi$ is an automorphism means that  $\varphi(ab)=\varphi(a)\varphi(b)$. 
        In particular, the Grassmann logarithm of the Fourier transform of states, defined as a finite power series ~\eqref{def:cumulant}, is an algebraic operation. Applying the same argument to $\Psi_\rho(R\eta)$ demonstrates that the weights of $\Psi_{U\rho U^\dag}(\eta) = \Psi_\rho(R\eta)$ and $\Psi_\rho(\eta)$, when measured by degree, are identical. 
    \end{proof}

\subsection{Gaussian states and unitaries}
Recall that Gaussian states are the ground or thermal states of quadratic Hamiltonians, that is,  
\begin{eqnarray}
    \rho=C\exp\left(\frac{i}{2} \gamma^T h \gamma\right),
\end{eqnarray}
where $(h_{jk})$ is a real, antisymmetric matrix and $C$ is a normalization constant. We will first show that the Fourier transform of the Gaussian state is also Gaussian. 
The covariance matrix of any state $\rho$ is defined as 
\begin{eqnarray}
    (\Sigma_\rho)_{jk}
    = \frac{i}{2}\Tr{\rho[\gamma_j, \gamma_k]} 
    = -i \tr[(\gamma_j\gamma_k)^\dag \rho].
\end{eqnarray}
Since $h$ is a real, antisymmetric matrix, by a standard result in the matrix theory~\cite{horn2012matrix}, 
there exists a rotation $R\in SO(2n)$ such that 
\leqalign{eq:diagonalAntisymmetricMatrix}{
    R^Th R = h_D = \bigoplus_{j=1}^n \begin{bmatrix}
        0 & \nu_j \\ -\nu_j & 0 
    \end{bmatrix}, \quad \nu_j\geq 0. 
}
Using a Gaussian unitary $U_G$ to implement the rotation, we can diagonalize $\rho$ according to
\begin{eqnarray}
\rho_D:=U_G\rho U^\dag_G = C\exp\left[\frac{i}{2} \gamma^T (R^ThR) \gamma\right]
=C\exp\left(\frac{i}{2} \gamma^T h_D\gamma\right).
\end{eqnarray}
The exponential is decomposed as follows,
\leqalign{eq:separableDiagonalGaussian}{
    \exp\left(\df i 2 \gamma^T h_D \gamma\right) 
    &= \exp \left(
        i\sum_{j=1}^n \nu_j \gamma_{2j-1}\gamma_{2j}
    \right)
    = \bigotimes_{j=1}^{n} \exp(i \nu_j \gamma_1 \gamma_2) 
    = \bigotimes \exp(-\nu_j Z) 
    = \bigotimes_{j=1}^n \begin{bmatrix}
        e^{-\nu_j} & 0 \\ 0 & e^{\nu_j} 
    \end{bmatrix}.
}
The normalized expression for a diagonalized Gaussian state is thus 
\leqalign{}{
    \rho_D 
    &= C\exp\left(\df i 2 \gamma^T h_D \gamma\right)  
    = \bigotimes_{j=1}^n \df 1 {e^{-\nu_j} + e^{\nu_j}} \begin{bmatrix}
        e^{-\nu_j} & 0 \\ 0 & e^{\nu_j}
    \end{bmatrix} \\ 
    &= \df 1 {2^n} \bigotimes_{j=1}^n \left( I - \tanh(\nu_j) Z \right)
    = \df 1 {2^n} \bigotimes_{j=1}^n(I + i\tanh(\nu_j) \gamma_1\gamma_2).  
}
Hence, the Fourier transform is 
\leqalign{}{
    \Xi_{\rho_D}(\eta) 
    = \bigotimes_{j=1}^n 1 + i\lambda_j \eta_1\eta_2 
    = \exp \left(\sum_{j=1}^n \lambda_j \eta_{2j-1}\eta_{2j}\right) 
    = \exp \left(\df 1 2 \eta^T \Sigma_{\rho_D} \eta\right),
}
where the covariance matrix takes on the block-diagonal form 
\leqalign{}{
    \Sigma_{\rho_D} = \bigoplus_{j=1}^n \begin{pmatrix}
        0 & \lambda_j \\ -\lambda_j & 0 
    \end{pmatrix}, \quad \lambda_j = \tanh(\nu_j). 
}

Based on the Fourier transform, we obtain the following lemma which can also be found in~\cite{bravyi2004lagrangian} and~\cite{surace2022fermionic}.
\begin{lemma}[Diagonalization of Gaussian states]
    \label{thm:gaussianStateCharacterization}
    Given a Gaussian state 
    \begin{eqnarray}
        \rho = C \exp \left(\df i 2 \gamma^T h \gamma\right), \quad h_{jk} = -h_{kj} \in \R, \quad C\in \R 
    \end{eqnarray}
    such that $h=Rh_DR^T$, where $h_D$ is of the form~\eqref{eq:diagonalAntisymmetricMatrix}, 
    then the Fourier transform of $\rho$ is 
    \leqalign{eq:GaussianFourierExpression}{
        \Xi_\rho = \exp\left(\df 1 2 \eta^T\Sigma_\rho \eta\right),
    }
    where 
     \leqalign{eq:GaussianFourierCovariance}{ 
        \Sigma_\rho = R \left(\bigoplus_{j=1}^n \begin{bmatrix}
            0 & \lambda_j \\ -\lambda_j & 0 
        \end{bmatrix}\right) R^T, \quad \lambda_j = \tanh(\nu_j). 
    }
    The covariance matrix satisfies $0\leq \Sigma_\rho^T\Sigma_\rho \leq I_{2n}$ diagonal, with equality if and only if $\rho$ is pure. The entropy is 
    \leqalign{eq:GaussianDiagonalEntropy}{
        S(\rho) = \sum_{j=1}^n h\left(\df{1 + \lambda_j}{2}\right), \quad h(x) = -x\log x - (1-x)\log(1-x). 
    }
 
\end{lemma}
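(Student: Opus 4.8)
\emph{Proof strategy.} The plan is to bootstrap the general statement from the explicit computation already carried out for the block-diagonal Gaussian state in the paragraphs preceding the lemma, using essentially only the Gaussian-unitary covariance of the Grassmann--Clifford Fourier transform, equation~\eqref{eq:gaussianUnitaryGrassmannAction}. Concretely, I would fix a Gaussian unitary $U_G$ with $U_G\gamma U_G^\dag=R\gamma$, where $R\in SO(2n)$ is the rotation of~\eqref{eq:diagonalAntisymmetricMatrix} taking $h$ to block-diagonal form $h_D$. Then $\rho_D:=U_G\rho U_G^\dag=C\exp(\tfrac i2\gamma^T h_D\gamma)$ is the block-diagonal Gaussian state whose Fourier transform was already shown, via the factorization~\eqref{eq:separableDiagonalGaussian}, to be $\Xi_{\rho_D}(\eta)=\exp(\tfrac12\eta^T\Sigma_{\rho_D}\eta)$ with $\Sigma_{\rho_D}=\bigoplus_j\begin{bmatrix}0 & \lambda_j \\ -\lambda_j & 0\end{bmatrix}$ and $\lambda_j=\tanh\nu_j$. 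Since $U_G^\dag\gamma U_G=R^T\gamma$, applying~\eqref{eq:gaussianUnitaryGrassmannAction} to $\rho=U_G^\dag\rho_D U_G$ gives $\Xi_\rho(\eta)=\Xi_{\rho_D}(R^T\eta)=\exp\!\big(\tfrac12\eta^T R\Sigma_{\rho_D}R^T\eta\big)$, which is precisely~\eqref{eq:GaussianFourierExpression}--\eqref{eq:GaussianFourierCovariance}. To confirm that $R\Sigma_{\rho_D}R^T$ really is the covariance matrix~\eqref{eq:covariance}, I would verify the block-diagonal case directly by computing $\tfrac i2\tr(\rho_D[\gamma_j,\gamma_k])$ from the product form of $\rho_D$, then observe that both the covariance matrix and the quadratic form appearing in $\Xi$ transform by $\Sigma\mapsto R^T\Sigma R$ under conjugation by $U_G$, so that agreement in one frame forces agreement in all.

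For the spectral bound, note that $\Sigma_\rho$ is real antisymmetric, so $\Sigma_\rho^T\Sigma_\rho=-\Sigma_\rho^2=R\big(\bigoplus_j\lambda_j^2 I_2\big)R^T$ is positive semidefinite with spectrum $\{\lambda_j^2\}=\{\tanh^2\nu_j\}\subseteq[0,1]$; in the block-diagonalizing frame of $\Sigma_\rho$ it is literally the diagonal matrix $\bigoplus_j\lambda_j^2 I_2$, whence $0\le\Sigma_\rho^T\Sigma_\rho\le I_{2n}$. For the equality-iff-pure clause, I would compute $\tr(\rho^2)=\tr(\rho_D^2)=2^{-n}\prod_j(1+\lambda_j^2)$ from $\rho_D=\tfrac1{2^n}\prod_j(I+i\lambda_j\gamma_{2j-1}\gamma_{2j})$, using $(i\gamma_{2j-1}\gamma_{2j})^2=I$ and the tracelessness of nontrivial basis elements; this equals $1$ iff every $\lambda_j^2=1$ iff $\Sigma_\rho^T\Sigma_\rho=I_{2n}$, and a Gaussian state is pure precisely when $\tr(\rho^2)=1$, i.e. when it is a limit of the thermal family with all $\nu_j\to\infty$.

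For the entropy, I would use unitary invariance $S(\rho)=S(\rho_D)$ together with the tensor factorization in~\eqref{eq:separableDiagonalGaussian}: $\rho_D=\bigotimes_{j=1}^n\tfrac12(I-\lambda_j Z)$ is a product of single-qubit states with eigenvalues $\tfrac{1\pm\lambda_j}{2}$, so additivity of the von Neumann entropy over tensor factors gives $S(\rho)=\sum_j h\!\big(\tfrac{1+\lambda_j}{2}\big)$, which is~\eqref{eq:GaussianDiagonalEntropy}.

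The only point requiring genuine care is the pure-state case, where $\rho$ is defined only as a limit of the thermal family rather than by an explicit $h$; there the $\tr(\rho^2)$ identity above, being continuous in the $\lambda_j$, settles both the equality case of the spectral bound and the vanishing of $S$ in that limit. Apart from this limiting subtlety I do not expect any real obstacle --- the lemma is in essence a careful repackaging of the computations performed immediately before its statement.
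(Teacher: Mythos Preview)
Your proposal is correct and follows the same route as the paper: the paper does not give a separate proof after the lemma but rather derives it from the block-diagonal computation carried out in the paragraphs immediately preceding the statement, exactly as you describe. Your write-up is in fact more complete than the paper's, since you spell out the covariance transformation under $U_G$, the $\tr(\rho^2)=2^{-n}\prod_j(1+\lambda_j^2)$ computation for the purity clause, and the tensor-product entropy argument, all of which the paper leaves implicit.
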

Thus, the covariance matrix can be used to determine the purity of a Gaussian state. 
\begin{lemma}[\cite{bravyi2004lagrangian, de2013power}]
A Gaussian state $\rho$ is pure if and only the covariance matrix $\Sigma_{\rho}$
satisfies $\Sigma_{\rho}^T\Sigma_\rho=I$.
\end{lemma}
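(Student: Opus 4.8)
The plan is to derive this directly from Lemma~\ref{thm:gaussianStateCharacterization}, which already carries essentially all the needed information. First I would recall from that lemma, cf.~\eqref{eq:GaussianFourierCovariance}, that a Gaussian state has covariance matrix $\Sigma_\rho = R\,D\,R^T$ with $R\in SO(2n)$ and $D$ block-diagonal with $2\times 2$ blocks having off-diagonal entries $\pm\lambda_j$, where $\lambda_j = \tanh\nu_j$ and $|\lambda_j|\le 1$; moreover, conjugating by the Gaussian unitary that implements $R$ brings $\rho$ to the diagonal form $\rho_D = \tfrac{1}{2^n}\bigotimes_{j=1}^n(I-\lambda_j Z)$, cf.~\eqref{eq:separableDiagonalGaussian}, whose eigenvalues are $\tfrac{1\pm\lambda_j}{2}$. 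Since both purity and the condition $\Sigma^T\Sigma = I_{2n}$ are preserved under Gaussian-unitary conjugation (the latter because $R$ is orthogonal), it suffices to analyze $\rho_D$.

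Next I would reduce both sides of the stated equivalence to a condition on the $\lambda_j$. On one hand $\Sigma_\rho^T\Sigma_\rho = R\,(D^TD)\,R^T$, and $D^TD$ is the diagonal matrix with entries $\lambda_j^2$ (each appearing twice), so $\Sigma_\rho^T\Sigma_\rho = I_{2n}$ if and only if $\lambda_j^2 = 1$ for every $j$. On the other hand, from the eigenvalues of $\rho_D$ one computes $\tr(\rho_D^2) = \prod_{j=1}^n \tfrac{1+\lambda_j^2}{2}$, a product of factors each at most $1$, which equals $1$ if and only if $\lambda_j^2 = 1$ for every $j$ (alternatively, one may use the entropy formula~\eqref{eq:GaussianDiagonalEntropy}: $S(\rho_D) = \sum_j h(\tfrac{1+\lambda_j}{2})$ is a sum of nonnegative terms that vanishes precisely when each $\lambda_j\in\{-1,1\}$). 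Combining these with the standard fact that a state is pure iff $\tr\rho^2 = 1$ (equivalently $S(\rho)=0$) gives the lemma.

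The only point needing attention, and it is not a real obstacle, is the definition in the excerpt of a pure Gaussian state as a limit of a convergent sequence of states of the form~\eqref{def:Gaussian}, along which the $\nu_j$ may diverge so that $\lambda_j=\tanh\nu_j\to\pm1$. Because $\rho\mapsto\Sigma_\rho$ is continuous and $\{\Sigma:\Sigma^T\Sigma=I_{2n}\}$ is closed, the implication ``$\rho$ pure $\Rightarrow \Sigma_\rho^T\Sigma_\rho = I_{2n}$'' passes to such limits; conversely, any $\Sigma$ of the above block form with all $\lambda_j\in\{-1,1\}$ arises as the covariance of such a limit (send $\nu_j\to\infty$), and that limit is pure by applying the $\tr\rho^2$ computation along the sequence. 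I therefore expect the whole argument to be a short corollary of Lemma~\ref{thm:gaussianStateCharacterization}, with no step requiring a genuinely new idea — the bookkeeping of the limiting definition being the most delicate part.
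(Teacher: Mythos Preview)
Your proposal is correct. The paper does not give a separate proof of this lemma---it is simply cited from the literature---but the content is already contained in the statement of Lemma~\ref{thm:gaussianStateCharacterization} (``$0\le \Sigma_\rho^T\Sigma_\rho \le I_{2n}$ \ldots\ with equality if and only if $\rho$ is pure'' and the entropy formula~\eqref{eq:GaussianDiagonalEntropy}), and your argument is exactly the natural unpacking of that statement via the block-diagonal form~\eqref{eq:GaussianFourierCovariance}.
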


The diagonalization lemma implies that entropy can be traded for higher Gaussian weight. 
\begin{lemma}
    \label{lem:entropyQuadraticweightInequality}
    Given any $n$-qubit Gaussian state $\rho$ with quadratic weight 
    $\rho_{(2)} < n$, for every $0\leq S_0<S(\rho)$ 
    there exists a $n$-qubit Gaussian state $\sigma$ such that 
    \[ 
        S(\sigma) = S_0, \quad   W_2(\sigma)>W_2(\rho). 
    \]
    \begin{proof}
        By the covariance matrix condition in 
      Lemma~\ref{thm:gaussianStateCharacterization} and equivalence between $l_2$ norm of covariance matrix and moment weight, we have
        \malign{
         W_2(\rho)
            = \df 1 2 \|\Sigma_\rho\|_2^2 \leq \df 1 2 \|I_{2n}\|^2 = n,
        }
        with equality iff $\rho$ is pure. Let $\lambda = (\lambda_1, \cdots, \lambda_n)\in [-1, 1]^{n}$ 
        be the imaginary components of the eigenvalues of $\Sigma_\rho$ in~\ref{eq:GaussianFourierExpression}, 
        consider the map which parameterizes the entropy~\ref{eq:GaussianDiagonalEntropy} 
        in terms of $\lambda_{(2)} = (\lambda_1^2, \cdots, \lambda_n^2)$ 
        \malign{
            \tilde S(\lambda_{(2)}) = \sum_{j=1}^n h \left(\df{1+\sqrt{\lambda_j^2}}{2}\right) = S(\rho). 
        }
        Here $h$ is the binary entropy, and we have used the symmetry of~\ref{eq:GaussianDiagonalEntropy} under $\lambda_j\mapsto -\lambda_j$. Then $\tilde S(\lambda_{(2)})$ is monotonically decreasing in $\lambda_j^2\in [0, 1]$ for each $\lambda_j^2$, while by equivalence between $l_2$ norm of covariance matrix and moment weight
        the quadratic weight is strictly increasing in $\lambda_j^2$: 
        \[ 
            W_2(\rho) = \df 1 2 \sum_{j, k=1}^{2n} |(\Sigma_\rho)_{jk}|^2 = \sum_j |\lambda_j|^2. 
        \]
        Thus, for any $0\leq S_0<S(\rho)$, we can find $\lambda'$ for $\sigma$ such that $\tilde S(\lambda_{(2)}')=S_0<S(\rho)$ and $\lambda'_{(2)}\geq \lambda_{(2)}$ component-wise, with inequality strict on at least one component (which implies $W_2(\sigma)>W_2(\rho)$). 
    \end{proof}
\end{lemma}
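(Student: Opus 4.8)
The plan is to reduce the statement to the diagonal parametrization of Gaussian states supplied by Lemma~\ref{thm:gaussianStateCharacterization} and then to produce $\sigma$ by a one‑parameter interpolation together with the intermediate value theorem. First I would invoke the fact that both $S$ and $W_2$ are invariant under conjugation by Gaussian unitaries — the entropy trivially and the quadratic moment weight by Proposition~\ref{prp:degreeWeightInvariance} — so it is enough to build $\sigma$ in the block‑diagonal form of Lemma~\ref{thm:gaussianStateCharacterization}. Such a state is determined by $\lambda = (\lambda_1,\dots,\lambda_n)\in[-1,1]^n$, the imaginary parts of the eigenvalues of its covariance matrix, and Lemma~\ref{thm:gaussianStateCharacterization} gives
\[
W_2(\rho) = \tfrac12\|\Sigma_\rho\|_2^2 = \sum_{j=1}^n \lambda_j^2, \qquad S(\rho) = \sum_{j=1}^n h\!\left(\frac{1+\lambda_j}{2}\right),
\]
where $h$ is the binary entropy. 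Since the entropy depends on each $\lambda_j$ only through $\lambda_j^2$, I would change variables to $\mu_j := \lambda_j^2 \in [0,1]$ and record the two monotonicity facts that drive everything: $\tilde S(\mu):=\sum_j h\!\big((1+\sqrt{\mu_j})/2\big)$ is continuous and strictly decreasing in each $\mu_j$ (because $h$ is strictly decreasing on $[\tfrac12,1]$), while $W_2 = \sum_j \mu_j$ is strictly increasing in each $\mu_j$.

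The hypothesis $W_2(\rho) < n$ (the statement's $\rho_{(2)}$ is this $W_2(\rho)$) says $\sum_j \mu_j < n$, so at least one coordinate, say $\mu_{j_0} < 1$, is still unsaturated; this is the only use of the hypothesis and it is exactly what makes the strict inequality $W_2(\sigma) > W_2(\rho)$ reachable. I would then interpolate \emph{all} unsaturated coordinates toward $1$ at once: set $\mu_j(s) = \mu_j + s(1-\mu_j)$ for $s\in[0,1]$, so $\mu(0)$ is the data of $\rho$ and $\mu(1)=(1,\dots,1)$ is a pure state. The path stays inside $[0,1]^n$ by convexity, and $s\mapsto \tilde S(\mu(s))$ is continuous and — thanks to the coordinate $\mu_{j_0}$ — strictly decreasing, running from $S(\rho)$ at $s=0$ to $\tilde S(1,\dots,1)=0$ at $s=1$. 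Hence for any $S_0\in[0,S(\rho))$ the intermediate value theorem yields a unique $s^* \in (0,1]$ with $\tilde S(\mu(s^*)) = S_0$; I take $\sigma$ to be the diagonal Gaussian state with eigenvalue data $\lambda_j' = \sqrt{\mu_j(s^*)}$. Because $s^* > 0$ and $\mu_{j_0}<1$, at least one coordinate has strictly grown, so $W_2(\sigma) = \sum_j \mu_j(s^*) > \sum_j \mu_j = W_2(\rho)$, which is what is required.

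The one genuine subtlety, and the step I would be most careful about, is the full reachability of the range $[0,S(\rho))$: it is not obvious a priori that nudging a single covariance eigenvalue can drive the entropy down to an arbitrarily small target $S_0$. Interpolating all unsaturated coordinates simultaneously settles this, since the interpolation path terminates at a pure state of zero entropy; one only has to confirm the endpoint entropy is indeed $0$ (immediate from $h(1)=0$) and that $\tilde S$ is genuinely strictly decreasing along the path so that the IVT crossing is unambiguous. The monotonicity and continuity of $\tilde S$, plus the observation that $W_2(\rho)<n$ is precisely the condition ensuring the interpolation is nontrivial, then complete the argument.
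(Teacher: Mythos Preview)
Your proposal is correct and follows essentially the same route as the paper: both reduce to the diagonal parametrization of Lemma~\ref{thm:gaussianStateCharacterization}, pass to the squared eigenvalues $\mu_j=\lambda_j^2$, and exploit the opposite monotonicities of $\tilde S$ and $W_2$ in the $\mu_j$. The only difference is that you make the existence step explicit via the linear interpolation $\mu_j(s)=\mu_j+s(1-\mu_j)$ toward the pure state and the intermediate value theorem, whereas the paper simply asserts that a suitable $\lambda'$ with $\lambda'_{(2)}\geq \lambda_{(2)}$ componentwise can be found; your version is a cleaner justification of that last line.
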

As a consequence of the Fourier representation of Gaussian states, we also have Wick's theorem (\cite{bravyi2004lagrangian}, equation 17; \cite{surace2022fermionic}, section 3.4). 
First introduce the notation $M_{|J}$: 
Given an  $n\times n$ matrix $M$, $M_{|J}$ denotes the restriction of $M$ onto the subspaces indexed by $J$. For example
    \begin{eqnarray}
        \begin{pmatrix}
            0 & 1 & 2 \\ 3 & 4 & 5 \\ 6 & 7 & 8 
        \end{pmatrix}_{|\{1, 3\}} = \begin{pmatrix}
            0 & 2 \\ 6 & 8 
        \end{pmatrix}, 
    \end{eqnarray}

\begin{lemma}[Wick's theorem]
    \label{cor:wick}
    The higher-order moments of an even Gaussian state $\rho$ with covariance $\Sigma_\rho$ satisfy 
    \begin{eqnarray}
        \rho_J = \Pf[(\Sigma_\rho)_{|J}]
    \end{eqnarray}
    where $\Pf(M)$ is the Pfaffian of a $2n\times 2n$ antisymmetric matrix $M$ defined by 
    \begin{eqnarray}
        \Pf(M) = \df 1 {2^n n!} \sum_{\sigma \in S_{2n}} \mrm{sgn}(\sigma) M_{\sigma_1\sigma_2}\cdots M_{\sigma_{2n-1}\sigma_{2n}}. 
    \end{eqnarray}
\end{lemma}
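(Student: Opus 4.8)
The plan is to read the moments straight off the Gaussian Fourier representation recorded in Lemma~\ref{thm:gaussianStateCharacterization}. Since $\rho$ is an even Gaussian state, $\Xi_\rho(\eta) = \exp\bigl(\tfrac12 \eta^T \Sigma_\rho \eta\bigr)$, while by Definition~\ref{def:moment} the coefficient of $\eta_J$ in the Grassmann expansion of $\Xi_\rho$ is exactly $\rho_J$. So the statement reduces to a purely combinatorial identity: the $\eta_J$-coefficient of $\exp\bigl(\tfrac12 \eta^T A \eta\bigr)$, for an antisymmetric $A$, equals $\Pf(A_{|J})$. This is the Grassmann form of the Pfaffian, and I would reprove it from scratch to keep the argument self-contained.

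First I would use antisymmetry of $\Sigma_\rho$ together with $\eta_a\eta_b = -\eta_b\eta_a$ to rewrite $\tfrac12 \eta^T \Sigma_\rho \eta = \sum_{j<k}(\Sigma_\rho)_{jk}\,\eta_j\eta_k$, and then expand the exponential as the finite sum $\sum_m \tfrac1{m!}\bigl(\sum_{j<k}(\Sigma_\rho)_{jk}\eta_j\eta_k\bigr)^m$ (finite since $\eta_a^2=0$ kills degrees above $2n$). In a surviving term the $2m$ indices must all be distinct; moreover the blocks $\eta_j\eta_k$ are even and hence mutually commute, so the $m!$ orderings of the $m$ chosen blocks give identical contributions and cancel the $1/m!$. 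Fixing $J$ with $|J|=2m$, the coefficient of $\eta_J$ is then a sum over perfect matchings $\{(j_1,k_1),\dots,(j_m,k_m)\}$ of $J$ with each $j_\ell < k_\ell$, each weighted by $\prod_\ell (\Sigma_\rho)_{j_\ell k_\ell}$ times the sign of the permutation reordering $\eta_{j_1}\eta_{k_1}\cdots\eta_{j_m}\eta_{k_m}$ into $\eta_{J_1}\cdots\eta_{J_{2m}}=\eta_J$.

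The final step is to identify this matching sum with $\Pf\bigl((\Sigma_\rho)_{|J}\bigr)$: after relabelling $J$ as $\{1,\dots,2m\}$ in increasing order, the sign attached to a matching in the combinatorial definition of the Pfaffian is precisely the Grassmann reordering sign above, and one recovers the permutation-sum formula in the statement by noting that each matching corresponds to $2^m m!$ permutations $\sigma\in S_{2m}$ (ordering within each pair, ordering of the pairs) all yielding the same signed product, which cancels the prefactor $1/(2^m m!)$. When $|J|$ is odd both sides vanish: $\rho_J=0$ because $\rho$ is even, and the Pfaffian of an odd-dimensional antisymmetric matrix is zero by convention. The only delicate point is the sign bookkeeping in the Grassmann reordering and matching it to the Pfaffian sign convention; the rest is a routine finite expansion.
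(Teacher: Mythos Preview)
Your proposal is correct and follows essentially the same route as the paper: both read $\rho_J$ off as the $\eta_J$-coefficient of $\Xi_\rho=\exp\bigl(\tfrac12\eta^T\Sigma_\rho\eta\bigr)$ via Lemma~\ref{thm:gaussianStateCharacterization} and then identify that coefficient with the Pfaffian by expanding the exponential and tracking signs. The only cosmetic difference is that the paper keeps the unrestricted sum $\tfrac12\sum_{j,k}(\Sigma_\rho)_{jk}\eta_j\eta_k$ so the $1/(2^m m!)$ prefactor matches the permutation-sum Pfaffian formula directly, whereas you first pass to $\sum_{j<k}$ and the matching form and then recover the $2^m m!$ overcount; your treatment of the odd-$|J|$ case and of the sign bookkeeping is in fact more explicit than the paper's.
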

    \begin{proof}
     Due to Lemma~\ref{thm:gaussianStateCharacterization}, the Fourier transform of $\rho$ is 
        \malign{
            \Xi_\rho 
            &= \exp \left(\df 1 2 \eta^T \Sigma_\rho \eta\right) 
            = \exp \left(\df 1 2 \sum_{j, k=1}^{2n} (\Sigma_\rho)_{jk}\eta_j\eta_k \right) \\ 
            &= 1_{\G_{2n}} + \sum_{m=1}^{2n} \df 1 {2^m m!} \left[\sum_{j, k=1}^{2n} (\Sigma_\rho)_{jk}\eta_j\eta_k \right]^m. 
        }
        Consider the expansion term for $\eta_J$, where $|J|=2m$. We need to construct $\eta_J$ from the $m$-th order above. Each way of picking corresponds to a permutation $\sigma\in S_{2n}$ of $J$, with the sign of rearranging $\eta$'s into $\eta_J$ equal to $\mrm{sgn}(\sigma)$, then 
        \[ 
            \rho_J = \df 1 {2^m m!} \sum_{\sigma\in S_{2n}} (\Sigma_\rho)_{\sigma(j), \sigma(k)} = \Pf[(\Sigma_\rho)_{|J}]. 
        \]
        
    \end{proof}

\subsection{Non-Gaussian measures by Fourier coefficients}
\label{app:magicMeasure}
\begin{figure}[t]
  \centering
  \begin{minipage}{0.48\textwidth}
    \centering
    \includegraphics[width=1\linewidth]{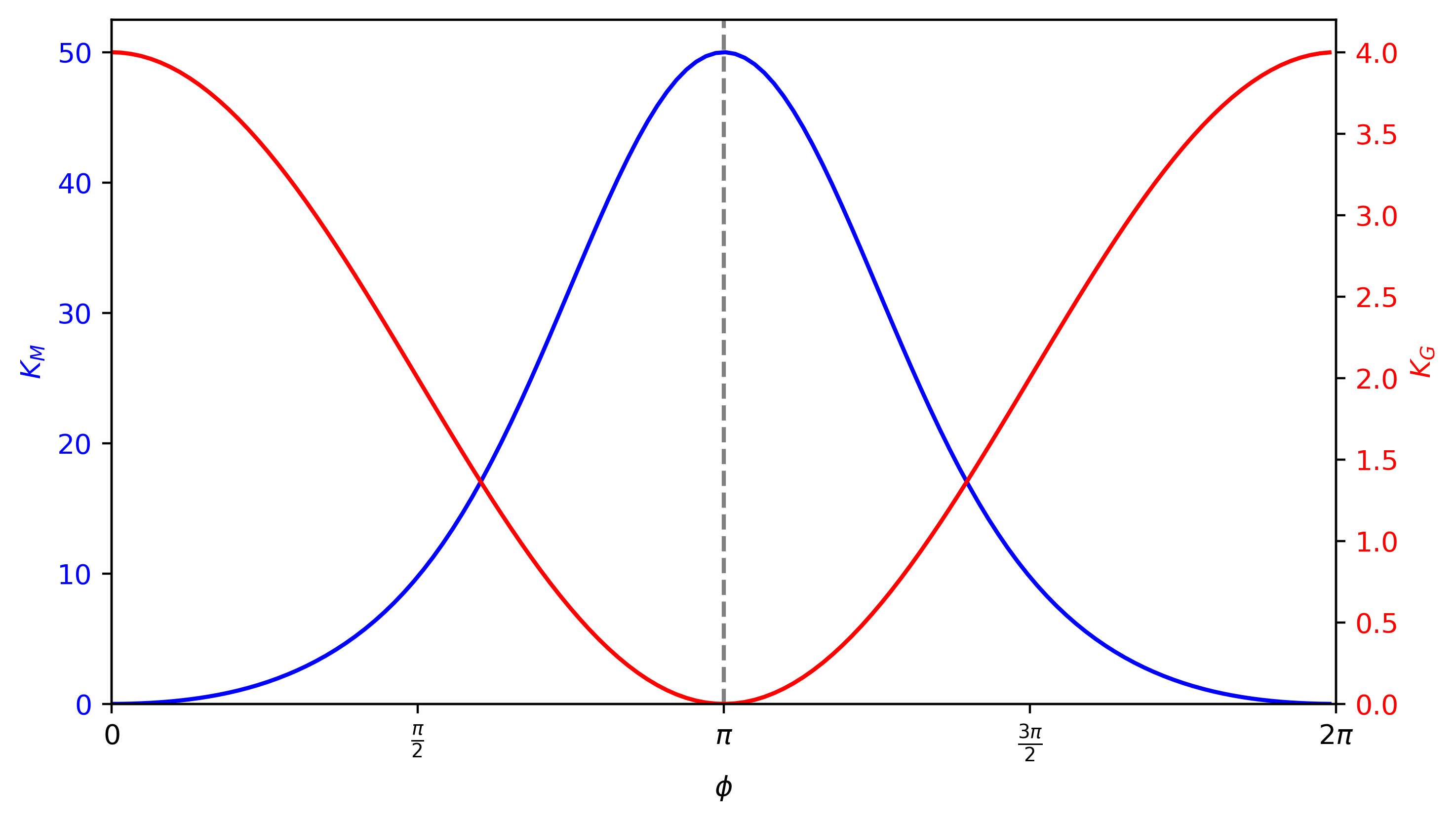}
    \caption{Non-Gaussian $K_M(\rho)$ \textcolor{blue}{(blue)} and Gaussian weight $K_G(\rho)$ \textcolor{red}{(red)} of the $4$-qubit family~\ref{eq:4qubitFamily}. 
    }
    \label{fig:mcn_plot}
  \end{minipage} 
  \begin{minipage}{0.45\textwidth}
    \centering
    \includegraphics[width=1\linewidth]{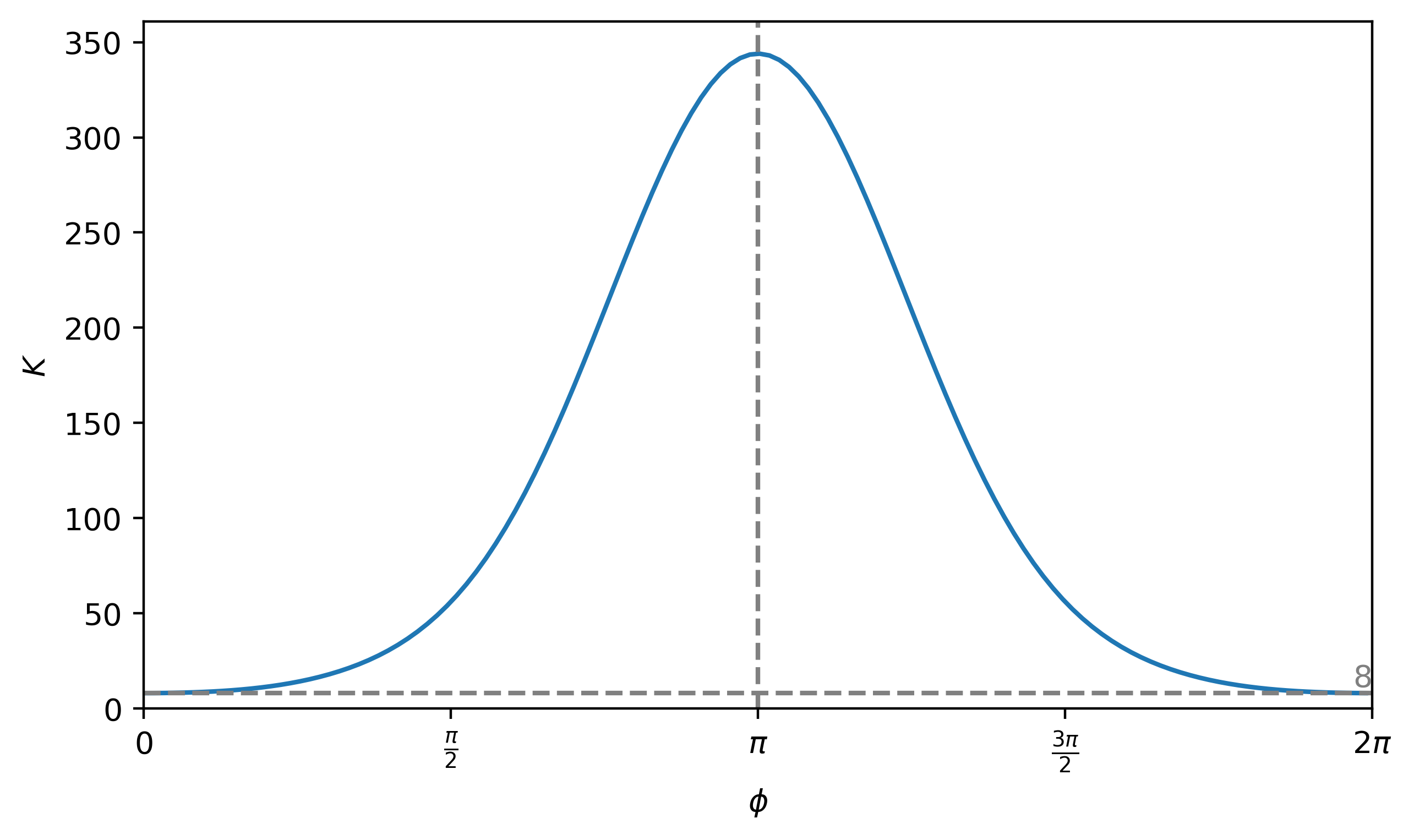}
    \caption{Total cumulant weight $K(\rho)$ (Definition~\ref{def:cumulantWeights}) of the $4$-qubit family~\ref{eq:4qubitFamily}. }
    \label{fig:mcn_weight}
  \end{minipage}
\end{figure}

The definition of the cumulants $\kappa_J$ imply that they can determine the 
the quantum states uniquely. 
The Fourier formula~\eqref{eq:GaussianFourierExpression} of Gaussian states 
show that Gaussian states have vanishing super-quadratic cumulants: 
$\kappa_J=0$ for all $J$ with $|J|\geq 4$. 
Based on this observation, we introduce the 
Gaussian weight and non-Gaussian weight; they are 
derived quantities from the cumulant weights. 

\begin{definition}
    \label{def:GaussianWeights}
    Given an even state $\rho$,   the Gaussian weight is
    \begin{align}
         K_G(\rho) = K_2(\rho),
    \end{align}
    and the non-Gaussian weight is 
    \begin{align}
        K_M(\rho)=\sum^{2n}_{j=4}K_j(\rho).
    \end{align}
\end{definition}

Our previous results have established that the Gaussian weight and non-Gaussian weight can be used to quantify the non-Gaussianity of quantum states, as they fulfill the fundamental properties required of a resource measure: 
\begin{proposition}
    The Gaussian weight $K_G$
    satifies the following properties: 
       \begin{enumerate}[(1)]
        \item Extremality: 
        Among all states with the given entropy $0\leq S_0 \leq n\log 2$, $K_G$ is maximized by Gaussian states. 
        \item Gaussian-invariance: $K_G(U_G\rho U^\dag_G)=K_G(\rho)$ for any Gaussian unitary $U_G$.
        \item Additivity under tensor product: $ K_G(\rho\ot \sigma)
        =K_G(\rho)+K_G(\sigma)$ for any even states $\rho$ and $\sigma$.
    \end{enumerate}
\end{proposition}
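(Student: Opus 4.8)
The plan is to get properties (2) and (3) as one-line consequences of the cumulant-weight results already proved, and to reduce property (1) to the entropy/quadratic-weight trade-off inside the Gaussian family, using the Gaussification map as the bridge between an arbitrary even state and a Gaussian one with the same covariance. Throughout I work with even states, and I use that the quadratic cumulants and moments coincide for even $\rho$ (equation~\eqref{eq:equaiv_2}), so that
\[
    K_G(\rho)=K_2(\rho)=W_2(\rho)=\tfrac12\norm{\Sigma_\rho}_2^2=\sum_{j=1}^n\lambda_j^2,
\]
where $\lambda_j\in[-1,1]$ and $\pm i\lambda_j$ are the eigenvalues of the covariance matrix $\Sigma_\rho$ (Lemma~\ref{thm:gaussianStateCharacterization}). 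In particular $K_G$ depends only on the covariance matrix of $\rho$.

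Property (2) is then immediate from Proposition~\ref{prp:degreeWeightInvariance_cu}: every cumulant weight $K_j$ is Gaussian-unitary invariant, and $K_G=K_2$ (equivalently one may cite Proposition~\ref{prp:degreeWeightInvariance}, since $K_G=W_2$ for even states). Property (3) is immediate from Proposition~\ref{prp:cumulantTpAdditivity}, which gives $K_j(\rho\otimes\sigma)=K_j(\rho)+K_j(\sigma)$ for even $\rho,\sigma$; specialising to $j=2$ gives $K_G(\rho\otimes\sigma)=K_G(\rho)+K_G(\sigma)$. For property (1), I would fix an even state $\rho$ with $S(\rho)=S_0$ and set $\sigma':=\mca G(\rho)$. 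Since $\sigma'$ is Gaussian with the same covariance matrix as $\rho$, we have $K_G(\sigma')=K_G(\rho)$, while Proposition~\ref{prop:ent_cha} gives $S(\sigma')=S(\rho)+D(\rho\,\|\,\mca G(\rho))\ge S_0$. If $S(\sigma')=S_0$, take $\sigma_G=\sigma'$ and we are done. Otherwise $S(\sigma')>S_0\ge 0$, so $\sigma'$ is not pure, hence $W_2(\sigma')<n$ by the purity criterion in Lemma~\ref{thm:gaussianStateCharacterization}; applying Lemma~\ref{lem:entropyQuadraticweightInequality} to the Gaussian state $\sigma'$ at target entropy $S_0<S(\sigma')$ yields a Gaussian state $\sigma_G$ with $S(\sigma_G)=S_0$ and $W_2(\sigma_G)>W_2(\sigma')=K_G(\rho)$, i.e.\ $K_G(\sigma_G)>K_G(\rho)$. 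Either way there is a Gaussian state of entropy $S_0$ whose Gaussian weight is at least $K_G(\rho)$, which is the extremality claim; the boundary cases $S_0=0$ (then $K_G=n$ is the global maximum, attained by pure Gaussian states) and $S_0=n\log2$ (then $\rho=I/2^n$ is already Gaussian) fit into the same picture.

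The routine parts are (2) and (3), which carry essentially no work. The only step with content is (1): the main thing to get right is the case split, and in the non-trivial branch checking the hypothesis $W_2(\sigma')<n$ of Lemma~\ref{lem:entropyQuadraticweightInequality} — this is exactly where the covariance-matrix characterisation of purity of Gaussian states is used. A secondary point to verify carefully is that $K_G$ depends only on the covariance matrix even for non-Gaussian $\rho$ (so that $\mca G(\rho)$ and $\rho$ really have equal Gaussian weight), which is precisely the degree-two moment–cumulant coincidence~\eqref{eq:equaiv_2}. I do not expect any serious obstacle beyond bookkeeping, since the Gaussification map and Lemma~\ref{lem:entropyQuadraticweightInequality} already package the hard analytic content.
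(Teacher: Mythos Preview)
Your proposal is correct and follows essentially the same route as the paper: properties (2) and (3) are deferred to Propositions~\ref{prp:degreeWeightInvariance} (or~\ref{prp:degreeWeightInvariance_cu}) and~\ref{prp:cumulantTpAdditivity}, and property (1) is reduced to Lemma~\ref{lem:entropyQuadraticweightInequality} via Gaussification, exactly as in the paper's Proposition~\ref{prp:gaussianMaxQuadraticWeight}. Your version is in fact slightly more careful than the paper's, since you phrase the extremality argument constructively and explicitly verify the hypothesis $W_2(\mca G(\rho))<n$ needed to invoke Lemma~\ref{lem:entropyQuadraticweightInequality}, whereas the paper's contradiction argument leaves that check implicit.
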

  \begin{proof}
        Extremality is established in Proposition~\ref{prp:gaussianMaxQuadraticWeight}. 
        Gaussian-invariance follows from Proposition~\ref{prp:degreeWeightInvariance}. 
        Additivity under tensor product follows the additivity of cumulants in Proposition~\ref{prp:cumulantTpAdditivity}. 
    \end{proof}

\begin{proposition}
    The non-Gaussian weight $K_M$ satisfies: 
    \begin{enumerate}[(1)]
        \item Extremality: $K_M(\rho)\geq 0$, with equality iff $ \rho$ is Gaussian. 
        \item Gaussian-invariance: $K_M(U\rho U^\dag) = K_M(\rho)$. 
        \item Tensor-product additivity: $K_M(\rho \ot \sigma) = K_M(\rho) + K_M(\sigma)$.
    \end{enumerate}
 
\end{proposition}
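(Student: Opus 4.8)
The plan is to dispatch properties (2) and (3) by direct appeal to results already established for the individual cumulant weights $K_j$, and to concentrate the real work on the faithfulness statement (1). For (2), Proposition~\ref{prp:degreeWeightInvariance_cu} gives $K_j(U_G\rho U_G^\dag)=K_j(\rho)$ for every $j$ and every Gaussian unitary $U_G$; summing over $j\geq 4$ yields $K_M(U_G\rho U_G^\dag)=K_M(\rho)$. For (3), Proposition~\ref{prp:cumulantTpAdditivity} gives $K_j(\rho\ot\sigma)=K_j(\rho)+K_j(\sigma)$ for even $\rho,\sigma$ (its proof shows $\Psi_{\rho\ot\sigma}=\Psi_\rho\ot 1+1\ot\Psi_\sigma$, so a nonvanishing cumulant of the product is supported entirely in one tensor factor); summing over $j\geq 4$ gives $K_M(\rho\ot\sigma)=K_M(\rho)+K_M(\sigma)$.

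For (1), nonnegativity is immediate, since $K_M(\rho)=\sum_{j\geq 4}\sum_{|J|=j}|\kappa^\rho_J|^2$ is a sum of squared moduli. The content is the equality case: $K_M(\rho)=0$ iff $\kappa^\rho_J=0$ for all $J$ with $|J|\geq 4$. The forward direction (Gaussian $\Rightarrow K_M=0$) follows from the Fourier formula~\eqref{eq:GaussianFourierExpression}: a Gaussian state has $\Xi_\rho=\exp\bigl(\tfrac12\eta^T\Sigma_\rho\eta\bigr)$, so $\Psi_\rho=\log\Xi_\rho=\tfrac12\eta^T\Sigma_\rho\eta$ is purely quadratic and all super-quadratic cumulants vanish.

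The converse is the step to carry out carefully. Suppose $\rho$ is even with $\kappa^\rho_J=0$ for $|J|\geq 4$. The degree-$0$ coefficient of $\Xi_\rho$ is $\rho_\emptyset=\tr\rho=1$, so the series~\eqref{def:cumulant} for $\Psi_\rho=\log\Xi_\rho$ has vanishing constant term; evenness of $\rho$ kills all odd-degree terms in $\Psi_\rho$; and by hypothesis there are no terms of degree $\geq 4$. Hence $\Psi_\rho$ is purely quadratic, and using $\kappa^\rho_J=\rho_J$ for $|J|=2$ from~\eqref{eq:equaiv_2} we identify $\Psi_\rho=\tfrac12\eta^T\Sigma_\rho\eta$ with $\Sigma_\rho$ the covariance matrix of $\rho$, so that $\Xi_\rho=\exp\bigl(\tfrac12\eta^T\Sigma_\rho\eta\bigr)$. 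Since $\Sigma_\rho$ is the covariance matrix of a genuine state, its block-diagonal eigenvalues lie in $[-1,1]$ (see~\eqref{eq:block_D}); by Lemma~\ref{thm:gaussianStateCharacterization} there is a bona fide Gaussian state $\sigma_G$ with covariance $\Sigma_\rho$ (a thermal state when all $|\lambda_j|<1$, a limit of such when some $|\lambda_j|=1$), and therefore $\Xi_{\sigma_G}=\exp\bigl(\tfrac12\eta^T\Sigma_\rho\eta\bigr)=\Xi_\rho$. Since a state is uniquely determined by its moments, $\rho=\sigma_G$ is Gaussian.

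The only genuinely delicate point is this last step: one must be sure that the antisymmetric matrix arising as $\Sigma_\rho$ is actually realized by some Gaussian state, so that the Fourier-transform comparison closes, and one must correctly track the normalization $\tr\rho=1$ and the parity (evenness) of $\rho$ when passing from ``super-quadratic cumulants vanish'' to ``$\Psi_\rho$ equals the Gaussian quadratic form.'' Both points are handled by invoking~\eqref{eq:block_D}, \eqref{eq:equaiv_2}, and Lemma~\ref{thm:gaussianStateCharacterization}; no new estimates are required.
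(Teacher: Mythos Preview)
Your proof is correct. For (2) and (3) you use exactly the same results the paper cites (Propositions~\ref{prp:degreeWeightInvariance_cu} and~\ref{prp:cumulantTpAdditivity}), so there is no difference there.

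For (1), however, you take a different route from the paper. The paper simply invokes the CLT convergence bound (Theorem~\ref{app:convergenceBound}): since $\|\boxtimes^k\rho-\mca G(\rho)\|_2\leq 2^{-k}\sqrt{K_M(\rho)}\exp(\cdots)$, setting $k=0$ with $K_M(\rho)=0$ forces $\rho=\mca G(\rho)$, hence $\rho$ is Gaussian. You instead argue directly from the cumulant structure: vanishing super-quadratic cumulants (plus evenness and $\tr\rho=1$) make $\Psi_\rho$ purely quadratic, so $\Xi_\rho=\exp\bigl(\tfrac12\eta^T\Sigma_\rho\eta\bigr)$ coincides with the Fourier transform of the Gaussian state with covariance $\Sigma_\rho$, whence $\rho$ is that Gaussian. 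Your approach is more elementary and self-contained---it does not require the convolution machinery or the quantitative CLT bound at all, and it makes transparent the underlying fact (stated but not separately proved in the paper's CLT argument) that an even state is Gaussian iff its super-quadratic cumulants vanish. The paper's approach, on the other hand, efficiently recycles a result it has already established for other purposes. One small remark: rather than invoking~\eqref{eq:block_D} and Lemma~\ref{thm:gaussianStateCharacterization} to produce a Gaussian with covariance $\Sigma_\rho$, you could simply take $\sigma_G=\mca G(\rho)$, which by definition has the same covariance as $\rho$; this sidesteps the need to argue separately that $\Sigma_\rho$ is realizable.
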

   \begin{proof}
        Extremality follows from the CLT convergence bound~\ref{app:convergenceBound}. 
       Due to the definition of $K_M$, Gaussian-invariance and additivity follows from Propositions~\ref{prp:degreeWeightInvariance} and~\ref{prp:cumulantTpAdditivity}. 
    \end{proof}

The total cumulant weight can also be used as a non-Gaussian measure: 
\begin{proposition}
    The total weight satisfies the following properties for pure state $\psi$,
     \begin{enumerate}[(1)]
        \item Faithfulness: $K(\psi)\geq 2n$, with equality iff $\psi$ is Gaussian. 
        \item Gaussian-invariance: $K(U_G\proj{\psi} U^\dag) = K(\proj{\psi})$. 
        \item Tensor-product additivity: $K(\psi_1 \ot \psi_2) = K(\psi_1) + K(\psi_2)$.
    \end{enumerate}
 
\end{proposition}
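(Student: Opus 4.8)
Properties (2) and (3) require no new work. Since $K(\rho)=\sum_j j\,K_j(\rho)$ and each cumulant weight $K_j$ is invariant under Gaussian unitaries (Proposition~\ref{prp:degreeWeightInvariance_cu}), so is $K$; and since each $K_j$ is additive under tensor products of even states (Proposition~\ref{prp:cumulantTpAdditivity}) and $\psi_1\ot\psi_2$ is even whenever $\psi_1,\psi_2$ are, $K(\psi_1\ot\psi_2)=\sum_j j\big(K_j(\psi_1)+K_j(\psi_2)\big)=K(\psi_1)+K(\psi_2)$.

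For the faithfulness statement (1), I would first record that for an even state $\psi$ the odd cumulant weights vanish and $K_0(\psi)=\abs{\kappa^\psi_\emptyset}^2=0$ (as $\Xi_\psi$ has constant term $1$), so $K(\psi)=2K_2(\psi)+\sum_{j\geq 4} j\,K_j(\psi)$ with $K_2(\psi)=\tfrac12\norm{\Sigma_\psi}_2^2\leq n$ (from $\Sigma_\psi^T\Sigma_\psi\leq I_{2n}$). For a pure Gaussian state, Lemma~\ref{thm:gaussianStateCharacterization} gives $\Psi_\psi=\tfrac12\eta^T\Sigma_\psi\eta$, hence $K_j(\psi)=0$ for all $j\geq 4$, while $\Sigma_\psi^T\Sigma_\psi=I_{2n}$ forces $K_2(\psi)=n$; thus $K(\psi)=2n$, the equality case. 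The converse (that $K(\psi)=2n$ implies $\psi$ Gaussian) I would deduce from the strict inequality below, so the whole of (1) reduces to showing $K(\psi)>2n$ for every pure non-Gaussian $\psi$.

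The heart of the matter is therefore the strict inequality, which is equivalent to
\[
 \sum_{j\geq 4} j\,K_j(\psi)\;>\;\tr\!\big(I_{2n}-\Sigma_\psi^T\Sigma_\psi\big)\;=\;2\big(n-K_2(\psi)\big)\;\geq\;0 .
\]
Purity is what drives this. From $\tr(\psi^2)=1$ one has $\sum_J\abs{\psi_J}^2=2^n$; combining this with $\tr(\gamma_j\psi\gamma_j\psi)=\abs{\Innerm{\psi}{\gamma_j}{\psi}}^2=0$ (evenness of $\psi$) and the Grassmann-calculus identity that represents $[\gamma_j,\,\cdot\,]$ as $2\partial_{\eta_j}$ on moment-generating operators, one obtains the sharper constraint $I_M(\psi)=\sum_J\abs{J}\,\abs{\psi_J}^2=n\,2^n$, pinning the total moment weight. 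The plan is to feed these purity constraints, together with the moment--cumulant relations (e.g.\ $\kappa^\psi_J=\psi_J-\Pf[(\Sigma_\psi)_{|J}]$ for $\abs{J}=4$, and its higher-degree analogues) and the quantitative central-limit bound of Sec.~\ref{app:CLTBound}, into a single estimate showing that any shortfall of $\Sigma_\psi^T\Sigma_\psi$ below $I_{2n}$ is strictly overpaid by the super-quadratic cumulant weights.

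The step I expect to be hardest is exactly this overcompensation. The naive bound $\sum_{j\geq 4} j\,K_j(\psi)\geq 4K_M(\psi)$ is too lossy near the Gaussian manifold: for a pure state obtained by a small non-Gaussian perturbation of a pure Gaussian state, both the covariance deficit $n-K_2(\psi)$ and $K_M(\psi)$ are second order in the perturbation, so the inequality can only be verified by expanding the central-limit estimate — ideally in the form relating $S(\mca G(\psi))-S(\psi)=NG_R(\psi)$ (or $\norm{\Xi_\psi-\Xi_{\mca G(\psi)}}$) to the cumulant weights — to second order about pure Gaussian states and checking that the sharp constant comes out right. Once the strict inequality is in hand, the equality characterisation, and hence all of (1), follow as indicated above.
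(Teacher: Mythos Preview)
Your treatment of (2) and (3) matches the paper verbatim: both follow from the degree-wise invariance and additivity of the cumulant weights (Propositions~\ref{prp:degreeWeightInvariance_cu} and~\ref{prp:cumulantTpAdditivity}).

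For (1), the paper's entire proof is the single sentence ``Faithfulness is established by $K\geq 2K_G$ with equality iff $K_G=n$ iff $\rho$ is pure Gaussian.'' That is, the paper invokes only the trivial decomposition $K=2K_G+\sum_{j\geq 4}jK_j\geq 2K_G$ together with the corollary that $K_G(\psi)=n$ characterises pure Gaussians. Read literally, this establishes the equality case ($\psi$ pure Gaussian $\Rightarrow K(\psi)=2n$) and that equality in $K\geq 2K_G$ forces Gaussianity, but it does \emph{not} by itself yield the lower bound $K(\psi)\geq 2n$ for non-Gaussian pure $\psi$: since $K_G(\psi)<n$ precisely when $\psi$ is non-Gaussian, the right-hand side $2K_G$ falls below $2n$. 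So the paper's argument, as written, is at best elliptical on exactly the point you identify as the heart of the matter.

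Your reformulation of the missing step as the overcompensation inequality $\sum_{j\geq 4}jK_j(\psi)>2\bigl(n-K_2(\psi)\bigr)$ is correct, and your purity identity $I_M(\psi)=n\,2^n$ (which does follow from $\tr(\gamma_j\psi\gamma_j\psi)=0$ for even pure $\psi$) is a genuine constraint not used in the paper. However, your proposal stops at a plan: you do not actually carry out the estimate linking the moment-level identity to the cumulant-level inequality, and the perturbative/CLT route you sketch is both indirect and, as you acknowledge, delicate near the Gaussian manifold. So the proposal is incomplete in the same place the paper's proof is, though you are explicit about it while the paper is not. If you want a self-contained proof of (1), the overcompensation inequality must be established directly; neither the CLT bound of Theorem~\ref{app:convergenceBound} nor the relative-entropy identity gives it without further work.
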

\begin{proof}
Faithfulness is established by $K\geq 2K_G$ with equality iff $K_G=n$ iff $\rho$ is pure Gaussian. 
Due to the definition of total weight, Gaussian-invariance and additivity follows from Propositions~\ref{prp:degreeWeightInvariance} and~\ref{prp:cumulantTpAdditivity}. 
\end{proof}

As an example, consider the resource measures Gaussian weight $K_G$, non-Gaussian weight $K_M$, and total weight $K$ for the $4$-qubit  parametrized state $\psi_\phi$, 
\leqalign{eq:4qubitFamily}{
    \ket{\psi_{\phi}}=\frac{1}{2}
    \left(\ket{0000}+\ket{0011}+\ket{1100}+e^{i\phi}\ket{1111} \right),
}
with $\phi\in [0,2\pi]$. The numerical results are shown 
in Figure~\ref{fig:mcn_weight}. We numerically find that  $\psi_\pi$ is maximally non-Gaussian among $4$-qubit states according to all three measures.

\section{Fermionic convolution: definition and properties}\label{appen:convolution}
We begin by giving a circuit decomposition of the convolution unitary under 
the standard Jordan-Wigner transform in~\ref{app:convolutionUnitary}. 
The properties of fermionic convolution are enumerated and proven in~\ref{Appen:prop_con}. 

\subsection{Convolution unitary}
\label{app:convolutionUnitary}
From here on, we refer to the fermionic beam splitter as the convolution unitary 
\begin{eqnarray}
    W_\theta = \exp \left(\df \theta 2 \sum_{j=1}^{2n} \gamma_j \gamma_{2n+j}\right).
\end{eqnarray}
When $\theta=\pi/4$, the fermionic beam splitter has the following balanced form
\leqalign{eq:convUnitaryGeneratorAction}{
    W_{\pi/4}\gamma_j W^\dag_{\pi/4}=\frac{1}{\sqrt{2}} \gamma_j-\frac{1}{\sqrt{2}}\gamma_{2n+j},\\
    W_{\pi/4}\gamma_{2n+j} W^\dag_{\pi/4}=\frac{1}{\sqrt{2}}  \gamma_j+\frac{1}{\sqrt{2}}  \gamma_{2n+j}.
}
Consider the simplest case, 
where $W^{(1)}_\theta$ denotes the convolution unitary which acts on two $1$-qubit inputs: 
\begin{eqnarray}
    W^{(1)}_{\theta}=\exp\left[ i\frac{\theta}{2}(XY+YX)\right]. 
\end{eqnarray}
The decomposition of $W_\theta^{(1)}$ in terms of elementary gates is shown in Fig~\ref{fig:elementary_decomp}. To decompose the multi-qubit convolution unitary, we conjugate $W^{(1)}_\theta$ with a swapping gadget (Fig~\ref{fig:swap_gadget}) which swaps the fermionic operators of two lines. Fig~\ref{fig:total_decomp} demonstrates the $4$-qubit case with two inputs on lines $0$ to $3$, $4$ to $7$, respectively. Circuits are visualized using Qiskit~\cite{wille2019ibm}. 

The $1$-qubit gates in Fig~\ref{fig:elementary_decomp} are 
\begin{eqnarray}
    X = \begin{pmatrix}
        0 & 1 \\ 1 & 0
    \end{pmatrix}, \quad 
    H = \df 1 {\sqrt 2} \begin{pmatrix}
        1 & 1 \\ -1 & 1
    \end{pmatrix}, \quad 
    S = \begin{pmatrix}
        1 & 0 \\ 0 & -i
    \end{pmatrix}. 
\end{eqnarray}
In the computational basis $|00\ra, |01\ra, |10\ra, |11\ra$, the $2$-qubit gates 
in Fig~\ref{fig:elementary_decomp} and Fig~\ref{fig:swap_gadget} are 
\begin{eqnarray}
    R_Z(\theta)_{1\to 2} 
    = \begin{pmatrix}
        1 & 0 & 0 & 0 \\ 0 & 1 & 0 & 0 \\ 
        0 & 0 & 1 & 0 \\ 0 & 0 & 0 & e^{i\theta} 
    \end{pmatrix}, \quad 
    \mrm{CZ} = \begin{pmatrix}
        1 & 0 & 0 & 0 \\ 0 & 1 & 0 & 0 \\ 
        0 & 0 & 1 & 0 \\ 0 & 0 & 0 & -1 
    \end{pmatrix}, \quad 
    \mrm{SWAP} = \begin{pmatrix}
        1 & 0 & 0 & 0 \\ 
        0 & 0 & 1 & 0 \\ 
        0 & 1 & 0 & 0 \\ 
        0 & 0 & 0 & 1 
    \end{pmatrix}. 
\end{eqnarray}

\begin{figure}[h]
    \centering
    \includegraphics[width=1\textwidth]{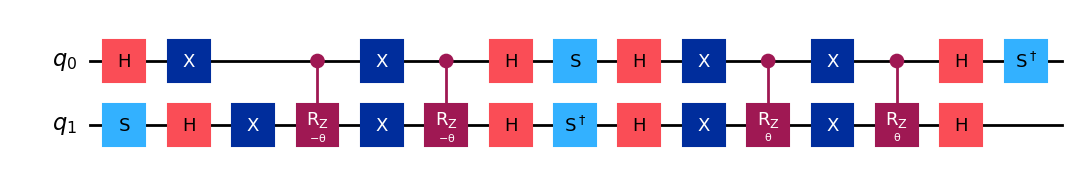}
    \caption{Elementary decomposition of the $1$-qubit convolution unitary $W_\theta^{(1)}$. 
    Here $R_z(\theta) = \exp(-i\theta Z/2)$.}
    \label{fig:elementary_decomp}
\end{figure}
\begin{figure}[h]
    \centering
    \includegraphics[width=1\textwidth]{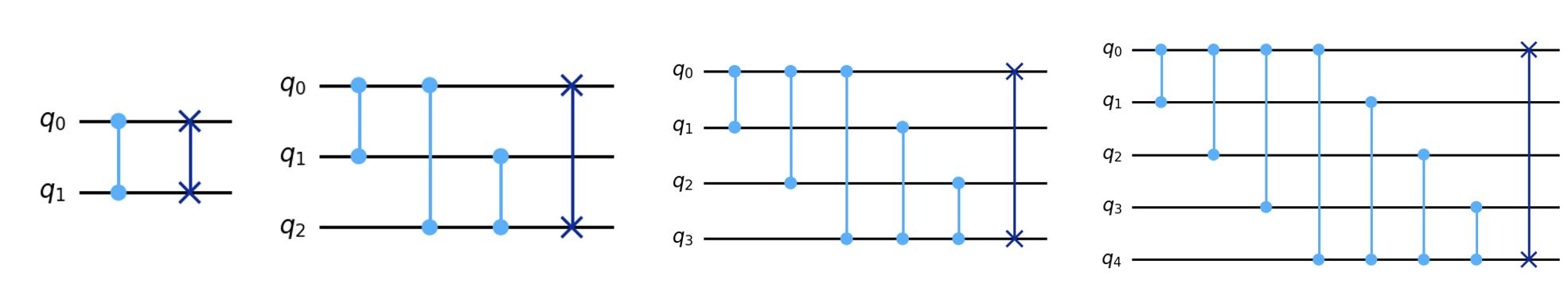}
    \caption{Swap gadget for $2, 3, 4, 5$-qubit convolution unitaries. 
    The blue and navy gates are $CZ$ and $\mrm{SWAP}$, respectively. 
    This gadget swaps the two majorana operators associated with the top and bottom lines 
    while leaving others unchanged. }
    \label{fig:swap_gadget}
\end{figure}
\begin{figure}[h]
    \centering
    \includegraphics[width=1\textwidth]{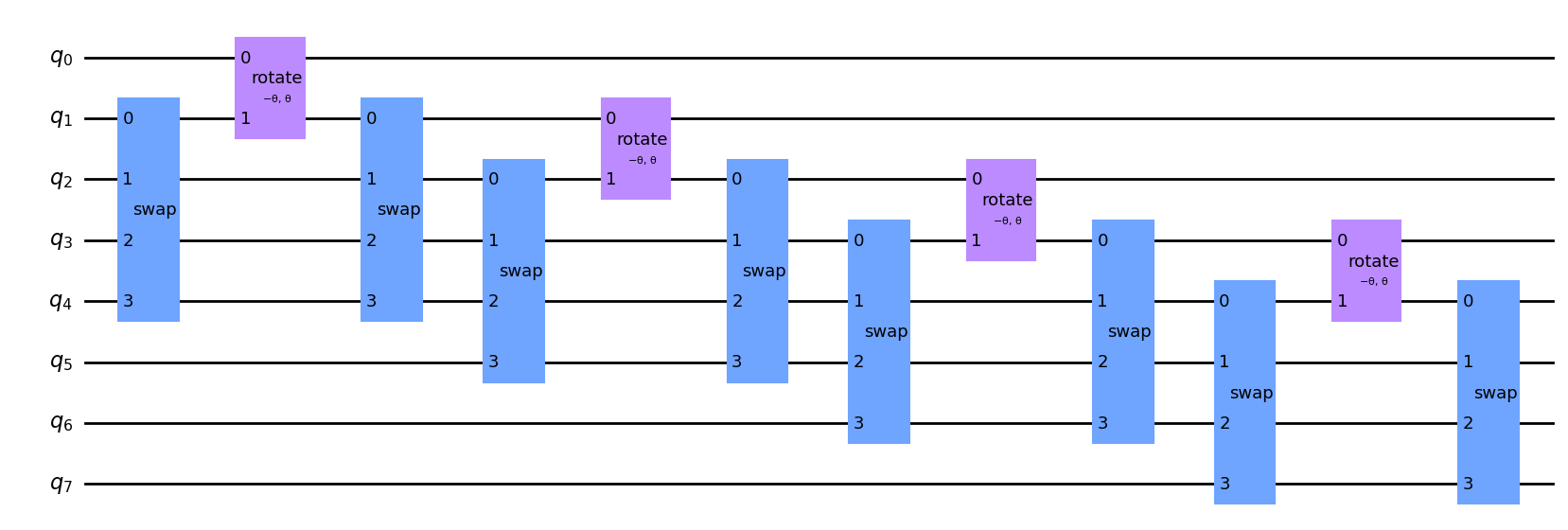}
    \caption{Elementary decomposition of the $4$-qubit convolution unitary. The purple ``rotate'' gadget is $W_\theta^{(1)}$ in Fig~\ref{fig:elementary_decomp}, while the blue ``swap'' gadget is expanded in Fig~\ref{fig:swap_gadget}.}
    \label{fig:total_decomp}
\end{figure}

\subsection{Properties of fermionic convolution}
\label{Appen:prop_con}
In this subsection, we prove several key properties of fermionic convolution; these properties mimic the convolution operations in classical statistics and discrete bosonic systems~\cite{BGJ23a,BGJ23b}. These properties highlight the relation between fermionic quantum convolution and both quantum and classical information theory.
For clarity, we first summarize the properties in the following lemma and leave the central limit theorem to Sec. \ref{app:CLTBound}.

\begin{lemma}[Summary of properties]
  For even states $\rho, \sigma$ and Gaussian unitary $U_G$, fermionic convolution satisfies the following properties:
\begin{enumerate}[(1)]
    \item \emph{Commutativity (Corollary~\ref{cor:com})}: $\rho \boxtimes_{\pi/4} \sigma = \sigma \boxtimes_{\pi/4} \rho$.
    \item \emph{Angle reflectivity (Corollary~\ref{cor:ANG_REF})}: $\rho \boxtimes_\theta \sigma = \rho \boxtimes_{-\theta} \sigma$. 
    \item \emph{Commutativity with Gaussian unitaries (Proposition~\ref{prp:convUnitaryCommute})}: $U_G(\rho \boxtimes \sigma) U^\dag_G = (U_G\rho U^\dag_G) \boxtimes (U_G\sigma U^\dag_G)$, for any Gaussian unitary $U_G$. 
    \item \emph{Convolution-multiplication duality (Theorem~\ref{thm:conDual})}: 
    Let $\Xi_\tau\in \G_{2n}$ denote the Fourier transform of a quantum state $\tau\in \Cl_{2n}$, and 
    $\alpha\cdot \eta$ denote the generators $\alpha \eta_1, \cdots, \alpha \eta_{2n}$, then 
    \begin{eqnarray}
        \Xi_{\rho \boxtimes \sigma}(\eta) = \Xi_\rho(\cos\theta\cdot \eta)\Xi_\sigma(\sin\theta\cdot \eta). 
    \end{eqnarray}
    
    \item \emph{Entropy inequality (Proposition~\ref{prop:entrop_ineq})}:  
        \begin{eqnarray}
                    S(\rho\boxtimes\sigma)\geq \frac{1}{2}S(\rho)+\frac{1}{2}S(\sigma).
        \end{eqnarray}
        Equality holds if and only if $\rho=\sigma$ is equal to a Gaussian state. 
        
    \item \emph{Purity invariance implies normality (Corollary~\ref{cor:purityInvariance})}: 
    $\rho\boxtimes\rho$ is pure if and only if $\rho$ is a Gaussian state. 
\end{enumerate}
\end{lemma}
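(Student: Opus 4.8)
The six properties fall into three groups that I would handle in turn. Commutativity, angle reflectivity, and the convolution--multiplication duality (properties (1), (2), (4)) all follow from a single direct computation of the Fourier transform of $\rho\boxtimes_\theta\sigma$; commutativity with Gaussian unitaries (property (3)) is a rotation-commutation argument; and the entropy inequality together with purity invariance (properties (5), (6)) combine that duality with subadditivity of the von Neumann entropy and the central limit theorem of Sec.~\ref{app:CLTBound}. The technical heart is the duality, which I would prove first.

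Write $\gamma^{(1)}_j,\gamma^{(2)}_j$ for the Majorana operators of the two input registers. Starting from $\Xi_{\rho\boxtimes_\theta\sigma}(\eta)=\tr_{\Cl_{4n}}\!\bigl[(e^{\sum_j\gamma^{(1)}_j\eta_j}\ot 1)\,W_\theta(\rho\ot\sigma)W_\theta^\dag\bigr]$, I would use cyclicity to push the Fourier kernel through $W_\theta$; by~\eqref{eq:convolutionUnitaryAction} this replaces $\gamma^{(1)}_j$ by $\cos\theta\,\gamma^{(1)}_j+\sin\theta\,\gamma^{(2)}_j$. I would then factor the resulting exponential as $e^{\cos\theta\sum_j\gamma^{(1)}_j\eta_j}\,e^{\sin\theta\sum_j\gamma^{(2)}_j\eta_j}$, which is legitimate because each $\gamma_j\eta_j$ has even $\mathbb{Z}_2$-degree so the two exponents commute in $\Cl_{4n}\ot_f\G_{2n}$. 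Since $\rho$, $\sigma$, and both exponential factors are even, the trace over $\Cl_{4n}$ splits across the two registers and gives $\Xi_{\rho\boxtimes_\theta\sigma}(\eta)=\Xi_\rho(\cos\theta\cdot\eta)\,\Xi_\sigma(\sin\theta\cdot\eta)$. Commutativity then follows at $\theta=\pi/4$, where the two factors are even Grassmann elements and hence commute, and angle reflectivity follows from $\Xi_\sigma(-\sin\theta\cdot\eta)=\Xi_\sigma(\sin\theta\cdot\eta)$ for even $\sigma$. The delicate point is bookkeeping the signs coming from the anticommuting tensor product and justifying the two factorizations.

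For property (3), I would observe that $U_G\ot U_G$ and $W_\theta$ act on the $4n$ Majoranas by the $SO(4n)$ rotations $R\oplus R$ and $\bigl(\begin{smallmatrix}\cos\theta\,I & -\sin\theta\,I\\ \sin\theta\,I & \cos\theta\,I\end{smallmatrix}\bigr)$, which commute. Since a Gaussian unitary is fixed by its rotation up to a global phase, $W_\theta$ and $U_G\ot U_G$ commute up to a phase, which cancels under conjugation; combining this with $\tr_2\!\bigl[(U_G\ot U_G)X(U_G\ot U_G)^\dag\bigr]=U_G\,\tr_2[X]\,U_G^\dag$ yields $(U_G\rho U_G^\dag)\boxtimes(U_G\sigma U_G^\dag)=U_G(\rho\boxtimes\sigma)U_G^\dag$.

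For the entropy inequality (property (5)), the key preliminary is that on even inputs the balanced beam splitter has two identical outputs: the second marginal $\tr_1[W_{\pi/4}(\rho\ot\sigma)W_{\pi/4}^\dag]$ has Fourier transform $\Xi_\rho(\sin\theta\cdot\eta)\Xi_\sigma(\cos\theta\cdot\eta)$, which at $\theta=\pi/4$ coincides with $\Xi_{\rho\boxtimes\sigma}$ by evenness. Writing $\tau_{12}=W_{\pi/4}(\rho\ot\sigma)W_{\pi/4}^\dag$, subadditivity and unitarity then give $2S(\rho\boxtimes\sigma)=S(\tau_1)+S(\tau_2)\ge S(\tau_{12})=S(\rho)+S(\sigma)$. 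For the equality case, subadditivity saturates only if $\tau_{12}=(\rho\boxtimes\sigma)\ot(\rho\boxtimes\sigma)$; conjugating once more by $W_{\pi/4}$, so that $W_{\pi/4}^2=W_{\pi/2}$ swaps the two even registers, and reading off the two marginals forces $\rho=\sigma=(\rho\boxtimes\sigma)\boxtimes(\rho\boxtimes\sigma)=\boxtimes^2\rho$, hence $\rho=\boxtimes^{2k}\rho$ for all $k$; by the central limit theorem this limit is $\mca G(\rho)$, so $\rho$ is Gaussian; conversely, $\rho\boxtimes\rho$ has the same covariance as a Gaussian $\rho$ and is itself Gaussian, hence equals $\rho$, so equality holds. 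Property (6) is the specialization $\sigma=\rho$: since $S(\rho\boxtimes\rho)\geq S(\rho)$ with equality iff $\rho$ is Gaussian, a pure state $\rho$ self-convolves to a pure state exactly when $\rho$ is Gaussian. I expect this last equality analysis---extracting Gaussianity from the product structure of $\tau_{12}$---to be the main obstacle.
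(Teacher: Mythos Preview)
Your proposal is correct and broadly parallel to the paper, but two of your sub-arguments take genuinely different routes worth noting. For the duality~(4), the paper's primary proof is purely combinatorial: it verifies $\Xi_{\gamma_J\boxtimes_\theta\gamma_K}=(\xi_{\cos\theta}\eta_J)(\xi_{\sin\theta}\eta_K)$ on basis monomials by expanding $W_\theta\gamma_J\gamma_K W_\theta^\dag$ and tracking which terms survive $\tr_2$, then extends bilinearly---this sidesteps all sign bookkeeping in $\Cl_{4n}\ot_f\G_{2n}$ that you flag as delicate. Your Fourier-kernel argument is essentially the paper's \emph{second} proof (the Hudson-style one), so both are available. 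For~(3), the paper instead applies the duality already in hand: $\Xi_{U_G(\rho\boxtimes\sigma)U_G^\dag}(\eta)=\Xi_{\rho\boxtimes\sigma}(R\eta)=\Xi_\rho(R\eta)\Xi_\sigma(R\eta)=\Xi_{(U_G\rho U_G^\dag)\boxtimes(U_G\sigma U_G^\dag)}(\eta)$; your $SO(4n)$ rotation-commutation argument is a clean alternative that avoids Fourier entirely but needs the extra observation that Gaussian unitaries are determined by their rotation up to phase. For the equality case in~(5), the paper conjugates backward by $W^\dag=W_{-\pi/4}$ and invokes angle reflectivity, whereas you conjugate forward and use that $W_{\pi/2}$ swaps even registers; these are equivalent maneuvers and your iteration $\rho=\boxtimes^{2k}\rho\to\mca G(\rho)$ matches the paper exactly.
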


We begin by proving the convolution-multiplication duality, which will be helpful to prove other results. It was demonstrated for the special case \(\theta = \pi/4\) and \(\rho = \sigma\) in \cite{hudson1980translation}, Proposition 5.1. 
Here, we will provide a generalized proof, inspired by Hudson's approach~\cite{hudson1980translation}, and also a novel, elementary combinatorial proof. We first introduce the following operation on the algebras.

\begin{definition}[Contraction isomorphism]
Given a factor $\alpha\in \complex$, 
    the contraction isomorphism $\xi_\alpha:\G_{2n}\to \G_{2n}$ scales all generators by a scalar, i.e., 
    \begin{eqnarray}
        \xi_\alpha\, \eta_j = \alpha \eta_j,
    \end{eqnarray}
This contraction can be extended to the whole algebra as 
\begin{align}
    \xi_{\alpha}\eta_A=\alpha^{|A|}\eta_A.
\end{align}

\end{definition}

\begin{theorem}[Convolution-multiplication duality]\label{thm:conDual}
    For arbitrary $\rho$ and even state $\sigma$, the moment-generating operator under convolution satisfies the following relation
    \leqalign{eq:convMultDuality}{
        \Xi_{\rho \boxtimes_\theta \sigma} &= (\xi_{\cos\theta}\Xi_\rho)(\xi_{\sin\theta}\Xi_\sigma),
    }
   and the cumulant-generating operator under convolution satisfies the following relation 
   \begin{align}\label{eq:conMut_cu}
\Psi_{\rho\boxtimes_\theta\sigma}=\xi_{\cos\theta}\Psi_{\rho}+\xi_{\sin\theta}\Psi_{\sigma}.
   \end{align}

\end{theorem}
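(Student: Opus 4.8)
The plan is to establish the moment identity~\eqref{eq:convMultDuality} first and then deduce the cumulant identity~\eqref{eq:conMut_cu} from it by a formal computation with the Grassmann logarithm; I would present both Hudson's operator-analytic argument and a self-contained combinatorial one.

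For the operator-analytic proof I would begin from the kernel formula $e^F=\sum_J\eta_J\gamma_J^\dag$ in~\eqref{eq:FourierKernel} together with the partial-trace identity $\tr_{\Cl}^{(1)}\bigl(X_1\,\Ptr{2}{Y_{12}}\bigr)=\tr_{\Cl_{4n}}\bigl((X_1\ot I)Y_{12}\bigr)$, which gives
\begin{align}
\Xi_{\rho\boxtimes_\theta\sigma}(\eta)=\tr_{\Cl_{4n}}\Bigl(\bigl(W_\theta^\dag e^{F_1}W_\theta\bigr)(\rho\ot\sigma)\Bigr),
\end{align}
where $F_1=\sum_{j=1}^{2n}\gamma_j\eta_j$ acts on the first register and the cyclic move of $W_\theta^\dag$ past the Clifford trace is licensed by $W_\theta$ being an even (Gaussian) unitary. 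By the Majorana rotation~\eqref{eq:convolutionUnitaryAction}, $W_\theta^\dag F_1 W_\theta=\cos\theta\sum_j\gamma_j\eta_j+\sin\theta\sum_j\gamma_{2n+j}\eta_j$. A one-line Koszul-sign check shows that any two $\gamma\eta$-monomials commute in $\Cl_{4n}\otimes_f\G_{2n}$, so the exponential factors as $W_\theta^\dag e^{F_1}W_\theta=\Gamma_1\Gamma_2$ with $\Gamma_1=\exp\bigl(\sum_j\gamma_j(\cos\theta\cdot\eta_j)\bigr)$ and $\Gamma_2=\exp\bigl(\sum_j\gamma_{2n+j}(\sin\theta\cdot\eta_j)\bigr)$ — precisely the Fourier kernels of the two registers with Grassmann generators rescaled by $\cos\theta$ and $\sin\theta$. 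Using that $\sigma$ is even (so that it commutes across the bipartition and only the identity component of each register survives the partial traces, trivializing the residual Koszul signs), the $\Cl_{4n}$ trace factorizes into $\tr_{\Cl_{2n}}(\Gamma_1\rho)\cdot\tr_{\Cl_{2n}}(\Gamma_2\sigma)=\Xi_\rho(\cos\theta\cdot\eta)\,\Xi_\sigma(\sin\theta\cdot\eta)$, which is exactly $(\xi_{\cos\theta}\Xi_\rho)(\xi_{\sin\theta}\Xi_\sigma)$.

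For a self-contained combinatorial proof I would instead expand $(\rho\boxtimes_\theta\sigma)_J=\tr_{\Cl_{4n}}\bigl((W_\theta^\dag\gamma_J^\dag W_\theta)(\rho\ot\sigma)\bigr)$ with $W_\theta^\dag\gamma_J W_\theta=\prod_{j\in J}(\cos\theta\,\gamma_j+\sin\theta\,\gamma_{2n+j})$, expand the ordered product over subsets $B\subseteq J$ indicating which factors are sent to the second register, track the reordering sign $\varepsilon(J\setminus B,B)$, and use evenness of $\sigma$ to split the trace, obtaining $(\rho\boxtimes_\theta\sigma)_J=\sum_{B\subseteq J}\varepsilon(J\setminus B,B)(\cos\theta)^{|J|-|B|}(\sin\theta)^{|B|}\rho_{J\setminus B}\sigma_B$. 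On the right-hand side, expanding $(\xi_{\cos\theta}\Xi_\rho)(\xi_{\sin\theta}\Xi_\sigma)$ and using $\eta_a^2=0$ restricts the sum to disjoint index sets, and $\eta_A\eta_B=\varepsilon(A,B)\,\eta_{A\cup B}$; comparing the coefficient of $\eta_J$ then reduces to the combinatorial fact that the Clifford reordering sign coincides with the Grassmann shuffle sign.

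Finally, the cumulant identity follows formally from the moment identity: $\xi_\alpha$ is an algebra endomorphism of $\G_{2n}$ and therefore commutes with the Grassmann logarithm defined by the finite power series~\eqref{def:cumulant}, and since $\sigma$ is even the element $\xi_{\sin\theta}\Xi_\sigma$ has only even degree and hence is central, so the logarithm of the product in~\eqref{eq:convMultDuality} splits into a sum and $\Psi_{\rho\boxtimes_\theta\sigma}=\xi_{\cos\theta}\Psi_\rho+\xi_{\sin\theta}\Psi_\sigma$. I expect the main obstacle to be the careful bookkeeping of Koszul/parity signs — justifying the factorization of the $\Cl_{4n}$ trace in the first proof and matching the Clifford and Grassmann rearrangement signs in the second — along with pinning down exactly where evenness of $\sigma$ (as opposed to $\rho$) is used.
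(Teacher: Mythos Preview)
Your proposal is correct and follows the paper's two-proof architecture. Your operator-analytic argument is the paper's Hudson-style proof in compressed form: where you place the kernel $e^{F_1}$ directly on the first register and invoke evenness of $\sigma$ to ``trivialize the residual Koszul signs,'' the paper routes the kernel through an exchange isomorphism $\mathcal S$, tracks the Jordan--Wigner string $Z^{\otimes n}$ that appears in $W_\theta^\dag\gamma_j W_\theta$ explicitly, and removes it via a projection $\Phi$ onto the even subspace of the second register. (One small misattribution: the cyclic move of $W_\theta^\dag$ past the trace is just cyclicity of the full $\Cl_{4n}$ trace; evenness of $W_\theta$ is rather what lets $W_\theta\otimes_f 1_\G$ commute past the Grassmann generators so that conjugation acts on the exponent.)

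Your combinatorial proof is organized differently from the paper's. The paper reduces by bilinearity to the basis case $\Xi_{\gamma_J\boxtimes_\theta\gamma_K}=(\xi_{\cos\theta}\eta_J)(\xi_{\sin\theta}\eta_K)$ and verifies it directly: for $J\cap K=\emptyset$ only one term in the expansion of $W_\theta\gamma_J\gamma_K W_\theta^\dag$ survives $\tr_2$, and for $J\cap K\neq\emptyset$ a short calculation using $\gamma_l^2=\gamma_{2n+l}^2=1$ forces a leftover $\gamma_{2n+l}$ factor that kills the trace. You instead compute the moment $(\rho\boxtimes_\theta\sigma)_J$ as a sum over $B\subseteq J$ and match it against the shuffle expansion of the Grassmann product. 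Both work; the paper's route sidesteps the general shuffle-sign bookkeeping you flag as the main obstacle, at the price of handling the overlapping-index case separately. The cumulant deduction is identical in both.
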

\begin{proof}

Equation~\eqref{eq:convMultDuality} immediately implies~\eqref{eq:conMut_cu} by taking the logarithm. 
To prove~\eqref{eq:convMultDuality}, we only need to prove the following statement and extend by bilinearity: 
\leqalign{eq:easyProofFormula}{
    \Xi_{\gamma_J\boxtimes_{\theta} \gamma_K}(\eta) 
    = (\xi_{\cos\theta} \eta_J)(\xi_{\sin\theta}\eta_K),
}
for all even $\gamma_J, \gamma_K$. The right-hand side of~\ref{eq:easyProofFormula} equals 
\begin{align}
    \cos^{|J|}\theta \sin^{|K|} \theta \prod_{a\in J}  \eta_a \prod_{b\in K} \eta_b,
\end{align}
is $0$ for  $J\cap K\neq\emptyset$, and $  \cos^{|J|}\theta \sin^{|K|} \theta \prod_{a\in J}  \eta_a \prod_{b\in K} \eta_b$ for $J\cap K=\emptyset$ . 
The left-hand side of~\eqref{eq:easyProofFormula} equals 
\leqalign{eq:basisConvolutionExpansion}{
    \gamma_J \boxtimes_{\theta} \gamma_K
    = \tr_2 \left[W_\theta (\gamma_J \ot \gamma_K) W_\theta^\dag\right]
    = \tr_2 \left(W_\theta \gamma_J \gamma_K W_\theta^\dag\right). 
}
Applying equation~\eqref{eq:convUnitaryGeneratorAction} to the operator inside the trace yields 
\leqalign{eq:preTraceExpansion}{
    W_\theta \gamma_J \gamma_K W_\theta^\dag
    &= \prod_{a\in J} (\gamma_j \cos\theta - \gamma_{2n+j}\sin\theta) \cdot \prod_{b\in K} (\gamma_k \sin\theta + \gamma_{2n+k}\cos\theta). 
}
Note that any component containing $\gamma_{2n+l}$ will be annihilated by $\tr_2$. 
If  $J\cap K=\emptyset$, the he only choice of factors in the expansion~\eqref{eq:preTraceExpansion} is
\begin{align}
   \cos^{|J|} \sin^{|J|} \theta \prod_{a\in J} \gamma_a  \prod_{b\in K} \gamma_b \implies 
   \Xi_{\gamma_J\boxtimes_{\theta} \gamma_K}(\eta) = \cos^{|J|} \sin^{|J|} \theta \prod_{a\in J} \eta_a  \prod_{b\in K} \eta_b. 
\end{align}
If $J\cap K\neq\emptyset$, then
\leqalign{eq:dualityLHS}{
    W_\theta \gamma_J \gamma_K W_\theta^\dag 
    &= \pm (\gamma_l \cos\theta - \gamma_{2n+l}\sin\theta)(\gamma_l \sin\theta + \gamma_{2n+l}\cos\theta)(\cdots) \\ 
    &= \pm (\gamma_l^2 \cos\theta \sin\theta - \gamma_{2n+l}^2 \cos\theta\sin\theta + 2(\cos^2\theta + \sin^2\theta) \gamma_l \gamma_{2n+l})(\cdots) \\ 
    &= \pm 2 \gamma_l \gamma_{2n+l}(\cdots) 
}
which is also $0$ after taking $\tr_2$. This proves that the two sides of equation~\eqref{eq:easyProofFormula} are equal, completing the proof. 
\end{proof}

We also provide another proof following Hudson's method in~\cite{hudson1980translation}, which may be of independent interest. To do so, we need an additional isomorphism. 
\begin{definition}[Exchange isomorphism]
       The exchange isomorphism $\mca S:\Cl_{2n}\otimes \Cl_{2n}\to \Cl_{2n}\otimes \Cl_{2n}$ is defined by 
    \begin{eqnarray}
        \mca S (\gamma_j \otimes 1) = 1\otimes \gamma_j, \quad \mca S(1\otimes \gamma_j) = \gamma_j \otimes 1,
    \end{eqnarray}
    extended multiplicatively from generators to the whole algebra.  
\end{definition}

\begin{proof}
    Consider the space $\Cl_{2n}\otimes \Cl_{2n} \otimes_f \G_{2n}$,
    which we relabel as $\Cl_1\otimes \Cl_2 \otimes_f \G$. Here $\Cl_1\otimes \Cl_2$ 
    is the product space of $\rho\otimes \sigma$. Let $e^F\in \Cl_{2n}\otimes_f \G_{2n}$ denote 
    the Fourier kernel. 
    First expand the partial traces, note that $(\mca S\otimes_f 1_{\G\to \G})\, (1_{\Cl_1}\otimes e^F)$ 
    effectively applies $K$ to the $\Cl_1, \G$ components of $\Cl_1\otimes \Cl_2 \otimes_f \G$. 
    \leqalign{eq:startFourierExpression}{
        \Xi_{\rho \boxtimes_{\theta}\sigma}
        &= \tr_{\Cl_1}[e^F (\rho \boxtimes \sigma)] 
        = \tr_{\Cl_1}\left[e^F\, \tr_{\Cl_2}[W(\rho \otimes \sigma)W^\dag]\right] \\ 
        &= \tr_{\Cl_1\Cl_2} \left[ [(\mca S\otimes_f 1_{\G\to \G})\, (1_{\Cl_1}\otimes e^F)]
            (W\otimes_f 1_\G)(\rho \otimes \sigma \otimes_f 1_\G)(W^\dag \otimes_f 1_\G)
        \right] \\ 
        &= \tr_{\Cl_1\Cl_2} \left[(W^\dag \otimes_f 1_\G)
                        [(S\otimes_f 1_{\G\to \G})\, (1_{\Cl_1}\otimes e^F)]
                        (W\otimes_f 1_\G)(\rho \otimes \sigma \otimes_f 1_\G)
        \right].
    }
    Now consider $(W^\dag \otimes_f 1_\G)[(\mca S\otimes_f 1_{\G\to \G})\, (1_{\Cl_1}\otimes K)](W\otimes_f 1_\G)$. 
    Recalling the Fourier kernel expansion~\eqref{eq:FourierKernel}, we have
    \malign{
        (\mca S\otimes_f 1_{\G\to \G})\, (1_{\Cl_1}\otimes e^F) 
        &= \sum_J (\mca S\otimes_f 1_{\G\to \G})\, [(1_{\Cl_1}\otimes 1_{\Cl_2} \otimes_f \eta_J)(1_{\Cl_1}\otimes \gamma_J^\dag \otimes_f 1_\G)] \\ 
        &= \sum_J (1_{\Cl_1}\otimes 1_{\Cl_2} \otimes_f \eta_J)(\gamma_J^\dag \otimes 1_{\Cl_2} \otimes_f 1_\G)\\  
        &= \exp\left[\sum_j (\gamma_j \otimes 1_{\Cl_2} \otimes_f 1_\G)(1_{\Cl_1} \otimes 1_{\Cl_2} \otimes_f \eta_j) \right]. 
    }
    Unitary conjugation by $W^\dag \otimes 1_\G$ acts on the exponent of the expression above as follows
    \malign{
        &(W^\dag \otimes 1_\G)\left[(\gamma_j \otimes 1_{\Cl_2} \otimes_f 1_\G)(1_{\Cl_1} \otimes 1_{\Cl_2} \otimes_f \eta_j)\right] (W\otimes 1_\G)\\
        &= \left[ (W^\dag \otimes 1_\G)(\gamma_j \otimes 1_{\Cl_2} \otimes_f 1_\G)(W\otimes 1_\G) \right] (1_{\Cl_1} \otimes 1_{\Cl_2} \otimes_f \eta_j) \\ 
        &= \left[\cos\theta (\gamma_j \otimes 1_{\Cl_2} \otimes_f 1_\G) + \sin\theta(Z^{\otimes n} \otimes \gamma_j \otimes_f 1_\G)\right] (1_{\Cl_1} \otimes 1_{\Cl_2} \otimes_f \eta_j). 
    }
    This yields conjugate action on $[(\mca S\otimes_f 1_{\G\to \G})\, (1_{\Cl_1}\otimes K)]$
    \leqalign{eq:FourierExponent}{ 
        &(W^\dag \otimes_f 1_\G)[(\mca S\otimes_f 1_{\G\to \G})\, (1_{\Cl_1}\otimes K)](W\otimes_f 1_\G)\\
        =& \exp \left[ \sum_j \left[\cos\theta (\gamma_j \otimes 1_{\Cl_2} \otimes_f 1_\G) + \sin\theta(Z^{\otimes n} \otimes \gamma_j \otimes_f 1_\G)\right] (1_{\Cl_1} \otimes 1_{\Cl_2} \otimes_f \eta_j)\right]. 
    }        
    Note that $Z^{\otimes n}\otimes \gamma_j^{(2n)} \cong \gamma_{2n+j}^{(4n)}$.
    Here, to simplify the notation, we used the Jordan-Wigner representation $Z^{\otimes n}$ to denote its algebraic equivalent. Since $\sigma$ is even, in the power expansion of the exponential, only terms that are even in $\Cl_2$ contribute nontrivially after $\tr_{\Cl_1\Cl_2}$ in equation~\eqref{eq:startFourierExpression}.
    More precisely, let $\Phi: \Cl_1\otimes \Cl_2\otimes_f \G\to \Cl_1\otimes \Cl_2\otimes_f \G$ be the projection operator onto the even subspace of $\Cl_2$ so that $\Phi(\gamma_J\ot \gamma_K \ot_f \eta_L) = \gamma_J\ot \gamma_K \ot_f \eta_L$ iff $\gamma_K\in \Cl_2$ is even else $0$, then
    \leqalign{eq:convolutionIntermediateExp}{
    \Xi_{\rho \boxtimes_{\theta} \sigma} 
    &= \tr_{\Cl_1\Cl_2} \left[(W^\dag \otimes_f 1_\G)
                        [(S\otimes_f 1_{\G\to \G})\, (1_{\Cl_1}\otimes e^F)]
                        (W\otimes_f 1_\G)(\rho \otimes \sigma \otimes_f 1_\G)\right] \\ 
    &= \tr_{\Cl_1\Cl_2} \left[\Phi \left[(W^\dag \otimes_f 1_\G)
                        [(S\otimes_f 1_{\G\to \G})\, (1_{\Cl_1}\otimes e^F)]
                        (W\otimes_f 1_\G)\right](\rho \otimes \sigma \otimes_f 1_\G)
        \right]. 
    }
    The even-parity projection of~\eqref{eq:FourierExponent} in the expression above gets rid of the $Z^{\otimes n}$ factor, yielding 
    \malign{
        \Phi(\ref{eq:FourierExponent}) &= \exp \left[ \sum_j \left[\cos\theta (\gamma_j \otimes 1_{\Cl_2} \otimes_f 1_\G) + \sin\theta(1_{\Cl_1} \otimes \gamma_j \otimes_f 1_\G)\right] (1_{\Cl_1} \otimes 1_{\Cl_2} \otimes_f \eta_j)\right] \\
        &= \exp \left[ \sum_j (\gamma_j \otimes 1_{\Cl_2} \otimes_f 1_\G) (1_{\Cl_1} \otimes 1_{\Cl_2} \otimes_f (\cos\theta \cdot \eta_j))\right]
        \exp \left[ \sum_j (1_{\Cl_1} \otimes \gamma_j \otimes_f 1_\G) (1_{\Cl_1} \otimes 1_{\Cl_2} \otimes_f (\sin\theta \cdot \eta_j))\right] \\ 
        &= (\xi_{\cos\theta}e^{F_1})(\xi_{\sin\theta} e^{F_2}). 
    }
  Here $e^{F_1}, e^{F_2}$ are the Fourier kernels defined on $(\Cl_1, \G)$ and $(\Cl_2, \G)$ and extended trivially,
        \malign{
            e^{F_1} &= \exp \left[\sum_j (\gamma_j \ot 1_{\Cl_2} \ot_f 1_{\G})(1_{\Cl_1} \ot 1_{\Cl_2} \ot_f \eta_j) \right] \in \Cl_1 \ot \Cl_2 \ot_f \G \\ 
            e^{F_1} &= \exp \left[\sum_j (1_{\Cl_1} \ot \gamma_j \ot 1_{\G})(1_{\Cl_1} \ot 1_{\Cl_2} \ot_f \eta_j \right] \in \Cl_1 \ot \Cl_2 \ot_f \G. 
        }
        Then substituting $\Phi(\ref{eq:FourierExponent})= (\xi_{\cos\theta}e^{F_1})(\xi_{\sin\theta} e^{F_2})$ into equation~\eqref{eq:convolutionIntermediateExp} yields 
        \malign{
            \Xi_{\rho \boxtimes \sigma} 
            &= \tr_{\Cl_1\Cl_2} \left[\Phi \left[(W^\dag \otimes_f 1_\G)
                            [(S\otimes_f 1_{\G\to \G})\, (1_{\Cl_1}\otimes e^F)]
                            (W\otimes_f 1_\G)\right](\rho \otimes \sigma \otimes_f 1_\G)
            \right] \\ 
            &= \tr_{\Cl_1\Cl_2} \left[(\xi_{\cos\theta}e^{F_1})(\xi_{\sin\theta}e^{F_2})(\rho \otimes \sigma \otimes_f 1_\G)\right]\\
            &= \tr_{\Cl_1\Cl_2} \left[[(\xi_{\cos\theta}e^{F_1})(\rho \otimes 1_{\Cl_2} \otimes_f 1_\G)][(\xi_{\sin\theta}e^{F_2})(1_{\Cl_1} \otimes \sigma \otimes_f 1_\G)]
            \right] \\ 
            &= \tr_\Cl[(\xi_{\cos\theta}e^F) (\rho\otimes_f 1_\G)]\cdot \tr_\Cl[(\xi_{\sin\theta}K) (\sigma\otimes_f 1_\G)] 
            = (\xi_{\cos\theta}\Xi_\rho)(\xi_{\sin\theta}\Xi_\sigma).
        }
    \end{proof}
According to \eqref{eq:conMut_cu} ,   the components of the cumulants under the convolution satisfies the following relation,
    \leqalign{eq:convCumulantAdditivity}{
        \kappa^{\rho \boxtimes_\theta \sigma}_J = \kappa^\rho_J \cos^j \theta + \kappa^\sigma_J \sin^j \theta. 
    }
That is, the quadratic cumulants are preserved, while the higher-order cumulants shrink.
    
\begin{corollary}[Fixed points of convolution]
    \label{cor:gaussianFixedPoint}
    For any Gaussian state $\rho_G$, 
    $\rho_G\boxtimes_{\theta}\rho_G=\rho_G$.
\end{corollary}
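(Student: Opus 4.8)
The plan is to work entirely at the level of cumulants, using the cumulant form of the convolution--multiplication duality. By Theorem~\ref{thm:conDual}, equation~\eqref{eq:convCumulantAdditivity} (which applies here since a Gaussian state is even and we are taking a self-convolution), the cumulants of $\rho_G \boxtimes_\theta \rho_G$ satisfy
\[
\kappa^{\rho_G \boxtimes_\theta \rho_G}_J = \kappa^{\rho_G}_J \cos^j \theta + \kappa^{\rho_G}_J \sin^j \theta, \qquad j = |J|.
\]
First I would recall from the Gaussian Fourier formula~\eqref{eq:GaussianFourierExpression} that $\Xi_{\rho_G} = \exp\!\big(\tfrac12 \eta^T \Sigma_{\rho_G}\eta\big)$, so $\Psi_{\rho_G} = \log \Xi_{\rho_G} = \tfrac12 \eta^T \Sigma_{\rho_G}\eta$ is a purely quadratic Grassmann polynomial; equivalently $\kappa^{\rho_G}_J = 0$ whenever $|J| \neq 2$ (in particular for $|J| = 0$ and for odd $|J|$).

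Next I would split into cases on $j = |J|$. For $j = 2$ the displayed identity gives $\kappa^{\rho_G \boxtimes_\theta \rho_G}_J = \kappa^{\rho_G}_J(\cos^2\theta + \sin^2\theta) = \kappa^{\rho_G}_J$. For every $j \neq 2$, both $\kappa^{\rho_G}_J \cos^j\theta$ and $\kappa^{\rho_G}_J \sin^j\theta$ vanish because $\kappa^{\rho_G}_J = 0$, so again $\kappa^{\rho_G \boxtimes_\theta \rho_G}_J = 0 = \kappa^{\rho_G}_J$. Hence $\Psi_{\rho_G \boxtimes_\theta \rho_G} = \Psi_{\rho_G}$, i.e.\ the two states have identical cumulant-generating operators, and therefore identical moment-generating operators via $\Xi = \exp\Psi$.

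Finally, I would invoke the fact that the moments (equivalently the cumulants) determine a state uniquely; since the convolution of even states is again even, $\rho_G \boxtimes_\theta \rho_G$ is a legitimate even state and equality of its cumulant-generating operator with that of $\rho_G$ forces $\rho_G \boxtimes_\theta \rho_G = \rho_G$. I do not expect a genuine obstacle here: the only points to state carefully are that the cumulant duality requires the second argument to be even (satisfied, as $\rho_G$ is even), that the output is a genuine even density operator so its cumulants are defined, and that the cumulants-to-state map is injective.
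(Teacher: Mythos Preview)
Your proof is correct and follows essentially the same approach as the paper: the corollary is stated immediately after equation~\eqref{eq:convCumulantAdditivity} without an explicit proof, so the intended argument is precisely the one you give---apply the cumulant additivity $\kappa^{\rho\boxtimes_\theta\sigma}_J = \kappa^\rho_J\cos^j\theta + \kappa^\sigma_J\sin^j\theta$, observe that a Gaussian state has only quadratic cumulants, and use $\cos^2\theta+\sin^2\theta=1$.
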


\begin{corollary}[Angle-reflectivity]\label{cor:ANG_REF}
    For any even states $\rho$ and $\sigma$, we have 
    $\rho \boxtimes_\theta \sigma = \rho \boxtimes_{-\theta} \sigma$ for all $\theta$. 
\end{corollary}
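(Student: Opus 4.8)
The plan is to reduce the operator identity to an identity between Fourier transforms and then invoke the convolution--multiplication duality just established. Recall that a state $\tau \in \Cl_{2n}$ is uniquely determined by its moments $\{\tau_J\}$, equivalently by its moment-generating operator $\Xi_\tau \in \G_{2n}$. Hence it suffices to show $\Xi_{\rho\boxtimes_\theta\sigma} = \Xi_{\rho\boxtimes_{-\theta}\sigma}$.

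By Theorem~\ref{thm:conDual} applied with the second argument $\sigma$ (which is even, as required by the hypothesis of that theorem), we have $\Xi_{\rho\boxtimes_\theta\sigma} = (\xi_{\cos\theta}\Xi_\rho)(\xi_{\sin\theta}\Xi_\sigma)$, while $\Xi_{\rho\boxtimes_{-\theta}\sigma} = (\xi_{\cos(-\theta)}\Xi_\rho)(\xi_{\sin(-\theta)}\Xi_\sigma) = (\xi_{\cos\theta}\Xi_\rho)(\xi_{-\sin\theta}\Xi_\sigma)$, using $\cos(-\theta) = \cos\theta$ and $\sin(-\theta) = -\sin\theta$. So the claim follows once we verify $\xi_{-\sin\theta}\Xi_\sigma = \xi_{\sin\theta}\Xi_\sigma$.

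This is the only place evenness of $\sigma$ enters. Since $\sigma$ is even, $\Xi_\sigma = \sum_{|J|\ \mathrm{even}} \sigma_J \eta_J$ contains only even-degree monomials, and $\xi_\alpha$ acts on $\eta_J$ by the scalar $\alpha^{|J|}$. Because $(-\sin\theta)^{|J|} = (\sin\theta)^{|J|}$ whenever $|J|$ is even, the two contractions agree term by term on $\Xi_\sigma$. Substituting back gives $\Xi_{\rho\boxtimes_\theta\sigma} = \Xi_{\rho\boxtimes_{-\theta}\sigma}$, hence $\rho\boxtimes_\theta\sigma = \rho\boxtimes_{-\theta}\sigma$.

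There is essentially no obstacle here; the only point requiring attention is that the hypothesis of Theorem~\ref{thm:conDual} (that the second argument be even) is satisfied, which holds by assumption. Alternatively, one can phrase the argument at the level of cumulants via~\eqref{eq:convCumulantAdditivity}: $\kappa^{\rho\boxtimes_{-\theta}\sigma}_J = \kappa^\rho_J\cos^{|J|}\theta + \kappa^\sigma_J(-\sin\theta)^{|J|}$, and since $\kappa^\sigma_J = 0$ for odd $|J|$ by evenness of $\sigma$, this equals $\kappa^{\rho\boxtimes_\theta\sigma}_J$ for every $J$, which again forces the two convolved states to coincide.
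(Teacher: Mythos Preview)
Your proof is correct and takes essentially the same approach as the paper. The paper phrases the argument at the level of cumulants via~\eqref{eq:convCumulantAdditivity}, noting that this formula is invariant under $\theta\mapsto-\theta$ for even inputs; your primary argument runs the same idea at the level of moment-generating operators (also from Theorem~\ref{thm:conDual}), and you explicitly note the cumulant version as an alternative, which is exactly the paper's proof.
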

\begin{proof}
    Note that any even state is determined by the cumulant. It is easy to see that the equation~\ref{eq:convCumulantAdditivity} is invariant under 
    $\theta\mapsto -\theta$ for even $\rho, \sigma$, hence $\rho \boxtimes_\theta \sigma = \rho \boxtimes_{-\theta} \sigma$ .

\end{proof}

\begin{corollary}[Commutativity]\label{cor:com}
 For any even states $\rho$ and $\sigma$, we have $\rho \boxtimes_\theta \sigma = \sigma \boxtimes_\theta \rho$ when $\cos\theta = \sin\theta$. 
\begin{proof}
  Based on the equation~\eqref{eq:convCumulantAdditivity}, 
  the invariance is also preserved under reflection of $\rho$ and $\sigma$ when $\cos\theta = \sin\theta$. 
\end{proof}
\end{corollary}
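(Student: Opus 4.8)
The plan is to reduce the claim to the convolution–multiplication duality in its cumulant form, exploiting the fact that an even state is completely determined by its cumulants. Recall that for an even state $\tau$ the cumulant-generating element $\Psi_\tau\in\G_{2n}$ determines $\Xi_\tau=\exp(\Psi_\tau)$, hence the moments $\tau_J=(\Xi_\tau)_J$, hence $\tau=\frac{1}{2^n}\sum_J\tau_J\gamma_J$ itself; so two even states with the same cumulants coincide. The whole argument is then a one-line symmetry observation on the cumulants of the convolution.

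First I would check that $\rho\boxtimes_\theta\sigma$ is again even, so that the uniqueness statement applies to it: by Theorem~\ref{thm:conDual} its Fourier transform is $\Xi_{\rho\boxtimes_\theta\sigma}=(\xi_{\cos\theta}\Xi_\rho)(\xi_{\sin\theta}\Xi_\sigma)$, a product of two elements of $\G_{2n}$ supported on even-degree monomials (because $\rho$ and $\sigma$ are even), and such a product is again supported on even-degree monomials. Next I would invoke the cumulant form of the duality, equation~\eqref{eq:convCumulantAdditivity}: for every multi-index $J$ with $|J|=j$,
\[
  \kappa^{\rho\boxtimes_\theta\sigma}_J=\kappa^\rho_J\cos^j\theta+\kappa^\sigma_J\sin^j\theta .
\]
When $\cos\theta=\sin\theta$ (that is, $\theta=\pi/4$ or $\theta=-3\pi/4$ in $(-\pi,\pi]$), the right-hand side equals $\cos^j\theta\,(\kappa^\rho_J+\kappa^\sigma_J)$, which is manifestly invariant under exchanging $\rho$ and $\sigma$. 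Hence $\rho\boxtimes_\theta\sigma$ and $\sigma\boxtimes_\theta\rho$ are even states with identical cumulants, so they are equal.

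I do not expect a genuine obstacle here; the only point that deserves a sentence of care is the uniqueness input — that an even state is pinned down by its cumulants and that convolution keeps us inside the even sector — after which the conclusion is immediate from the already-proven duality. If one prefers to argue at the level of moments rather than cumulants, the same works, using in addition that elements of $\G_{2n}$ of even degree commute with one another, so the order of the factors in $(\xi_{\cos\theta}\Xi_\rho)(\xi_{\sin\theta}\Xi_\sigma)$ is irrelevant once $\cos\theta=\sin\theta$, giving $\Xi_{\rho\boxtimes_\theta\sigma}=\Xi_{\sigma\boxtimes_\theta\rho}$ directly.
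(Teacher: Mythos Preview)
Your proof is correct and follows essentially the same approach as the paper: both rely on the cumulant additivity formula~\eqref{eq:convCumulantAdditivity} and observe that it is symmetric under $\rho\leftrightarrow\sigma$ when $\cos\theta=\sin\theta$. You simply spell out more carefully the supporting facts (evenness of the convolution, cumulants determining an even state) that the paper takes for granted.
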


\begin{proposition}[Commutativity with Gaussian unitaries]
    \label{prp:convUnitaryCommute}
    For any Gaussian unitary $U_G$, we have
    \begin{eqnarray}
        U_G(\rho \boxtimes \sigma)U^\dag_G = (U_G\rho U^\dag_G)\boxtimes (U_G\sigma U^\dag_G).
    \end{eqnarray}

\end{proposition}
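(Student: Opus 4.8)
The plan is to avoid manipulating the convolution unitary $W_\theta$ at the operator level and instead pass to the Grassmann picture, where both fermionic convolution and Gaussian-unitary conjugation act transparently. I would combine two ingredients already established: the convolution--multiplication duality (Theorem~\ref{thm:conDual}), which turns $\boxtimes_\theta$ into a product of contracted Fourier transforms, and the identity $\Xi_{U_G\tau U_G^\dag}(\eta)=\Xi_\tau(R\eta)$ from equation~\eqref{eq:gaussianUnitaryGrassmannAction}, where $R\in SO(2n)$ is the rotation implemented by $U_G$ and $\eta\mapsto R\eta$ denotes the algebra automorphism $\eta_j\mapsto\sum_k R_{jk}\eta_k$ of $\G_{2n}$. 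Since a state is uniquely determined by its moment-generating operator, it suffices to verify that the two sides of the claimed identity have the same Fourier transform; all the work about the partial trace is already absorbed into Theorem~\ref{thm:conDual}.

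First I would record two elementary facts about substitutions on $\G_{2n}$: (i) the contraction isomorphism $\xi_\alpha$ commutes with the rotation substitution, i.e. $\xi_\alpha\big(A(R\eta)\big)=(\xi_\alpha A)(R\eta)$, because $\xi_\alpha$ merely rescales each generator whereas $\eta\mapsto R\eta$ is linear in the generators; and (ii) $\eta\mapsto R\eta$ is an algebra \emph{automorphism}, hence multiplicative, $(AB)(R\eta)=A(R\eta)\,B(R\eta)$. With these the computation is a short chain (stated for general $\theta$; the proposition is the case $\theta=\pi/4$):
\begin{align*}
\Xi_{(U_G\rho U_G^\dag)\boxtimes_\theta(U_G\sigma U_G^\dag)}
&=\big(\xi_{\cos\theta}\Xi_{U_G\rho U_G^\dag}\big)\big(\xi_{\sin\theta}\Xi_{U_G\sigma U_G^\dag}\big)\\
&=\big(\xi_{\cos\theta}[\Xi_\rho(R\eta)]\big)\big(\xi_{\sin\theta}[\Xi_\sigma(R\eta)]\big)\\
&=\big[(\xi_{\cos\theta}\Xi_\rho)(R\eta)\big]\big[(\xi_{\sin\theta}\Xi_\sigma)(R\eta)\big]\\
&=\big[(\xi_{\cos\theta}\Xi_\rho)(\xi_{\sin\theta}\Xi_\sigma)\big](R\eta)\\
&=\Xi_{\rho\boxtimes_\theta\sigma}(R\eta)=\Xi_{U_G(\rho\boxtimes_\theta\sigma)U_G^\dag},
\end{align*}
using Theorem~\ref{thm:conDual} for the first and fifth equalities, equation~\eqref{eq:gaussianUnitaryGrassmannAction} for the second and the last, and facts (i)--(ii) for the third and fourth. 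Uniqueness of the moment-generating operator then yields $U_G(\rho\boxtimes_\theta\sigma)U_G^\dag=(U_G\rho U_G^\dag)\boxtimes_\theta(U_G\sigma U_G^\dag)$.

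There is no deep obstacle here; the content is bookkeeping, and the only place to be careful is fact (ii) — one must check that $\eta_j\mapsto\sum_k R_{jk}\eta_k$ really respects the Grassmann relations $\eta_a^2=0$, $\{\eta_a,\eta_b\}=0$ (it does, for any matrix, since these relations amount to total antisymmetry of products of generators, preserved by linear substitution), so that the substitution can be pulled out of the product in the fourth equality. An alternative, more hands-on route would be to prove directly that $W_\theta$ commutes with $U_G\otimes U_G$: the associated $SO(4n)$ rotations are $\tilde R_\theta$ (a block rotation by $\theta$ mixing the two registers) and $R\oplus R$, and $(R\oplus R)\tilde R_\theta=\tilde R_\theta(R\oplus R)$ because $R\oplus R$ acts identically on both registers; one then pushes $U_G\otimes U_G$ through the partial trace over the second register using cyclicity. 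I would mention this but prefer the Fourier argument, since it sidesteps the Jordan--Wigner sign factors (the $Z^{\otimes n}$ strings relating $\gamma_{2n+j}$ to $1\otimes\gamma_j$) and the double-cover ambiguity between $SO(2n)$ and fermionic Gaussian unitaries.
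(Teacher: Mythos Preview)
Your proof is correct and follows essentially the same route as the paper: pass to the Grassmann Fourier picture, apply convolution--multiplication duality (Theorem~\ref{thm:conDual}) together with the rotation action~\eqref{eq:gaussianUnitaryGrassmannAction} of Gaussian unitaries, and conclude by injectivity of the Fourier transform. If anything, you are more careful than the paper, which suppresses the contraction factors $\xi_{\cos\theta},\xi_{\sin\theta}$ in its chain of equalities; your explicit verification that $\xi_\alpha$ commutes with the substitution $\eta\mapsto R\eta$ and that the latter is multiplicative fills in exactly the steps the paper leaves implicit.
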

    \begin{proof}
   We only need to prove that
    $\Xi_{U_G(\rho \boxtimes \sigma)U^\dag_G}=\Xi_{(U_G\rho U^\dag_G)\boxtimes (U_G\sigma U^\dag_G)}$. 
    Since the conjugate action of the Gaussian unitary $U_G$ corresponds to a rotation $R$ as in \eqref{eq:gaussianUnitaryGrassmannAction}, then 
    left-hand side can be written as
    \[ 
        \Xi_{U_G(\rho \boxtimes \sigma)U^\dag_G}(\eta) 
        = \Xi_{\rho \boxtimes \sigma}(R\eta) 
        = \Xi_\rho(R\eta)\Xi_\sigma(R\eta). 
    \]
    Apply convolution-multiplication duality,  the right-hand side can be 
     \[ 
     \Xi_\rho(R\eta)\Xi_\sigma(R\eta) = \Xi_{U_G\rho U^\dag_G}(\eta)\Xi_{U_G\sigma U^\dag_G}(\eta)= \Xi_{(U_G\rho U^\dag_G)\boxtimes (U_G\sigma U^\dag_G)}(\eta).
     \] 
     The Fourier transforms are equal iff the states are equal, completing the proof. 
    \end{proof}
    
\begin{lemma}[Angle-reflection$+$exchange symmetry of the convolution unitary]
    For all even basis elements $\gamma_J, \gamma_K$, 
    \leqalign{eq:exchangeBehavior}{
        \mca S[W_\theta(\gamma_J \otimes \gamma_K) W_\theta^\dag] = W_{-\theta} (\gamma_K \otimes \gamma_J) W_{-\theta}^\dag.
    }

\end{lemma}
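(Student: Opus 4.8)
The plan is to expand both sides of~\eqref{eq:exchangeBehavior} explicitly, using the generator action~\eqref{eq:convolutionUnitaryAction} of $W_\theta$ together with the definition of $\mca S$, and to observe that the identity then collapses to a single parity count which uses precisely the evenness of $\gamma_J$ and $\gamma_K$.

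First I would fix conventions by working inside $\Cl_{2n}\otimes_f\Cl_{2n}\cong\Cl_{4n}$ and writing $\tilde\gamma_j:=\gamma_{2n+j}$ for $j\in[2n]$; on the even subspace this agrees with the ordinary tensor product, so that $\gamma_J\otimes\gamma_K=\gamma_J\tilde\gamma_K$ with $\tilde\gamma_K:=\prod_{k\in K}\tilde\gamma_k$, and $\mca S$ becomes simply the algebra involution exchanging $\gamma_j\leftrightarrow\tilde\gamma_j$ for every $j\in[2n]$. Set $c=\cos\theta$ and $s=\sin\theta$. Since $\mca S$ is an algebra homomorphism and $W_\theta$ conjugates generators by the linear rule~\eqref{eq:convolutionUnitaryAction}, pushing the conjugation through the two subproducts $\gamma_J$ and $\tilde\gamma_K$ and then applying $\mca S$ gives
\[
  \mca S\left[W_\theta(\gamma_J\otimes\gamma_K)W_\theta^\dag\right]
  =\Big(\prod_{j\in J}(c\,\tilde\gamma_j-s\,\gamma_j)\Big)\Big(\prod_{k\in K}(c\,\gamma_k+s\,\tilde\gamma_k)\Big).
\]
For the right-hand side, $\cos(-\theta)=c$ and $\sin(-\theta)=-s$ yield $W_{-\theta}\gamma_kW_{-\theta}^\dag=c\,\gamma_k+s\,\tilde\gamma_k$ and $W_{-\theta}\tilde\gamma_jW_{-\theta}^\dag=-s\,\gamma_j+c\,\tilde\gamma_j$, hence
\[
  W_{-\theta}(\gamma_K\otimes\gamma_J)W_{-\theta}^\dag
  =\Big(\prod_{k\in K}(c\,\gamma_k+s\,\tilde\gamma_k)\Big)\Big(\prod_{j\in J}(c\,\tilde\gamma_j-s\,\gamma_j)\Big).
\]
Writing $\alpha_j:=c\,\tilde\gamma_j-s\,\gamma_j$ and $\beta_k:=c\,\gamma_k+s\,\tilde\gamma_k$, the lemma thus reduces to the claim that the two words $\prod_{j\in J}\alpha_j$ and $\prod_{k\in K}\beta_k$ commute.

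The crux, and the place where the evenness hypothesis is used, is the elementary fact that $\{\alpha_j,\beta_k\}=0$ for all $j,k\in[2n]$. If $j\neq k$, then $\alpha_j$ lies in the span of $\{\gamma_j,\tilde\gamma_j\}$ and $\beta_k$ in the span of $\{\gamma_k,\tilde\gamma_k\}$, and these four generators are pairwise distinct, so $\alpha_j$ and $\beta_k$ anticommute. If $j=k$, the coefficient vectors $(-s,c)$ of $\alpha_j$ and $(c,s)$ of $\beta_j$ relative to the ordered pair $(\gamma_j,\tilde\gamma_j)$ are orthonormal, so the Clifford relation~\eqref{eq:cliffordRelation} gives $\{\alpha_j,\beta_j\}=2\big((-s)c+cs\big)I=0$ (equivalently, one expands the anticommutator directly using $\gamma_j^2=\tilde\gamma_j^2=I$ and $\gamma_j\tilde\gamma_j=-\tilde\gamma_j\gamma_j$). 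Granting this, moving the length-$|J|$ word $\prod_{j\in J}\alpha_j$ past the length-$|K|$ word $\prod_{k\in K}\beta_k$ introduces the global sign $(-1)^{|J|\,|K|}$, which equals $1$ precisely because $\gamma_J$ and $\gamma_K$ are even, i.e. $|J|$ and $|K|$ are both even; this establishes the commutation and hence~\eqref{eq:exchangeBehavior}. I expect the only genuine difficulty to be bookkeeping --- keeping the $\gamma/\tilde\gamma$ relabeling, the Jordan--Wigner parity strings implicit in $\gamma_J\otimes\gamma_K$, and the sign accounting in the parity count all mutually consistent --- which the reformulation inside $\Cl_{2n}\otimes_f\Cl_{2n}$ restricted to even elements handles cleanly; the computations themselves are routine.
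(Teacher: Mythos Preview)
Your proof is correct and rests on the same idea as the paper's: the conjugate action of $W_\theta$ on generators is invariant under the simultaneous substitution $\theta\mapsto-\theta$, $\gamma_j\leftrightarrow\gamma_{2n+j}$, and on even elements this generator swap coincides with $\mca S$. The paper states this as an intertwining of automorphisms $\mca S\circ\varphi_\theta=\varphi_{-\theta}\circ\mca S$ and invokes multiplicativity, while you unpack the same fact into the pairwise anticommutation $\{\alpha_j,\beta_k\}=0$ followed by the parity count $(-1)^{|J||K|}=1$; the two presentations are equivalent.
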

    \begin{proof}
        Let $\varphi_\theta:\Cl_{2n} \otimes \Cl_{2n}\to \Cl_{2n}\otimes \Cl_{2n}$ denote the isomorphism $\varphi_\theta(A) = W_\theta A W_\theta^\dag$, then~\eqref{eq:exchangeBehavior} claims $\mca S\circ \varphi_\theta = \varphi_{-\theta}$ on the even subspace of $\Cl_{2n}\otimes \Cl_{2n}$. 
        To see this, recall the conjugate action of the convolution unitary,
        \malign{
            \varphi_\theta(\gamma_j)=\cos\theta \gamma_j-\sin\theta \gamma_{2n+j}, \\
            \varphi_\theta(\gamma_{2n+j})=\sin\theta \gamma_j+\cos\theta \gamma_{2n+j},
        }
        is invariant under $\theta \mapsto -\theta, \gamma_j\leftrightarrow \gamma_{2n+j}$, where $\gamma_j\leftrightarrow \gamma_{2n+j}$ is exactly the action of $\mca S$ on $\Cl_{2n}\otimes \Cl_{2n}$ 
        \emph{for even elements}. 
    \end{proof}

\begin{lemma}\label{lem:comple}
    For any even states $\rho, \sigma$, let $\rho \,{\tilde \boxtimes_\theta}\, \sigma := \tr_1 \left[W_\theta(\rho \ot \sigma) W_\theta^\dag\right] $ be the complementary channel, 
    we have 
    \begin{align}
        \rho \,{\tilde \boxtimes_\theta}\, \sigma = \rho \boxtimes_{\theta} \sigma.
    \end{align}

\end{lemma}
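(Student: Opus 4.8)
The plan is to reduce the claim to the angle-reflection-plus-exchange symmetry of the convolution unitary proved just above, using the elementary fact that the exchange isomorphism $\mca S$ converts the partial trace over the first register into the partial trace over the second. Concretely, I would first record the identity $\tr_1[X] = \tr_2[\mca S(X)]$ for every $X\in \Cl_{2n}\ot\Cl_{2n}$. On a basis element this is immediate: the multiplicative definition of $\mca S$ gives $\mca S(\gamma_J\ot\gamma_K)=\gamma_K\ot\gamma_J$ (it is just the tensor-factor swap, which is an algebra automorphism because the two factors commute in the ordinary tensor product), and since $\tr(\gamma_L)=2^n\delta_{L,\emptyset}$ both $\tr_1(\gamma_J\ot\gamma_K)$ and $\tr_2(\gamma_K\ot\gamma_J)$ equal $2^n\delta_{J,\emptyset}\,\gamma_K$; the general statement follows by linearity. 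Under Jordan--Wigner, $\Cl_{4n}\cong\mca B(\Hi_n)\ot\mca B(\Hi_n)=\Cl_{2n}\ot\Cl_{2n}$, so $\tr_1$, $\tr_2$ and $\mca S$ here are the ordinary partial traces and tensor-factor swap, and $W_\theta(\rho\ot\sigma)W_\theta^\dag$ does lie in this algebra.

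Next I would transport the symmetry from basis elements to the state $\rho\ot\sigma$. Since $\rho$ and $\sigma$ are even, $\rho\ot\sigma = 2^{-2n}\sum_{|J|,|K|\text{ even}}\rho_J\sigma_K\,\gamma_J\ot\gamma_K$ is a linear combination of precisely the even basis elements $\gamma_J\ot\gamma_K$ to which the preceding symmetry lemma applies. Applying that lemma termwise and using linearity of $X\mapsto \mca S\bigl(W_\theta X W_\theta^\dag\bigr)$ gives
\[
    \mca S\bigl[W_\theta(\rho\ot\sigma)W_\theta^\dag\bigr] = W_{-\theta}(\sigma\ot\rho)W_{-\theta}^\dag .
\]
Combining this with the first step,
\begin{align*}
    \rho\,\tilde{\boxtimes}_\theta\,\sigma
    &= \tr_1\bigl[W_\theta(\rho\ot\sigma)W_\theta^\dag\bigr]
     = \tr_2\bigl[\mca S\bigl(W_\theta(\rho\ot\sigma)W_\theta^\dag\bigr)\bigr] \\
    &= \tr_2\bigl[W_{-\theta}(\sigma\ot\rho)W_{-\theta}^\dag\bigr]
     = \sigma\boxtimes_{-\theta}\rho .
\end{align*}
Angle-reflectivity (Corollary~\ref{cor:ANG_REF}) rewrites the right-hand side as $\sigma\boxtimes_\theta\rho$, and in the balanced case $\cos\theta=\sin\theta$ (in particular $\theta=\pi/4$, the convention used throughout) commutativity (Corollary~\ref{cor:com}) gives $\sigma\boxtimes_\theta\rho=\rho\boxtimes_\theta\sigma$, which is the assertion.

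The conceptual work is all contained in the previously established symmetry lemma, so the remaining difficulty is purely bookkeeping: one must keep straight which embedding $\Cl_{2n}\hookrightarrow\Cl_{4n}$ (first register versus second register) is meant on the right-hand side of $\mca S[W_\theta(\gamma_J\ot\gamma_K)W_\theta^\dag]=W_{-\theta}(\gamma_K\ot\gamma_J)W_{-\theta}^\dag$, and confirm that these are the same embeddings implicit in $\tr_1$ and $\tr_2$. Evenness of $\rho$ and $\sigma$ is what lines the building blocks up with the hypotheses of the symmetry lemma (and is what makes the even-parity identifications $\mca A\ot_f\mca B_E\cong\mca A\ot\mca B_E$ valid), so it cannot be dropped. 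An alternative, self-contained route would be to rerun the convolution--multiplication duality computation for the complementary channel, obtaining $\Xi_{\rho\,\tilde{\boxtimes}_\theta\,\sigma}=(\xi_{\sin\theta}\Xi_\rho)(\xi_{\cos\theta}\Xi_\sigma)$, and then invoke the centrality of even elements of $\G_{2n}$; but the argument above is shorter because it directly recycles the symmetry lemma.
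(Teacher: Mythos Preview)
Your approach is exactly the paper's: both combine the exchange--symmetry lemma for $W_\theta$ with the identity $\tr_1=\tr_2\circ\mca S$. Your execution is in fact more careful. The paper's one-line proof passes from $(\mca S\circ\varphi_\theta)(\rho\ot\sigma)$ to $\varphi_{-\theta}(\rho\ot\sigma)$, but the symmetry lemma actually gives $\mca S\circ\varphi_\theta=\varphi_{-\theta}\circ\mca S$, so the correct image is $\varphi_{-\theta}(\sigma\ot\rho)$; this is precisely what you obtain, arriving at $\rho\,\tilde\boxtimes_\theta\,\sigma=\sigma\boxtimes_{-\theta}\rho=\sigma\boxtimes_\theta\rho$.

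You are also right to flag that commutativity (Corollary~\ref{cor:com}) is only available when $\cos\theta=\sin\theta$, so the final identification $\sigma\boxtimes_\theta\rho=\rho\boxtimes_\theta\sigma$ only goes through at $\theta=\pi/4$. The general-$\theta$ statement as written is in fact false: for $n=1$, $\rho=\tfrac12(1+i\gamma_1\gamma_2)$ and $\sigma=\tfrac12 I$, duality gives $\rho\boxtimes_\theta\sigma=\tfrac12(1+i\cos^2\theta\,\gamma_1\gamma_2)$, while a direct computation of $\tr_1[W_\theta(\rho\ot\sigma)W_\theta^\dag]$ yields $\tfrac12(1+i\sin^2\theta\,\gamma_1\gamma_2)$. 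Since the lemma is only invoked downstream in the balanced case (Proposition~\ref{prop:entrop_ineq}, and the main text fixes $\theta=\pi/4$), your restricted argument is exactly what is needed; the correct statement for arbitrary $\theta$ is $\rho\,\tilde\boxtimes_\theta\,\sigma=\sigma\boxtimes_\theta\rho$.
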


    \begin{proof}
        $
            \rho \,{\tilde \boxtimes_{-\theta}}\, \sigma 
            = \tr_1 \left[W_\theta(\rho \ot \sigma) W_\theta^\dag\right] 
            = \tr_1 \left[\varphi_\theta (\rho \ot \sigma)\right] 
            = \tr_2 \left[ (\mca S \circ \varphi_\theta)(\rho \ot \sigma) \right] 
            = \tr_2 \left[ \varphi_{-\theta}(\rho \ot \sigma)\right] = \rho \boxtimes_{-\theta} \sigma 
        $. 
    \end{proof}

\begin{proposition}[Entropy inequality under fermionic convolution]
\label{prop:entrop_ineq}
    Given two even states $\rho$ and $\sigma$, we have
    \begin{eqnarray}
                S(\rho\boxtimes_{\theta}\sigma)\geq \frac{1}{2}S(\rho)+\frac{1}{2}S(\sigma),
    \end{eqnarray}
    equality holds iff $\rho=\sigma$ is a Gaussian state. 
    \begin{proof}
        Recall that $\rho \boxtimes_\theta \sigma$, $\rho\,{\tilde \boxtimes}_\theta\, \sigma$ are the 
        two reduced states of $\tau_{ab}=W(\rho \ot \sigma) W^\dag$, then 
        by the subadditivity of quantum entropy and Lemma~\ref{lem:comple},
        \begin{eqnarray}
         S(\rho)+S(\sigma) =  S(\tau_{AB})\leq S(\rho\boxtimes_{\theta}\sigma)+S(\rho\, \tilde{\boxtimes}_{\theta}\, \sigma)
            =2S(\rho\boxtimes_{\theta}\sigma). 
        \end{eqnarray}
        Equality holds iff subadditivity is saturated, which means
        \malign{
            W (\rho \ot \sigma) W^\dag
            =(\rho\boxtimes_{\theta}\sigma)\ot (\rho\,\tilde{\boxtimes}_{\theta}\,\sigma).
        }
    This is also equivalent to
    \begin{align}
         \rho\ot\sigma 
            = W^\dag [(\rho\boxtimes_{\theta}\sigma)\ot (\rho\boxtimes_{\theta}\sigma)] W. 
    \end{align}
       By taking the two reduced states of the right hand side, which are equal to  $\rho$ and $\sigma$ respectively. Thus,
        \begin{eqnarray}
            \rho=(\rho\boxtimes_{\theta}\sigma)\boxtimes (\rho\boxtimes_{\theta}\sigma)
            =(\rho\boxtimes_{\theta}\sigma)\, \tilde{\boxtimes}_{\theta}\, (\rho\boxtimes_{\theta}\sigma)
            =\sigma.
        \end{eqnarray}
       Moreover, by applying the previous argument iteratively and the central limit theorem for fermionic convolution in Proposition~\ref{cor:clt}, we have 
        \begin{eqnarray}
            \rho=(\rho\boxtimes_{\theta}\rho)\boxtimes_{\theta} (\rho\boxtimes_{\theta}\rho)
            =...=\mca G(\rho).
        \end{eqnarray}
       Therefore, $\rho=\sigma$ is a fermionic Gaussian state.
    \end{proof}
\end{proposition}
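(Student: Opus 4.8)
The plan is to run a subadditivity argument on the bipartite state $\tau_{AB} := W_\theta(\rho\otimes\sigma)W_\theta^\dag$. First I would note that, since $W_\theta$ is unitary, $S(\tau_{AB}) = S(\rho\otimes\sigma) = S(\rho) + S(\sigma)$, and that the two marginals of $\tau_{AB}$ are $\Ptr{2}{\tau_{AB}} = \rho\boxtimes_\theta\sigma$ and $\Ptr{1}{\tau_{AB}} = \rho\,\tilde{\boxtimes}_\theta\,\sigma$, the latter being the output of the complementary channel. Invoking Lemma~\ref{lem:comple} to identify $\rho\,\tilde{\boxtimes}_\theta\,\sigma = \rho\boxtimes_\theta\sigma$ for even inputs, the subadditivity inequality $S(\tau_{AB}) \le S(\Ptr{1}{\tau_{AB}}) + S(\Ptr{2}{\tau_{AB}})$ becomes exactly $S(\rho) + S(\sigma) \le 2\,S(\rho\boxtimes_\theta\sigma)$, which is the asserted bound.

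For the equality condition I would examine when subadditivity saturates. Equality forces $\tau_{AB} = \Ptr{1}{\tau_{AB}}\otimes\Ptr{2}{\tau_{AB}} = (\rho\boxtimes_\theta\sigma)\otimes(\rho\boxtimes_\theta\sigma)$, and hence $\rho\otimes\sigma = W_\theta^\dag\big[(\rho\boxtimes_\theta\sigma)\otimes(\rho\boxtimes_\theta\sigma)\big]W_\theta$. Taking the partial trace of the right-hand side over the second register, and then over the first register, and using Lemma~\ref{lem:comple} again, yields $\rho = (\rho\boxtimes_\theta\sigma)\boxtimes_\theta(\rho\boxtimes_\theta\sigma)$ and $\sigma = (\rho\boxtimes_\theta\sigma)\boxtimes_\theta(\rho\boxtimes_\theta\sigma)$; in particular $\rho = \sigma$ and $\rho = \boxtimes^2\rho$. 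Iterating this identity gives $\rho = \boxtimes^k\rho$ for every $k$, so by the fermionic central limit theorem (Proposition~\ref{cor:clt}) we get $\rho = \lim_{k\to\infty}\boxtimes^k\rho = \mca G(\rho)$, which is Gaussian. Conversely, if $\rho = \sigma$ is Gaussian then $\rho\boxtimes_\theta\rho = \rho$ by the fixed-point property (Corollary~\ref{cor:gaussianFixedPoint}), so both sides of the inequality equal $S(\rho)$ and equality holds.

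I expect the inequality itself to be routine once the complementary-channel identity is available; the real work is in the equality analysis. The delicate step is upgrading ``$\rho$ is a fixed point of two-fold self-convolution'' to ``$\rho$ is Gaussian'': being fixed by a single round of $\boxtimes_\theta$ need not force Gaussianity, but being fixed by all iterates does, because under the convolution--multiplication duality the super-quadratic cumulants of $\boxtimes^k\rho$ contract to zero as $k\to\infty$ while the quadratic cumulants are preserved, so the only possible limit is the Gaussification $\mca G(\rho)$. A secondary bookkeeping point is to apply Lemma~\ref{lem:comple}, the fixed-point corollary, and the central limit theorem all consistently at the same (balanced) angle.
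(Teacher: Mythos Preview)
Your proposal is correct and follows essentially the same route as the paper: subadditivity applied to $\tau_{AB}=W_\theta(\rho\otimes\sigma)W_\theta^\dag$ together with Lemma~\ref{lem:comple} for the inequality, then the saturation condition, the reduction to $\rho=\sigma$ and $\rho=\boxtimes^2_\theta\rho$, and finally the central limit theorem to force $\rho=\mca G(\rho)$. Two minor remarks: strictly speaking the iteration yields $\rho=\boxtimes^{2k}_\theta\rho$ (even $k$), which already suffices for the limit; and your aside that ``being fixed by a single round of $\boxtimes_\theta$ need not force Gaussianity'' is actually false for $\theta\in(0,\pi/2)$, since by the cumulant formula $\kappa_J^{\rho\boxtimes_\theta\rho}=(\cos^{|J|}\theta+\sin^{|J|}\theta)\kappa_J^\rho$ and the prefactor is strictly less than $1$ for $|J|\ge 4$ --- but this does not affect your argument.
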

One direct implication of the entropy inequality is the monotonicity of quantum entropy under fermionic convolution.
\begin{proposition}\label{prop:mon_en}
For any even state $\rho$, we have 
\begin{align}
    S(\boxtimes^{k+1}\rho)\geq S(\boxtimes^k\rho), \forall k.
\end{align}
\end{proposition}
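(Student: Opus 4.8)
The plan is to obtain Proposition~\ref{prop:mon_en} as an immediate consequence of the entropy inequality in Proposition~\ref{prop:entrop_ineq}. The key observation is purely definitional: $\boxtimes^{k+1}\rho = (\boxtimes^k\rho)\boxtimes(\boxtimes^k\rho)$ is the \emph{self}-convolution of $\boxtimes^k\rho$, so if we feed $\boxtimes^k\rho$ into both arguments of the inequality $S(\tau\boxtimes_\theta\tau')\geq \frac12 S(\tau)+\frac12 S(\tau')$, the right-hand side collapses to $S(\boxtimes^k\rho)$.

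First I would check that Proposition~\ref{prop:entrop_ineq} is applicable at each stage, i.e. that $\boxtimes^k\rho$ is an even state for every $k\geq 0$. This follows by induction on $k$: the base case $\boxtimes^0\rho=\rho$ is even by hypothesis, and if $\boxtimes^k\rho$ is even then so is $\boxtimes^{k+1}\rho$, since fermionic convolution of even states is even — the convolution unitary $W_{\pi/4}$ rotates the Majorana operators among themselves and hence preserves the even subalgebra, while the partial trace $\Ptr{2}{\cdot}$ annihilates every basis term whose second tensor factor is nontrivial, so in the expansion of an even operator only even-degree terms survive on the first factor.

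Second, I would apply Proposition~\ref{prop:entrop_ineq} with $\theta=\pi/4$ and both inputs equal to $\boxtimes^k\rho$:
\[
S(\boxtimes^{k+1}\rho)=S\big((\boxtimes^k\rho)\boxtimes(\boxtimes^k\rho)\big)\ \geq\ \tfrac12 S(\boxtimes^k\rho)+\tfrac12 S(\boxtimes^k\rho)=S(\boxtimes^k\rho),
\]
which is precisely the claimed monotonicity $S(\boxtimes^{k+1}\rho)\geq S(\boxtimes^k\rho)$.

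There is essentially no obstacle here; all the work is already contained in Proposition~\ref{prop:entrop_ineq} (which in turn rests on subadditivity of the von Neumann entropy together with the identification of $\rho\boxtimes\sigma$ and the complementary-channel output $\rho\,\tilde\boxtimes\,\sigma$ with the two marginals of $W_{\pi/4}(\rho\ot\sigma)W_{\pi/4}^\dag$). The only point needing a sentence of justification is the evenness-closure in the inductive step, and that is routine. One may additionally remark that, combined with the central-limit convergence $\boxtimes^k\rho\to\mca G(\rho)$ and Proposition~\ref{prop:ent_cha}, this monotonicity yields the decreasing chain $D(\boxtimes^k\rho\,\|\,\mca G(\rho))\geq D(\boxtimes^{k+1}\rho\,\|\,\mca G(\rho))$ quoted in the main text, because $\mca G(\boxtimes^k\rho)=\mca G(\rho)$ so that $S(\mca G(\boxtimes^k\rho))$ is constant in $k$.
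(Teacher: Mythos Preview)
Your proposal is correct and matches the paper's approach exactly: the paper presents Proposition~\ref{prop:mon_en} as ``one direct implication of the entropy inequality'' (Proposition~\ref{prop:entrop_ineq}), and your argument is precisely that specialization with $\rho=\sigma=\boxtimes^k\rho$. The extra sentence you supply on evenness-closure is a reasonable completeness check that the paper leaves implicit.
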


\begin{corollary}[Purity invariance implies normality] 
    \label{cor:purityInvariance}
    $\rho \boxtimes \rho$ is pure if and only if $\rho$ is a pure Gaussian state. 
\end{corollary}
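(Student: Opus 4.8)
The plan is to obtain the statement as an immediate consequence of the entropy inequality, Proposition~\ref{prop:entrop_ineq}, specialized to $\sigma = \rho$ and $\theta = \pi/4$. In that case the inequality reads $S(\rho \boxtimes \rho) \geq \tfrac12 S(\rho) + \tfrac12 S(\rho) = S(\rho)$. So first I would argue: if $\rho \boxtimes \rho$ is pure, then $0 = S(\rho \boxtimes \rho) \geq S(\rho) \geq 0$, which forces $S(\rho) = 0$; hence $\rho$ is itself pure, and moreover the entropy inequality is saturated because both of its sides vanish.

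Next I would invoke the equality characterization in Proposition~\ref{prop:entrop_ineq}: saturation of $S(\rho \boxtimes \rho) = \tfrac12 S(\rho) + \tfrac12 S(\rho)$ forces $\rho$ to be a fermionic Gaussian state. Combined with the purity established in the previous step, $\rho$ is a pure Gaussian state; this proves the forward direction.

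For the converse, I would use the fixed-point property of convolution, Corollary~\ref{cor:gaussianFixedPoint}: if $\rho$ is a pure Gaussian state then $\rho \boxtimes \rho = \rho$, which is pure by hypothesis. This closes the equivalence.

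I do not anticipate a genuine obstacle here: essentially all of the work has already been done in Proposition~\ref{prop:entrop_ineq} and its equality case (which itself relies on the fermionic central limit theorem, Proposition~\ref{cor:clt}), so the corollary is bookkeeping. The one point to keep in mind is that it is precisely the equality case of the entropy inequality that upgrades ``pure'' to ``pure Gaussian''; should one wish to avoid citing that, an alternative route is to feed purity of $\rho \boxtimes \rho$ into the convolution--multiplication duality $\Xi_{\rho \boxtimes \rho}(\eta) = \Xi_\rho(\tfrac{1}{\sqrt2}\cdot \eta)\,\Xi_\rho(\tfrac{1}{\sqrt2}\cdot \eta)$ of Theorem~\ref{thm:conDual} and analyze the resulting constraints on the covariance matrix and the higher cumulants, but the entropy argument is shorter and cleaner.
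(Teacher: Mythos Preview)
Your proposal is correct and essentially identical to the paper's own proof: both directions use exactly the ingredients you name---Corollary~\ref{cor:gaussianFixedPoint} for the Gaussian-to-pure direction, and the entropy inequality of Proposition~\ref{prop:entrop_ineq} together with its equality characterization for the pure-to-Gaussian direction.
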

\begin{proof}
If $\rho$ is pure Gaussian, then $\rho \boxtimes \rho = \rho$ is pure by Corollary~\ref{cor:gaussianFixedPoint}. 
The converse implies $S(\rho)=0$ and the saturation condition of the entropy inequality $\rho = \mca G(\rho)$. 
\end{proof}

\begin{proposition}[Gaussian characterization by maximum entropy]
Let $M$ be a real and antisymmetric $2n\times 2n$ matrix with $M^TM \leq I$, and
$\mca Q(M)$ be the set of all even $n$-qubit quantum 
    states with covariance matrix $M$, then 
    the unique Gaussian state $\rho_G\in \mca Q(M)$ has maximum entropy. 

    \begin{proof}
        The Gaussian $\rho_G\in \mca Q(M)$ is unique since Gaussian states 
        are specified by their covariance matrix. Moreover, since 
        covariance matrix is preserved under convolution, i.e., 
        $\Sigma_{\rho\boxtimes\rho}=\Sigma_{\rho}$, then $\rho\boxtimes \rho$ is also in 
        $\mca Q(M)$ for any $\rho\in \mca Q(M)$.
        Let $\rho_0=\mrm{argmax}_{\rho \in \mca Q(M)} S(\rho)$, then 
        $S(\rho_0\boxtimes \rho_0)= S(\rho_0)$. By the entropy 
        inequality in Proposition~\ref{prop:entrop_ineq}, we have  that $\rho_0$ is a Gaussian state. 
    \end{proof}
\end{proposition}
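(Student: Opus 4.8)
The plan is to combine the entropy inequality under convolution with a compactness argument. First I would note that $\mca Q(M)$ is nonempty and contains a unique Gaussian state: since $M$ is real antisymmetric with $M^TM\le I$, Lemma~\ref{thm:gaussianStateCharacterization} exhibits a fermionic Gaussian state $\rho_G$ with covariance $\Sigma_{\rho_G}=M$ (taken as a limit of the exponential form when an eigenvalue $\lambda_j=\pm1$), and Gaussian states are determined by their covariance, so $\rho_G$ is the only Gaussian member of $\mca Q(M)$.

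Next I would check that $\mca Q(M)$ is compact and stable under self-convolution. Compactness holds because each covariance entry $(\Sigma_\rho)_{jk}=\tfrac{i}{2}\Tr{\rho[\gamma_j,\gamma_k]}$ is a linear functional of $\rho$, so $\mca Q(M)$ is the intersection of the compact set of density operators with an affine subspace. Stability under $\boxtimes$ follows from convolution--multiplication duality: by~\eqref{eq:convCumulantAdditivity} the quadratic cumulants transform as $\kappa^{\rho\boxtimes\rho}_J=\kappa^{\rho}_J(\cos^2(\pi/4)+\sin^2(\pi/4))=\kappa^{\rho}_J$ for $|J|=2$, and for even states these coincide with the covariance entries by~\eqref{eq:equaiv_2}; hence $\Sigma_{\rho\boxtimes\rho}=\Sigma_\rho$ and $\rho\in\mca Q(M)\implies\rho\boxtimes\rho\in\mca Q(M)$.

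Then, since the von Neumann entropy $S$ is continuous on the compact set $\mca Q(M)$, it attains its maximum at some $\rho_0\in\mca Q(M)$. Because $\rho_0\boxtimes\rho_0$ also lies in $\mca Q(M)$, maximality gives $S(\rho_0\boxtimes\rho_0)\le S(\rho_0)$; on the other hand Proposition~\ref{prop:entrop_ineq} gives $S(\rho_0\boxtimes\rho_0)\ge\tfrac12 S(\rho_0)+\tfrac12 S(\rho_0)=S(\rho_0)$. So equality holds in the entropy inequality with $\rho=\sigma=\rho_0$, and its equality clause forces $\rho_0$ to be Gaussian. By the uniqueness noted above $\rho_0=\rho_G$, so $\rho_G$ is the entropy maximizer.

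I expect the only real work to be verifying the prerequisites that are already in place: that the equality case of Proposition~\ref{prop:entrop_ineq} is genuinely available (it is, having been proved there via saturation of subadditivity together with the fermionic central limit theorem), and that $\mca Q(M)$ is nonempty even on the boundary $M^TM=I$, where $\rho_G$ appears as a limit of Gaussian states rather than as an element of the exponential family itself. Neither step introduces a new idea; the content of the statement is entirely carried by the entropy inequality and its equality condition.
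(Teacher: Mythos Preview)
Your proposal is correct and follows essentially the same route as the paper: take an entropy maximizer in $\mca Q(M)$, observe that self-convolution stays in $\mca Q(M)$, and use the equality case of Proposition~\ref{prop:entrop_ineq} to conclude the maximizer is Gaussian. The compactness and nonemptiness checks you add are details the paper leaves implicit but which do no harm to spell out.
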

\begin{proposition}[Gaussian characterization by maximum Gaussian weight]
    \label{prp:gaussianMaxQuadraticWeight}
    Given $0\leq S_0 \leq n\log 2$, define 
    \begin{eqnarray}
        \mca R(S_0) = \{n\text{-qubit state } \rho : \rho \text{ even and } S(\rho) = S_0\}. 
    \end{eqnarray}
    If $\rho\in \mca R(S_0)$ has maximum Gaussian weight $K_G$, then is Gaussian. 
    \begin{proof}
        For the sake of contradiction, suppose that a non-Gaussian state $\rho \in \mca R(S_0)$ has maximum Gaussian weight $K_G$, then $S(\mca G(\rho)) > S(\rho)$. Then applying Lemma~\ref{lem:entropyQuadraticweightInequality} to $\mca G(\rho)$ with $S_0=S(\rho)$ implies the existence of a Gaussian state $\sigma$ satisfying $S(\sigma) = S_0 < S(\rho)$ and $K_G(\sigma) > K_G(\rho)$.
    \end{proof}
\end{proposition}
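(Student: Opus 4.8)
The plan is to argue by contradiction, using the Gaussification map $\mca G$ together with Lemma~\ref{lem:entropyQuadraticweightInequality}. Suppose some $\rho\in\mca R(S_0)$ attains the maximum of $K_G$ over $\mca R(S_0)$ but is \emph{not} Gaussian; I will produce a Gaussian state in $\mca R(S_0)$ with strictly larger Gaussian weight. The first step is to replace $\rho$ by its Gaussification $\mca G(\rho)$, the Gaussian state with the same covariance matrix. Since $K_G(\tau)=K_2(\tau)=W_2(\tau)=\tfrac12\|\Sigma_\tau\|_2^2$ depends only on the covariance matrix (the identification $K_2=W_2$ for even states is~\eqref{eq:equaiv_2}, and the $\ell_2$-norm-of-covariance formula is the one used in Lemma~\ref{lem:entropyQuadraticweightInequality}), we get $K_G(\mca G(\rho))=K_G(\rho)$.

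Next I invoke Proposition~\ref{prop:ent_cha}, which gives $D(\rho\,\|\,\mca G(\rho))=S(\mca G(\rho))-S(\rho)$; this relative entropy is strictly positive because $\mca G(\rho)$ is Gaussian while $\rho$ is not (alternatively one can use the strict entropy inequality of Proposition~\ref{prop:entrop_ineq} via the central limit theorem). Hence $S(\mca G(\rho))>S(\rho)=S_0\ge 0$, so $\mca G(\rho)$ is a Gaussian state of nonzero entropy and is therefore mixed; by Lemma~\ref{thm:gaussianStateCharacterization} this forces $W_2(\mca G(\rho))<n$, which is precisely the hypothesis needed in the next step.

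Then I apply Lemma~\ref{lem:entropyQuadraticweightInequality} to the Gaussian state $\mca G(\rho)$ with target entropy $S_0=S(\rho)<S(\mca G(\rho))$: it yields a Gaussian state $\sigma$ with $S(\sigma)=S_0$ and $W_2(\sigma)>W_2(\mca G(\rho))=W_2(\rho)$, i.e.\ $K_G(\sigma)>K_G(\rho)$. Since $\sigma$ is Gaussian it is even, so $\sigma\in\mca R(S_0)$, contradicting maximality of $K_G$ at $\rho$. Therefore $\rho$ must be Gaussian.

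The only non-routine point — the main obstacle — is establishing the strict gap $S(\mca G(\rho))>S(\rho)$ for non-Gaussian $\rho$, and the attendant fact that $\mca G(\rho)$ is not pure so that Lemma~\ref{lem:entropyQuadraticweightInequality} applies; both follow from the faithfulness half of Proposition~\ref{prop:ent_cha} (Gaussification strictly increases entropy off the Gaussian set). The boundary cases are covered uniformly: for $S_0=0$ one applies the lemma with target entropy $0$ to obtain a pure Gaussian competitor, and for $S_0=n\log 2$ the set $\mca R(S_0)$ contains only the maximally mixed state, which is already Gaussian, so there is nothing to prove.
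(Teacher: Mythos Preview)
Your proof is correct and follows essentially the same route as the paper's: argue by contradiction, pass to $\mca G(\rho)$, use the strict entropy increase $S(\mca G(\rho))>S(\rho)$ for non-Gaussian $\rho$, and then invoke Lemma~\ref{lem:entropyQuadraticweightInequality} to produce a Gaussian competitor $\sigma\in\mca R(S_0)$ with $K_G(\sigma)>K_G(\rho)$. Your version is in fact more careful than the paper's terse sketch, since you explicitly justify $K_G(\mca G(\rho))=K_G(\rho)$ and verify the hypothesis $W_2(\mca G(\rho))<n$ needed to apply the lemma.
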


Applying this to the extremal case $S_0=0$, we obtain the Gaussian-weight characterization of pure Gaussian states. 
\begin{corollary} A $n$-qubit even state $\psi$ is Gaussian if and only 
if $K_G(\psi)= n$. 
\end{corollary}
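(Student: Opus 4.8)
The plan is to prove the two implications separately and, in both, reduce the Gaussian weight to a statement about the covariance matrix. Throughout I will use that for an even state $\psi$ one has $K_G(\psi) = K_2(\psi) = W_2(\psi) = \tfrac12\|\Sigma_\psi\|_2^2$: the first equality is Definition~\ref{def:GaussianWeights}, the second is the coincidence of quadratic moments and cumulants~\eqref{eq:equaiv_2} for even states, and the third is the $\ell_2$-norm/moment-weight correspondence used in the proof of Lemma~\ref{lem:entropyQuadraticweightInequality}. I will also use the general bound $0 \le \Sigma_\psi^T\Sigma_\psi \le I_{2n}$, valid for every state, together with its purity clause, both recorded in Lemma~\ref{thm:gaussianStateCharacterization}.

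For the ``only if'' direction I would take $\psi$ to be a pure Gaussian state. By the purity characterization in Lemma~\ref{thm:gaussianStateCharacterization}, $\Sigma_\psi^T\Sigma_\psi = I_{2n}$, and hence
\[
    K_G(\psi) = \tfrac12\|\Sigma_\psi\|_2^2 = \tfrac12\tr\!\big(\Sigma_\psi^T\Sigma_\psi\big) = \tfrac12\tr(I_{2n}) = n .
\]

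For the ``if'' direction, I would first observe that $K_G(\psi) = \tfrac12\|\Sigma_\psi\|_2^2 \le \tfrac12\|I_{2n}\|_2^2 = n$ for \emph{every} even state, since $\Sigma_\psi^T\Sigma_\psi \le I_{2n}$; and that equality forces every eigenvalue of $\Sigma_\psi^T\Sigma_\psi$ (each lying in $[0,1]$) to equal $1$, i.e.\ $\Sigma_\psi^T\Sigma_\psi = I_{2n}$. Combined with the previous paragraph, the maximum of $K_G$ over the set $\mca R(0)$ of pure even $n$-qubit states equals $n$ and is attained (e.g.\ by $|0\rangle^{\ot n}$), so Proposition~\ref{prp:gaussianMaxQuadraticWeight} applied with $S_0=0$ implies that every maximizer of $K_G$ over $\mca R(0)$ is Gaussian; in particular any pure even $\psi$ with $K_G(\psi)=n$ is Gaussian. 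Alternatively, and more directly, once $K_G(\psi)=n$ gives $\Sigma_\psi^T\Sigma_\psi = I_{2n}$, the Gaussification $\mca G(\psi)$ — which shares the covariance of $\psi$ — is a pure Gaussian state by Lemma~\ref{thm:gaussianStateCharacterization}; then Proposition~\ref{prop:ent_cha} yields $D(\psi\,\|\,\mca G(\psi)) = S(\mca G(\psi)) - S(\psi) = 0 - 0 = 0$, whence $\psi = \mca G(\psi)$ is Gaussian.

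There is essentially no obstacle here, as the statement is a clean specialization of results already proved; the only point needing a little care is the equality analysis of the trace bound — one must invoke $\Sigma_\psi^T\Sigma_\psi \le I_{2n}$ (not merely $\tr(\Sigma_\psi^T\Sigma_\psi) = 2n$) to conclude $\Sigma_\psi^T\Sigma_\psi = I_{2n}$, and then rely on the purity clause of Lemma~\ref{thm:gaussianStateCharacterization} to convert this into purity of the associated Gaussian state. Both ingredients are already available, so no new estimates are required.
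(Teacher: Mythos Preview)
Your proof is correct, and your first alternative---invoking Proposition~\ref{prp:gaussianMaxQuadraticWeight} at $S_0=0$ after checking that pure Gaussian states realize $K_G=n$---is precisely the paper's one-line argument. Your second alternative, passing through $\Sigma_\psi^T\Sigma_\psi=I_{2n}\Rightarrow \mca G(\psi)$ pure and then using $D(\psi\|\mca G(\psi))=S(\mca G(\psi))-S(\psi)=0$, is a clean self-contained route that bypasses the maximum-weight proposition entirely; it is arguably more direct. One small citation wrinkle: Lemma~\ref{thm:gaussianStateCharacterization} records $\Sigma^T\Sigma\le I$ only for Gaussian states, so for the general bound you should instead point to the preliminaries remark that every $\lambda_j\in[-1,1]$ in the block-diagonalization~\eqref{eq:block_D} of an arbitrary covariance matrix.
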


\subsection{Properties of non-Gaussian entropy and its generalization to R\'enyi entropy}
\label{app:renyiGeneralization}
\begin{figure}[t]
  \centering
  \begin{minipage}{0.48\textwidth}
    \centering
    \includegraphics[width=1\linewidth]{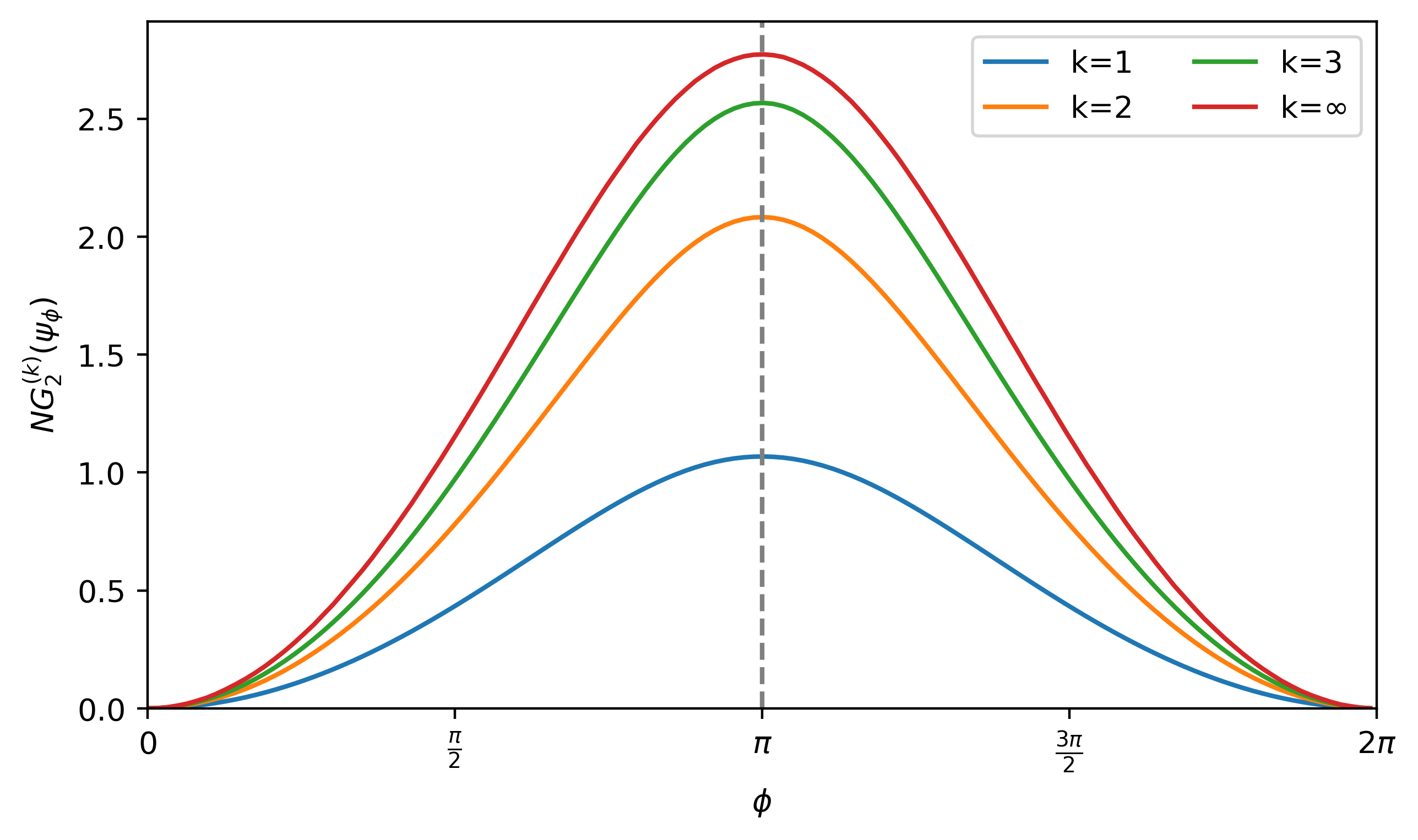}
    \caption{
        Non-Gaussian R\'enyi entropy $NG^{(k)}_2$  of the 4-qubit parametrized state $\psi_{\phi}$ in ~\eqref{eq:4qubitFamily}. 
    }
    \label{fig:renyi2Plot}
  \end{minipage} 
  \begin{minipage}{0.48\textwidth}
    \centering
    \includegraphics[width=1\linewidth]{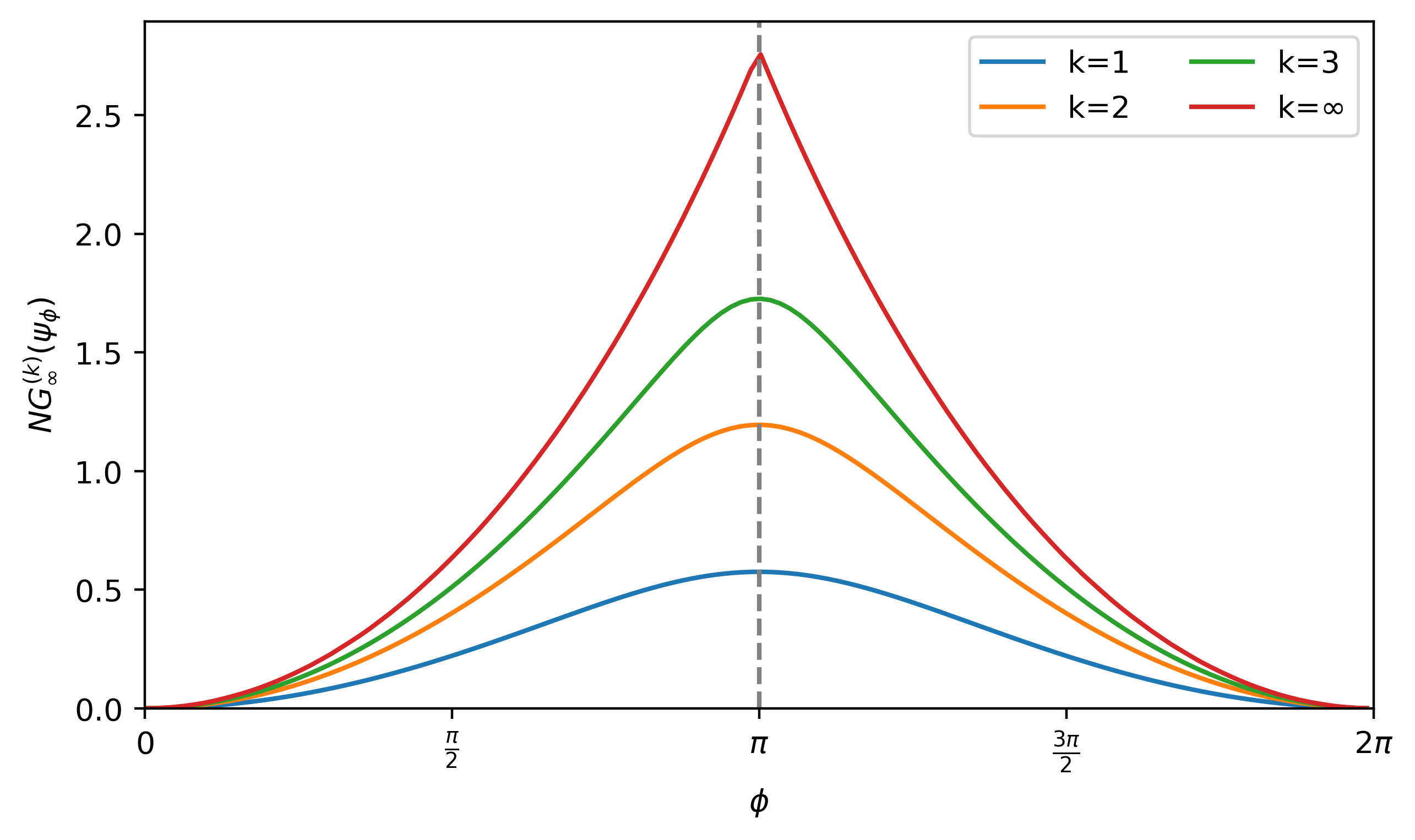}
    \caption{Non-Gaussian R\'enyi entropy $NG^{(k)}_{\infty}$of the 4-qubit parametrized state $\psi_{\phi}$ in ~\eqref{eq:4qubitFamily}. }
    \label{fig:renyiInfPlot}
  \end{minipage}
\end{figure}

Recall that the non-Gaussian entropy is defined as 
\begin{align}
    NG(\psi)=S(\psi\boxtimes\psi).
\end{align}
\begin{theorem}[Properties of non-Gaussian entropy]
     The non-Gaussian entropy $NG(\psi)$ for an even pure state $\psi$  satisfies the following properties:
   \begin{enumerate}[(1)]
    \item Faithfulness: $NG(\psi)\geq 0$ with equality iff $\psi$ is a fermionic Gaussian. 
    \item Gaussian invariance: $NG(U_G\proj{\psi} U^\dag_G)=NG(\proj{\psi})$ for any Gaussian unitary $U_G$. 
    \item Additivity under tensor product: $NG(\psi_1\ot\psi_2)=NG(\psi_1)+NG(\psi_2)$. 
\end{enumerate}
\end{theorem}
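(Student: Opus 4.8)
\emph{Proof proposal.} The plan is to treat the three properties separately, with properties (1) and (2) following quickly from results already in hand and the tensor-product additivity (3) carrying the real content. For (3) the key step is to show that fermionic self-convolution factorizes across tensor products, after which additivity of the von Neumann entropy finishes the job.

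For faithfulness (1), note $NG(\psi)=S(\psi\boxtimes\psi)\geq 0$ because the von Neumann entropy is nonnegative, and $NG(\psi)=0$ if and only if $\psi\boxtimes\psi$ is pure. Since $\psi$ is assumed pure, Corollary~\ref{cor:purityInvariance} (purity invariance implies normality) gives that $\psi\boxtimes\psi$ is pure if and only if $\psi$ is a Gaussian state, which is exactly the claim. For Gaussian invariance (2), I would apply Proposition~\ref{prp:convUnitaryCommute} with $\rho=\sigma=\proj{\psi}$ to obtain $(U_G\proj{\psi}U_G^\dag)\boxtimes(U_G\proj{\psi}U_G^\dag)=U_G(\proj{\psi}\boxtimes\proj{\psi})U_G^\dag$; since $S$ is invariant under unitary conjugation, $NG(U_G\proj{\psi}U_G^\dag)=S\bigl(U_G(\psi\boxtimes\psi)U_G^\dag\bigr)=S(\psi\boxtimes\psi)=NG(\psi)$.

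For additivity (3), the heart of the matter is the operator identity $(\psi_1\otimes\psi_2)\boxtimes(\psi_1\otimes\psi_2)=(\psi_1\boxtimes\psi_1)\otimes(\psi_2\boxtimes\psi_2)$; granting it, additivity of $S$ over tensor products yields $S\bigl((\psi_1\otimes\psi_2)\boxtimes(\psi_1\otimes\psi_2)\bigr)=S(\psi_1\boxtimes\psi_1)+S(\psi_2\boxtimes\psi_2)$, i.e.\ $NG(\psi_1\otimes\psi_2)=NG(\psi_1)+NG(\psi_2)$. To prove the identity I would pass to Fourier transforms: by the convolution-multiplication duality of Theorem~\ref{thm:conDual}, $\Xi_{(\psi_1\otimes\psi_2)\boxtimes(\psi_1\otimes\psi_2)}=(\xi_{1/\sqrt2}\Xi_{\psi_1\otimes\psi_2})(\xi_{1/\sqrt2}\Xi_{\psi_1\otimes\psi_2})$, and $\Xi_{\psi_1\otimes\psi_2}=\Xi_{\psi_1}\otimes\Xi_{\psi_2}$ as in the proof of Proposition~\ref{prp:cumulantTpAdditivity}. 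Because $\psi_1,\psi_2$ are even, their Fourier transforms are even elements of the Grassmann algebra, so the product $(\Xi_{\psi_1}\otimes\Xi_{\psi_2})(\Xi_{\psi_1}\otimes\Xi_{\psi_2})$ factors termwise as $(\Xi_{\psi_1}\Xi_{\psi_1})\otimes(\Xi_{\psi_2}\Xi_{\psi_2})$, and the contraction $\xi_{1/\sqrt2}$ scales every Grassmann generator by the same factor and hence respects this factorization. Reassembling gives $\Xi_{(\psi_1\otimes\psi_2)\boxtimes(\psi_1\otimes\psi_2)}=\Xi_{\psi_1\boxtimes\psi_1}\otimes\Xi_{\psi_2\boxtimes\psi_2}=\Xi_{(\psi_1\boxtimes\psi_1)\otimes(\psi_2\boxtimes\psi_2)}$, and since a state is determined by its moments the operator identity follows.

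The hard part will be the algebraic bookkeeping in step (3): one must check that the tensor factorization of $\Xi_{\psi_1\otimes\psi_2}$, the termwise multiplication of the two Grassmann factors, and the action of $\xi_{1/\sqrt2}$ are mutually compatible even though the self-convolution on the composite system uses a single convolution unitary on $2(n_1+n_2)$ modes while the right-hand side uses separate ones on a reordered mode layout. This compatibility rests on the evenness hypothesis, which is what makes the anti-commuting and ordinary tensor products agree (cf.\ the isomorphism $\mca A\otimes_f\mca B_E\cong\mca A\otimes\mca B_E$), together with the uniform rescaling of generators by $\xi_\alpha$. An equivalent and slightly cleaner route is to carry out the same computation on cumulant-generating functions, combining $\Psi_{\rho\otimes\sigma}=\Psi_\rho\otimes 1+1\otimes\Psi_\sigma$ from Proposition~\ref{prp:cumulantTpAdditivity} with $\Psi_{\tau\boxtimes\tau}=2\,\xi_{1/\sqrt2}\Psi_\tau$ read off from Theorem~\ref{thm:conDual}.
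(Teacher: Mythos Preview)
Your proposal is correct and follows essentially the same approach as the paper: faithfulness via the purity/entropy criterion (your Corollary~\ref{cor:purityInvariance} is derived from the same entropy inequality the paper cites), Gaussian invariance via Proposition~\ref{prp:convUnitaryCommute}, and additivity via the factorization identity $(\psi_1\otimes\psi_2)\boxtimes(\psi_1\otimes\psi_2)=(\psi_1\boxtimes\psi_1)\otimes(\psi_2\boxtimes\psi_2)$. In fact you go further than the paper, which merely asserts this identity, by supplying a clean Fourier-side proof through convolution-multiplication duality and the evenness hypothesis; your bookkeeping remarks about the mode reordering and the role of $\mathcal{A}\otimes_f\mathcal{B}_E\cong\mathcal{A}\otimes\mathcal{B}_E$ are exactly what is needed to make that argument airtight.
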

\begin{proof}

(1) $NG(\psi)\geq 0$ follows from definition with equality iff $S(\psi\boxtimes\psi)=S(\psi)$. 
By the saturation condition of the entropy inequality in Proposition~\ref{prop:entrop_ineq}, we have $NG(\psi)=0$ iff $\psi$ is a fermionic Gaussian.

(2) Gaussian unitary-invariance follows from convolution commuting with Gaussian unitary in Proposition~\ref{prp:convUnitaryCommute}. 

(3) $NG$-additivity follows from 
$(\psi_1\ot\psi_2)\boxtimes (\psi_1\ot\psi_2)= (\psi_1\boxtimes\psi_2)\otimes(\psi_1\boxtimes\psi_2)$. 

\end{proof}

Consider extending non-Gaussian entropy to higher-order, R\'enyi non-Gaussian entropy as follows
\begin{align}
    NG^{(k)}_{\alpha}(\psi)=S_{\alpha}(\boxtimes^k\psi),
\end{align}
where the $\alpha$-R\'enyi entropy $S_{\alpha}$  with $\alpha\in [0, +\infty]$ is defined as 
\begin{eqnarray}
S_{\alpha}(\rho)=\frac{1}{1-\alpha}\Tr{\rho^\alpha}.
\end{eqnarray}
The R\'enyi generalization $NG^{(k)}_{\alpha}$ also satisfies the
essential properties of faithfulness, Gaussian invariance, and additivity under tensor product, making it a valuable measure for quantifying fermionic non-Gaussianity.

Let us again consider the $4$-qubit parametrized state $\psi_\phi$ in ~\eqref{eq:4qubitFamily}, 
 $ \ket{\psi_{\phi}}=\frac{1}{2}
    \left(\ket{0000}+\ket{0011}+\ket{1100}+e^{i\phi}\ket{1111} \right)$,  
with $\phi\in [0,2\pi]$. This degenerates to the von-Neumann definition (Fig~\ref{fig:NG}) with $\alpha=1$. The $\alpha=2, \infty$ cases are shown in Fig~\ref{fig:renyi2Plot} and~\ref{fig:renyiInfPlot}.  
When $\alpha=0$, for all $k>0$ we have 
\begin{eqnarray}
    NG^{(k)}_0(\psi_\phi) = \begin{cases}
        0, & \phi \in \{0, 2\pi\}, \\ 
        4\log 2, & \text{otherwise.}
    \end{cases}
\end{eqnarray}

\section{Central limit theorem and convergence rate}
\label{app:CLTBound}
This section derives the central limit for fermionic convolution and provides a bound on its rate of convergence. 
Let us define the $k$-th iteration of self-convolution 
\begin{align}
    \boxtimes^{k+1}_{\theta}\rho = (\boxtimes^k \rho)\boxtimes_{\theta} (\boxtimes^k \rho),\quad \text{with}~\quad \boxtimes^0\rho = \rho.
\end{align}

\begin{proposition}[Central limit theorem]\label{cor:clt}
For any even state $\rho$, we have
    \begin{eqnarray}
        \lim_{k\to \infty} \boxtimes^k_{\theta} \rho = \mca G(\rho).
    \end{eqnarray}
  
\end{proposition}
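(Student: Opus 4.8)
The plan is to work entirely with the moment‑ and cumulant‑generating functions in the finite‑dimensional Grassmann algebra $\G_{2n}$, reducing the statement to the elementary fact that $\cos^{j}\theta+\sin^{j}\theta<1$ for $j\geq 4$. Since a state on $\Cl_{2n}$ is determined by its finitely many moments $\{\rho_J\}$, and since the passages moments $\mapsto$ cumulants (via $\Psi=\log\Xi$, a finite power series because $\Xi-1$ is nilpotent), cumulants $\mapsto$ moments (via $\Xi=e^{\Psi}$), and moments $\mapsto$ density operator (via the linear reconstruction $\tau=2^{-n}\sum_J\tau_J\gamma_J$) are all continuous, it suffices to show that the cumulants of $\boxtimes_\theta^k\rho$ converge coefficientwise to those of the unique Gaussian state with covariance $\Sigma_\rho$ (which is well defined since $\Sigma_\rho^{T}\Sigma_\rho\leq I$, by Lemma~\ref{thm:gaussianStateCharacterization}).

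First I would iterate the cumulant form of convolution–multiplication duality, equation~\eqref{eq:convCumulantAdditivity}. Taking $\sigma=\rho$ gives $\kappa^{\tau\boxtimes_\theta\tau}_J=(\cos^{j}\theta+\sin^{j}\theta)\,\kappa^{\tau}_J$ with $j=|J|$, hence by induction $\kappa^{\boxtimes_\theta^k\rho}_J=\mu_j^{k}\,\kappa^{\rho}_J$ where $\mu_j:=\cos^{j}\theta+\sin^{j}\theta$. Now I analyze $\mu_j$ by degree. For $j$ odd, $\kappa^\rho_J=0$ because $\rho$ is even, so $\Xi_\rho$ — and therefore $\Psi_\rho=\log\Xi_\rho$ — contains only even‑degree monomials in $\eta$. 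For $j=0$, $\kappa^\rho_\emptyset=0$ since $\Xi_\rho(0)=\tr\rho=1$ has trivial logarithm. For $j=2$, $\mu_2=\cos^2\theta+\sin^2\theta=1$, so the quadratic cumulants are frozen. For even $j\geq4$, $0\leq\mu_j=(\cos^2\theta)^{j/2}+(\sin^2\theta)^{j/2}\leq\cos^2\theta+\sin^2\theta=1$, with strict inequality unless $\sin(2\theta)=0$; in the non‑degenerate regime $\theta\notin\{0,\pm\pi/2,\pi\}$ — in particular the balanced case $\theta=\pi/4$, where $\mu_j=2^{1-j/2}$ — this forces $\mu_j^{k}\to0$ as $k\to\infty$.

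Then I would assemble the limit. Coefficientwise, $\Psi_{\boxtimes_\theta^k\rho}\to\sum_{|J|=2}\kappa^\rho_J\,\eta_J=\tfrac12\,\eta^{T}\Sigma_\rho\,\eta$, where the last equality uses that quadratic cumulants, quadratic moments, and covariance‑matrix entries coincide for even states, equation~\eqref{eq:equaiv_2}. By the Fourier formula for Gaussian states, equation~\eqref{eq:GaussianFourierExpression}, $\exp(\tfrac12\,\eta^{T}\Sigma_\rho\,\eta)$ is exactly $\Xi_{\rho_G}$ for the unique Gaussian state $\rho_G$ with covariance $\Sigma_\rho$. Exponentiating $\Psi$ and applying the continuous reconstruction maps above then yields $\boxtimes_\theta^k\rho\to\rho_G$, and identifying $\rho_G=\mca G(\rho)$ completes the argument; as a byproduct it gives $\mca G(\rho)\boxtimes_\theta\mca G(\rho)=\mca G(\rho)$, consistent with Corollary~\ref{cor:gaussianFixedPoint}.

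I expect the only real content to be the per‑degree estimate $\mu_j<1$ for $j\geq4$, which is precisely what makes the super‑quadratic cumulants decay geometrically; everything else is bookkeeping licensed by the convolution–multiplication duality (Theorem~\ref{thm:conDual}) already proved. The one point requiring care is the hypothesis on $\theta$: at $\theta\in\{0,\pm\pi/2,\pi\}$ one has $\mu_j=1$ for all even $j$ and $\boxtimes_\theta^k\rho=\rho$, so the statement genuinely needs $\sin(2\theta)\neq0$. If one additionally wants the quantitative convergence rate of Sec.~\ref{app:CLTBound}, the remaining work is to propagate the geometric decay of the cumulants through the maps cumulants $\mapsto$ moments $\mapsto$ density operator (Lipschitz on bounded sets) into an explicit trace‑norm bound.
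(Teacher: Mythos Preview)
Your proposal is correct and follows essentially the same approach as the paper: iterate the cumulant additivity formula~\eqref{eq:convCumulantAdditivity} for self-convolution to obtain $\kappa^{\boxtimes_\theta^k\rho}_J=\mu_j^k\,\kappa^\rho_J$, observe that $\mu_2=1$ while $\mu_j<1$ for even $j\geq 4$ (the paper specializes to $\theta=\pi/4$ where $\mu_j=2^{1-j/2}$), and conclude that the limiting cumulants are exactly those of the Gaussian with covariance $\Sigma_\rho$. Your write-up is in fact more careful than the paper's in that you explicitly justify the continuity of the passage from cumulants back to the density operator, handle the odd-degree and degree-zero cases, and flag the necessary nondegeneracy hypothesis $\sin(2\theta)\neq 0$.
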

  \begin{proof}
        We prove the results for $\theta=\pi/4$, general case follows easily. 
        An even state is Gaussian if and only if it has vanishing super-quadratic cumulants, i.e., $\kappa_J=0$ for any $J$ with $|J|\geq 4$. 
  The convolution-multiplication duality~\eqref{eq:conMut_cu} implies 
  \begin{align*}
        \kappa^{\rho \boxtimes_\theta \rho}_J = \kappa^\rho_J \cos^j \theta + \kappa^\rho_J \sin^j \theta. 
  \end{align*}
    This yields the following equation for $j$-th order cumulants 
        \leqalign{eq:cumulantIter}{
            \kappa^{\boxtimes^{k+1}\rho}_J
            &= 2 \left(\df 1 {\sqrt 2}\right)^{j} \kappa^{\boxtimes^k \rho}_J 
            = 2^{k(1-j/2)} \kappa^{\rho}_J. 
        }
        Note that $j=|J|$ is always even. Therefore, 
        \[ 
            \lim_{k\to \infty} 2^{k(1-j/2)} \kappa^{\rho}_J = \begin{cases}
                \kappa^\rho_J & j=2, \\ 
                0 & \text{otherwise}. 
            \end{cases}
        \]  
        These are exactly the cumulants of $\mca G(\rho)$. 
    \end{proof}

Next, we study the convergence rate of the central limit theorem (CLT). We prove results for $\theta = \pi/4$, and the approach can be readily extended to more general cases.
    
\begin{lemma}
    \label{lem:exponentialBound}
For any linear operators $X, Y$, we have 
   \[ 
        \|e^{X+Y} - e^X\| \leq \|Y\| e^{\|X\|+\|Y\|},
    \]
where $\|A\| = \sup_{\|\vec x\| \leq 1} \|A\vec x\|$ denotes the operator norm.

\end{lemma}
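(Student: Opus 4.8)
The plan is to prove the bound by a direct power-series comparison, which keeps everything elementary; I will also note the alternative Duhamel route. Write $e^{X+Y}-e^X=\sum_{n\geq 1}\frac{1}{n!}\bigl((X+Y)^n-X^n\bigr)$, the series converging absolutely in operator norm since $\|(X+Y)^n\|\le(\|X\|+\|Y\|)^n$. Expanding $(X+Y)^n=\sum_{w\in\{X,Y\}^n} w_1w_2\cdots w_n$ over all length-$n$ words in the two letters, subtracting $X^n$ removes exactly the all-$X$ word, so $(X+Y)^n-X^n$ is a sum over words containing at least one $Y$, which I group according to the number $k\in\{1,\dots,n\}$ of occurrences of $Y$.

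For the estimate, submultiplicativity of the operator norm gives $\|w_1\cdots w_n\|\le\|X\|^{\,n-k}\|Y\|^{\,k}$ for any word with $k$ copies of $Y$, and there are $\binom{n}{k}$ such words; summing over $k$ yields $\|(X+Y)^n-X^n\|\le\sum_{k=1}^n\binom{n}{k}\|X\|^{n-k}\|Y\|^k=(\|X\|+\|Y\|)^n-\|X\|^n$. Dividing by $n!$ and summing over $n\geq 1$ produces the sharper bound $\|e^{X+Y}-e^X\|\le e^{\|X\|+\|Y\|}-e^{\|X\|}$. To reach the stated form, factor $e^{\|X\|+\|Y\|}-e^{\|X\|}=e^{\|X\|}\bigl(e^{\|Y\|}-1\bigr)$ and apply the elementary one-variable inequality $e^t-1\le t\,e^t$ for $t\geq 0$ (immediate from $e^t-1=\int_0^t e^s\,ds\le t e^t$), with $t=\|Y\|$, giving $\|e^{X+Y}-e^X\|\le \|Y\|\,e^{\|X\|+\|Y\|}$.

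Alternatively, one can use Duhamel's identity $e^{X+Y}-e^X=\int_0^1\frac{d}{ds}\bigl(e^{s(X+Y)}e^{(1-s)X}\bigr)\,ds=\int_0^1 e^{s(X+Y)}\,Y\,e^{(1-s)X}\,ds$, take norms under the integral, and bound $\|e^{s(X+Y)}\|\le e^{s(\|X\|+\|Y\|)}$ and $\|e^{(1-s)X}\|\le e^{(1-s)\|X\|}$; the integrand is then at most $\|Y\|\,e^{\|X\|+s\|Y\|}\le\|Y\|\,e^{\|X\|+\|Y\|}$ on $[0,1]$, and integrating gives the claim directly. There is no serious obstacle here; the only points requiring a little care are the justification for interchanging the sum and the norm (absolute convergence, as noted) and the consistent use of submultiplicativity in the word-by-word estimate.
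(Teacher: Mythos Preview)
Your proof is correct and follows essentially the same route as the paper: power-series expansion, the binomial estimate $\|(X+Y)^n-X^n\|\le(\|X\|+\|Y\|)^n-\|X\|^n$, summation to $e^{\|X\|+\|Y\|}-e^{\|X\|}$, and the final step $e^{\|Y\|}-1\le\|Y\|e^{\|Y\|}$. Your presentation is in fact a bit more streamlined than the paper's (which takes a detour through Pascal's identity to reach the same intermediate bound), and your Duhamel alternative is a nice bonus the paper does not include.
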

\begin{proof}
        Denote $x=\|X\|, y=\|Y\|$. 
        Expand the series and use submultiplicativity and subadditivity: 
        \malign{
            \|e^{X+Y} - e^X\| 
            &\leq \sum_{n=0}^\infty \df 1 {n!} \|(X+Y)^n - X^n\| \\
            &\leq \sum_{n=0}^\infty \df 1 {n!} \sum_{k=0}^\infty \binom{n}{k} x^k y^{n-k} - x^n \\ 
            &= \sum_{n=0}^\infty \df 1 {n!} \sum_{k=0}^{n-1} \binom n k x^k y^{n-k} \\
            &= \sum_{n=0}^\infty \df 1 {n!} \sum_{k=0}^{n-1} \binom{n-1}{k-1} x^k y^{n-k} + 
            \sum_{k=0}^{n-1} \binom{n-1}{k} x^k y^{n-k}.
        }
        Let us consider the first term in the sum, substitute $k\mapsto k+1$ and factor out an $x$ 
        \[ 
            \sum_{k=0}^{n-1} \binom{n-1}{k-1} x^k y^{n-k} 
            = x\sum_{k=0}^{n-2} \binom{n-1}{k} x^k y^{(n-1)-k} = x[(x+y)^{n-1} - x^{n-1}] = x(x+y)^{n-1} - x^n .
        \]
        The second term is defined similarly 
        \[ 
            \sum_{k=0}^{n-1} \binom{n-1}{k} x^k y^{n-k} = y(x+y)^{n-1}.
        \]
        Substitute back into the expression, which now simplifies neatly 
        \[
            \|e^{X+Y} - e^X\| \leq \sum_{n=0}^\infty \df{(x+y)^n - x^n}{n!} 
            = e^{x+y} - e^x = e^x(e^y-1) \leq y e^xe^y .
        \]
    \end{proof}
    
\begin{lemma}
    \label{lem:l2OperatorBound}
    For $n$-qubit states $\rho, \sigma$, we have
    \[ 
        \|\rho - \sigma\|_2 
        = \df 1 {2^n} \|\Xi_\rho - \Xi_\sigma\|_2 
        \leq \|\Xi_\rho - \Xi_\sigma\|,
    \]
    where $\|\cdot \|_2$ denotes the $L_2$-norm,

\end{lemma}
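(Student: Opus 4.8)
The plan is to prove the equality and the inequality separately: the equality is a Parseval identity, and the inequality is a normalized Hilbert--Schmidt versus operator-norm bound, once the normalization conventions on the two sides are matched.

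For the equality, I expand both objects in their orthonormal bases. By the moment expansion~\eqref{eq:momentExpansion}, $\rho-\sigma=\frac{1}{2^n}\sum_J(\rho_J-\sigma_J)\gamma_J$, so orthonormality of $\{\gamma_J\}$ under $\langle A,B\rangle=\frac{1}{2^n}\tr(A^\dagger B)$ gives $\|\rho-\sigma\|_2^2=\frac{1}{2^{2n}}\sum_J|\rho_J-\sigma_J|^2$. On the Grassmann side, $\Xi_\rho-\Xi_\sigma=\sum_J(\rho_J-\sigma_J)\eta_J$ by Definition~\ref{def:moment}, and orthonormality of $\{\eta_J\}$ gives $\|\Xi_\rho-\Xi_\sigma\|_2^2=\sum_J|\rho_J-\sigma_J|^2$. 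Comparing the two identities yields $\|\rho-\sigma\|_2=\frac{1}{2^n}\|\Xi_\rho-\Xi_\sigma\|_2$.

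For the inequality, I would pass to the representation of $\G_{2n}$ on the $2^{2n}$-dimensional space $\Hi_{2n}$ from Section~\ref{app:cliffordGrassmannAlgebras} and note that the abstract $2$-norm is a rescaled genuine Hilbert--Schmidt norm there. Using $\eta_j=Z^{\otimes j-1}\otimes\sigma^+\otimes I^{\otimes 2n-j}$ and $\sigma^{+\dagger}=\sigma^-$, a short computation shows that $\eta_J^\dagger\eta_J$ acts as $2\proj{1}$ on the sites indexed by $J$ and as the identity elsewhere (the $Z$'s commute through these diagonal factors), so $\tr(\eta_J^\dagger\eta_J)=2^{2n}$; one likewise checks $\tr(\eta_J^\dagger\eta_K)=0$ for $J\neq K$, since $\eta_J^\dagger\eta_K$ then carries a traceless factor $\sigma^{\pm}$ on some site. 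Together with the orthonormality of $\{\eta_J\}$ under the Grassmann inner product of Section~\ref{app:cliffordGrassmannAlgebras}, this gives $\tr(A^\dagger B)=2^{2n}\langle A,B\rangle$ for all $A,B\in\G_{2n}$, hence $\|A\|_{\mathrm{HS}}=2^n\|A\|_2$. Combining with the elementary bound $\|A\|_{\mathrm{HS}}\leq\sqrt{\dim\Hi_{2n}}\,\|A\|=2^n\|A\|$, where $\|\cdot\|$ is the operator norm as in Lemma~\ref{lem:exponentialBound}, yields $\|A\|_2\leq\|A\|$. Taking $A=\Xi_\rho-\Xi_\sigma$ and chaining, $\|\rho-\sigma\|_2=\frac{1}{2^n}\|\Xi_\rho-\Xi_\sigma\|_2\leq\|\Xi_\rho-\Xi_\sigma\|_2\leq\|\Xi_\rho-\Xi_\sigma\|$.

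The one step requiring care is the identification $\|A\|_2=2^{-n}\|A\|_{\mathrm{HS}}$ on $\G_{2n}$: the bookkeeping of $2^n$ versus $2^{2n}$ is the subtle part, since $\G_{2n}$ is represented on $2n$ qubits whereas $\rho,\sigma$ act on $n$ qubits, and the modified (``manually Hermitian'') Grassmann inner product must be reconciled with the genuine matrix trace. Beyond that, both halves follow immediately from Parseval and the generic Hilbert--Schmidt-to-operator-norm inequality, so I do not expect a genuine obstacle.
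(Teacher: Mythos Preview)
Your proof is correct and follows essentially the same approach as the paper: Parseval for the equality and a norm comparison for the inequality. The paper's own proof is a two-sentence sketch (``the $L_2$ norm is preserved under the Grassmann--Clifford Fourier transform'' and ``the bound on the operator norm by the $L_2$ norm''), so your version simply fills in the normalization bookkeeping---the identification $\|A\|_2 = 2^{-n}\|A\|_{\mathrm{HS}}$ on $\G_{2n}$ and the factor $\sqrt{\dim\Hi_{2n}}=2^n$---that the paper leaves implicit.
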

    \begin{proof}
    This equation comes from the fact that $L_2$ norm is preserved under the Grassmann-Clifford Fourier transform. The inequality comes from the bound on the operator norm by the $L_2$ norm. 
    \end{proof}

\begin{theorem}[Convergence rate of  central limit theorem]
\label{app:convergenceBound} 
For a $n$-qubit even state $\rho$ 
\begin{eqnarray}
    \|\boxtimes^k \rho - \mca G(\rho)\|_2 &\leq \df {\sqrt{K_M(\rho)}} {2^k}
    \exp\left(\sqrt{K_G(\rho)} + 2^{-k} \sqrt{K_M(\rho)}\right),
\end{eqnarray}
where $K_G$ and $K_M$ denote the Gaussian and non-Gaussian weights in definition~\ref{def:GaussianWeights}. 

\begin{proof}
    Recall the cumulant-generating operator under the convolution is equal to
    \[
        \Psi_{\boxtimes^k\rho } = \sum_J 2^{-k(j/2-1)} \kappa^\rho_J \eta_J,
    \]
    while the limit state $\mca G(\rho)$ has the following form
       \[
          \Psi_{\mca G(\rho)} = \sum_{|J|=2} \kappa^\rho_J \eta_J .
    \]
    The operator norm of $\Psi_{\boxtimes^k\rho } - \Psi_{\mca G(\rho)}$ is bounded by the 
    $L_2$ norm 
    \[
        \|\Psi_{\boxtimes^ k\rho } - \Psi_{\mca G(\rho)}\| 
        \leq \|\Psi_{\boxtimes^ k\rho } - \Psi_{\mca G(\rho)}\|_2
        = \sqrt{\sum_{|J|>2} |2^{-k(|J|/2-1)}\kappa_J|^2} \leq 2^{-k} \sqrt{K_M(\rho)}. 
    \]
    By using Lemma~\ref{lem:l2OperatorBound} and substituting $X\mapsto \Psi_{\mca G(\rho)}, 
        Y\mapsto \Psi_{\boxtimes^k \rho} - \Psi_{\mca G(\rho)}$: 
    \[ 
        \|\boxtimes^k \rho - \mca G(\rho)\|_2 
        \leq \|\Xi_{\boxtimes^k \rho} - \Xi_{\mca G(\rho)}\| 
        = \|\exp \Psi_{\boxtimes^k \rho} - \exp \Psi_{\mca G(\rho)}\| 
        = \| e^{X+Y} - e^X\|. 
    \] 
    Invoke Lemma~\ref{lem:exponentialBound} to obtain the desired bound: 
    \[ 
        \|\boxtimes^k \rho - \mca G(\rho)\|_2 
        \leq \| e^{X+Y} - e^X\| 
        \leq ye^{x+y}
        = \df {\sqrt{K_M(\rho)}} {2^k}
        \exp\left(\sqrt{K_G(\rho)} + 2^{-k} \sqrt{K_M(\rho)}\right). 
    \]
    where the relevant quantities $y=\|Y\|, x=\|X\|$ are 
    \[ 
        x = \|\Psi_{\mca G(\rho)}\| \leq \|\Psi_{\mca G(\rho)}\|_2 = \sqrt{K_G(\rho)}, \quad y = \|\Psi_{\boxtimes^k \rho} - \Psi_{\mca G(\rho)}\| \leq 2^{-k}\sqrt{K_M(\rho)}. 
    \]
\end{proof}
\end{theorem}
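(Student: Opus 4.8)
The plan is to carry out the estimate at the level of the cumulant-generating operators $\Psi_\tau\in\G_{2n}$ and then transport it back to states through the Grassmann--Clifford Fourier transform. First I would record how iterated self-convolution acts on cumulants: combining the convolution--multiplication duality for cumulants \eqref{eq:conMut_cu} (equivalently \eqref{eq:convCumulantAdditivity}) with the recursion $\boxtimes^{k+1}\rho=(\boxtimes^k\rho)\boxtimes(\boxtimes^k\rho)$ gives, for every multi-index $J$ with $|J|=j$,
\[
    \kappa^{\boxtimes^k\rho}_J = 2^{-k(j/2-1)}\,\kappa^\rho_J ,
\]
so that $\Psi_{\boxtimes^k\rho}(\eta)=\sum_J 2^{-k(|J|/2-1)}\kappa^\rho_J\,\eta_J$, whereas the Gaussian limit retains only its quadratic cumulants, $\Psi_{\mca G(\rho)}(\eta)=\sum_{|J|=2}\kappa^\rho_J\,\eta_J$. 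Subtracting, the operator $Y:=\Psi_{\boxtimes^k\rho}-\Psi_{\mca G(\rho)}$ is supported on $|J|\ge4$ with coefficients $2^{-k(|J|/2-1)}\kappa^\rho_J$; since $|J|/2-1\ge1$ in that range, orthonormality of the Grassmann basis gives $\norm{Y}_2 \le 2^{-k}\sqrt{\sum_{|J|\ge4}|\kappa^\rho_J|^2} = 2^{-k}\sqrt{K_M(\rho)}$, and the operator norm is dominated by the $L_2$-norm, so $\norm{Y}\le 2^{-k}\sqrt{K_M(\rho)}$.

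Next I would pass from generating operators to states. Since the Fourier transform is an $L_2$-isometry, Lemma~\ref{lem:l2OperatorBound} yields $\norm{\boxtimes^k\rho-\mca G(\rho)}_2 \le \norm{\Xi_{\boxtimes^k\rho}-\Xi_{\mca G(\rho)}}$, and using $\Xi_\tau=\exp\Psi_\tau$ this equals $\norm{e^{X+Y}-e^X}$ with $X:=\Psi_{\mca G(\rho)}$. Applying the elementary exponential inequality of Lemma~\ref{lem:exponentialBound}, $\norm{e^{X+Y}-e^X}\le\norm{Y}\,e^{\norm{X}+\norm{Y}}$, and bounding $\norm{X}\le\norm{X}_2=\sqrt{K_G(\rho)}$ (the quadratic cumulant weight is $K_2(\rho)=K_G(\rho)$) together with the previous bound on $\norm{Y}$ yields exactly
\[
    \norm{\boxtimes^k\rho-\mca G(\rho)}_2 \le \frac{\sqrt{K_M(\rho)}}{2^k}\,\exp\!\left(\sqrt{K_G(\rho)}+2^{-k}\sqrt{K_M(\rho)}\right).
\]

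The two auxiliary lemmas are routine: the operator-exponential bound is a clean binomial manipulation on the power series, and the $L_2$-isometry of the Fourier transform follows from the orthonormality of $\{\eta_J\}$ under the Hermitian-enforced Grassmann inner product. The step that requires the most care is the cumulant iteration identity, which rests on the convolution--multiplication duality (Theorem~\ref{thm:conDual}); establishing that duality in turn relies on the anti-commuting tensor-product bookkeeping and on the evenness of $\rho$, so that the $Z^{\otimes n}$ factors produced by conjugating Majorana operators through $W_\theta$ drop out after the partial trace. One must also be precise that $\mca G(\rho)$ is characterised as the state whose cumulants are \emph{exactly} the quadratic cumulants of $\rho$ and nothing higher --- this is where the central limit theorem (Proposition~\ref{cor:clt}) and the Fourier characterisation of Gaussian states enter. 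Everything else reduces to bounding a geometric sum.
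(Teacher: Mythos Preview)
Your proposal is correct and follows essentially the same approach as the paper: compute the cumulants of $\boxtimes^k\rho$ via the duality~\eqref{eq:conMut_cu}, bound $\|\Psi_{\boxtimes^k\rho}-\Psi_{\mca G(\rho)}\|$ by its $L_2$-norm using orthonormality of the Grassmann basis, transport to states via Lemma~\ref{lem:l2OperatorBound}, and apply the exponential estimate of Lemma~\ref{lem:exponentialBound} with $X=\Psi_{\mca G(\rho)}$ and $Y=\Psi_{\boxtimes^k\rho}-\Psi_{\mca G(\rho)}$. Your additional remarks on where evenness and the characterisation of $\mca G(\rho)$ enter are accurate commentary but do not alter the argument.
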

One may also consider a version of the central limit theorem which consumes an arbitrary positive 
integer $n$ instead of $2^k$ number of copies of the state $\rho$. 
For each $m>0$, define the convolution angle $\theta_m\in (0, \pi/2)$ which satisfies 
\leqalign{def:convAngle}{
    \cos^2 \theta_m = \df{m}{m+1}.
}
Consider a version of iterated convolution linear in the number of copies so that $\boxtimes^{k}\rho$ requires $k$ copies of $\rho$
\leqalign{def:linearIterConv}{
   \boxtimes^{(1)} \rho = \rho, \quad \boxtimes^{(k+1)} \rho = (\boxtimes^{(k)}\rho) \boxtimes_{\theta_k} \rho.
}
In particular, 
$
    \boxtimes^{(2)} \rho = \rho \boxtimes_{\pi/4}\rho, \,\,
    \boxtimes^{(3)} \rho = (\rho \boxtimes_{\pi/4}\rho)\boxtimes_{\theta_2} \rho
$, etc. 
We now compute the cumulants of iterated convolution by applying additivity~\eqref{eq:convCumulantAdditivity}. Let us first consider the quadratic case 
\leqalign{eq:secOrderCumulants}{
    \kappa^{\boxtimes^{(n)}\rho}_J 
    &= \kappa^\rho_J \sin^2 \theta_n + \kappa^{\boxtimes^{(n-1)}\rho}_J \cos^2 \theta_n 
    = \frac 1 n \kappa^\rho_J + \df{n-1}{n} \kappa^{\boxtimes^{(n-1)}\rho}_J \\ 
    &= \frac 1 n \kappa^\rho_J + \df{n-1} n \left(\frac 1 {n-1} \kappa^\rho_J + \frac{n-1}{n-2} \kappa^{\boxtimes^{(n-2)}\rho}_J\right)\\ 
    &= \cdots = \kappa^\rho_J , \quad \text{where}~|J|=2. 
}
In general, for the cumulant with  $|J|=2m$, we have 
\leqalign{eq:generalAvgCumulants}{
    \kappa^{\boxtimes^{(n)}\rho}_J 
    &= \kappa^\rho_J \sin^{2m} \theta_n + \kappa^{\boxtimes^{(n-1)}\rho}_J \cos^{2m} \theta_n 
    = \frac 1 {n^m} \kappa^\rho_J + \left(\df{n-1}{n}\right)^m \kappa^{\boxtimes^{(n-1)}\rho}_J  \\ 
    &= \frac 1 {n^m} \kappa^\rho_J + \left(\df{n-1}{n}\right)^m \left[
        \frac 1 {(n-1)^m} \kappa^\rho_J + \left(\df{n-2}{n-1}\right)^m \kappa^{\boxtimes^{(n-2)}\rho}_J 
    \right] \\ 
    &= \frac 1 {n^m} \kappa^\rho_J + \df 1 {n^m} \kappa^\rho_J + \left(\df{n-2} n\right)^m \kappa^{\boxtimes^{(n-2)}\rho}_J  \\ 
    &= \frac 1 {n^m} \kappa^\rho_J + \df 1 {n^m} \kappa^\rho_J + \left(\df{n-2} n\right)^m \left[
       \frac 1 {(n-2)^m} \kappa^\rho_J + \left(\df{n-3}{n-2}\right)^m \kappa^{\boxtimes^{(n-3)}\rho}_J
    \right] \\ 
    &= \frac 1 {n^m} \kappa^\rho_J + \df 1 {n^m} \kappa^\rho_J + \df 1 {n^m} \kappa^\rho_J + \left(\df{n-3}{n}\right)^m \kappa^{\boxtimes^{(n-3)}\rho}_J \\ 
    &= \cdots = \df{1}{n^{m-1}}\kappa^\rho_J. 
}
Applying this to the proof in theorem~\ref{app:convergenceBound}, we obtain  
\begin{eqnarray}
    \|\Psi_{\boxtimes^{(n)}\rho } - \Psi_{\mca G(\rho)}\| 
    \leq \|\Psi_{\boxtimes^{(m)}\rho } - \Psi_{\mca G(\rho)}\|_2
    = \sqrt{\sum_{|J|>2} n^{1-|J|/2}\kappa_J|^2} \leq \df 1 n \sqrt{K_M(\rho)}. 
\end{eqnarray}
Following the line of reasoning in the proof, the final answer replaces $\sqrt{K_M(\rho)}/2^k \mapsto \sqrt{K_M(\rho)}/n$ and results in 
\leqalign{eq:anyCopyCLTBound}{
    \|\boxtimes^{(n)} \rho - \mca G(\rho)\|_2 &\leq \df {\sqrt{K_M(\rho)}} {n}
        \exp\left(\sqrt{K_G(\rho)}+ \sqrt{K_M(\rho)}/n\right). 
}

\section{Properties of Gaussification }\label{appen:Gassification}

\begin{lemma}
    For any fermionic state $\rho$ and fermionic Gaussian unitary $U$, 
    \begin{align}
        \mca G(U\rho U^\dag) =U\mca G(\rho)U^\dag.
    \end{align}
    \begin{proof}
      Apply the central limit theorem in Proposition~\ref{cor:clt} and convolution commuting with Gaussian unitaries in Proposition~\ref{prp:convUnitaryCommute} to obtain 
        \[ 
            \mca G(U\rho U^\dag) = \lim_{k\to \infty} \boxtimes^k (U\rho U^\dag)
            = U\left(\lim_{k\to \infty} \boxtimes^k \rho\right) U^\dag = U\mca G(\rho) U^\dag. 
        \] 
    \end{proof}
\end{lemma}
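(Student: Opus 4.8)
The plan is to peel everything back to the definition $\mca G(\rho)=\lim_{k\to\infty}\boxtimes^k\rho$ (established by the central limit theorem, Proposition~\ref{cor:clt}) and then push the conjugation $(\cdot)\mapsto U(\cdot)U^\dag$ through the iterated self-convolution one layer at a time, using the fact that convolution commutes with Gaussian unitaries (Proposition~\ref{prp:convUnitaryCommute}). A preliminary remark is needed for well-definedness: a Gaussian unitary is generated by a Hamiltonian quadratic in the Majorana operators and hence is even (it commutes with the parity operator $\gamma_{[2n]}$), so $U\rho U^\dag$ is an even state whenever $\rho$ is, and therefore $\boxtimes^k(U\rho U^\dag)$ and its limit $\mca G(U\rho U^\dag)$ are defined. (Here $\rho$ is taken even, as $\boxtimes$ and $\mca G$ require.)

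The core step is an induction on $k$ showing
\[ \boxtimes^k(U\rho U^\dag) = U\,(\boxtimes^k\rho)\,U^\dag . \]
The base case $k=0$ is trivial. For the inductive step, unfolding $\boxtimes^{k+1}\tau=(\boxtimes^k\tau)\boxtimes(\boxtimes^k\tau)$, then applying the inductive hypothesis and Proposition~\ref{prp:convUnitaryCommute},
\[ \boxtimes^{k+1}(U\rho U^\dag) = \bigl(U(\boxtimes^k\rho)U^\dag\bigr)\boxtimes\bigl(U(\boxtimes^k\rho)U^\dag\bigr) = U\bigl((\boxtimes^k\rho)\boxtimes(\boxtimes^k\rho)\bigr)U^\dag = U\,(\boxtimes^{k+1}\rho)\,U^\dag . \]

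Finally I would take $k\to\infty$. Since conjugation $X\mapsto UXU^\dag$ by a fixed unitary is a linear isometry of the operator space (in, say, the trace norm), it is continuous and commutes with limits, so
\[ \mca G(U\rho U^\dag) = \lim_{k\to\infty}\boxtimes^k(U\rho U^\dag) = \lim_{k\to\infty} U\,(\boxtimes^k\rho)\,U^\dag = U\Bigl(\lim_{k\to\infty}\boxtimes^k\rho\Bigr)U^\dag = U\,\mca G(\rho)\,U^\dag , \]
the first and last equalities being Proposition~\ref{cor:clt}. There is no serious obstacle here; the only points deserving a word are the evenness remark above and the interchange of limit with conjugation, which is just continuity of $X\mapsto UXU^\dag$. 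As an alternative that avoids the limit entirely, one may use the characterization of $\mca G(\rho)$ as the unique Gaussian state with covariance $\Sigma_\rho$: conjugation by $U$ sends Gaussian states to Gaussian states and transforms covariances by $\Sigma\mapsto R\Sigma R^\top$ with $R=e^{2h}\in SO(2n)$ the rotation of~\eqref{eq:unitaryEffect}, so $U\mca G(\rho)U^\dag$ is the Gaussian state with covariance $R\Sigma_\rho R^\top=\Sigma_{U\rho U^\dag}$, which is precisely $\mca G(U\rho U^\dag)$.
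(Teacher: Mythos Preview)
Your proof is correct and follows essentially the same approach as the paper: use the central limit theorem to write $\mca G$ as $\lim_{k\to\infty}\boxtimes^k$, commute the Gaussian unitary through each convolution via Proposition~\ref{prp:convUnitaryCommute}, and pass to the limit. You have simply made explicit the induction on $k$ and the continuity of conjugation that the paper leaves implicit in its one-line chain of equalities.
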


\begin{lemma}\label{lem:comm_GL}
    For even $\rho$ and $\sigma$ with $\mca G(\rho)=\mca G(\sigma)$, 
    $\mca G(\rho \boxtimes \sigma) = \mca G(\rho)$. 
    \begin{proof}
        Convolution preserves quadratic cumulants, and $\mca G(\tau_1) = \mca G(\tau_2)$ iff their quadratic cumulants are equal. 
    \end{proof}
\end{lemma}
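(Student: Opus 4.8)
The plan is to reduce the claim to a statement about covariance matrices, equivalently about degree-two cumulants, using two facts already established in the excerpt: first, that the Gaussification $\mca G(\tau)$ of an even state $\tau$ is the unique Gaussian state with covariance $\Sigma_\tau$ (this follows from the central limit theorem, Proposition~\ref{cor:clt}, whose proof shows the cumulants of $\mca G(\tau)$ are $\kappa^\tau_J$ for $|J|=2$ and $0$ for $|J|\geq 4$, together with Lemma~\ref{thm:gaussianStateCharacterization}, by which a Gaussian state is determined by its covariance); and second, that for even states the degree-two cumulants coincide with the degree-two moments, i.e.\ with the entries of $\Sigma$, by equation~\eqref{eq:equaiv_2}.

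First I would observe that $\mca G(\rho)=\mca G(\sigma)$ forces $\Sigma_\rho=\Sigma_\sigma$: since $\mca G(\tau)$ carries covariance $\Sigma_\tau$, equality of the two Gaussian states implies equality of their covariances, which are precisely $\Sigma_\rho$ and $\Sigma_\sigma$. Next I would compute the degree-two cumulants of $\rho\boxtimes\sigma$ from the additivity identity~\eqref{eq:convCumulantAdditivity}: for $|J|=2$ and $\theta=\pi/4$ one gets $\kappa^{\rho\boxtimes\sigma}_J=\tfrac12\kappa^\rho_J+\tfrac12\kappa^\sigma_J$, which by~\eqref{eq:equaiv_2} is the same as $\Sigma_{\rho\boxtimes\sigma}=\tfrac12\Sigma_\rho+\tfrac12\Sigma_\sigma$; substituting $\Sigma_\rho=\Sigma_\sigma$ yields $\Sigma_{\rho\boxtimes\sigma}=\Sigma_\rho$. (The same conclusion holds for a general angle $\theta$, since the weights $\cos^2\theta$ and $\sin^2\theta$ still sum to $1$.) Finally, $\mca G(\rho\boxtimes\sigma)$ is the Gaussian state with covariance $\Sigma_{\rho\boxtimes\sigma}=\Sigma_\rho$ and $\mca G(\rho)$ is the Gaussian state with covariance $\Sigma_\rho$, so uniqueness of Gaussian states given the covariance gives $\mca G(\rho\boxtimes\sigma)=\mca G(\rho)$.

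I do not expect a genuine obstacle here; the only care required is to chain the two structural facts above correctly, and in particular to invoke the equivalence ``equal Gaussification $\iff$ equal covariance (equivalently, equal degree-two cumulants)'' cleanly. An equivalent route that bypasses the covariance language is to work directly with cumulant data: $\mca G(\tau)$ has cumulant-generating operator $\sum_{|J|=2}\kappa^\tau_J\eta_J$, so $\mca G(\rho)=\mca G(\sigma)$ iff $\kappa^\rho_J=\kappa^\sigma_J$ for all $|J|=2$, and then~\eqref{eq:convCumulantAdditivity} at degree $2$ immediately gives $\kappa^{\rho\boxtimes\sigma}_J=\kappa^\rho_J$ for $|J|=2$, hence $\mca G(\rho\boxtimes\sigma)=\mca G(\rho)$.
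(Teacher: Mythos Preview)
Your argument is correct and is essentially the same as the paper's: both hinge on the equivalence $\mca G(\tau_1)=\mca G(\tau_2)\iff$ equal degree-two cumulants, together with the fact that convolution preserves quadratic cumulants via~\eqref{eq:convCumulantAdditivity}. The paper's proof is simply a terser statement of the cumulant-only route you sketch in your last paragraph.
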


\begin{theorem}
For any even state $\rho$, the following equation holds 
\begin{align}
\min_{\sigma_G\in \textbf{Gauss} }D(\rho||\sigma_G)
=D(\rho||\mca G(\rho))
=S(\mca G(\rho))-S(\rho),
\end{align}
where the set $\textbf{Gauss}$ denotes the set of all fermionic Gaussian states. 
\end{theorem}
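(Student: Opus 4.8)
The plan is to establish a \emph{Pythagorean identity} for relative entropy with respect to the Gaussification map, $D(\rho\|\sigma_G) = D(\rho\|\mca G(\rho)) + D(\mca G(\rho)\|\sigma_G)$ valid for every Gaussian $\sigma_G$, from which both equalities follow at once. The two ingredients are: (i) for any fermionic Gaussian state $\sigma_G = C\exp(\tfrac i2\gamma^T h\gamma)$ the operator $\log\sigma_G = (\log C)\,I + \tfrac i2\gamma^T h\gamma$ is affine-quadratic in the Majorana operators; and (ii) by the central limit theorem (Proposition~\ref{cor:clt}) the Gaussification $\mca G(\rho)$ has \emph{the same covariance matrix} as $\rho$, hence $\tr(\rho\,X) = \tr(\mca G(\rho)\,X)$ for every affine-quadratic $X$ — the constant part gives $1$ for both states, the quadratic part is precisely $\Sigma_\rho = \Sigma_{\mca G(\rho)}$, and there is no linear part because $\rho$ is even. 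Combining (i) and (ii): $\tr(\rho\log\sigma_G) = \tr(\mca G(\rho)\log\sigma_G)$ for every Gaussian $\sigma_G$, and likewise $\tr(\rho\log\mca G(\rho)) = \tr(\mca G(\rho)\log\mca G(\rho)) = -S(\mca G(\rho))$.

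First I would reduce to $\sigma_G$ full rank: if $\sigma_G$ is a pure Gaussian state, then either $\rho$ is not supported on its range, so $D(\rho\|\sigma_G)=+\infty$ and $\sigma_G$ is irrelevant to the minimum, or $\rho = \sigma_G$ is itself pure Gaussian, in which case $\mca G(\rho)=\rho$ (Corollary~\ref{cor:purityInvariance}) and every term in the claimed chain vanishes. When $\mca G(\rho)$ has covariance eigenvalues $\pm1$, the corresponding ``frozen'' modes are automatically frozen in $\rho$ as well (since $i\tilde\gamma_{2j-1}\tilde\gamma_{2j}$ has spectrum $\{\pm1\}$ and its $\rho$-expectation is $\pm1$), so one splits them off as a common pure tensor factor and works with the genuinely mixed remainder, on which $\log\mca G(\rho)$ is well defined; any $\sigma_G$ not matching these pinned modes contributes $+\infty$.

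Next, for any such $\sigma_G$ I would compute
\begin{align*}
D(\rho\|\sigma_G) - D(\rho\|\mca G(\rho))
&= -\tr(\rho\log\sigma_G) + \tr(\rho\log\mca G(\rho)) \\
&= -\tr(\mca G(\rho)\log\sigma_G) + \tr(\mca G(\rho)\log\mca G(\rho))
= D(\mca G(\rho)\|\sigma_G),
\end{align*}
using in the second line that $\log\sigma_G$ and $\log\mca G(\rho)$ are affine-quadratic, so $\rho$ may be replaced by $\mca G(\rho)$ inside both traces. Setting $\sigma_G = \mca G(\rho)$ and recalling $D(\rho\|\tau) = -S(\rho) - \tr(\rho\log\tau)$ gives $D(\rho\|\mca G(\rho)) = -S(\rho) + S(\mca G(\rho))$. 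Finally, by nonnegativity and strict faithfulness of relative entropy, $D(\mca G(\rho)\|\sigma_G) \ge 0$ with equality iff $\sigma_G = \mca G(\rho)$, so the minimum over $\textbf{Gauss}$ is attained at $\mca G(\rho)$ and equals $D(\rho\|\mca G(\rho)) = S(\mca G(\rho)) - S(\rho)$.

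The algebra above is routine; the only genuine subtlety — and the step I would be most careful about — is the support/rank bookkeeping for pure and partially pure Gaussian states, ensuring the logarithms are well defined and that the affine-quadratic substitution $\rho\mapsto\mca G(\rho)$ inside the traces is legitimate on the relevant support. Everything else is a consequence of covariance preservation under Gaussification (already proved via the CLT) and the defining exponential form of Gaussian states.
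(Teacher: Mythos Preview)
Your proposal is correct and follows essentially the same route as the paper: both establish the Pythagorean identity $D(\rho\|\sigma_G)-D(\rho\|\mca G(\rho))=D(\mca G(\rho)\|\sigma_G)$ by using that $\log\sigma_G$ and $\log\mca G(\rho)$ are affine-quadratic in the Majorana operators and that $\rho$ and $\mca G(\rho)$ share the same covariance, then conclude via nonnegativity of relative entropy. If anything, you are more careful than the paper about the support and rank bookkeeping for pure or partially pure Gaussian states; the paper's proof simply writes the computation and leaves those edge cases implicit.
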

\begin{proof}
    For every state $\rho$, $\mca G(\rho)$ has the same covariance  as $\rho$. Then 
    \begin{align*}
       & D(\rho||\sigma_G)-D(\rho||\mca G(\rho))\\
        =&\Tr{\rho\log\rho}-\Tr{\rho\log\sigma_G}
        -\Tr{\rho\log\rho}+\Tr{\rho\log\mca G(\rho)}\\
        =&\Tr{\rho\log\mca G(\rho)}-\Tr{\rho\log\sigma_G}\\
        =&\Tr{\mca G(\rho)\log\mca G(\rho)}-\Tr{\mca G(\rho)\log\sigma_G}\\
        =&D(\mca G(\rho)||\sigma_G),
    \end{align*}
    where the second to the last equality comes from the fact that states
    $\rho$ and $\mca G(\rho)$ have the same covariance, and both $\mca G(\rho), \sigma_G$ have the  quadratic structure $\exp(\gamma^T h\gamma)$. 
\end{proof}

We obtain the following corollary based on this theorem and $\mca G(\boxtimes^k\rho)=\mca G(\rho)$ from Lemma~\ref{lem:comm_GL}: 
\begin{corollary}
       For any fermionic state $\rho$, we have
\begin{align}
\min_{\sigma_G\in \textbf{Gauss} }D(\boxtimes^k\rho||\sigma_G)
=D(\boxtimes^k\rho||\mca G(\rho))
=S(\mca G(\rho))-S(\boxtimes^k\rho),
\end{align}
for any integer $k$, 
where $\textbf{Gauss}$ denotes the set of all fermionic Gaussian states.

\end{corollary}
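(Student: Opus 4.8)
The plan is to derive this directly from Proposition~\ref{prop:ent_cha} (the theorem proved just above), applied with $\boxtimes^k\rho$ in the role of $\rho$. Two preliminary observations are needed: that $\boxtimes^k\rho$ is again an even state, so that Proposition~\ref{prop:ent_cha} applies to it, and that $\mca G(\boxtimes^k\rho)=\mca G(\rho)$.

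First I would check evenness. The convolution unitary $W_\theta$ is a Gaussian unitary and in particular even; conjugating an even state by an even unitary produces an even state, and taking the partial trace of an even operator over one tensor factor leaves an even operator (the even/odd grading factorizes across the tensor product). Hence $\rho\boxtimes\sigma$ is even whenever $\rho$ and $\sigma$ are, and by induction $\boxtimes^k\rho$ is even for every nonnegative integer $k$.

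Next I would establish $\mca G(\boxtimes^k\rho)=\mca G(\rho)$. Taking $\sigma=\rho$ in Lemma~\ref{lem:comm_GL} gives $\mca G(\rho\boxtimes\rho)=\mca G(\rho)$, i.e.\ $\mca G(\boxtimes^{1}\rho)=\mca G(\rho)$. Iterating, $\mca G(\boxtimes^{k+1}\rho)=\mca G\big((\boxtimes^k\rho)\boxtimes(\boxtimes^k\rho)\big)=\mca G(\boxtimes^k\rho)$, so by induction $\mca G(\boxtimes^k\rho)=\mca G(\rho)$ for all $k\ge 0$; equivalently, Gaussification depends only on the quadratic cumulants, and convolution preserves the quadratic cumulants.

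Finally I would apply Proposition~\ref{prop:ent_cha} to the even state $\boxtimes^k\rho$, obtaining
\[
\min_{\sigma_G\in\textbf{Gauss}}D(\boxtimes^k\rho\,\|\,\sigma_G)
=D\big(\boxtimes^k\rho\,\|\,\mca G(\boxtimes^k\rho)\big)
=S\big(\mca G(\boxtimes^k\rho)\big)-S(\boxtimes^k\rho),
\]
and then substitute $\mca G(\boxtimes^k\rho)=\mca G(\rho)$ from the previous step to reach the stated identity. I do not expect any genuine obstacle: this is a direct corollary, and the only mildly delicate points are the two bookkeeping steps above, both of which are routine given the results already available.
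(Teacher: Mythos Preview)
Your proposal is correct and matches the paper's approach exactly: the paper states the corollary follows from the preceding theorem together with $\mca G(\boxtimes^k\rho)=\mca G(\rho)$ from Lemma~\ref{lem:comm_GL}, which is precisely what you do (with the extra bookkeeping check of evenness that the paper leaves implicit).
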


    

\section{Gaussian unitary testing}
\label{app:gaussianUnitaryTest}
In this section, we use the Choi–Jamiołkowski isomorphism to extend convolution results from states to unitary channels. The Choi–Jamiołkowski isomorphism in the 
context of the fermion algebra has first been considered in~\cite{bravyi2004lagrangian}, 
in which it was shown that the Fourier representation of completely-positive 
trace-preserving Gaussian channels have a Lagrangian integral representation 
in the Grassmann algebra. Here, we characterize the conditions for the Choi 
state of a unitary channel to be fermionic Gaussian. 
By doing this, we are able to prove that a unitary channel is 
Gaussian iff it is even and has a fermionic Gaussian Choi state. 

The fermionic maximally-entangled state $\rho_I\in \Cl_{4n}$ on $n+n$ qubits is (\cite{bravyi2004lagrangian}, Definition $6$)
\begin{eqnarray}
    \rho_I = \df 1 {2^{2n}} \prod_{j=1}^{2n} (1 + i\gamma_j \gamma_{2n+j}). 
\end{eqnarray}
Note that this is not the usual maximally-entangled state $\sum_{x\in \{0, 1\}^n} |x\ra \ot |x\ra$, which is not fermionic Gaussian. 
Assign Grassmann variables $\eta_1, \cdots, \eta_{2n}$ to the first $n$ and $\tilde \eta_1, \cdots, \tilde \eta_{2n}$ to the last $n$ modes, the Fourier representation is 
\leqalign{eq:maxEntangledState}{
    \Xi_{\rho_I}(\eta, \tilde \eta) 
    = \prod_{j=1}^{2n} \exp(i\eta_j\eta_{2n+j}) 
    = \exp\left(i \eta^T \tilde \eta\right). 
}
First note that the maximally entangled state expression~\eqref{eq:maxEntangledState} has a Gaussian expression since 
\begin{eqnarray}
    \eta^T \tilde \eta = \df 1 2 \begin{pmatrix}
        \eta \\ \tilde \eta 
    \end{pmatrix}^T \begin{pmatrix}
        0 & 1_{2n\times 2n} \\ -1_{2n\times 2n} & 0 
    \end{pmatrix} \begin{pmatrix}
        \eta \\ \tilde \eta 
    \end{pmatrix}. 
\end{eqnarray}
The Choi state of a channel $\Phi:\Cl_{2n}\to \Cl_{2n}$ is $\Phi$ applied to one half of the maximally entangled state. 
\leqalign{def:choiState}{
    \rho_{\Phi} = (\Phi \ot 1_{\Cl_{2n}\to \Cl_{2n}})\rho_I \in \Cl_{4n}. 
}
Since we restrict our attention to unitary channels, denote the Choi state of the unitary channel $\Phi_U$ by $\rho_U$ instead of $\rho_{\Phi_U}$. We begin with a lemma which constrains the real linear transforms of generators available by unitary conjugation. 

\begin{lemma}
    \label{lem:linearOrthogonalConstraint}
    Given a unitary $U\in \Cl_{2n}$ whose conjugate action is a real linear transform of the generators 
     $
            U\gamma_jU^\dag = \sum_{k=1}^{2n} A_{jk}\gamma_k
    $, then $A$ is an orthogonal matrix satisfying $A^TA=I$. 

    \begin{proof}
        Unitary conjugation effects an automorphism of the algebra, so it must preserve the 
        Clifford relation~\eqref{eq:cliffordRelationApp}:  
        \malign{
            2\delta_{jk} 
            &= \{\sum_a A_{ja} \gamma_a, \sum_b A_{kb} \gamma_b\} 
            = \sum_{ab} A_{ja}A_{kb} \{\gamma_a, \gamma_b\}  \\ 
            &= \sum_{ab} A_{ja}A_{kb} 2\delta_{ab} 
            = 2 \sum_a A_{ja} A_{ka},  
        }
        where  $ \delta_{jk} = (A^TA)_{jk}$ implies that $A\in O(2n)$.
    \end{proof}
\end{lemma}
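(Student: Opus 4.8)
The plan is to exploit the fact that conjugation by a unitary, $\mathrm{Ad}_U:A\mapsto UAU^\dag$, is a $*$-automorphism of the Clifford algebra $\Cl_{2n}$, hence in particular preserves the defining anticommutation relations $\{\gamma_j,\gamma_k\}=2\delta_{jk}$. Applying $\mathrm{Ad}_U$ to these relations and inserting the hypothesis $U\gamma_jU^\dag=\sum_k A_{jk}\gamma_k$ reduces the entire statement to a short computation:
\malign{
    2\delta_{jk} &= U\{\gamma_j,\gamma_k\}U^\dag = \{\,U\gamma_jU^\dag,\ U\gamma_kU^\dag\,\} \\
    &= \sum_{a,b=1}^{2n} A_{ja}A_{kb}\,\{\gamma_a,\gamma_b\} = 2\sum_{a=1}^{2n} A_{ja}A_{ka} = 2\,(AA^T)_{jk},
}
so $AA^T=I_{2n}$. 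Since $A$ is a square real matrix, a right inverse is automatically a left inverse, giving $A^TA=I_{2n}$, i.e.\ $A\in O(2n)$, which is the claim.

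Two routine points deserve a sentence but are not obstacles. First, one should note that $A$ is real: $U\gamma_jU^\dag$ is self-adjoint because each $\gamma_j$ is self-adjoint and $U$ is unitary, and the expansion of a self-adjoint element in the self-adjoint orthonormal basis $\{\gamma_k\}$ must have real coefficients (in the phrasing of the lemma reality of $A$ is already part of the hypothesis, so this is only a consistency check). Second, the passage $AA^T=I\Rightarrow A^TA=I$ uses nothing beyond finite-dimensional linear algebra; one does not need the exponential parametrization $R=\exp(2h)$ or any spectral input. Note also that the argument yields $O(2n)$, not $SO(2n)$: a unitary such as a single Majorana $\gamma_1$ also has real linear conjugate action but with determinant $-1$, and the determinant-$+1$ constraint is exactly what singles out genuine Gaussian unitaries of the form~\eqref{def:GaussianUnitary}.

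I do not anticipate a real obstacle here; the substantive content is the single observation that $\mathrm{Ad}_U$ is an algebra automorphism, after which the rest is bookkeeping. The only things to be careful with are keeping the plain transpose (rather than the conjugate transpose) in $AA^T=I$ — legitimate precisely because the entries are real — and not overstating the conclusion to $SO(2n)$. This lemma then feeds into the characterization of which unitary channels have a fermionic Gaussian Choi state by pinning down the admissible real linear actions on the generators.
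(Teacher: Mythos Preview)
Your proof is correct and follows essentially the same approach as the paper: both apply the conjugation automorphism to the Clifford anticommutation relation and read off the orthogonality of $A$ from $\sum_a A_{ja}A_{ka}=\delta_{jk}$. Your additional remarks on the reality of $A$ and the $O(2n)$ versus $SO(2n)$ distinction are accurate and anticipate exactly what the paper discusses immediately after the lemma.
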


Lemma~\eqref{lem:linearOrthogonalConstraint} does not establish that $U$ is Gaussian: Gaussian unitaries biject with $SO(2n,\real)$, and there exists non-Gaussian unitaries which correspond to an orthongonal matrix $O$ with $\det O=-1$. 
A simple example is $U=\gamma_1$ whose corresponding $2n\times 2n$ orthogonal transformation 
$O$ is $\mrm{diag}([1,-1,-1,...,-1])$ with $\det O=-1$.

\begin{lemma}[Characterization of Gaussian Choi states]
    \label{lem:gaussianChoiCharacterization}
    For any unitary channel 
    $ \Phi_U: \Cl_{2n}\to \Cl_{2n}, \quad \Phi(A) = UAU^\dag$, i.e., has the following form
    the Fourier transform of the Choi state is Gaussian
    \leqalign{eq:unitaryChannelFourier}{
        \Xi_{\rho_U}(\eta, \tilde \eta) = \exp \left[\df i 2 \begin{pmatrix}
            \eta \\ \tilde \eta 
        \end{pmatrix}^T \begin{pmatrix}
            0 & O^T \\ -O & 0 
        \end{pmatrix} \begin{pmatrix}
            \eta \\ \tilde \eta 
        \end{pmatrix}
        \right], \quad OO^T = I,
    }
    if and only if $U$ effects an orthogonal transform $\eta \mapsto O\eta$. 

    \begin{proof}
        Given a $n$-qubit Gaussian $U$ with rotation $R$, the channel $\Phi_U\ot 1_{\Cl_{2n}\to \Cl_{2n}}$ 
        applies $R$ to $\eta$ and leaves $\tilde \eta$ unchanged. Substituting this yields the expression for a Gaussian state~\eqref{eq:unitaryChannelFourier}: 
        \[ 
            \Xi_{\rho_U}(\eta, \tilde \eta) 
            = \Xi_{\rho_I}(R\eta, \tilde \eta) 
            = \exp \left(i \eta^T R^T\tilde \eta\right).
        \]
        Conversely, given a Gaussian state $\Phi_U$ on $2n$ qubits with generators $\eta, \tilde \eta$, 
        \leqalign{eq:gaussian2qubit}{
            \Xi_{\Phi_U}(\eta, \tilde \eta) 
            = \exp \left[\df i 2 \begin{pmatrix}
                    \eta \\ \tilde \eta 
                \end{pmatrix}^T M \begin{pmatrix}
                    \eta \\ \tilde \eta 
                \end{pmatrix}\right], \quad 
            M = \begin{pmatrix}
                A & B^T \\ -B & D 
            \end{pmatrix}, \quad 
            A, D \text{ antisymmetric}, \quad B_{jk}\in \R. 
        }
        Since $\Phi_U$ is a Choi state derived from $\rho_I$ according to equation~\eqref{eq:maxEntangledState}, 
        the Fourier expression satisfies 
        \[
            \Xi_{\Phi_U}(\eta, \tilde \eta) = \Xi_{\rho_I}(\varphi(\eta), \tilde \eta),
        \] 
        for some transformation $\varphi$ corresponding to the unitary action of $U$. The Gaussian expression~\eqref{eq:gaussian2qubit} further constrains $\varphi(\eta)=O\eta$ for some $2n\times 2n$ real matrix $O$, and invoking Lemma~\ref{lem:linearOrthogonalConstraint} shows that $O$ must be an orthogonal matrix. Substituting this simplifies expression~\eqref{eq:gaussian2qubit} to the Gaussian expression~\eqref{eq:unitaryChannelFourier}. 
    \end{proof}
\end{lemma}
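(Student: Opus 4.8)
The plan is to prove both implications entirely on the Grassmann (Fourier) side. The starting point is that the Choi state is obtained from the fermionic maximally-entangled state $\rho_I$, whose Fourier transform is $\Xi_{\rho_I}(\eta,\tilde\eta)=\exp(i\,\eta^T\tilde\eta)$ by~\eqref{eq:maxEntangledState}, and that the map $\Phi_U\otimes 1_{\Cl_{2n}\to\Cl_{2n}}$ leaves the second register (the $\tilde\eta$ variables) untouched.

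For the direction ``$U$ effects an orthogonal transform $\eta\mapsto O\eta$'' $\Rightarrow$ ``$\rho_U$ is Gaussian of the stated form'', I would first observe that the substitution identity~\eqref{eq:gaussianUnitaryGrassmannAction}, $\Xi_{UAU^\dag}(\eta)=\Xi_A(O\eta)$, was derived using only the linearity of $U\gamma_jU^\dag=\sum_k O_{jk}\gamma_k$ on the generators and never used $\det O=1$, so it applies verbatim here. Then applying $\Phi_U\otimes 1$ to $\rho_I$ is just the substitution $\eta\mapsto O\eta$ in the first block of $\Xi_{\rho_I}$, giving $\Xi_{\rho_U}(\eta,\tilde\eta)=\exp(i\,(O\eta)^T\tilde\eta)=\exp(i\,\eta^T O^T\tilde\eta)$, which is precisely the quadratic exponential in~\eqref{eq:unitaryChannelFourier}, with $OO^T=I$ supplied by the hypothesis (or by Lemma~\ref{lem:linearOrthogonalConstraint}).

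For the converse, suppose $\rho_U$ is Gaussian. I would pin down the special block structure of its covariance from two facts. First, both marginals of $\rho_U$ are the maximally mixed state: $\tr_2\rho_U=U(\tr_2\rho_I)U^\dag=I/2^n$ since $U$ acts only on the first register, and $\tr_1\rho_U=\tr_1\rho_I=I/2^n$ by cyclicity of the partial trace; as the diagonal blocks of $\Sigma_{\rho_U}$ are exactly the covariance matrices of these marginals, they vanish, forcing $\Sigma_{\rho_U}=\begin{pmatrix}0&O^T\\-O&0\end{pmatrix}$ for some real matrix $O$. Second, the Choi state of a unitary channel is pure — $\rho_I$ is a rank-one projector and $(U\otimes1)(\cdot)(U^\dag\otimes1)$ preserves purity — so $\Sigma_{\rho_U}^T\Sigma_{\rho_U}=I$ by Lemma~\ref{thm:gaussianStateCharacterization}, which yields $O^TO=OO^T=I$. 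Combining, $\Xi_{\rho_U}(\eta,\tilde\eta)=\exp(i\,\eta^T O^T\tilde\eta)=\Xi_{\rho_I}(O\eta,\tilde\eta)$; since the second register is untouched by the channel, this equality says exactly $\Xi_{\Phi_U(A)}(\eta)=\Xi_A(O\eta)$, i.e. $U\gamma_jU^\dag=\sum_k O_{jk}\gamma_k$ by comparison with~\eqref{eq:gaussianUnitaryGrassmannAction}, and it recovers the form~\eqref{eq:unitaryChannelFourier}. Alternatively, once linearity of $U\gamma_jU^\dag$ in the generators is known, orthogonality follows directly from Lemma~\ref{lem:linearOrthogonalConstraint}.

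I expect the main obstacle to be the step in the converse that upgrades ``$\Xi_{\rho_U}$ is Gaussian'' to ``$\Xi_{\rho_U}$ equals $\Xi_{\rho_I}$ with its first-register generators linearly substituted''. Concretely, one must show that a Gaussian state on the doubled $2n$-qubit system whose second marginal is maximally mixed is necessarily of the Choi form $\Xi_{\rho_I}(O\eta,\tilde\eta)$, and that the a-priori formal relabeling $\eta\mapsto O\eta$ is genuinely realized by the algebra automorphism induced by $U$ — which is precisely where orthogonality of $O$ is indispensable and where Lemma~\ref{lem:linearOrthogonalConstraint} carries the weight. A secondary, purely bookkeeping point is reconciling the explicit $i$ in $\Xi_{\rho_I}=\exp(i\,\eta^T\tilde\eta)$ with the covariance normalization in Lemma~\ref{thm:gaussianStateCharacterization}, so that the block matrix in~\eqref{eq:unitaryChannelFourier} comes out exactly as written.
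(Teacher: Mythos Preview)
Your proposal is correct. The forward direction matches the paper's; you are slightly more careful in noting that the substitution identity~\eqref{eq:gaussianUnitaryGrassmannAction} applies to any orthogonal $O$, not only rotations, which the statement as written actually requires.

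For the converse you take a genuinely different route. The paper asserts directly that the Choi state's Fourier transform must have the form $\Xi_{\rho_I}(\varphi(\eta),\tilde\eta)$ for ``some transformation $\varphi$'' induced by $U$, then argues that Gaussianity forces $\varphi$ to be linear and invokes Lemma~\ref{lem:linearOrthogonalConstraint} to make it orthogonal. You instead work structurally on the covariance matrix: the two marginals of $\rho_U$ are maximally mixed, which kills the diagonal blocks of $\Sigma_{\rho_U}$; purity of the Choi state together with $\Sigma^T\Sigma=I$ from Lemma~\ref{thm:gaussianStateCharacterization} then forces $OO^T=I$. Your argument is more explicit about where each constraint comes from and avoids the somewhat opaque ``$\varphi$'' step, at the cost of an extra line computing the marginals. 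Both approaches share, as you correctly flag, the final step of reading the action of $U$ on single generators back from the Choi state; this is essentially injectivity of the fermionic Choi--Jamio\l{}kowski map, which neither proof spells out in full detail.
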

Gaussian unitaries biject with rotations $SO(2n, \R)\subsetneq O(2n)$, so to perform a Gaussian unitary channel test we only need to distinguish between $\det O = \pm 1$. To do so, we need the following lemma. 
\begin{lemma}[Special orthogonality criterion]
    \label{lem:specialOrthogonalityCriterion}
    Given $U\in \Cl_{2n}$ which effects an orthogonal transform $\gamma\mapsto O\gamma$, 
    then $U$ is Gaussian iff it is even. 
    \begin{proof}
       Consider the pure diagonal Gaussian 
       \malign{
            \rho_0 
            = (|0\ra \la 0|)^{\ot n} = \df 1 {2^n} \bigotimes_{j=1}^n (1 + Z_j) 
            = \df 1 {2^n} \bigotimes_{j=1}^n (1 - i\gamma_{2j-1}\gamma_{2j}),
       }
       and the corresponding momoent generating operator
        \malign{
            \Xi_{\rho_0}(\eta)
            &= \bigotimes_{j=1}^n (1 - i\eta_{2j-1}\eta_{2j})  
            = \exp \left(-i \sum_{j=1}^n \eta_{2j-1}\eta_{2j}\right) \\ 
            &= \exp\left(\df i 2 \eta^T h \eta\right), \quad \text{where}\quad
            h = -\bigoplus_{j=1}^n \begin{pmatrix}
                0 & -1 \\ 1 & 0 
            \end{pmatrix}. 
       }
       Applying Wick's theorem (Lemma~\ref{cor:wick}) and the Pfaffian property $\Pf(A^TBA) = \det(A)\Pf(B)$, we obtain 
       \malign{
            \Tr{\gamma_{[2n]}^\dag \rho_0} = (\rho_0)_{[2n]} = \mrm{Pf}(h),
       }
       and
       \malign{
            \Tr{\gamma_{[2n]}^\dag U\rho_0 U^\dag} 
            =(U\rho_0 U^\dag)_{[2n]} = \Pf(O^ThO) = \det(O)\Pf(h) 
            = \det(O) \Tr{\gamma_{[2n]}^\dag \rho_0}.
       }
        Note that $\gamma_{[2n]}^\dag \propto Z^{\ot n}$ and $\rho_0$ is nontrivial in the $\gamma_{[2n]}$ component, then 
        \malign{
            \det O = 1 
            &\iff \tr \left[(U^\dag \gamma_{[2n]}^\dag U) \rho_0\right] = \tr(\gamma_{[2n]}^\dag \rho_0) \\ 
            &\iff [U, \gamma^\dag_{[2n]}] = 0 \iff [U, Z^{\ot n}] = 0.
        }
        This establishes that given $O$ orthogonal, $U$ is Gaussian iff $ O$ is a rotation iff $ \det O=1$ iff $ U$ is even. 
    \end{proof}
\end{lemma}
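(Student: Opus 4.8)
The plan is to exploit that fermionic Gaussian unitaries are in bijection with the \emph{rotation} subgroup $SO(2n)\subsetneq O(2n)$ via $R=\exp(2h)$, so the only obstruction to a generator-linear unitary $U$ being Gaussian is whether its orthogonal matrix $O$ has $\det O=+1$ or $\det O=-1$. I will then show that evenness detects exactly this sign through the action of $U$ on the parity element $\gamma_{[2n]}$, which is proportional to $Z^{\ot n}$. The easy direction is immediate: if $U=\exp(\tfrac12\gamma^Th\gamma)$ is Gaussian it is a power series in the even element $\tfrac12\gamma^Th\gamma$, hence even (equivalently, its conjugate action is $\exp(2h)\in SO(2n)$, so $\det O=1$).

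For the rest, the key identity is
\[
    U\,\gamma_{[2n]}\,U^\dag=\det(O)\,\gamma_{[2n]}.
\]
I would establish it as follows. Since $O$ is orthogonal (Lemma~\ref{lem:linearOrthogonalConstraint}), the transformed operators $\tilde\gamma_j:=\sum_k O_{jk}\gamma_k$ again satisfy $\{\tilde\gamma_j,\tilde\gamma_k\}=2\delta_{jk}$, so $\tilde\gamma_1\cdots\tilde\gamma_{2n}=U\gamma_{[2n]}U^\dag$ anticommutes with every $\tilde\gamma_j$, hence with every $\gamma_j$; because the subspace of $\Cl_{2n}\cong M_{2^n}(\C)$ anticommuting with all generators is one-dimensional and spanned by $\gamma_{[2n]}$, we get $U\gamma_{[2n]}U^\dag=c\,\gamma_{[2n]}$ for a scalar $c$, and taking the Hilbert--Schmidt inner product with $\gamma_{[2n]}$ and keeping only the distinct-index terms of $\prod_j\bigl(\sum_k O_{jk}\gamma_k\bigr)$ identifies $c$ with the Leibniz expansion of $\det(O)$. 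Equivalently, this is the Wick/Pfaffian computation for the reference pure Gaussian $\rho_0=(\proj 0)^{\ot n}$: Lemma~\ref{cor:wick} together with $\Pf(A^TBA)=\det(A)\Pf(B)$ gives $(U\rho_0U^\dag)_{[2n]}=\det(O)\,(\rho_0)_{[2n]}$ with $(\rho_0)_{[2n]}=\Pf(\Sigma_{\rho_0})\neq0$.

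Given the identity, I would conclude as follows. By Proposition~\ref{prp:evenOpCharacterization} and $Z^{\ot n}=(-i)^n\gamma_{[2n]}$, $U$ is even iff $[U,\gamma_{[2n]}]=0$ iff $U\gamma_{[2n]}U^\dag=\gamma_{[2n]}$, and since $\det O\in\{\pm1\}$ this is equivalent to $\det O=+1$, i.e.\ $O\in SO(2n)$. Finally, if $O\in SO(2n)$, surjectivity of $\exp:\mathfrak{so}(2n)\to SO(2n)$ yields a real antisymmetric $h$ with $\exp(2h)=O$; the Gaussian unitary $V=\exp(\tfrac12\gamma^Th\gamma)$ then has the same conjugate action on generators as $U$ by~\eqref{eq:unitaryEffect}, so $UV^\dag$ fixes every $\gamma_j$ under conjugation, hence commutes with all of $\Cl_{2n}$, hence is a scalar $c\,\mathbb 1$ with $|c|=1$; therefore $U=cV$ is Gaussian.

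The step I expect to be the main obstacle is the displayed identity --- specifically, justifying rigorously that conjugation by an orthogonal transform sends the top-degree element to \emph{exactly} $\det(O)$ times itself with no lower-degree contamination; the Wick/Pfaffian route (via a fixed reference Gaussian state whose $\gamma_{[2n]}$-moment is known to be nonzero) is the cleanest way to sidestep the bare combinatorics of the expansion.
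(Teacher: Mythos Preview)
Your proposal is correct and essentially matches the paper's proof: the paper also computes the $\gamma_{[2n]}$-moment of $U\rho_0 U^\dag$ via Wick's theorem and the Pfaffian identity $\Pf(O^ThO)=\det(O)\Pf(h)$ on the reference state $\rho_0=(\proj 0)^{\ot n}$, then uses $\gamma_{[2n]}\propto Z^{\ot n}$ and Proposition~\ref{prp:evenOpCharacterization} to conclude $\det O=1\iff U$ even.

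Two minor remarks. First, your direct algebraic route (a) for the identity $U\gamma_{[2n]}U^\dag=\det(O)\,\gamma_{[2n]}$ is a clean alternative the paper does not spell out; it bypasses the reference state entirely and makes the ``no lower-degree contamination'' concern moot, since the anticommutation argument pins down $U\gamma_{[2n]}U^\dag$ as a scalar multiple of $\gamma_{[2n]}$ before any expansion. Second, your step~4 (constructing $V$ via surjectivity of $\exp$ and arguing $UV^\dag$ is central) is more explicit than the paper, which simply invokes the bijection between Gaussian unitaries and $SO(2n)$ stated before the lemma; note that your argument yields $U=cV$ only up to a global phase $|c|=1$, which is harmless for the channel-testing application in Section~\ref{app:gaussianUnitaryTest} but strictly speaking places $U$ in the double cover $\mathrm{Spin}(2n)$ rather than the image of the exponential map per se.
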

\begin{theorem}[Gaussian unitary channel test]
    A unitary channel $\mca E_U$ is Gaussian if and only if 
    \begin{enumerate}
        \item the Choi state $\rho_U$ passes the fermionic Gaussian state test, and 
        \item $U$ passes the even unitary test in Lemma~\ref{lem:evenUnitaryTest}. 
    \end{enumerate}
    \begin{proof}
        If $U$ is Gaussian, then $\rho_U$ is Gaussian by Lemma~\ref{lem:gaussianChoiCharacterization} and even by definition. Conversely, by Lemma~\ref{lem:gaussianChoiCharacterization} the Choi state $\rho_U$ is Gaussian iff $U$ effects an orthogonal transform of the generators, then Lemma~\ref{lem:specialOrthogonalityCriterion} establishes that $U$ is Gaussian iff it additionally passes the even unitary test. 
    \end{proof}
\end{theorem}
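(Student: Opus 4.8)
The plan is to chain together the two structural lemmas just established --- the characterization of Gaussian Choi states (Lemma~\ref{lem:gaussianChoiCharacterization}) and the special orthogonality criterion (Lemma~\ref{lem:specialOrthogonalityCriterion}) --- together with the earlier fermionic Gaussian state test, while carefully tracking purity and evenness of the Choi state $\rho_U$.

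For the ``only if'' direction, suppose $U$ is Gaussian. Its conjugate action is a rotation $\gamma \mapsto R\gamma$ with $R \in SO(2n) \subset O(2n)$, so Lemma~\ref{lem:gaussianChoiCharacterization} gives that $\rho_U$ is a fermionic Gaussian state. Furthermore $\rho_U = (U \otimes 1)\rho_I (U \otimes 1)^{\dagger}$ is pure, being a unitary conjugate of the pure maximally entangled state $\rho_I$, and it is even because $U$ and $\rho_I$ are even; hence $\rho_U$ satisfies the hypotheses of the fermionic Gaussian state test and passes it with acceptance probability $1$. Since Gaussian unitaries are even, $U$ also passes the even unitary test of Lemma~\ref{lem:evenUnitaryTest}, which completes this direction.

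For the converse I would run the two tests in a fixed order. First, $U$ passing the even unitary test gives, via Lemma~\ref{lem:evenUnitaryTest}, that $U$ is even, and therefore $\rho_U$ is both pure and even. Now the earlier fermionic Gaussian state test theorem applies to $\rho_U$: an acceptance probability of $1$ forces $\rho_U$ to be a fermionic Gaussian state. By Lemma~\ref{lem:gaussianChoiCharacterization}, a Gaussian $\rho_U$ implies that $U$ effects an orthogonal transform $\gamma \mapsto O\gamma$ with $O^T O = I$. Finally, since $U$ is already known to be even, Lemma~\ref{lem:specialOrthogonalityCriterion} upgrades this to $\det O = 1$, so $O \in SO(2n)$ and $U$ is Gaussian.

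The main --- indeed essentially the only --- obstacle is a matter of logical ordering rather than computation: the Gaussian state test theorem is stated only for \emph{pure even} states, so it must not be invoked on $\rho_U$ until evenness of $U$, hence of $\rho_U$, has been secured, which is exactly why the even unitary test has to be used first in the converse. Purity of $\rho_U$ is immediate but worth stating explicitly. Everything else is a direct application of the cited lemmas, with no new calculation required.
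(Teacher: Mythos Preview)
Your proof is correct and follows the same approach as the paper: both directions chain Lemma~\ref{lem:gaussianChoiCharacterization} and Lemma~\ref{lem:specialOrthogonalityCriterion} through the Gaussian state test and the even unitary test. Your version is actually a bit more careful than the paper's in making explicit that the even unitary test must come first in the converse so that the pure-even hypothesis of the state test is in place before it is invoked.
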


\begin{thebibliography}{73}%
    \makeatletter
    \providecommand \@ifxundefined [1]{%
     \@ifx{#1\undefined}
    }%
    \providecommand \@ifnum [1]{%
     \ifnum #1\expandafter \@firstoftwo
     \else \expandafter \@secondoftwo
     \fi
    }%
    \providecommand \@ifx [1]{%
     \ifx #1\expandafter \@firstoftwo
     \else \expandafter \@secondoftwo
     \fi
    }%
    \providecommand \natexlab [1]{#1}%
    \providecommand \enquote  [1]{``#1''}%
    \providecommand \bibnamefont  [1]{#1}%
    \providecommand \bibfnamefont [1]{#1}%
    \providecommand \citenamefont [1]{#1}%
    \providecommand \href@noop [0]{\@secondoftwo}%
    \providecommand \href [0]{\begingroup \@sanitize@url \@href}%
    \providecommand \@href[1]{\@@startlink{#1}\@@href}%
    \providecommand \@@href[1]{\endgroup#1\@@endlink}%
    \providecommand \@sanitize@url [0]{\catcode `\\12\catcode `\$12\catcode `\&12\catcode `\#12\catcode `\^12\catcode `\_12\catcode `\%12\relax}%
    \providecommand \@@startlink[1]{}%
    \providecommand \@@endlink[0]{}%
    \providecommand \url  [0]{\begingroup\@sanitize@url \@url }%
    \providecommand \@url [1]{\endgroup\@href {#1}{\urlprefix }}%
    \providecommand \urlprefix  [0]{URL }%
    \providecommand \Eprint [0]{\href }%
    \providecommand \doibase [0]{https://doi.org/}%
    \providecommand \selectlanguage [0]{\@gobble}%
    \providecommand \bibinfo  [0]{\@secondoftwo}%
    \providecommand \bibfield  [0]{\@secondoftwo}%
    \providecommand \translation [1]{[#1]}%
    \providecommand \BibitemOpen [0]{}%
    \providecommand \bibitemStop [0]{}%
    \providecommand \bibitemNoStop [0]{.\EOS\space}%
    \providecommand \EOS [0]{\spacefactor3000\relax}%
    \providecommand \BibitemShut  [1]{\csname bibitem#1\endcsname}%
    \let\auto@bib@innerbib\@empty
    \bibitem [{\citenamefont {Gottesman}(1998)}]{gottesman1998heisenberg}%
      \BibitemOpen
      \bibfield  {author} {\bibinfo {author} {\bibfnamefont {D.}~\bibnamefont {Gottesman}},\ }\bibfield  {title} {\bibinfo {title} {The {H}eisenberg representation of quantum computers},\ }in\ \href@noop {} {\emph {\bibinfo {booktitle} {Proc. XXII International Colloquium on Group Theoretical Methods in Physics, 1998}}}\ (\bibinfo {year} {1998})\ pp.\ \bibinfo {pages} {32--43}\BibitemShut {NoStop}%
    \bibitem [{\citenamefont {Bravyi}\ and\ \citenamefont {Kitaev}(2005)}]{Bravyi06}%
      \BibitemOpen
      \bibfield  {author} {\bibinfo {author} {\bibfnamefont {S.}~\bibnamefont {Bravyi}}\ and\ \bibinfo {author} {\bibfnamefont {A.}~\bibnamefont {Kitaev}},\ }\bibfield  {title} {\bibinfo {title} {Universal quantum computation with ideal clifford gates and noisy ancillas},\ }\href {https://doi.org/10.1103/PhysRevA.71.022316} {\bibfield  {journal} {\bibinfo  {journal} {Phys. Rev. A}\ }\textbf {\bibinfo {volume} {71}},\ \bibinfo {pages} {022316} (\bibinfo {year} {2005})}\BibitemShut {NoStop}%
    \bibitem [{\citenamefont {Bravyi}\ and\ \citenamefont {Gosset}(2016)}]{BravyiPRL16}%
      \BibitemOpen
      \bibfield  {author} {\bibinfo {author} {\bibfnamefont {S.}~\bibnamefont {Bravyi}}\ and\ \bibinfo {author} {\bibfnamefont {D.}~\bibnamefont {Gosset}},\ }\bibfield  {title} {\bibinfo {title} {Improved classical simulation of quantum circuits dominated by clifford gates},\ }\href {https://doi.org/10.1103/PhysRevLett.116.250501} {\bibfield  {journal} {\bibinfo  {journal} {Phys. Rev. Lett.}\ }\textbf {\bibinfo {volume} {116}},\ \bibinfo {pages} {250501} (\bibinfo {year} {2016})}\BibitemShut {NoStop}%
    \bibitem [{\citenamefont {Bravyi}\ \emph {et~al.}(2016)\citenamefont {Bravyi}, \citenamefont {Smith},\ and\ \citenamefont {Smolin}}]{BravyiPRX16}%
      \BibitemOpen
      \bibfield  {author} {\bibinfo {author} {\bibfnamefont {S.}~\bibnamefont {Bravyi}}, \bibinfo {author} {\bibfnamefont {G.}~\bibnamefont {Smith}},\ and\ \bibinfo {author} {\bibfnamefont {J.~A.}\ \bibnamefont {Smolin}},\ }\bibfield  {title} {\bibinfo {title} {Trading classical and quantum computational resources},\ }\href {https://doi.org/10.1103/PhysRevX.6.021043} {\bibfield  {journal} {\bibinfo  {journal} {Phys. Rev. X}\ }\textbf {\bibinfo {volume} {6}},\ \bibinfo {pages} {021043} (\bibinfo {year} {2016})}\BibitemShut {NoStop}%
    \bibitem [{\citenamefont {Bravyi}\ \emph {et~al.}(2019)\citenamefont {Bravyi}, \citenamefont {Browne}, \citenamefont {Calpin}, \citenamefont {Campbell}, \citenamefont {Gosset},\ and\ \citenamefont {Howard}}]{bravyi2019simulation}%
      \BibitemOpen
      \bibfield  {author} {\bibinfo {author} {\bibfnamefont {S.}~\bibnamefont {Bravyi}}, \bibinfo {author} {\bibfnamefont {D.}~\bibnamefont {Browne}}, \bibinfo {author} {\bibfnamefont {P.}~\bibnamefont {Calpin}}, \bibinfo {author} {\bibfnamefont {E.}~\bibnamefont {Campbell}}, \bibinfo {author} {\bibfnamefont {D.}~\bibnamefont {Gosset}},\ and\ \bibinfo {author} {\bibfnamefont {M.}~\bibnamefont {Howard}},\ }\bibfield  {title} {\bibinfo {title} {Simulation of quantum circuits by low-rank stabilizer decompositions},\ }\href {https://doi.org/10.22331/q-2019-09-02-181} {\bibfield  {journal} {\bibinfo  {journal} {{Quantum}}\ }\textbf {\bibinfo {volume} {3}},\ \bibinfo {pages} {181} (\bibinfo {year} {2019})}\BibitemShut {NoStop}%
    \bibitem [{\citenamefont {Beverland}\ \emph {et~al.}(2020)\citenamefont {Beverland}, \citenamefont {Campbell}, \citenamefont {Howard},\ and\ \citenamefont {Kliuchnikov}}]{BeverlandQST20}%
      \BibitemOpen
      \bibfield  {author} {\bibinfo {author} {\bibfnamefont {M.}~\bibnamefont {Beverland}}, \bibinfo {author} {\bibfnamefont {E.}~\bibnamefont {Campbell}}, \bibinfo {author} {\bibfnamefont {M.}~\bibnamefont {Howard}},\ and\ \bibinfo {author} {\bibfnamefont {V.}~\bibnamefont {Kliuchnikov}},\ }\bibfield  {title} {\bibinfo {title} {Lower bounds on the non-clifford resources for quantum computations},\ }\href {https://doi.org/10.1088/2058-9565/ab8963} {\bibfield  {journal} {\bibinfo  {journal} {Quantum Sci. Technol.}\ }\textbf {\bibinfo {volume} {5}},\ \bibinfo {pages} {035009} (\bibinfo {year} {2020})}\BibitemShut {NoStop}%
    \bibitem [{\citenamefont {Seddon}\ \emph {et~al.}(2021)\citenamefont {Seddon}, \citenamefont {Regula}, \citenamefont {Pashayan}, \citenamefont {Ouyang},\ and\ \citenamefont {Campbell}}]{SeddonPRXQ21}%
      \BibitemOpen
      \bibfield  {author} {\bibinfo {author} {\bibfnamefont {J.~R.}\ \bibnamefont {Seddon}}, \bibinfo {author} {\bibfnamefont {B.}~\bibnamefont {Regula}}, \bibinfo {author} {\bibfnamefont {H.}~\bibnamefont {Pashayan}}, \bibinfo {author} {\bibfnamefont {Y.}~\bibnamefont {Ouyang}},\ and\ \bibinfo {author} {\bibfnamefont {E.~T.}\ \bibnamefont {Campbell}},\ }\bibfield  {title} {\bibinfo {title} {Quantifying quantum speedups: Improved classical simulation from tighter magic monotones},\ }\href {https://doi.org/10.1103/PRXQuantum.2.010345} {\bibfield  {journal} {\bibinfo  {journal} {PRX Quantum}\ }\textbf {\bibinfo {volume} {2}},\ \bibinfo {pages} {010345} (\bibinfo {year} {2021})}\BibitemShut {NoStop}%
    \bibitem [{\citenamefont {Bu}\ and\ \citenamefont {Koh}(2022)}]{bu2022classical}%
      \BibitemOpen
      \bibfield  {author} {\bibinfo {author} {\bibfnamefont {K.}~\bibnamefont {Bu}}\ and\ \bibinfo {author} {\bibfnamefont {D.~E.}\ \bibnamefont {Koh}},\ }\bibfield  {title} {\bibinfo {title} {Classical simulation of quantum circuits by half {G}auss sums},\ }\href {https://doi.org/10.1007/s00220-022-04320-1} {\bibfield  {journal} {\bibinfo  {journal} {Commun. Math. Phys.}\ }\textbf {\bibinfo {volume} {390}},\ \bibinfo {pages} {471} (\bibinfo {year} {2022})}\BibitemShut {NoStop}%
    \bibitem [{\citenamefont {Bu}\ and\ \citenamefont {Koh}(2019)}]{Bu19}%
      \BibitemOpen
      \bibfield  {author} {\bibinfo {author} {\bibfnamefont {K.}~\bibnamefont {Bu}}\ and\ \bibinfo {author} {\bibfnamefont {D.~E.}\ \bibnamefont {Koh}},\ }\bibfield  {title} {\bibinfo {title} {Efficient classical simulation of {C}lifford circuits with nonstabilizer input states},\ }\href {https://doi.org/10.1103/PhysRevLett.123.170502} {\bibfield  {journal} {\bibinfo  {journal} {Phys. Rev. Lett.}\ }\textbf {\bibinfo {volume} {123}},\ \bibinfo {pages} {170502} (\bibinfo {year} {2019})}\BibitemShut {NoStop}%
    \bibitem [{\citenamefont {Koh}(2017)}]{koh2015further}%
      \BibitemOpen
      \bibfield  {author} {\bibinfo {author} {\bibfnamefont {D.~E.}\ \bibnamefont {Koh}},\ }\bibfield  {title} {\bibinfo {title} {Further extensions of {C}lifford circuits and their classical simulation complexities},\ }\href {https://doi.org/10.26421/QIC17.3-4} {\bibfield  {journal} {\bibinfo  {journal} {Quantum Information \& Computation}\ }\textbf {\bibinfo {volume} {17}},\ \bibinfo {pages} {0262} (\bibinfo {year} {2017})}\BibitemShut {NoStop}%
    \bibitem [{\citenamefont {Weedbrook}\ \emph {et~al.}(2012)\citenamefont {Weedbrook}, \citenamefont {Pirandola}, \citenamefont {Garc\'{\i}a-Patr\'on}, \citenamefont {Cerf}, \citenamefont {Ralph}, \citenamefont {Shapiro},\ and\ \citenamefont {Lloyd}}]{SethRMP12}%
      \BibitemOpen
      \bibfield  {author} {\bibinfo {author} {\bibfnamefont {C.}~\bibnamefont {Weedbrook}}, \bibinfo {author} {\bibfnamefont {S.}~\bibnamefont {Pirandola}}, \bibinfo {author} {\bibfnamefont {R.}~\bibnamefont {Garc\'{\i}a-Patr\'on}}, \bibinfo {author} {\bibfnamefont {N.~J.}\ \bibnamefont {Cerf}}, \bibinfo {author} {\bibfnamefont {T.~C.}\ \bibnamefont {Ralph}}, \bibinfo {author} {\bibfnamefont {J.~H.}\ \bibnamefont {Shapiro}},\ and\ \bibinfo {author} {\bibfnamefont {S.}~\bibnamefont {Lloyd}},\ }\bibfield  {title} {\bibinfo {title} {Gaussian quantum information},\ }\href {https://doi.org/10.1103/RevModPhys.84.621} {\bibfield  {journal} {\bibinfo  {journal} {Rev. Mod. Phys.}\ }\textbf {\bibinfo {volume} {84}},\ \bibinfo {pages} {621} (\bibinfo {year} {2012})}\BibitemShut {NoStop}%
    \bibitem [{\citenamefont {Mari}\ and\ \citenamefont {Eisert}(2012)}]{Mari12}%
      \BibitemOpen
      \bibfield  {author} {\bibinfo {author} {\bibfnamefont {A.}~\bibnamefont {Mari}}\ and\ \bibinfo {author} {\bibfnamefont {J.}~\bibnamefont {Eisert}},\ }\bibfield  {title} {\bibinfo {title} {Positive wigner functions render classical simulation of quantum computation efficient},\ }\href {https://doi.org/10.1103/PhysRevLett.109.230503} {\bibfield  {journal} {\bibinfo  {journal} {Phys. Rev. Lett.}\ }\textbf {\bibinfo {volume} {109}},\ \bibinfo {pages} {230503} (\bibinfo {year} {2012})}\BibitemShut {NoStop}%
    \bibitem [{\citenamefont {Eisert}\ \emph {et~al.}(2002)\citenamefont {Eisert}, \citenamefont {Scheel},\ and\ \citenamefont {Plenio}}]{EisertPRL02}%
      \BibitemOpen
      \bibfield  {author} {\bibinfo {author} {\bibfnamefont {J.}~\bibnamefont {Eisert}}, \bibinfo {author} {\bibfnamefont {S.}~\bibnamefont {Scheel}},\ and\ \bibinfo {author} {\bibfnamefont {M.~B.}\ \bibnamefont {Plenio}},\ }\bibfield  {title} {\bibinfo {title} {Distilling gaussian states with gaussian operations is impossible},\ }\href {https://doi.org/10.1103/PhysRevLett.89.137903} {\bibfield  {journal} {\bibinfo  {journal} {Phys. Rev. Lett.}\ }\textbf {\bibinfo {volume} {89}},\ \bibinfo {pages} {137903} (\bibinfo {year} {2002})}\BibitemShut {NoStop}%
    \bibitem [{\citenamefont {Giedke}\ and\ \citenamefont {Ignacio~Cirac}(2002)}]{Giedke02}%
      \BibitemOpen
      \bibfield  {author} {\bibinfo {author} {\bibfnamefont {G.}~\bibnamefont {Giedke}}\ and\ \bibinfo {author} {\bibfnamefont {J.}~\bibnamefont {Ignacio~Cirac}},\ }\bibfield  {title} {\bibinfo {title} {Characterization of gaussian operations and distillation of gaussian states},\ }\href {https://doi.org/10.1103/PhysRevA.66.032316} {\bibfield  {journal} {\bibinfo  {journal} {Phys. Rev. A}\ }\textbf {\bibinfo {volume} {66}},\ \bibinfo {pages} {032316} (\bibinfo {year} {2002})}\BibitemShut {NoStop}%
    \bibitem [{\citenamefont {Garc\'{\i}a-Patr\'on}\ \emph {et~al.}(2004)\citenamefont {Garc\'{\i}a-Patr\'on}, \citenamefont {Fiur\'a\ifmmode~\check{s}\else \v{s}\fi{}ek}, \citenamefont {Cerf}, \citenamefont {Wenger}, \citenamefont {Tualle-Brouri},\ and\ \citenamefont {Grangier}}]{GarcPRL04}%
      \BibitemOpen
      \bibfield  {author} {\bibinfo {author} {\bibfnamefont {R.}~\bibnamefont {Garc\'{\i}a-Patr\'on}}, \bibinfo {author} {\bibfnamefont {J.}~\bibnamefont {Fiur\'a\ifmmode~\check{s}\else \v{s}\fi{}ek}}, \bibinfo {author} {\bibfnamefont {N.~J.}\ \bibnamefont {Cerf}}, \bibinfo {author} {\bibfnamefont {J.}~\bibnamefont {Wenger}}, \bibinfo {author} {\bibfnamefont {R.}~\bibnamefont {Tualle-Brouri}},\ and\ \bibinfo {author} {\bibfnamefont {P.}~\bibnamefont {Grangier}},\ }\bibfield  {title} {\bibinfo {title} {Proposal for a loophole-free bell test using homodyne detection},\ }\href {https://doi.org/10.1103/PhysRevLett.93.130409} {\bibfield  {journal} {\bibinfo  {journal} {Phys. Rev. Lett.}\ }\textbf {\bibinfo {volume} {93}},\ \bibinfo {pages} {130409} (\bibinfo {year} {2004})}\BibitemShut {NoStop}%
    \bibitem [{\citenamefont {Grosshans}\ and\ \citenamefont {Cerf}(2004)}]{CerfPRL04}%
      \BibitemOpen
      \bibfield  {author} {\bibinfo {author} {\bibfnamefont {F.}~\bibnamefont {Grosshans}}\ and\ \bibinfo {author} {\bibfnamefont {N.~J.}\ \bibnamefont {Cerf}},\ }\bibfield  {title} {\bibinfo {title} {Continuous-variable quantum cryptography is secure against non-gaussian attacks},\ }\href {https://doi.org/10.1103/PhysRevLett.92.047905} {\bibfield  {journal} {\bibinfo  {journal} {Phys. Rev. Lett.}\ }\textbf {\bibinfo {volume} {92}},\ \bibinfo {pages} {047905} (\bibinfo {year} {2004})}\BibitemShut {NoStop}%
    \bibitem [{\citenamefont {Baragiola}\ \emph {et~al.}(2019)\citenamefont {Baragiola}, \citenamefont {Pantaleoni}, \citenamefont {Alexander}, \citenamefont {Karanjai},\ and\ \citenamefont {Menicucci}}]{BaragiolaPRL19}%
      \BibitemOpen
      \bibfield  {author} {\bibinfo {author} {\bibfnamefont {B.~Q.}\ \bibnamefont {Baragiola}}, \bibinfo {author} {\bibfnamefont {G.}~\bibnamefont {Pantaleoni}}, \bibinfo {author} {\bibfnamefont {R.~N.}\ \bibnamefont {Alexander}}, \bibinfo {author} {\bibfnamefont {A.}~\bibnamefont {Karanjai}},\ and\ \bibinfo {author} {\bibfnamefont {N.~C.}\ \bibnamefont {Menicucci}},\ }\bibfield  {title} {\bibinfo {title} {All-gaussian universality and fault tolerance with the gottesman-kitaev-preskill code},\ }\href {https://doi.org/10.1103/PhysRevLett.123.200502} {\bibfield  {journal} {\bibinfo  {journal} {Phys. Rev. Lett.}\ }\textbf {\bibinfo {volume} {123}},\ \bibinfo {pages} {200502} (\bibinfo {year} {2019})}\BibitemShut {NoStop}%
    \bibitem [{\citenamefont {Zheng}\ \emph {et~al.}(2024)\citenamefont {Zheng}, \citenamefont {He}, \citenamefont {Lee},\ and\ \citenamefont {Jiang}}]{JiangPRL24}%
      \BibitemOpen
      \bibfield  {author} {\bibinfo {author} {\bibfnamefont {G.}~\bibnamefont {Zheng}}, \bibinfo {author} {\bibfnamefont {W.}~\bibnamefont {He}}, \bibinfo {author} {\bibfnamefont {G.}~\bibnamefont {Lee}},\ and\ \bibinfo {author} {\bibfnamefont {L.}~\bibnamefont {Jiang}},\ }\bibfield  {title} {\bibinfo {title} {Near-optimal performance of quantum error correction codes},\ }\href {https://doi.org/10.1103/PhysRevLett.132.250602} {\bibfield  {journal} {\bibinfo  {journal} {Phys. Rev. Lett.}\ }\textbf {\bibinfo {volume} {132}},\ \bibinfo {pages} {250602} (\bibinfo {year} {2024})}\BibitemShut {NoStop}%
    \bibitem [{\citenamefont {Gottesman}\ \emph {et~al.}(2001)\citenamefont {Gottesman}, \citenamefont {Kitaev},\ and\ \citenamefont {Preskill}}]{GKP01}%
      \BibitemOpen
      \bibfield  {author} {\bibinfo {author} {\bibfnamefont {D.}~\bibnamefont {Gottesman}}, \bibinfo {author} {\bibfnamefont {A.}~\bibnamefont {Kitaev}},\ and\ \bibinfo {author} {\bibfnamefont {J.}~\bibnamefont {Preskill}},\ }\bibfield  {title} {\bibinfo {title} {Encoding a qubit in an oscillator},\ }\href {https://doi.org/10.1103/PhysRevA.64.012310} {\bibfield  {journal} {\bibinfo  {journal} {Phys. Rev. A}\ }\textbf {\bibinfo {volume} {64}},\ \bibinfo {pages} {012310} (\bibinfo {year} {2001})}\BibitemShut {NoStop}%
    \bibitem [{\citenamefont {Genoni}\ \emph {et~al.}(2008)\citenamefont {Genoni}, \citenamefont {Paris},\ and\ \citenamefont {Banaszek}}]{GenoniPRA08}%
      \BibitemOpen
      \bibfield  {author} {\bibinfo {author} {\bibfnamefont {M.~G.}\ \bibnamefont {Genoni}}, \bibinfo {author} {\bibfnamefont {M.~G.~A.}\ \bibnamefont {Paris}},\ and\ \bibinfo {author} {\bibfnamefont {K.}~\bibnamefont {Banaszek}},\ }\bibfield  {title} {\bibinfo {title} {Quantifying the non-gaussian character of a quantum state by quantum relative entropy},\ }\href {https://doi.org/10.1103/PhysRevA.78.060303} {\bibfield  {journal} {\bibinfo  {journal} {Phys. Rev. A}\ }\textbf {\bibinfo {volume} {78}},\ \bibinfo {pages} {060303} (\bibinfo {year} {2008})}\BibitemShut {NoStop}%
    \bibitem [{\citenamefont {Genoni}\ and\ \citenamefont {Paris}(2010)}]{GenoniPRA10}%
      \BibitemOpen
      \bibfield  {author} {\bibinfo {author} {\bibfnamefont {M.~G.}\ \bibnamefont {Genoni}}\ and\ \bibinfo {author} {\bibfnamefont {M.~G.~A.}\ \bibnamefont {Paris}},\ }\bibfield  {title} {\bibinfo {title} {Quantifying non-gaussianity for quantum information},\ }\href {https://doi.org/10.1103/PhysRevA.82.052341} {\bibfield  {journal} {\bibinfo  {journal} {Phys. Rev. A}\ }\textbf {\bibinfo {volume} {82}},\ \bibinfo {pages} {052341} (\bibinfo {year} {2010})}\BibitemShut {NoStop}%
    \bibitem [{\citenamefont {Marian}\ and\ \citenamefont {Marian}(2013)}]{Marian13}%
      \BibitemOpen
      \bibfield  {author} {\bibinfo {author} {\bibfnamefont {P.}~\bibnamefont {Marian}}\ and\ \bibinfo {author} {\bibfnamefont {T.~A.}\ \bibnamefont {Marian}},\ }\bibfield  {title} {\bibinfo {title} {Relative entropy is an exact measure of non-gaussianity},\ }\href {https://doi.org/10.1103/PhysRevA.88.012322} {\bibfield  {journal} {\bibinfo  {journal} {Phys. Rev. A}\ }\textbf {\bibinfo {volume} {88}},\ \bibinfo {pages} {012322} (\bibinfo {year} {2013})}\BibitemShut {NoStop}%
    \bibitem [{\citenamefont {Albarelli}\ \emph {et~al.}(2018)\citenamefont {Albarelli}, \citenamefont {Genoni}, \citenamefont {Paris},\ and\ \citenamefont {Ferraro}}]{Albarelli18}%
      \BibitemOpen
      \bibfield  {author} {\bibinfo {author} {\bibfnamefont {F.}~\bibnamefont {Albarelli}}, \bibinfo {author} {\bibfnamefont {M.~G.}\ \bibnamefont {Genoni}}, \bibinfo {author} {\bibfnamefont {M.~G.~A.}\ \bibnamefont {Paris}},\ and\ \bibinfo {author} {\bibfnamefont {A.}~\bibnamefont {Ferraro}},\ }\bibfield  {title} {\bibinfo {title} {Resource theory of quantum non-gaussianity and wigner negativity},\ }\href {https://doi.org/10.1103/PhysRevA.98.052350} {\bibfield  {journal} {\bibinfo  {journal} {Phys. Rev. A}\ }\textbf {\bibinfo {volume} {98}},\ \bibinfo {pages} {052350} (\bibinfo {year} {2018})}\BibitemShut {NoStop}%
    \bibitem [{\citenamefont {Zhuang}\ \emph {et~al.}(2018)\citenamefont {Zhuang}, \citenamefont {Shor},\ and\ \citenamefont {Shapiro}}]{ZhuangPRA18}%
      \BibitemOpen
      \bibfield  {author} {\bibinfo {author} {\bibfnamefont {Q.}~\bibnamefont {Zhuang}}, \bibinfo {author} {\bibfnamefont {P.~W.}\ \bibnamefont {Shor}},\ and\ \bibinfo {author} {\bibfnamefont {J.~H.}\ \bibnamefont {Shapiro}},\ }\bibfield  {title} {\bibinfo {title} {Resource theory of non-gaussian operations},\ }\href {https://doi.org/10.1103/PhysRevA.97.052317} {\bibfield  {journal} {\bibinfo  {journal} {Phys. Rev. A}\ }\textbf {\bibinfo {volume} {97}},\ \bibinfo {pages} {052317} (\bibinfo {year} {2018})}\BibitemShut {NoStop}%
    \bibitem [{\citenamefont {Takagi}\ and\ \citenamefont {Zhuang}(2018)}]{TakagiPRA18}%
      \BibitemOpen
      \bibfield  {author} {\bibinfo {author} {\bibfnamefont {R.}~\bibnamefont {Takagi}}\ and\ \bibinfo {author} {\bibfnamefont {Q.}~\bibnamefont {Zhuang}},\ }\bibfield  {title} {\bibinfo {title} {Convex resource theory of non-gaussianity},\ }\href {https://doi.org/10.1103/PhysRevA.97.062337} {\bibfield  {journal} {\bibinfo  {journal} {Phys. Rev. A}\ }\textbf {\bibinfo {volume} {97}},\ \bibinfo {pages} {062337} (\bibinfo {year} {2018})}\BibitemShut {NoStop}%
    \bibitem [{\citenamefont {Lami}\ \emph {et~al.}(2018)\citenamefont {Lami}, \citenamefont {Regula}, \citenamefont {Wang}, \citenamefont {Nichols}, \citenamefont {Winter},\ and\ \citenamefont {Adesso}}]{LamiPRA18}%
      \BibitemOpen
      \bibfield  {author} {\bibinfo {author} {\bibfnamefont {L.}~\bibnamefont {Lami}}, \bibinfo {author} {\bibfnamefont {B.}~\bibnamefont {Regula}}, \bibinfo {author} {\bibfnamefont {X.}~\bibnamefont {Wang}}, \bibinfo {author} {\bibfnamefont {R.}~\bibnamefont {Nichols}}, \bibinfo {author} {\bibfnamefont {A.}~\bibnamefont {Winter}},\ and\ \bibinfo {author} {\bibfnamefont {G.}~\bibnamefont {Adesso}},\ }\bibfield  {title} {\bibinfo {title} {Gaussian quantum resource theories},\ }\href {https://doi.org/10.1103/PhysRevA.98.022335} {\bibfield  {journal} {\bibinfo  {journal} {Phys. Rev. A}\ }\textbf {\bibinfo {volume} {98}},\ \bibinfo {pages} {022335} (\bibinfo {year} {2018})}\BibitemShut {NoStop}%
    \bibitem [{\citenamefont {Chabaud}\ \emph {et~al.}(2020)\citenamefont {Chabaud}, \citenamefont {Markham},\ and\ \citenamefont {Grosshans}}]{Chabaud20}%
      \BibitemOpen
      \bibfield  {author} {\bibinfo {author} {\bibfnamefont {U.}~\bibnamefont {Chabaud}}, \bibinfo {author} {\bibfnamefont {D.}~\bibnamefont {Markham}},\ and\ \bibinfo {author} {\bibfnamefont {F.}~\bibnamefont {Grosshans}},\ }\bibfield  {title} {\bibinfo {title} {Stellar representation of non-gaussian quantum states},\ }\href {https://doi.org/10.1103/PhysRevLett.124.063605} {\bibfield  {journal} {\bibinfo  {journal} {Phys. Rev. Lett.}\ }\textbf {\bibinfo {volume} {124}},\ \bibinfo {pages} {063605} (\bibinfo {year} {2020})}\BibitemShut {NoStop}%
    \bibitem [{\citenamefont {Walschaers}(2021)}]{WalschaersPRXQ21}%
      \BibitemOpen
      \bibfield  {author} {\bibinfo {author} {\bibfnamefont {M.}~\bibnamefont {Walschaers}},\ }\bibfield  {title} {\bibinfo {title} {Non-gaussian quantum states and where to find them},\ }\href {https://doi.org/10.1103/PRXQuantum.2.030204} {\bibfield  {journal} {\bibinfo  {journal} {PRX Quantum}\ }\textbf {\bibinfo {volume} {2}},\ \bibinfo {pages} {030204} (\bibinfo {year} {2021})}\BibitemShut {NoStop}%
    \bibitem [{\citenamefont {Beatriz~Dias}(2024)}]{Dias24}%
      \BibitemOpen
      \bibfield  {author} {\bibinfo {author} {\bibfnamefont {R.~K.}\ \bibnamefont {Beatriz~Dias}},\ }\bibfield  {title} {\bibinfo {title} {Classical simulation of non-gaussian bosonic circuits},\ }\href {https://arxiv.org/abs/2403.19059} {\bibfield  {journal} {\bibinfo  {journal} {arXiv:2403.19059}\ } (\bibinfo {year} {2024})}\BibitemShut {NoStop}%
    \bibitem [{\citenamefont {Hahn}\ \emph {et~al.}(2024)\citenamefont {Hahn}, \citenamefont {Takagi}, \citenamefont {Ferrini},\ and\ \citenamefont {Yamasakir}}]{Han24}%
      \BibitemOpen
      \bibfield  {author} {\bibinfo {author} {\bibfnamefont {O.}~\bibnamefont {Hahn}}, \bibinfo {author} {\bibfnamefont {R.}~\bibnamefont {Takagi}}, \bibinfo {author} {\bibfnamefont {G.}~\bibnamefont {Ferrini}},\ and\ \bibinfo {author} {\bibfnamefont {H.}~\bibnamefont {Yamasakir}},\ }\bibfield  {title} {\bibinfo {title} {Classical simulation and quantum resource theory of non-gaussian optics},\ }\href {https://arxiv.org/abs/2404.07115} {\bibfield  {journal} {\bibinfo  {journal} {arXiv:2404.07115}\ } (\bibinfo {year} {2024})}\BibitemShut {NoStop}%
    \bibitem [{\citenamefont {Reardon-Smith}(2024)}]{Oliver24b}%
      \BibitemOpen
      \bibfield  {author} {\bibinfo {author} {\bibfnamefont {O.}~\bibnamefont {Reardon-Smith}},\ }\bibfield  {title} {\bibinfo {title} {The fermionic linear optical extent is multiplicative for 4 qubit parity eigenstates},\ }\href {https://arxiv.org/abs/2407.20934} {\bibfield  {journal} {\bibinfo  {journal} {arXiv:2407.20934}\ } (\bibinfo {year} {2024})}\BibitemShut {NoStop}%
    \bibitem [{\citenamefont {Valiant}(2002)}]{valiant2002quantum}%
      \BibitemOpen
      \bibfield  {author} {\bibinfo {author} {\bibfnamefont {L.~G.}\ \bibnamefont {Valiant}},\ }\bibfield  {title} {\bibinfo {title} {Quantum circuits that can be simulated classically in polynomial time},\ }\href {https://doi.org/10.1137/S0097539700377025} {\bibfield  {journal} {\bibinfo  {journal} {SIAM Journal on Computing}\ }\textbf {\bibinfo {volume} {31}},\ \bibinfo {pages} {1229} (\bibinfo {year} {2002})}\BibitemShut {NoStop}%
    \bibitem [{\citenamefont {Terhal}\ and\ \citenamefont {DiVincenzo}(2002)}]{terhal2002classical}%
      \BibitemOpen
      \bibfield  {author} {\bibinfo {author} {\bibfnamefont {B.~M.}\ \bibnamefont {Terhal}}\ and\ \bibinfo {author} {\bibfnamefont {D.~P.}\ \bibnamefont {DiVincenzo}},\ }\bibfield  {title} {\bibinfo {title} {Classical simulation of noninteracting-fermion quantum circuits},\ }\href {https://doi.org/10.1103/PhysRevA.65.032325} {\bibfield  {journal} {\bibinfo  {journal} {Phys. Rev. A}\ }\textbf {\bibinfo {volume} {65}},\ \bibinfo {pages} {032325} (\bibinfo {year} {2002})}\BibitemShut {NoStop}%
    \bibitem [{\citenamefont {Jozsa}\ and\ \citenamefont {Miyake}(2008)}]{jozsa2008matchgates}%
      \BibitemOpen
      \bibfield  {author} {\bibinfo {author} {\bibfnamefont {R.}~\bibnamefont {Jozsa}}\ and\ \bibinfo {author} {\bibfnamefont {A.}~\bibnamefont {Miyake}},\ }\bibfield  {title} {\bibinfo {title} {Matchgates and classical simulation of quantum circuits},\ }\href {https://doi.org/0.1098/rspa.2008.0189} {\bibfield  {journal} {\bibinfo  {journal} {Proceedings of the Royal Society A: Mathematical, Physical and Engineering Sciences}\ }\textbf {\bibinfo {volume} {464}},\ \bibinfo {pages} {3089} (\bibinfo {year} {2008})}\BibitemShut {NoStop}%
    \bibitem [{\citenamefont {Brod}(2016)}]{brod2016efficient}%
      \BibitemOpen
      \bibfield  {author} {\bibinfo {author} {\bibfnamefont {D.~J.}\ \bibnamefont {Brod}},\ }\bibfield  {title} {\bibinfo {title} {Efficient classical simulation of matchgate circuits with generalized inputs and measurements},\ }\href {https://doi.org/10.1103/PhysRevA.93.062332} {\bibfield  {journal} {\bibinfo  {journal} {Phys. Rev. A}\ }\textbf {\bibinfo {volume} {93}},\ \bibinfo {pages} {062332} (\bibinfo {year} {2016})}\BibitemShut {NoStop}%
    \bibitem [{\citenamefont {DiVincenzo}\ and\ \citenamefont {Terhal}(2004)}]{divincenzo2004fermionic}%
      \BibitemOpen
      \bibfield  {author} {\bibinfo {author} {\bibfnamefont {D.~P.}\ \bibnamefont {DiVincenzo}}\ and\ \bibinfo {author} {\bibfnamefont {B.~M.}\ \bibnamefont {Terhal}},\ }\bibfield  {title} {\bibinfo {title} {Fermionic linear optics revisited},\ }\href {https://doi.org/10.1007/s10701-005-8657-0} {\bibfield  {journal} {\bibinfo  {journal} {Found. Phys.}\ }\textbf {\bibinfo {volume} {35}},\ \bibinfo {pages} {1967} (\bibinfo {year} {2004})}\BibitemShut {NoStop}%
    \bibitem [{\citenamefont {Bravyi}(2005)}]{bravyi2004lagrangian}%
      \BibitemOpen
      \bibfield  {author} {\bibinfo {author} {\bibfnamefont {S.}~\bibnamefont {Bravyi}},\ }\bibfield  {title} {\bibinfo {title} {Lagrangian representation for fermionic linear optics},\ }\href {https://doi.org/10.5555/2011637.2011640} {\bibfield  {journal} {\bibinfo  {journal} {Quantum Info. Comput.}\ }\textbf {\bibinfo {volume} {5}},\ \bibinfo {pages} {216–238} (\bibinfo {year} {2005})}\BibitemShut {NoStop}%
    \bibitem [{\citenamefont {Kitaev}(2001)}]{Kitaev_2001}%
      \BibitemOpen
      \bibfield  {author} {\bibinfo {author} {\bibfnamefont {A.}~\bibnamefont {Kitaev}},\ }\bibfield  {title} {\bibinfo {title} {Unpaired majorana fermions in quantum wires},\ }\href {https://doi.org/10.1070/1063-7869/44/10S/S29} {\bibfield  {journal} {\bibinfo  {journal} {Physics-Uspekhi}\ }\textbf {\bibinfo {volume} {44}},\ \bibinfo {pages} {131} (\bibinfo {year} {2001})}\BibitemShut {NoStop}%
    \bibitem [{\citenamefont {Kitaev}(2006)}]{Kitaev_2006}%
      \BibitemOpen
      \bibfield  {author} {\bibinfo {author} {\bibfnamefont {A.}~\bibnamefont {Kitaev}},\ }\bibfield  {title} {\bibinfo {title} {Anyons in an exactly solved model and beyond},\ }\href {https://doi.org/10.1016/j.aop.2005.10.005} {\bibfield  {journal} {\bibinfo  {journal} {Annals of Physics}\ }\textbf {\bibinfo {volume} {321}},\ \bibinfo {pages} {2} (\bibinfo {year} {2006})}\BibitemShut {NoStop}%
    \bibitem [{\citenamefont {Giorgini}\ \emph {et~al.}(2008)\citenamefont {Giorgini}, \citenamefont {Pitaevskii},\ and\ \citenamefont {Stringari}}]{GiorginiRMP08}%
      \BibitemOpen
      \bibfield  {author} {\bibinfo {author} {\bibfnamefont {S.}~\bibnamefont {Giorgini}}, \bibinfo {author} {\bibfnamefont {L.~P.}\ \bibnamefont {Pitaevskii}},\ and\ \bibinfo {author} {\bibfnamefont {S.}~\bibnamefont {Stringari}},\ }\bibfield  {title} {\bibinfo {title} {Theory of ultracold atomic fermi gases},\ }\href {https://doi.org/10.1103/RevModPhys.80.1215} {\bibfield  {journal} {\bibinfo  {journal} {Rev. Mod. Phys.}\ }\textbf {\bibinfo {volume} {80}},\ \bibinfo {pages} {1215} (\bibinfo {year} {2008})}\BibitemShut {NoStop}%
    \bibitem [{\citenamefont {J{\"o}rdens}\ \emph {et~al.}(2008)\citenamefont {J{\"o}rdens}, \citenamefont {Strohmaier}, \citenamefont {G{\"u}nter}, \citenamefont {Moritz},\ and\ \citenamefont {Esslinger}}]{jordens2008mott}%
      \BibitemOpen
      \bibfield  {author} {\bibinfo {author} {\bibfnamefont {R.}~\bibnamefont {J{\"o}rdens}}, \bibinfo {author} {\bibfnamefont {N.}~\bibnamefont {Strohmaier}}, \bibinfo {author} {\bibfnamefont {K.}~\bibnamefont {G{\"u}nter}}, \bibinfo {author} {\bibfnamefont {H.}~\bibnamefont {Moritz}},\ and\ \bibinfo {author} {\bibfnamefont {T.}~\bibnamefont {Esslinger}},\ }\bibfield  {title} {\bibinfo {title} {A mott insulator of fermionic atoms in an optical lattice},\ }\href {https://doi.org/10.1038/nature07244} {\bibfield  {journal} {\bibinfo  {journal} {Nature}\ }\textbf {\bibinfo {volume} {455}},\ \bibinfo {pages} {204} (\bibinfo {year} {2008})}\BibitemShut {NoStop}%
    \bibitem [{\citenamefont {Loss}\ and\ \citenamefont {DiVincenzo}(1998)}]{LossPRA98}%
      \BibitemOpen
      \bibfield  {author} {\bibinfo {author} {\bibfnamefont {D.}~\bibnamefont {Loss}}\ and\ \bibinfo {author} {\bibfnamefont {D.~P.}\ \bibnamefont {DiVincenzo}},\ }\bibfield  {title} {\bibinfo {title} {Quantum computation with quantum dots},\ }\href {https://doi.org/10.1103/PhysRevA.57.120} {\bibfield  {journal} {\bibinfo  {journal} {Phys. Rev. A}\ }\textbf {\bibinfo {volume} {57}},\ \bibinfo {pages} {120} (\bibinfo {year} {1998})}\BibitemShut {NoStop}%
    \bibitem [{\citenamefont {Hanson}\ \emph {et~al.}(2007)\citenamefont {Hanson}, \citenamefont {Kouwenhoven}, \citenamefont {Petta}, \citenamefont {Tarucha},\ and\ \citenamefont {Vandersypen}}]{HansonRMP07}%
      \BibitemOpen
      \bibfield  {author} {\bibinfo {author} {\bibfnamefont {R.}~\bibnamefont {Hanson}}, \bibinfo {author} {\bibfnamefont {L.~P.}\ \bibnamefont {Kouwenhoven}}, \bibinfo {author} {\bibfnamefont {J.~R.}\ \bibnamefont {Petta}}, \bibinfo {author} {\bibfnamefont {S.}~\bibnamefont {Tarucha}},\ and\ \bibinfo {author} {\bibfnamefont {L.~M.~K.}\ \bibnamefont {Vandersypen}},\ }\bibfield  {title} {\bibinfo {title} {Spins in few-electron quantum dots},\ }\href {https://doi.org/10.1103/RevModPhys.79.1217} {\bibfield  {journal} {\bibinfo  {journal} {Rev. Mod. Phys.}\ }\textbf {\bibinfo {volume} {79}},\ \bibinfo {pages} {1217} (\bibinfo {year} {2007})}\BibitemShut {NoStop}%
    \bibitem [{\citenamefont {Hebenstreit}\ \emph {et~al.}(2019)\citenamefont {Hebenstreit}, \citenamefont {Jozsa}, \citenamefont {Kraus}, \citenamefont {Strelchuk},\ and\ \citenamefont {Yoganathan}}]{hebenstreit2019all}%
      \BibitemOpen
      \bibfield  {author} {\bibinfo {author} {\bibfnamefont {M.}~\bibnamefont {Hebenstreit}}, \bibinfo {author} {\bibfnamefont {R.}~\bibnamefont {Jozsa}}, \bibinfo {author} {\bibfnamefont {B.}~\bibnamefont {Kraus}}, \bibinfo {author} {\bibfnamefont {S.}~\bibnamefont {Strelchuk}},\ and\ \bibinfo {author} {\bibfnamefont {M.}~\bibnamefont {Yoganathan}},\ }\bibfield  {title} {\bibinfo {title} {All pure fermionic non-gaussian states are magic states for matchgate computations},\ }\href {https://doi.org/10.1103/PhysRevLett.123.080503} {\bibfield  {journal} {\bibinfo  {journal} {Phys. Rev. Lett.}\ }\textbf {\bibinfo {volume} {123}},\ \bibinfo {pages} {080503} (\bibinfo {year} {2019})}\BibitemShut {NoStop}%
    \bibitem [{\citenamefont {Dias}\ and\ \citenamefont {Koenig}(2024)}]{dias2024classical}%
      \BibitemOpen
      \bibfield  {author} {\bibinfo {author} {\bibfnamefont {B.}~\bibnamefont {Dias}}\ and\ \bibinfo {author} {\bibfnamefont {R.}~\bibnamefont {Koenig}},\ }\bibfield  {title} {\bibinfo {title} {Classical simulation of non-gaussian fermionic circuits},\ }\href {https://doi.org/10.22331/q-2024-05-21-1350} {\bibfield  {journal} {\bibinfo  {journal} {Quantum}\ }\textbf {\bibinfo {volume} {8}},\ \bibinfo {pages} {1350} (\bibinfo {year} {2024})}\BibitemShut {NoStop}%
    \bibitem [{\citenamefont {Oliver Reardon-Smith}(2024)}]{Oliver24}%
      \BibitemOpen
      \bibfield  {author} {\bibinfo {author} {\bibfnamefont {K.~K.}\ \bibnamefont {Oliver Reardon-Smith}, \bibfnamefont {Michał~Oszmaniec}},\ }\bibfield  {title} {\bibinfo {title} {Improved simulation of quantum circuits dominated by free fermionic operations},\ }\href {https://arxiv.org/abs/2307.12702} {\bibfield  {journal} {\bibinfo  {journal} {arXiv:2307.12702}\ } (\bibinfo {year} {2024})}\BibitemShut {NoStop}%
    \bibitem [{\citenamefont {Joshua~Cudby}(2024)}]{Joshuar24}%
      \BibitemOpen
      \bibfield  {author} {\bibinfo {author} {\bibfnamefont {S.~S.}\ \bibnamefont {Joshua~Cudby}},\ }\bibfield  {title} {\bibinfo {title} {Gaussian decomposition of magic states for matchgate computations},\ }\href {https://arxiv.org/abs/2307.12654} {\bibfield  {journal} {\bibinfo  {journal} {arXiv:2307.12654}\ } (\bibinfo {year} {2024})}\BibitemShut {NoStop}%
    \bibitem [{\citenamefont {Harrow}\ and\ \citenamefont {Montanaro}(2010)}]{Harrow13}%
      \BibitemOpen
      \bibfield  {author} {\bibinfo {author} {\bibfnamefont {A.~W.}\ \bibnamefont {Harrow}}\ and\ \bibinfo {author} {\bibfnamefont {A.}~\bibnamefont {Montanaro}},\ }\bibfield  {title} {\bibinfo {title} {An efficient test for product states with applications to quantum merlin-arthur games},\ }in\ \href {https://doi.org/10.1109/FOCS.2010.66} {\emph {\bibinfo {booktitle} {2010 IEEE 51st Annual Symposium on Foundations of Computer Science}}}\ (\bibinfo {year} {2010})\ pp.\ \bibinfo {pages} {633--642}\BibitemShut {NoStop}%
    \bibitem [{\citenamefont {Gutoski}\ \emph {et~al.}(2015)\citenamefont {Gutoski}, \citenamefont {Hayden}, \citenamefont {Milner},\ and\ \citenamefont {Wilde}}]{v011a003}%
      \BibitemOpen
      \bibfield  {author} {\bibinfo {author} {\bibfnamefont {G.}~\bibnamefont {Gutoski}}, \bibinfo {author} {\bibfnamefont {P.}~\bibnamefont {Hayden}}, \bibinfo {author} {\bibfnamefont {K.}~\bibnamefont {Milner}},\ and\ \bibinfo {author} {\bibfnamefont {M.~M.}\ \bibnamefont {Wilde}},\ }\bibfield  {title} {\bibinfo {title} {Quantum interactive proofs and the complexity of separability testing},\ }\href {https://doi.org/10.4086/toc.2015.v011a003} {\bibfield  {journal} {\bibinfo  {journal} {Theory of Computing}\ }\textbf {\bibinfo {volume} {11}},\ \bibinfo {pages} {59} (\bibinfo {year} {2015})}\BibitemShut {NoStop}%
    \bibitem [{\citenamefont {Beckey}\ \emph {et~al.}(2021)\citenamefont {Beckey}, \citenamefont {Gigena}, \citenamefont {Coles},\ and\ \citenamefont {Cerezo}}]{Beckey21}%
      \BibitemOpen
      \bibfield  {author} {\bibinfo {author} {\bibfnamefont {J.~L.}\ \bibnamefont {Beckey}}, \bibinfo {author} {\bibfnamefont {N.}~\bibnamefont {Gigena}}, \bibinfo {author} {\bibfnamefont {P.~J.}\ \bibnamefont {Coles}},\ and\ \bibinfo {author} {\bibfnamefont {M.}~\bibnamefont {Cerezo}},\ }\bibfield  {title} {\bibinfo {title} {Computable and operationally meaningful multipartite entanglement measures},\ }\href {https://doi.org/10.1103/PhysRevLett.127.140501} {\bibfield  {journal} {\bibinfo  {journal} {Phys. Rev. Lett.}\ }\textbf {\bibinfo {volume} {127}},\ \bibinfo {pages} {140501} (\bibinfo {year} {2021})}\BibitemShut {NoStop}%
    \bibitem [{\citenamefont {Montanaro}\ and\ \citenamefont {Wolf}(2016)}]{gs007}%
      \BibitemOpen
      \bibfield  {author} {\bibinfo {author} {\bibfnamefont {A.}~\bibnamefont {Montanaro}}\ and\ \bibinfo {author} {\bibfnamefont {R.~d.}\ \bibnamefont {Wolf}},\ }\href {https://doi.org/10.4086/toc.gs.2016.007} {\emph {\bibinfo {title} {A Survey of Quantum Property Testing}}},\ \bibinfo {series} {Graduate Surveys}\ No.~\bibinfo {number} {7}\ (\bibinfo  {publisher} {Theory of Computing Library},\ \bibinfo {year} {2016})\ pp.\ \bibinfo {pages} {1--81}\BibitemShut {NoStop}%
    \bibitem [{\citenamefont {Buhrman}\ \emph {et~al.}(2001)\citenamefont {Buhrman}, \citenamefont {Cleve}, \citenamefont {Watrous},\ and\ \citenamefont {de~Wolf}}]{Buhrman01}%
      \BibitemOpen
      \bibfield  {author} {\bibinfo {author} {\bibfnamefont {H.}~\bibnamefont {Buhrman}}, \bibinfo {author} {\bibfnamefont {R.}~\bibnamefont {Cleve}}, \bibinfo {author} {\bibfnamefont {J.}~\bibnamefont {Watrous}},\ and\ \bibinfo {author} {\bibfnamefont {R.}~\bibnamefont {de~Wolf}},\ }\bibfield  {title} {\bibinfo {title} {Quantum fingerprinting},\ }\href {https://doi.org/10.1103/PhysRevLett.87.167902} {\bibfield  {journal} {\bibinfo  {journal} {Phys. Rev. Lett.}\ }\textbf {\bibinfo {volume} {87}},\ \bibinfo {pages} {167902} (\bibinfo {year} {2001})}\BibitemShut {NoStop}%
    \bibitem [{\citenamefont {Hudson}(1973)}]{Hudson_1973}%
      \BibitemOpen
      \bibfield  {author} {\bibinfo {author} {\bibfnamefont {R.~L.}\ \bibnamefont {Hudson}},\ }\bibfield  {title} {\bibinfo {title} {A quantum-mechanical central limit theorem for anti-commuting observables},\ }\href {https://doi.org/10.2307/3212771} {\bibfield  {journal} {\bibinfo  {journal} {Journal of Applied Probability}\ }\textbf {\bibinfo {volume} {10}},\ \bibinfo {pages} {502–509} (\bibinfo {year} {1973})}\BibitemShut {NoStop}%
    \bibitem [{\citenamefont {Hudson}\ \emph {et~al.}(1980)\citenamefont {Hudson}, \citenamefont {Wilkinson},\ and\ \citenamefont {Peck}}]{hudson1980translation}%
      \BibitemOpen
      \bibfield  {author} {\bibinfo {author} {\bibfnamefont {R.}~\bibnamefont {Hudson}}, \bibinfo {author} {\bibfnamefont {M.}~\bibnamefont {Wilkinson}},\ and\ \bibinfo {author} {\bibfnamefont {S.}~\bibnamefont {Peck}},\ }\bibfield  {title} {\bibinfo {title} {Translation-invariant integrals, and fourier analysis on clifford and grassmann algebras},\ }\href {https://doi.org/10.1016/0022-1236(80)90028-2} {\bibfield  {journal} {\bibinfo  {journal} {Journal of Functional Analysis}\ }\textbf {\bibinfo {volume} {37}},\ \bibinfo {pages} {68} (\bibinfo {year} {1980})}\BibitemShut {NoStop}%
    \bibitem [{\citenamefont {Liu}\ \emph {et~al.}(2017)\citenamefont {Liu}, \citenamefont {Hu},\ and\ \citenamefont {Lloyd}}]{LiuZiwenPRL17}%
      \BibitemOpen
      \bibfield  {author} {\bibinfo {author} {\bibfnamefont {Z.-W.}\ \bibnamefont {Liu}}, \bibinfo {author} {\bibfnamefont {X.}~\bibnamefont {Hu}},\ and\ \bibinfo {author} {\bibfnamefont {S.}~\bibnamefont {Lloyd}},\ }\bibfield  {title} {\bibinfo {title} {Resource destroying maps},\ }\href {https://doi.org/10.1103/PhysRevLett.118.060502} {\bibfield  {journal} {\bibinfo  {journal} {Phys. Rev. Lett.}\ }\textbf {\bibinfo {volume} {118}},\ \bibinfo {pages} {060502} (\bibinfo {year} {2017})}\BibitemShut {NoStop}%
    \bibitem [{\citenamefont {Baumgratz}\ \emph {et~al.}(2014)\citenamefont {Baumgratz}, \citenamefont {Cramer},\ and\ \citenamefont {Plenio}}]{BaumgratzPRL14}%
      \BibitemOpen
      \bibfield  {author} {\bibinfo {author} {\bibfnamefont {T.}~\bibnamefont {Baumgratz}}, \bibinfo {author} {\bibfnamefont {M.}~\bibnamefont {Cramer}},\ and\ \bibinfo {author} {\bibfnamefont {M.~B.}\ \bibnamefont {Plenio}},\ }\bibfield  {title} {\bibinfo {title} {Quantifying coherence},\ }\href {https://doi.org/10.1103/PhysRevLett.113.140401} {\bibfield  {journal} {\bibinfo  {journal} {Phys. Rev. Lett.}\ }\textbf {\bibinfo {volume} {113}},\ \bibinfo {pages} {140401} (\bibinfo {year} {2014})}\BibitemShut {NoStop}%
    \bibitem [{\citenamefont {Streltsov}\ \emph {et~al.}(2017)\citenamefont {Streltsov}, \citenamefont {Adesso},\ and\ \citenamefont {Plenio}}]{Plenio17}%
      \BibitemOpen
      \bibfield  {author} {\bibinfo {author} {\bibfnamefont {A.}~\bibnamefont {Streltsov}}, \bibinfo {author} {\bibfnamefont {G.}~\bibnamefont {Adesso}},\ and\ \bibinfo {author} {\bibfnamefont {M.~B.}\ \bibnamefont {Plenio}},\ }\bibfield  {title} {\bibinfo {title} {Colloquium: Quantum coherence as a resource},\ }\href {https://doi.org/10.1103/RevModPhys.89.041003} {\bibfield  {journal} {\bibinfo  {journal} {Rev. Mod. Phys.}\ }\textbf {\bibinfo {volume} {89}},\ \bibinfo {pages} {041003} (\bibinfo {year} {2017})}\BibitemShut {NoStop}%
    \bibitem [{\citenamefont {Gour}\ \emph {et~al.}(2009)\citenamefont {Gour}, \citenamefont {Marvian},\ and\ \citenamefont {Spekkens}}]{Gour09}%
      \BibitemOpen
      \bibfield  {author} {\bibinfo {author} {\bibfnamefont {G.}~\bibnamefont {Gour}}, \bibinfo {author} {\bibfnamefont {I.}~\bibnamefont {Marvian}},\ and\ \bibinfo {author} {\bibfnamefont {R.~W.}\ \bibnamefont {Spekkens}},\ }\bibfield  {title} {\bibinfo {title} {Measuring the quality of a quantum reference frame: The relative entropy of frameness},\ }\href {https://doi.org/10.1103/PhysRevA.80.012307} {\bibfield  {journal} {\bibinfo  {journal} {Phys. Rev. A}\ }\textbf {\bibinfo {volume} {80}},\ \bibinfo {pages} {012307} (\bibinfo {year} {2009})}\BibitemShut {NoStop}%
    \bibitem [{\citenamefont {Bu}\ \emph {et~al.}(2023{\natexlab{a}})\citenamefont {Bu}, \citenamefont {Gu},\ and\ \citenamefont {Jaffe}}]{BGJ23a}%
      \BibitemOpen
      \bibfield  {author} {\bibinfo {author} {\bibfnamefont {K.}~\bibnamefont {Bu}}, \bibinfo {author} {\bibfnamefont {W.}~\bibnamefont {Gu}},\ and\ \bibinfo {author} {\bibfnamefont {A.}~\bibnamefont {Jaffe}},\ }\bibfield  {title} {\bibinfo {title} {Quantum entropy and central limit theorem},\ }\href {https://doi.org/10.1073/pnas.2304589120} {\bibfield  {journal} {\bibinfo  {journal} {Proceedings of the National Academy of Sciences}\ }\textbf {\bibinfo {volume} {120}},\ \bibinfo {pages} {e2304589120} (\bibinfo {year} {2023}{\natexlab{a}})}\BibitemShut {NoStop}%
    \bibitem [{\citenamefont {Bu}\ \emph {et~al.}(2023{\natexlab{b}})\citenamefont {Bu}, \citenamefont {Gu},\ and\ \citenamefont {Jaffe}}]{BGJ23b}%
      \BibitemOpen
      \bibfield  {author} {\bibinfo {author} {\bibfnamefont {K.}~\bibnamefont {Bu}}, \bibinfo {author} {\bibfnamefont {W.}~\bibnamefont {Gu}},\ and\ \bibinfo {author} {\bibfnamefont {A.}~\bibnamefont {Jaffe}},\ }\bibfield  {title} {\bibinfo {title} {Discrete quantum gaussians and central limit theorem},\ }\href {https://arxiv.org/abs/2302.08423} {\bibfield  {journal} {\bibinfo  {journal} {arXiv:2302.08423}\ } (\bibinfo {year} {2023}{\natexlab{b}})}\BibitemShut {NoStop}%
    \bibitem [{\citenamefont {Bu}\ \emph {et~al.}(2023{\natexlab{c}})\citenamefont {Bu}, \citenamefont {Gu},\ and\ \citenamefont {Jaffe}}]{BGJ23c}%
      \BibitemOpen
      \bibfield  {author} {\bibinfo {author} {\bibfnamefont {K.}~\bibnamefont {Bu}}, \bibinfo {author} {\bibfnamefont {W.}~\bibnamefont {Gu}},\ and\ \bibinfo {author} {\bibfnamefont {A.}~\bibnamefont {Jaffe}},\ }\bibfield  {title} {\bibinfo {title} {Stabilizer testing and magic entropy},\ }\href {https://arxiv.org/abs/2306.09292} {\bibfield  {journal} {\bibinfo  {journal} {arXiv:2306.09292}\ } (\bibinfo {year} {2023}{\natexlab{c}})}\BibitemShut {NoStop}%
    \bibitem [{\citenamefont {Bu}\ and\ \citenamefont {Jaffe}(2024)}]{BJ24a}%
      \BibitemOpen
      \bibfield  {author} {\bibinfo {author} {\bibfnamefont {K.}~\bibnamefont {Bu}}\ and\ \bibinfo {author} {\bibfnamefont {A.}~\bibnamefont {Jaffe}},\ }\bibfield  {title} {\bibinfo {title} {Magic can enhance the quantum capacity of channels},\ }\href {https://arxiv.org/abs/2401.12105} {\bibfield  {journal} {\bibinfo  {journal} {arXiv:2401.12105}\ } (\bibinfo {year} {2024})}\BibitemShut {NoStop}%
    \bibitem [{\citenamefont {Bu}\ \emph {et~al.}(2024{\natexlab{a}})\citenamefont {Bu}, \citenamefont {Gu},\ and\ \citenamefont {Jaffe}}]{BGJ24a}%
      \BibitemOpen
      \bibfield  {author} {\bibinfo {author} {\bibfnamefont {K.}~\bibnamefont {Bu}}, \bibinfo {author} {\bibfnamefont {W.}~\bibnamefont {Gu}},\ and\ \bibinfo {author} {\bibfnamefont {A.}~\bibnamefont {Jaffe}},\ }\bibfield  {title} {\bibinfo {title} {Entropic quantum central limit theorem and quantum inverse sumset theorem},\ }\href {https://arxiv.org/abs/2401.14385} {\bibfield  {journal} {\bibinfo  {journal} {arXiv:2401.14385}\ } (\bibinfo {year} {2024}{\natexlab{a}})}\BibitemShut {NoStop}%
    \bibitem [{\citenamefont {Kaifeng~Bu}(2024)}]{BJW24a}%
      \BibitemOpen
      \bibfield  {author} {\bibinfo {author} {\bibfnamefont {Z.~W.}\ \bibnamefont {Kaifeng~Bu}, \bibfnamefont {Arthur~Jaffe}},\ }\bibfield  {title} {\bibinfo {title} {Magic class and the convolution group},\ }\href {https://arxiv.org/abs/2402.05780} {\bibfield  {journal} {\bibinfo  {journal} {arXiv:2402.057802}\ } (\bibinfo {year} {2024})}\BibitemShut {NoStop}%
    \bibitem [{\citenamefont {Bu}(2024)}]{bu2024extremality}%
      \BibitemOpen
      \bibfield  {author} {\bibinfo {author} {\bibfnamefont {K.}~\bibnamefont {Bu}},\ }\bibfield  {title} {\bibinfo {title} {Extremality of stabilizer states},\ }\href {https://arxiv.org/abs/2403.13632} {\bibfield  {journal} {\bibinfo  {journal} {arXiv:2403.13632}\ } (\bibinfo {year} {2024})}\BibitemShut {NoStop}%
    \bibitem [{\citenamefont {Coffman}\ and\ \citenamefont {Gao}(202)}]{Coffman24}%
      \BibitemOpen
      \bibfield  {author} {\bibinfo {author} {\bibfnamefont {L.}~\bibnamefont {Coffman}}\ and\ \bibinfo {author} {\bibfnamefont {X.}~\bibnamefont {Gao}},\ }\href@noop {} {\bibfield  {journal} {\bibinfo  {journal} {In preparation}\ } (\bibinfo {year} {202*})}\BibitemShut {NoStop}%
    \bibitem [{\citenamefont {Cahill}\ and\ \citenamefont {Glauber}(1999)}]{cahill1999density}%
      \BibitemOpen
      \bibfield  {author} {\bibinfo {author} {\bibfnamefont {K.~E.}\ \bibnamefont {Cahill}}\ and\ \bibinfo {author} {\bibfnamefont {R.~J.}\ \bibnamefont {Glauber}},\ }\bibfield  {title} {\bibinfo {title} {Density operators for fermions},\ }\href {https://doi.org/10.1103/PhysRevA.59.1538} {\bibfield  {journal} {\bibinfo  {journal} {Physical Review A}\ }\textbf {\bibinfo {volume} {59}},\ \bibinfo {pages} {1538} (\bibinfo {year} {1999})}\BibitemShut {NoStop}%
    \bibitem [{\citenamefont {Nielson}(2005)}]{nielsen2005fermionic}%
      \BibitemOpen
      \bibfield  {author} {\bibinfo {author} {\bibfnamefont {M.}~\bibnamefont {Nielson}},\ }\bibfield  {title} {\bibinfo {title} {The fermionic canonical commutation relations and the jordan-wigner transform},\ }\href@noop {} {\bibfield  {journal} {\bibinfo  {journal} {School of Physical Sciences The University of Queensland}\ }\textbf {\bibinfo {volume} {59}},\ \bibinfo {pages} {75} (\bibinfo {year} {2005})}\BibitemShut {NoStop}%
    \bibitem [{\citenamefont {Bu}\ \emph {et~al.}(2024{\natexlab{b}})\citenamefont {Bu}, \citenamefont {Garcia}, \citenamefont {Jaffe}, \citenamefont {Koh},\ and\ \citenamefont {Li}}]{Bu2024complexity}%
      \BibitemOpen
      \bibfield  {author} {\bibinfo {author} {\bibfnamefont {K.}~\bibnamefont {Bu}}, \bibinfo {author} {\bibfnamefont {R.~J.}\ \bibnamefont {Garcia}}, \bibinfo {author} {\bibfnamefont {A.}~\bibnamefont {Jaffe}}, \bibinfo {author} {\bibfnamefont {D.~E.}\ \bibnamefont {Koh}},\ and\ \bibinfo {author} {\bibfnamefont {L.}~\bibnamefont {Li}},\ }\bibfield  {title} {\bibinfo {title} {Complexity of quantum circuits via sensitivity, magic, and coherence},\ }\href {https://doi.org/10.1007/s00220-024-05030-6} {\bibfield  {journal} {\bibinfo  {journal} {Communications in Mathematical Physics}\ }\textbf {\bibinfo {volume} {405}},\ \bibinfo {pages} {161} (\bibinfo {year} {2024}{\natexlab{b}})}\BibitemShut {NoStop}%
    \bibitem [{\citenamefont {Horn}\ and\ \citenamefont {Johnson}(2012)}]{horn2012matrix}%
      \BibitemOpen
      \bibfield  {author} {\bibinfo {author} {\bibfnamefont {R.~A.}\ \bibnamefont {Horn}}\ and\ \bibinfo {author} {\bibfnamefont {C.~R.}\ \bibnamefont {Johnson}},\ }\href@noop {} {\emph {\bibinfo {title} {Matrix analysis}}}\ (\bibinfo  {publisher} {Cambridge university press},\ \bibinfo {year} {2012})\BibitemShut {NoStop}%
    \bibitem [{\citenamefont {Surace}\ and\ \citenamefont {Tagliacozzo}(2022)}]{surace2022fermionic}%
      \BibitemOpen
      \bibfield  {author} {\bibinfo {author} {\bibfnamefont {J.}~\bibnamefont {Surace}}\ and\ \bibinfo {author} {\bibfnamefont {L.}~\bibnamefont {Tagliacozzo}},\ }\bibfield  {title} {\bibinfo {title} {Fermionic gaussian states: an introduction to numerical approaches},\ }\href {https://doi.org/10.21468/SciPostPhysLectNotes.54} {\bibfield  {journal} {\bibinfo  {journal} {SciPost Physics Lecture Notes}\ ,\ \bibinfo {pages} {054}} (\bibinfo {year} {2022})}\BibitemShut {NoStop}%
    \bibitem [{\citenamefont {De~Melo}\ \emph {et~al.}(2013)\citenamefont {De~Melo}, \citenamefont {{\'C}wikli{\'n}ski},\ and\ \citenamefont {Terhal}}]{de2013power}%
      \BibitemOpen
      \bibfield  {author} {\bibinfo {author} {\bibfnamefont {F.}~\bibnamefont {De~Melo}}, \bibinfo {author} {\bibfnamefont {P.}~\bibnamefont {{\'C}wikli{\'n}ski}},\ and\ \bibinfo {author} {\bibfnamefont {B.~M.}\ \bibnamefont {Terhal}},\ }\bibfield  {title} {\bibinfo {title} {The power of noisy fermionic quantum computation},\ }\href {https://doi.org/10.1088/1367-2630/15/1/013015} {\bibfield  {journal} {\bibinfo  {journal} {New Journal of Physics}\ }\textbf {\bibinfo {volume} {15}},\ \bibinfo {pages} {013015} (\bibinfo {year} {2013})}\BibitemShut {NoStop}%
    \bibitem [{\citenamefont {Wille}\ \emph {et~al.}(2019)\citenamefont {Wille}, \citenamefont {Van~Meter},\ and\ \citenamefont {Naveh}}]{wille2019ibm}%
      \BibitemOpen
      \bibfield  {author} {\bibinfo {author} {\bibfnamefont {R.}~\bibnamefont {Wille}}, \bibinfo {author} {\bibfnamefont {R.}~\bibnamefont {Van~Meter}},\ and\ \bibinfo {author} {\bibfnamefont {Y.}~\bibnamefont {Naveh}},\ }\bibfield  {title} {\bibinfo {title} {Ibm’s qiskit tool chain: Working with and developing for real quantum computers},\ }in\ \href@noop {} {\emph {\bibinfo {booktitle} {2019 Design, Automation \& Test in Europe Conference \& Exhibition (DATE)}}}\ (\bibinfo {organization} {IEEE},\ \bibinfo {year} {2019})\ pp.\ \bibinfo {pages} {1234--1240}\BibitemShut {NoStop}%
\end{thebibliography}
\end{document}